\renewcommand{\subsubsection}[1]{\paragraph{ #1}}
\newcommand{\ket}[1]{\lvert #1\rangle}
\newcommand{\bra}[1]{\langle #1\rvert}
\newcommand{\ketbra}[3][]{\ket{#2}_{#1}\bra{#3}}
\newcommand{\braket}[2]{\langle #1\vert #2\rangle}
\newcommand{\code}[1]{{\texttt{#1}}}
\newcommand{\qtrans}[2]{{T_{#2}}}
\newcommand{\qd}[1]{{\textstyle\frac{\partial}{\partial #1}}}
\DeclareMathOperator{\tr}{tr}
\def\qvar{\bm{qVar}}
\def\abs#1{\left\lvert #1\right\rvert}
\def\norm#1{\left\lVert #1\right\rVert}
\def\figref#1{Fig.~\ref{#1}}
\def\qwhile{\mathbf{while\ }}
\def\qskip{\mathbf{skip}}
\def\qif{\mathbf{if\ }}
\def\qfi{\mathbf{\ fi}}
\def\qdo{\mathbf{\ do\ }}
\def\qod{\mathbf{\ od}}
\def\dinit{\mathbf{Dinit}}
\def\qinit#1{{#1\coloneqq \ket{0}}}
\def\qutp#1#2#3{{#3\coloneqq e^{-i#1 #2}[#3]}}
\def\qut#1#2{{#2\coloneqq #1[#2]}}
\def\qinitd#1#2{{#2\coloneqq #1}}
\def\qqwhile#1#2{\qwhile M[#1] = 1 \qdo #2 \qod}
\def\sem#1{\left\llbracket #1 \right\rrbracket}
\def\overto#1{\overset{#1}{\to}}
\def\cD{\mathcal{D}}
\def\cE{\mathcal{E}}
\def\cH{\mathcal{H}}
\def\cS{\mathcal{S}}
\def\cT{\mathcal{T}}
\def\cL{\mathcal{L}}
\def\cG{\mathcal{G}}
\def\cX{\mathcal{X}}
\def\cY{\mathcal{Y}}
\def\cDa{\cD(\cH_{all})}
\def\multiset#1{\left\lbrace\!\left\lvert #1 \right\rvert\!\right\rbrace}
\DeclareMathOperator{\Span}{span}
\DeclareMathOperator{\supp}{supp}
\newtheorem{theorem}{Theorem}[section]
\newtheorem{lemma}[theorem]{Lemma}
\newtheorem{remark}[theorem]{Remark}
\newtheorem{example}[theorem]{Example}
\newtheorem{cor}[theorem]{Corollary}
\tikzset{location/.style={circle, fill=black,inner sep=0pt, minimum size=1mm}}
\tikzset{transition/.style={fill=white, inner sep=0pt, midway, font=\tiny}}
\def\COMPILETIKZ{0}
\tikzset{
    *|/.style={
        to path={
            (perpendicular cs: horizontal line through={(\tikztostart)},
                                 vertical line through={(\tikztotarget)})
            -- (\tikztotarget) \tikztonodes
        }
    }
}
\tikzset{external/system call={pdflatex \tikzexternalcheckshellescape -halt-on-error -interaction=batchmode -jobname "\image" "\texsource"}}}
\newcommand{\tikzinput}[2]{
\ifthenelse{\COMPILETIKZ=1}
{\tikzsetnextfilename{#1}\input{#2}}
{\includegraphics{./figures/#1.pdf}}
}
\tikzstyle{every picture}+=[remember picture]
\newsavebox{\tempbox}
\newcommand{\biggg}{\bBigg@{3}}
\def\bigggl{\mathopen\biggg}
\def\bigggr{\mathclose\biggg}
\newcommand{\Biggg}{\bBigg@{3.5}}
\newcommand{\bigggg}{\bBigg@{4}}
\def\biggggl{\mathopen\bigggg}
\def\biggggr{\mathclose\bigggg}
\newcommand{\Bigggg}{\bBigg@{4.5}}
\def\Biggggl{\mathopen\Bigggg}
\def\Biggggr{\mathclose\Bigggg}
\numberwithin{equation}{section}
\begin{document}

\title[Differentiable Quantum Programming with Unbounded Loops]{Differentiable Quantum Programming with Unbounded Loops}


\author[W. Fang]{Wang Fang}
\affiliation{
  \department{State Key Laboratory of Computer Science}
  \institution{Institute of Software, Chinese Academy of Sciences}
  \country{China}
}
\email{fangw@ios.ac.cn}
\affiliation{
  \institution{University of Chinese Academy of Sciences}
  \country{China}
}

\author[M. Ying]{Mingsheng Ying}
\affiliation{
  \department{State Key Laboratory of Computer Science}
  \institution{Institute of Software, Chinese Academy of Sciences}
  \country{China}
}
\email{mingshengying@gmail.com}
\affiliation{
  \department{Department of Computer Science and Technology}
  \institution{Tsinghua University}
  \country{China}
}

\author[X. Wu]{Xiaodi Wu}
\affiliation{
  \department{Department of Computer Science and Joint Center for Quantum Information and Computer Science}
  \institution{University of Maryland}
  \country{United States}
}
\email{xwu@cs.umd.edu}

\begin{abstract}
  The emergence of variational quantum applications has led to the development of automatic differentiation techniques in quantum computing.
Recently, \citet{Zhu2020Differentiable} have formulated differentiable quantum programming with bounded loops, providing a framework for scalable gradient calculation by quantum means for training quantum variational applications. 
However, promising parameterized quantum applications, e.g., quantum walk and unitary implementation, cannot be trained in the existing framework due to the natural involvement of unbounded loops.  
To fill in the gap,  we provide the first differentiable quantum programming framework with unbounded loops, including a newly designed  differentiation rule, code transformation, and their correctness proof. 
Technically, we introduce a randomized estimator for derivatives to deal with the infinite sum in the differentiation of unbounded loops, whose applicability in classical and probabilistic programming is also discussed. 
We implement our framework with Python and Q\#, and demonstrate a reasonable sample efficiency.
Through extensive case studies, we showcase an exciting application of our framework in automatically identifying close-to-optimal parameters for several parameterized quantum applications.

\end{abstract}



\keywords{quantum programming languages, differentiable programming, quantum machine learning, unbounded loops}  

\maketitle

\section{Introduction}
Inspired by the advantage of neural networks with program features (e.g. controls) over the plain ones~\cite{deep-mind, Grefenstette:2015:LTU:2969442.2969444}, the notion of \emph{differentiable programming} has been introduced~\cite{gaunt2017differentiable,abadi2019simple, wang2019demystifying, baydin2017automatic} as a new programming paradigm, where programs become parameterized and differentiable, and has recently stimulated active investigation (e.g., ~\cite{Damiano2021Automatic,Benjamin2021lambdas,Mathieu2020Diffeologies,Carol2021Probabilistic}). 
Specifically, many efforts have been devoted to the development of \emph{automatic differentiation} (e.g.,  \cite{Griewank:2000:EDP:335134,Corliss:2000:ADA:571034}) for various program constructs. 

Quantum programming is one specific type of programming that would benefit from the study of automatic differentiation. 
With the availability of the 50$\sim$100-qubit machines, near-term Noisy Intermediate-Scale Quantum (NISQ) machines~\cite{Preskill18} have become the major platform for quantum applications. 
Parameterized (or variational) quantum circuits, introduced as a quantum machine learning model with remarkable expressive power~\cite{benedetti2019parameterized}, are one compelling candidate of NISQ applications, including examples like variational quantum eigensolver (VQE)~\cite{kandala2017hardware, peruzzo2014variational}, quantum neural networks~\cite{farhi2018classification, beer2020training,cong2019quantum}, to the quantum approximate optimization algorithm (QAOA)~\cite{farhi2014quantum,farhi2016quantum, hadfield2019quantum}. 
Similar to classical machine learning, gradient-based methods are employed to train the loss functions, which, however, now depend on the read-outs of quantum computation.  
Thus, the "quantum" gradient calculation has a similar complexity of simulating quantum circuits, which is infeasible for classical computation. 

Automatic differentiation (AD) on quantum programs, which would enable the ability of computing quantum gradients efficiently by quantum computation, is thus critical for the scalability of variational quantum applications. 
However, it is a priori unclear whether the AD technique could extend to the quantum setting at all due to a few fundamental differences between quantum and classical. 
First, an appropriate formulation of the differentiation in quantum computing is important because the outcomes of quantum programs are quantum states rather than classical variables. 
Second, the quantum no-cloning theorem~\cite{wootters1982single} prohibits the duplication of intermediate states in quantum programs, which prohibits the natural extension of the classical forward-mode and reverse-mode differentiation~\cite{abadi2019simple} to quantum.

Fortunately, a series of recent research on analytical formulas of "quantum" gradients~\cite{beer2020training,schuld2019evaluating,mitarai2018quantum, farhi2018classification, guerreschi2017practical} has helped (partially) overcome these difficulties and thus enabled AD on quantum circuits, which has already been adopted in major quantum machine learning platforms, including Tensorflow Quantum~\cite{broughton2020tensorflow} and    PennyLane~\cite{bergholm2018pennylane}. 

\citet{Zhu2020Differentiable} provides the first rigorous formalization of the AD technique for quantum programs with bounded loops beyond quantum circuits.  
They also leveraged their framework in the training of a VQC instance with controls, which has superior performance than normal VQCs for certain machine learning tasks.

\vspace{1mm}\noindent\textbf{Quantum Applications with Unbounded Loops.}
Most existing AD results in quantum computing have been focusing on applications of variational quantum circuits (or their variants) for a few designated tasks, which misses the opportunity to investigate more sophisticated quantum algorithms. 
For example, parameterized quantum programs with \emph{unbounded loops} can describe a rich family of quantum algorithms with a few unspecified parameters, which 
could be trained to help quantum programs meet the run-time requirement, e.g., achieving quantum speedup in the examples of quantum walk~\cite{ambainis2005coins} and amplitude amplification~\cite{brassard2002quantum}, or generating desired unitaries which are unknown beforehand in the example of the repeat-until-success unitary implementation~\cite{PhysRevLett.114.080502}. 
Analytical derivation of these parameters, if ever possible, would likely require domain knowledge of the underlying problem, and is usually done in a case-by-case fashion. 
Instance-driven gradient-based search of these parameters is a promising alternative, which is only possible with the AD technique for unbounded loops. 

Let us dive into a simple example based on amplitude amplification (AA).
A direct adoption of the textbook AA would be written as a for-loop with a given number of iterations. 
However, this number of iterations is often hard to determine beforehand, which makes it desirable to write the algorithm as a while-loop (i.e., an unbounded loop) and let the program decide when to terminate.

\begin{wrapfigure}{r}{0.32\textwidth}
  \centering
  \vspace{-15pt}
  \begin{minipage}[t]{\linewidth}
      \begin{align*}
          & \qwhile M[r] = 1 \qdo \\
          & \qquad q: = \text{GroverRotation}[q]; \\
          & \qquad q, r: = \text{Coupling}_{\theta}[q, r] \\
          & \qod
      \end{align*}
  \end{minipage}
  \vspace{-10pt}
  \caption{Parameterized AA}\label{fig:paa}
  \vspace{-15pt}
\end{wrapfigure}

To that end, a framework with while-loops and parameterized weak measurements has been introduced~\cite{mizel2009critically, andres2020quantum} as the \emph{parameterized} AA program in \figref{fig:paa}, where parameter $\theta$ controls the coupling strength between the search variable $q$ and the measure variable $r$.
The choice of $\theta$ is critical in achieving quantum speedups. 
While an analytical solution of $\theta$ exists for certain quantum speedup~\cite{andres2020quantum}, its optimal choice that minimizes the expected run-time is still unknown. 

As shown in our case study, gradient-based methods in differentiable quantum programming could automatically identify a better choice of $\theta$ than existing literature ~\cite{andres2020quantum,mizel2009critically}, without domain knowledge about the AA algorithm, the promise of which also extends to parameterized quantum random walks and repeat-until-success unitary implementation.
This provides a strong motivation to develop the AD technique for \emph{unbounded} quantum loops.
However, there is no general AD solution for unbounded loops even in classical (imperative) programs~\cite{GP18}, which questions the feasibility of our goal. 

Indeed, as we elaborate on in Sect.~\ref{sec:challenge}, unbounded loops introduce serious challenges in AD for classical, probabilistic, and quantum programs. 
Moreover, unique features of quantum programs, like the no-cloning theorem and the branching induced by measurements, further restrict the available AD techniques for quantum programs with unbounded loops. 

\noindent \textbf{Contributions.} We overcome these challenges and develop a differentiable quantum programming framework for unbounded loops, with the following contributions:   

\begin{itemize}[leftmargin=3mm]
  \item A formulation of parameterized quantum \textbf{while}-language with unbounded loops and a new parameterized unitary operation called the density operator exponentiation $e^{-it\sigma}$ of any density operator $\sigma$ 
  which allows the inclusion of more unitary gates. (Sect.~\ref{sec:para_lang})
  \item A sufficient condition (i.e., finite-dimensional program state space) for the differentiability of quantum programs with unbounded loops (Theorem~\ref{thm:diff-exist}). We also exhibit an example of non-differentiable infinite-dimensional quantum programs (Example~\ref{eg:non-diff}) to demonstrate the difference between finite and infinite dimensional quantum programs for differentiation. (Sect.~\ref{sec:para_lang})
  \item An AD scheme for quantum programs with unbounded loops with two components:
  (1) \emph{Differentiation on a Single-Occurrence of Parameter} (DSOP) for quantum circuits with respect to a parameter with a single occurrence;  
  and (2) \emph{Extension to Unbounded Loops} (EUL) for unbounded loops based on any DSOP.  We contribute a new DSOP technique, called the \emph{commutator-form rule} inspired by~\cite{Lloyd_2014} for general $e^{-i\theta H}$, with a more general applicability\footnote{It removes the limitation of the parameter-shift rule that is only applicable when $H$ has at most two distinct eigenvalues.}. 
  We also develop the code transformation and establish its correctness (Theorem~\ref{thm:exact}). (Sect.~\ref{sec:auto_diff})
  \item Implementation of our AD scheme with Python and Q\# and discussion of its relevant sample efficiency, in which we provide an upper bound that matches the one of~\citet{Zhu2020Differentiable} when there is no unbounded loop. (Sect.~\ref{sec:implement})
  \item Extensive case study on the gradient-based approach for automatically identifying unknown parameters in quantum algorithm design, which includes the parameterized AA algorithm, quantum walk with parameterized shift operator on 2-d grids, and unitary implementation with parameterized repeat-until-success algorithms. (Sect.~\ref{sec:case_study})
\end{itemize}

\vspace{1mm} \noindent \textbf{Related Work.}
There is a rich literature on differentiation rules of quantum circuits~\cite{schuld2019evaluating,PhysRevLett.118.150503,mitarai2018quantum,Banchi2021measuringanalytic,vidal2018calculus,wierichs2021general,izmaylov2021analytic,kyriienko2021generalized}.
These researches focus on how to use quantum hardware to derive the derivative of the expectation function of a parametrized quantum circuit.
For Pauli rotations $U(\theta) = e^{-i\theta \Delta/2}, \Delta=X,Y,Z$, \citet{PhysRevLett.118.150503} and \citet{mitarai2018quantum} first proposed a formula that only need to run the initial circuit twice with different parameters to find the derivative.
Then, \citet{schuld2019evaluating} named this formula the ``parameter-shift rule'' and expanded it to a general case of $U(\theta) = e^{-i\theta H}$ with Hamiltonian $H$ having at most two distinct eigenvalues.
Recently, independent developments of variants of the parameter-shift rules (general parameter-shift rules)~\cite{kyriienko2021generalized,izmaylov2021analytic,wierichs2021general} were proposed for general Hamiltonian $H$.
Their works can be traced back to an observation that the expectation function of a PQC with a single parameter is a finite Fourier series~\cite{vidal2018calculus}. Our commutator-form rule, which is applicable to $e^{-i\theta H}$ for general $H$, is based on a very different technique and has a simple form compared to general parameter-shift rules.

Most existing AD techniques in quantum computing~\cite{bergholm2018pennylane,luo2020yao,di2022quantum,kottmann2021tequila,nguyen2022qsun,broughton2020tensorflow} work with simple languages describing quantum circuits without any control flow. Some of these results, e.g., \texttt{Yao.jl}~\cite{luo2020yao}, also apply classical AD techniques to classical programs that simulate quantum circuits, which is unfortunately not scalable for real quantum applications.  

The only exception and also the closest work to ours is \cite{Zhu2020Differentiable}, which proposed differentiable quantum programming with bounded loops beyond quantum circuits. 
Although the syntax in~\cite{Zhu2020Differentiable} supports general parameterized gates, 
its code transformation only supports Pauli rotation gates based on a variant of the parameter-shift rule.
To handle AD of bounded quantum loops, \citet{Zhu2020Differentiable} used a finite collection of quantum programs and added up their outputs for the derivative.
As elaborated on in Sect.~\ref{sec:challenge}, one cannot extend this approach to a collection of unbounded sizes like unbounded loops in this paper.
The correctness and feasibility of this paper to deal with infinite summation caused by unbounded loops is the main difficulty, which \citet{Zhu2020Differentiable} didn't encounter.
Moreover, the efficiency of this paper is comparable to~\cite{Zhu2020Differentiable} in the bounded-loop setting.
Thus, this paper strictly improves \cite{Zhu2020Differentiable}.

\section{Challenges and Our Key ideas}
\label{sec:challenge}

Let us first revisit classical, probabilistic, and quantum programs as shown in~\figref{fig:example_programs} with one unbounded loop in order to understand their features, differences, and corresponding difficulties in differentiation. We assume some simple quantum terminology and refer readers to a more detailed preliminary in Sect.~\ref{sec:para_lang}. 

While-loop, an important construct to make an imperative programming language Turing-complete, may cause an arbitrary number of loops or infinite loops (not terminated).
In \emph{classical (deterministic) programs}, such as \figref{fig:example:a}, the variable $r$ is assigned with an integer $k$. When $k \geq 0$, the while-loop will execute the loop body $k$ times and terminate and when $k < 0$, the while-loop will execute the loop body for infinitely many times and not terminate.
Nevertheless, due to the deterministic nature of classical programs, \emph{there is one and only one path of the program execution}.

\begin{figure}[ht]
  \centering
  \begin{subfigure}[b]{.24\linewidth}
    \centering
    \begin{align*}
      &r \coloneqq k; \\
      &\qwhile r \neq 0 \qdo \\
      &\quad r \coloneqq r-1\\
      &\qod
    \end{align*}
    \caption{Classical.}\label{fig:example:a}
  \end{subfigure}
  \hfill
  \begin{subfigure}[b]{.24\linewidth}
    \centering
    \begin{align*}
      &r \coloneqq 0\oplus_{0.5}1; \\
      &\qwhile r \neq 0 \qdo \\
      &\quad r \coloneqq 0\oplus_{0.5}1\\
      &\qod
    \end{align*}
    \caption{Probabilistic.}\label{fig:example:b}
  \end{subfigure}
  \hfill
  \begin{subfigure}[b]{.24\linewidth}
    \centering
    \begin{align*}
      &r \coloneqq (\ket{0}-\ket{1})/\sqrt{2}; \\
      &\qwhile M[r] \neq 0 \qdo \\
      &\quad  r \coloneqq H[r]\\
      &\qod
    \end{align*}
    \caption{Quantum.}\label{fig:example:c}
  \end{subfigure}
  \hfill
  \begin{subfigure}[b]{.24\linewidth}
    \centering
    \begin{align*}
      &r \coloneqq (\ket{0}-\ket{1})/\sqrt{2}; \\
      &\qwhile^{\!\!(k)} M[r] \neq 0 \qdo \\
      &\quad  r\coloneqq H[r]\\
      &\qod
    \end{align*}
    \caption{Quantum, bounded.}\label{fig:example:d}
  \end{subfigure}
  \caption{Four while-loop programs to demonstrate unbounded loops in quantum programs.}\label{fig:example_programs}
\end{figure}

However, there may be an infinite number of execution paths in a \emph{quantum program} or \emph{probabilistic program}.
In~\figref{fig:example:b}, the command $r \coloneqq 0\oplus_{0.5}1$ assigns $0$ to variable $r$ with probability $0.5$ or $1$ to $r$ with probability $0.5$ otherwise. Thus, the probability of the probabilistic program in~\figref{fig:example:b} execute the loop body $k$, $k\geq 0$, times is $0.5^{k+1}$, which means this program has an \emph{infinite} number of execution paths.
Similarly, the quantum program in \figref{fig:example:c} also has an \emph{infinite} number of execution paths.
Let's see the execution of the program in \figref{fig:example:c}:
\begin{itemize}
  \item[(1)] First, $r$ is assigned with state $\ket{-}=(\ket{0}-\ket{1})/\sqrt{2}$. 
  \item[(2)] Second, in the measurement of the while-loop, state $\ket{-}$ is measured with $\{M_0 = \ketbra{0}{0}, M_1 = \ketbra{1}{1}\}$. The measurement outcome will be \emph{$0$ with probability $\bra{-}M_0\ket{-} = 0.5$} and \emph{$1$ with probability $\bra{-}M_1\ket{-} = 0.5$}. When the outcome is $0$, the program will terminate; when the outcome is $1$, the measured state of $r$ will become $\ket{1}$, and the program will enter the loop body (3).
  \item[(3)] In the loop body, applied with $H$, the state of $r$ becomes $\ket{-} = H\ket{1}=\frac{1}{\sqrt{2}}(\ket{0}-\ket{1})$, then the program goes back to the measurement of the while-loop (2).
\end{itemize} 
From (2) above, we can see that the program will terminate with probability $0.5$ or continue the while-loop with probability $0.5$ at each entry of the while-loop.
Therefore, the probability of this program executing loop body ($r \coloneqq H[r]$) for $k$, $k\geq 0$, times is $0.5^{k+1}$, which means this program has an \emph{infinite} number of execution paths.
Note the probability $0.5$ is implicitly implied by the measurement outcome of the quantum program, whereas in probabilistic programs such probabilities are explicitly given by the syntax. 

Another important difference worth highlighting between classical (deterministic or probabilistic) programs and quantum ones is the quantum no-cloning theorem that prohibits the use of the \emph{chain-rule-based}  forward/reverse differentiation in the quantum setting. 
As a result, AD techniques in quantum are somewhat separate from those commonly studied in the classical AD literature. 

Finally, consider a \emph{bounded-loop quantum program} in \figref{fig:example:d}, which is investigated in \cite{Zhu2020Differentiable}.
The number $k > 0$ in $\qwhile^{\!\!(k)}$ limits the iteration times of the loop body upto $k$, which results that the program in \figref{fig:example:d} has at most $k+1$, and hence a \emph{finite} number of  execution paths.

We summarize these comparisons in Table~\ref{tab:difference}. As we will see, dealing with infinitely many execution paths is one major difficulty in differentiation over unbounded loops, either probabilistic or quantum. 
The unusability of the Chain rule further complicates the quantum case. 

\begin{table}[ht]
  \centering
  \scalebox{0.85}{
  \renewcommand{\arraystretch}{1.2}
  \begin{tabular}{lcccc}
      \toprule
      & Classical & Probabilistic & Quantum \\
      \hline
      \# of Execution Paths & one & \textbf{possibly infinitely many} & \textbf{possibly infinitely many} \\
      \hline
      Distribution over Paths & none & explicit by sampling & \textbf{implicit by measurement} \\
      \hline
      Usability of Chain Rule & yes & yes & \textbf{no} \\
      \hline
      Differentiability & \thead{almost everywhere \\ \cite{Damiano2021Automatic}} & \thead{almost everywhere \\ for every path \\ \cite{Carol2021Probabilistic}, \\ sufficient conditions \\ for bounded loops \\ \cite{10.1145/3371084}} &  \thead{sufficient condition \\ for \textbf{unbounded loops}, \\ Theorem~\ref{thm:diff-exist}}\\
      \hline
  \end{tabular}}
  \caption{Comparisons among classical, probabilistic and quantum programs.}\label{tab:difference}
  \vspace{-15pt}
\end{table}

\subsection{Differentiability of Unbounded Quantum Loops}\label{sec:differentiability}
The \emph{differentiability} of unbounded quantum loops should be the first question to address, as it is already quite non-trivial in establishing so in classical and probabilistic functional programs. 

In classical programs, conditionals often lead to piecewise-defined functions and non-differentiable points, which are discontinuous or have different left and right derivatives~\cite{beck1994if}.
Even if the function defined by a program is differentiable, syntactic discontinuity can make AD fails\footnote{
Consider the program presented in~\cite{abadi2019simple,Damiano2021Automatic}:
\[\textsf{SillyId}\ \equiv\  \lambda x^{\mathbb{R}}.\textsf{if}\ x = 0\ \textsf{then}\ 0 \ \textsf{else}\ x.\]
We can see that $\sem{\textsf{SillyId}}(x) = x$, thus $\sem{\textsf{SillyId}}$ has a constant derivative of $1$.
However, general AD would produce the wrong answer $0$ at the point $x = 0$.}.
To resolve the issue of conditionals, \citet{abadi2019simple} adopted the ``partial conditionals'', which ignores the boundary case, and proved the correctness of AD on conditionals and recursion.
\citet{Damiano2021Automatic} characterized the set of ``stable points'', 
the intuition behind which is the point that has an open neighborhood with the same execution trace (``execution path''). They proved that AD is almost everywhere correct under the mild hypothesis. 
However, these arguments are developed for one execution path in classical programs. 

The case of probabilistic programs resembles a lot with quantum programs due to possibly infinitely many execution paths. 
Whether probabilistic programs would lead to non-differentiable densities at some non-measure-zero set has been an important open question in the field~\cite{yang2019some}. 
Recently, \citet{Carol2021Probabilistic} considered higher-order probabilistic programming with recursion and proved that a probabilistic program's density is almost everywhere differentiable under mild hypothesis.
However, it only tells us the differentiability of any trace of sampled values during execution, which implies a fixed execution path.

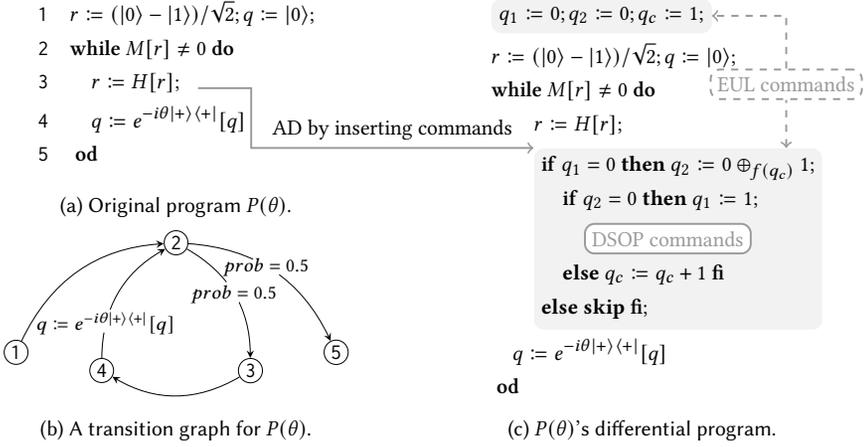
\begin{figure}[ht]
  \small
  \scalebox{0.9}{
    \centering
  \begin{subfigure}[b]{.48\linewidth}
    \centering
    \begin{subfigure}[b]{\linewidth}
        \centering
        \begin{align*}
          \texttt{1}\quad &r \coloneqq (\ket{0}-\ket{1})/\sqrt{2}; q \coloneqq \ket{0}; \\
          \texttt{2}\quad &\qwhile M[r] \neq 0 \qdo \\
          \texttt{3}\quad &\quad \qut{H}{r}; \tikz[baseline]{\node[anchor=base] (runa) {}}\\
          \texttt{4}\quad &\quad \qut{e^{-i\theta\ketbra{+}{+}}}{q}
          \\
          \texttt{5}\quad &\qod
        \end{align*}
        \caption{Original program $P(\theta)$.}\label{fig:running_example:a}
    \end{subfigure}
    
    \begin{subfigure}[b]{\linewidth}
      \centering
      \begin{tikzpicture}[every node/.style={circle,draw,inner sep=1pt}]
        \node (1) {\texttt{1}};
        \node[right=2cm of 1,yshift=1.6cm] (2) {\texttt{2}};
        \node[right=2cm of 2,yshift=-1.6cm] (5) {\texttt{5}};
        \node[below=1.5cm of 2,xshift=-1.1cm] (4) {\texttt{4}};
        \node[below=1.5cm of 2,xshift=1.1cm] (3) {\texttt{3}};
        \path[-stealth]
          (1) edge[bend left] (2)
          (2) edge[bend left] node[fill=white,draw=none,inner sep=4.5pt,pos=0.45] {} node[transition,draw=none,fill=none,pos=0.45] {\footnotesize $prob=0.5$} (5)
          (2) edge[bend left] node[fill=white,draw=none,inner sep=3.5pt,pos=0.48] {} node[transition,draw=none,fill=none] {\footnotesize $prob=0.5$} (3)
          (3) edge[bend left] (4)
          (4) edge[bend left] node[fill=white,draw=none,inner sep=3.5pt,pos=0.27] {} node[transition,draw=none,fill=none,pos=0.25] {\footnotesize $\qut{e^{-i\theta\ketbra{+}{+}}}{q}$} (2);
      \end{tikzpicture}
      \caption{A transition graph for $P(\theta)$.}\label{fig:running_example:b}
    \end{subfigure}
  \end{subfigure}
  \hfill
  \begin{subfigure}[b]{0.5\linewidth}
    \centering
    \begin{subfigure}[b]{\linewidth}
      \centering
      \hskip 12pt $\begin{aligned}
        & \tikz[baseline]{\node[anchor=base,thick,draw=black!5,text=black,fill=black!5,rounded corners=4pt] (cad3) {$q_1\coloneqq 0;q_2\coloneqq 0;q_c\coloneqq 1;$}} \\
        & r \coloneqq (\ket{0}-\ket{1})/\sqrt{2}; q \coloneqq \ket{0}; \\
        & \qwhile M[r] \neq 0 \qdo \\
        & \qquad \qut{H}{r};\\
        & \qquad \tikz[baseline]{\node[anchor=base,text=black,fill=black!5,rounded corners=4pt] (cad5) {$\begin{aligned}
          & \qif q_1 = 0 \textbf{ then } q_2 \coloneqq 0 \oplus_{f(q_c)} 1; \\
          & \quad \qif q_2 = 0 \textbf{ then } q_1 \coloneqq 1; \\
          & \quad \quad \tikz[baseline]{\node[anchor=base,thick,draw=black!40,rounded corners=4pt,text=black!40,fill=white] (cad6) {DSOP commands}} \\
          & \quad \textbf{else } q_c \coloneqq q_c + 1 \qfi \\
          & \textbf{else } \textbf{skip} \qfi;
        \end{aligned}$}} \\
        & \quad \qut{e^{-i\theta\ketbra{+}{+}}}{q} \\
        & \qod
      \end{aligned}$
      \caption{$P(\theta)$'s differential program.}
      \label{fig:running_example:c}
    \end{subfigure}
  \end{subfigure}
  \begin{tikzpicture}[overlay]
    \node[thick,draw=black!40,text=black!40,rounded corners=4pt,above=0.5cm of cad5, xshift=1.6cm,yshift=0.2cm,dashed] (cad8) {EUL commands};
    \path[draw=black!40,->,thick,dashed] (cad8) |- (cad3);
    \path[draw=black!40,<-,thick,dashed] (cad5.north) to[*|] (cad8.south);
    \path[draw=black!40,->,thick] (runa) -- +(0.9,0) |- node[above,pos=0.75] {AD by inserting commands} (cad5.north west);
  \end{tikzpicture}}
  \caption{Running example to demonstrate our AD scheme. Every time the program runs to a statement, e.g., $\qutp{\theta}{\ketbra{+}{+}}{q}$ here,  that containing $\theta$, it will first enter a block of EUL to decide whether to continue $P(\theta)$ or to do a differentiation operation by DSOP then continue $P(\theta)$. $0\oplus_{p}1$ is a probabilistic choice outputs $0$ with probability $p$ and $1$ with probability $1-p$ for any $0\leq p\leq 1$ and $f(n) = \mu(n) /(1-\sum_{j=1}^{n-1}\mu(j))$ is determined by the distribution $\mu$ mentioned in Sect~\ref{sec:eul_difficulty}.}\label{fig:running_example_programs}
  \vspace{-12pt}
\end{figure}

Recall the program in Fig.~\ref{fig:example:c}, on which we add a command with parameter $\theta$ into the loop body as our running example in Fig.~\ref{fig:running_example_programs}.
The transition graph for each line of program $P(\theta)$ in \figref{fig:running_example:a} is shown in \figref{fig:running_example:b}, where the behavior of $P(\theta)$ is the same as in \figref{fig:example:c}: $P(\theta)$ will terminate (goto line \texttt{5}) with probability $0.5$ or continue the
while-loop (goto line \texttt{3}) with probability $0.5$ at each entry (line \texttt{2}) of the while-loop.
The probability of $P(\theta)$ executing $k$ times $\qut{e^{-i\theta\ketbra{+}{+}}}{q}$ is $0.5^{k+1}$.
Consider operations related to variable $q$, $P(\theta)$ induces a semantic function like $f_k(\theta) = \sem{\qut{e^{-i\theta\ketbra{+}{+}}}{q}}^k$ with probability $0.5^{k+1}$.
The differentiability of $f_k(\theta)$ is generally easily   obtainable, while the differentiability of its expectation $F(\theta) = \sum_{k=0}^\infty \frac{1}{2^{k+1}}f_{k}(\theta)$ is unclear.
For this problem of expectation, two conditions are proposed in the probabilistic programming~\cite{10.1145/3371084}:
\begin{itemize}
  \item \emph{Differentiability of expectation (infinite summation)}: \begin{equation}\label{eq:r4}\tag{Condition A1}
    F(\theta) = \sum_k \mu(k)f_{k}(\theta)
  \text{ is differentiable on $\mathbb{R}$}.
  \end{equation}
  \item \emph{Exchangeability between differentiation and infinite summation}:
  \begin{equation}\label{eq:r5}\tag{Condition A2} 
    \text{for all $\theta \in \mathbb{R}$}, \partial_{\theta} \sum_k \mu(k)f_{k}(\theta) = \sum_k \mu(k) \partial_{\theta}(f_{k}(\theta)).
  \end{equation}
\end{itemize}
where $\mu$ is a probability distribution over index $k$.
\ref{eq:r4} states the premise of AD: we wouldn't talk about AD if the function induced by the program was not differentiable. \ref{eq:r5} states that the differential operation on every trace ($\partial_{\theta}f_{k}(\theta)$, which reflects the underlying AD implementation) can be collected into the differential operation on the total program ($\partial_{\theta}\sum_k \mu(k) f_{k}(\theta)$).
To the best of our knowledge, no research has yet investigated what kind of probabilistic programs or quantum programs with unbounded loops meets \ref{eq:r4} and \ref{eq:r5}.

\subsubsection{Our solution:} Fortunately, we identify \emph{finite-dimensional state space}, which is met by all  existing quantum applications, as a sufficient condition to satisfy \ref{eq:r4} and \ref{eq:r5}. 
Technically, under this condition, the probability that an unbounded quantum loop iterates $k$ times has an exponential decay on $k$ (Lemma~\ref{lem:a9}). 
The cornerstone behind this lemma is the \emph{compactness} of finite-dimensional Hilbert spaces.
As in our running example, $f_k(\theta) = \sem{\qut{e^{-i\theta\ketbra{+}{+}}}{q}}^k$, which corresponds to loop $k$ times, appears with probability $0.5^{k+1}$.
Let $g_{\theta} = \sem{\qut{e^{-i\theta\ketbra{+}{+}}}{q}}$, the differential of $f_{k}(\theta)$ is 
$\partial_{\theta}f_k(\theta) = \sum_{j=1}^k (g_{\theta})^{j-1}\circ \left(\partial_\theta g_{\theta}\right)\circ(g_{\theta})^{k-j}$, then
\begin{equation}\label{eq:diff_right}
  \sum_{k=0}^{\infty}\frac{1}{2^{k+1}}\partial_{\theta}f_k(\theta) = \sum_{k=0}^{\infty}\frac{1}{2^{k+1}}\sum_{j=1}^k (g_{\theta})^{j-1}\circ \left(\partial_\theta g_{\theta}\right)\circ(g_{\theta})^{k-j}
\end{equation}
is uniformly convergent as $\frac{1}{2^{k+1}}\sum_{j=1}^k (g_{\theta})^{j-1}\circ \left(\partial_\theta g_{\theta}\right)\circ(g_{\theta})^{k-j} \in \mathcal{O}(\frac{k}{2^{k+1}})$.
This uniform convergence implies both \ref{eq:r4} and \ref{eq:r5}.

Conversely, through Weierstrass' non-differentiable function $S(x) = \sum_{k=0}^{\infty}a^n\sin(b^n x)$~\cite[Theorem 1.31]{hardy1916weierstrass}, we can construct a counterexample (Example~\ref{eg:non-diff}) that is nowhere differentiable on $\mathbb{R}$ for quantum loops with infinite-dimensional space.
Specifically, Example~\ref{eg:non-diff} has two nested loops that induces $f_{k}(\theta) = \sem{\qut{e^{-i\theta\ketbra{+}{+}}}{q}}^{2^k}$ with probability $0.5^{k+1}$.
The differential of $f_k(\theta)$ becomes $\partial_{\theta}f_k(\theta) = \sum_{j=1}^{2^k} (g_{\theta})^{j-1}\circ \left(\partial_\theta g_{\theta}\right)\circ(g_{\theta})^{2^k-j}$, then this summation of $2^k$ terms makes $\sum_{k=0}^{\infty}\frac{1}{2^{k+1}}\partial_{\theta}f_k(\theta) = \sum_{k=0}^{\infty}\frac{1}{2^{k+1}}\sum_{j=1}^{2^k} (g_{\theta})^{j-1}\circ \left(\partial_\theta g_{\theta}\right)\circ(g_{\theta})^{2^k-j}$ divergent everywhere.
Note that the probability $0.5^{k+1}$ comes from the outer loop and the exponent $2^k$ comes from the inner loop.
For the latter to happen, the inner loop, like Example~\ref{eg:non-diff}, needs to have an infinite-dimensional register to record the information of $k$, which goes to positive infinite.
This non-differentiable counterexample can also be represented by the expectation of a probabilistic program (see Example~\ref{eg:non-diff}).  

\subsection{Execution Paths with Infinitely Many Parameter Occurrences}
\label{sec:eul_difficulty}

As we saw in~\figref{fig:running_example_programs}, loops lead to repeated execution of loop body, which means a parameter can appear many times at a single execution path, e.g., the function $f_k(\theta) = \sem{\qut{e^{-i\theta\ketbra{+}{+}}}{q}}^k$ introduced in Sect.~\ref{sec:differentiability} has $k$ occurrences of $\theta$.
These execution paths refer to \emph{quantum circuits with a parameter appearing multiple times}.
In the case of bounded quantum loops, the total number of execution paths is finite, and the multiplicity of parameter occurrences on each path is also bounded. 
Therefore, \citet{Zhu2020Differentiable} proposed additive quantum programs to represent a (finite) collection of quantum programs that compute partial derivatives of all occurrences.  
Conceivably, this approach does not extend to the infinite (and unbounded) case.

\subsubsection{Our solution:}
According to Sect.~\ref{sec:differentiability}, the differential of $F(\theta) = \sum_{k=0}^\infty \frac{1}{2^{k+1}}f_{k}(\theta)$ should be equal to
\begin{equation}\label{eq:all_partial_derivative}
  \sum_{k=0}^{\infty}\frac{\partial_{\theta}f_k(\theta)}{2^{k+1}} = \sum_{k=0}^{\infty}\sum_{j=1}^k\frac{1}{2^{k+1}} (g_{\theta})^{j-1}\circ \left(\partial_\theta g_{\theta}\right)\circ(g_{\theta})^{k-j} = \sum_{j=1}^{\infty}\sum_{k=1}^{\infty}\frac{1}{2^{k+1}}(g_{\theta})^{j-1}\circ \left(\partial_\theta g_{\theta}\right)\circ(g_{\theta})^{k-j}.
\end{equation}
We rewrite (\ref{eq:all_partial_derivative}) as $\sum_{j=1}^{\infty}a_j$ with $a_j = \sum_{k=1}^{\infty}\frac{1}{2^{k+1}}(g_{\theta})^{j-1}\circ \left(\partial_\theta g_{\theta}\right)\circ(g_{\theta})^{k-j}$ is the corresponding summation term for all execution paths with differentiation on $j$-th position.
With the infeasibility of the Chain rule, the existing quantum AD technique can only handle $a_j$ one by one, the cost of which will depend on the number of terms in~(\ref{eq:all_partial_derivative}) that will be unbounded in our case. 

Inspired by the \emph{importance sampling} in statistics, we construct a random variable $X$ such that
\[ \Pr(X = a_j/\mu(j)) = \mu(j), \forall j \in \mathbb{Z}_+\]
to estimate the infinite sum $\sum_{j}^{\infty}a_j$, where $\mu$ is a probability distribution on $\mathbb{Z}_+$.
By construction, the expectation of $X$ is $\sum_{j=1}^{\infty}a_j$.
Let's now focus on Fig.~\ref{fig:running_example:c}, where the role of EUL commands, written with probabilistic pseudocode, is to generate the distribution $\mu$ as the original program runs, and to attribute this probabilistic distribution to DSOP commands (differentiation operations that induce $a_j$) in the execution order.
Precisely, the variable $q_c$ records the number of loops until $q_2 = 0$. Together with the probabilistic choice $q_2 \coloneqq 0\oplus_{f(q_c)}1$ (see definition in Fig.~\ref{fig:running_example_programs}'s caption), the probability of $q_c = j$, $q_2 = 0$ and $q_1=1$, which means DSOP commands (a differentiation operation) are executed, and one term of $a_j$ is evaluated, is $\mu(j)$ when fixing an execution path of $P(\theta)$ with loops' number $n > j$.
Therefore, we associate the probability $\mu(j)$ with $a_j$ in the execution order.
Then, the desired random variable $X$ is natural to construct.

However, for the estimation efficiency of $X$'s expectation, the variance of $X$ should also be bounded.
To that end, we identify a sufficient condition for the distribution  $\mu:\mathbb{Z}_+\to[0,1]$ as 
\begin{equation}
    \label{condition}
    \lim_{n\to\infty} \sqrt[n]{\mu(n)} = 1, \tag{converging-rate condition}
\end{equation}
which would imply the correctness of our code-transformation (Theorem~\ref{thm:exact}) and its efficiency (Theorem~\ref{thm:bound2}).
Another implicit but critical property of our construction of random variable $X$ in Fig.~\ref{fig:running_example:c} is its independence of the underlying execution path.
It allows us to apply a simple and uniform code transformation while keeping all existing quantum branches that lead to all execution paths.

\subsection*{Applicability to classical and probabilistic programs with unbounded loops?}

Ignoring the issue of differentiability, 
our idea of constructing random variables to sample partial derivatives can be applied to classical and probabilistic unbounded loops.
However, it would be less efficient because partial derivatives can be collected and forward/backward-propagated along the execution path by the Chain rule in the classical setting, which is not leveraged by our scheme. 

Consider the example in~\figref{fig:eul_probabilistic} where we show how EUL is applied to a probabilistic program.
Denote the expectation of variable $y$ after executing $P(\theta)$ in~\figref{fig:eul_probabilistic_a} as $E(\theta) = \sum_{j=1}^{\infty} j \sin(\theta)\cdot 0.5^j = 2\sin(\theta)$. Its derivative is $E'(\theta) = 2\cos(\theta)$.
We can directly apply the classical forward-AD with a dual number (e.g., ~\cite{baydin2017automatic}) 
as in~\figref{fig:eul_probabilistic_b}.
Then the expectation of $\hat{y}$ (dual number of $y$) after executing the program in~\figref{fig:eul_probabilistic_b} is $\sum_{j=1}^\infty j \cos(\theta)\cdot 0.5^j = 2\cos(\theta) = E'(\theta)$.

As a comparison, the rewriting of our EUL to the probabilistic program $P(\theta)$ becomes~\figref{fig:eul_probabilistic_c}.
With the probabilistic choice $0\oplus_{f(q_c)}1$ and a variable $q_c$ to count the occurrences of $\theta$, the program in~\figref{fig:eul_probabilistic_c} has at most one time runs into the command $(*)$ and then never increases $q_c$ with probability $\mu(j)\cdot 0.5^{j-1}$ if the output of $q_c$ is $j \geq 1$.
Then, the probability of $\hat{y} = \cos(\theta), q_c = j$ after executing the program in~\figref{fig:eul_probabilistic_c} is $\mu(j)\cdot 0.5^{j-1}$.
We can construct a random variable $X$ with respect to $\hat{y}$ and $q_c$ satisfies that $\Pr(X = \hat{y}/\mu(j)) \equiv \Pr(\hat{y} = \cos(\theta), q_c = j) = \mu(j)\cdot 0.5^{j-1}$.
Hence, the expectation of $X$ is $\sum_{j=1}^{\infty} \cos(\theta)/\mu(j) \cdot \mu(j)\cdot 0.5^{j-1} = \sum_{j=1}^\infty \cos(\theta)\cdot 0.5^{j-1} = 2\cos(\theta)$, which is consistent with the above $E'(\theta)$.
The major difference between \figref{fig:eul_probabilistic_b} and \figref{fig:eul_probabilistic_c} is that \figref{fig:eul_probabilistic_b} can execute the command $(*)$ multiple times, while \figref{fig:eul_probabilistic_c} only executes $(*)$ once. 

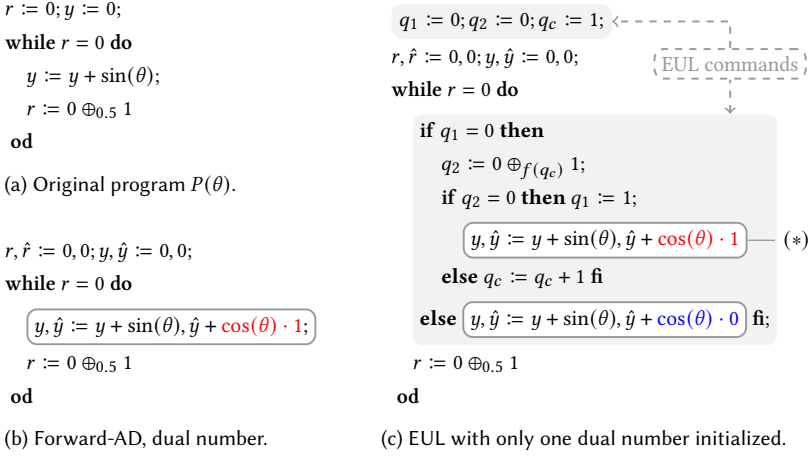
\begin{figure}[ht]
  \small
  \scalebox{0.9}{
    \centering
    \hspace{1cm}
  \begin{subfigure}[b]{0.3\linewidth}
    \captionsetup{singlelinecheck = false, justification=justified}
    \begin{subfigure}[b]{\linewidth}
      $\begin{aligned}
        & r\coloneqq 0; y\coloneqq 0; \\
        & \qwhile r = 0 \qdo \\
        & \quad y \coloneqq y + \sin(\theta); \\
        & \quad r \coloneqq 0 \oplus_{0.5} 1 \\
        & \qod
      \end{aligned}$
      \caption{Original program $P(\theta)$.}
      \label{fig:eul_probabilistic_a}
    \end{subfigure}

    \vspace{0.6cm}

    \begin{subfigure}[b]{\linewidth}
      $\begin{aligned}
        & r,\hat{r}\coloneqq 0,0; y,\hat{y}\coloneqq 0,0; \\
        & \qwhile r = 0 \qdo \\
        & \quad \tikz[baseline]{\node[anchor=base,thick,draw=black!40,rounded corners=4pt,text=black] (cad2) {$y,\hat{y} \coloneqq y+\sin(\theta),\hat{y}+{\color{red}\cos(\theta)\cdot 1};$}} \\
        & \quad r \coloneqq 0 \oplus_{0.5} 1 \\
        & \qod
      \end{aligned}$
      \caption{Forward-AD, dual number.}
      \label{fig:eul_probabilistic_b}
    \end{subfigure}
  \end{subfigure}
  \begin{subfigure}[b]{0.62\linewidth}
    \centering
    \begin{subfigure}[b]{\linewidth}
      \centering
      $\begin{aligned}
        & \tikz[baseline]{\node[anchor=base,thick,draw=black!5,text=black,fill=black!5,rounded corners=4pt] (cad3) {$q_1\coloneqq 0;q_2\coloneqq 0;q_c\coloneqq 1;$}} \\
        & r,\hat{r}\coloneqq 0,0; y,\hat{y}\coloneqq 0,0; \\
        & \qwhile r = 0 \qdo \\
        & \quad \tikz[baseline]{\node[anchor=base,text=black,fill=black!5,rounded corners=4pt] (cad5) {$\begin{aligned}
          & \qif q_1 = 0 \textbf{ then} \\
          & \quad q_2 \coloneqq 0 \oplus_{f(q_c)} 1; \\
          & \quad \qif q_2 = 0 \textbf{ then } q_1 \coloneqq 1; \\
          & \quad \quad \tikz[baseline]{\node[anchor=base,thick,draw=black!40,rounded corners=4pt,text=black,fill=white,pin={[overlay,pin edge={overlay}]right:$(*)$}] (cad6) {$y,\hat{y} \coloneqq y+\sin(\theta),\hat{y}+{\color{red}\cos(\theta)\cdot 1}$}} \\
          & \quad \textbf{else } q_c \coloneqq q_c + 1 \qfi \\
          & \textbf{else } \tikz[baseline]{\node[anchor=base,thick,draw=black!40,rounded corners=4pt,text=black,fill=white] (cad7) {$y,\hat{y} \coloneqq y+\sin(\theta),\hat{y}+{\color{blue}\cos(\theta)\cdot 0}$}} \qfi;
        \end{aligned}$}} \\
        & \quad r \coloneqq 0 \oplus_{0.5} 1 \\
        & \qod
      \end{aligned}$
      \caption{EUL with only one dual number initialized.}
      \label{fig:eul_probabilistic_c}
    \end{subfigure}
  \end{subfigure}
  \begin{tikzpicture}[overlay]
    \node[thick,draw=black!40,text=black!40,rounded corners=4pt,above=0.5cm of cad5, xshift=2cm,dashed] (cad8) {EUL commands};
    \path[draw=black!40,->,thick,dashed] (cad8) |- (cad3);
    \path[draw=black!40,<-,thick,dashed] (cad5.north) to[*|] (cad8.south);
  \end{tikzpicture}}
  \caption{A probabilistic program $P(\theta)$ with forward-AD and our EUL-based AD applied, where $0\oplus_{p}1$ is a probabilistic choice outputs $0$ with probability $p$ and $1$ with probability $1-p$ for any $0\leq p\leq 1$, $f(n) = \mu(n) /(1-\sum_{j=1}^{n-1}\mu(j))$ is determined by the distribution $\mu$ mentioned in Sect.~\ref{sec:eul_difficulty}.}
  \label{fig:eul_probabilistic}
  \vspace{-10pt}
\end{figure}

\section{Parameterized Quantum While-Programs}\label{sec:para_lang}
In this paper, we expand a parameterized extension~\cite{Zhu2020Differentiable} of quantum \textbf{while}-language~\cite{ying2016foundations} to include unbounded loops.
We assume that the reader is familiar with the basic knowledge of quantum computing, which is deferred to Appendix~\ref{sec:prelim}, and provide a summary of notation in Table~\ref{tab:notation}.

\begin{table}[ht]
    \small 
    \setlength{\tabcolsep}{3pt}
    \begin{tabular}{llllll}
        \multicolumn{2}{l}{\textbf{Sets}, \textbf{Spaces}} & \multicolumn{2}{l}{\textbf{States}} & \multicolumn{2}{l}{\textbf{Operators}, \textbf{Operations}}\\
        $\cH,\cH_q$ & Hilbert space & $\ket{\psi},\ket{0},\ket{1},\ket{+},\ket{-}$ & pure state & \makecell[l]{$U,V,e^{-i\theta \sigma}$, \\[-2pt] $H,X,Y,Z,X\otimes X$} & unitary \\
        $\cL(\cH)$ & \makecell[l]{linear operators \\[-2pt] of $\cH$} & $\rho,\sigma,\ketbra{\psi}{\psi}$ & \makecell[l]{density operator} & \makecell[l]{$M,\{M_m\},\{\ketbra{0}{0},\ketbra{1}{1}\}$ \\[-1pt] $O=\sum_{m}\lambda_m\ketbra{\psi_m}{\psi_m}$} & \makecell[l]{measurement \\[-1pt] observable} \\
        $\cD(\cH)$ & \makecell[l]{partial density \\[-2pt] operators of $\cH$} &  $\rho$ with $\tr(\rho) \leq 1$ & \makecell[l]{partial density \\[-2pt] operator} & $\cE:\rho\mapsto\sum_jE_j\rho E_j^{\dagger}$ & \makecell[l]{superoperator}\\
    \end{tabular}
    \caption{A summary of notation used in this paper.}\label{tab:notation}
    \vspace{-1cm}
\end{table}

\subsection{Syntax}\label{sec:syntax}
Let us first define the syntax of our programming language. Similar to~\cite{ying2016foundations, Zhu2020Differentiable}, we assume a countably infinite set $\qvar$ of quantum variables and use the symbols $q, q', q_0, q_1, \ldots \in \qvar$ as metavariables ranging over them.
Each quantum variable $q\in \qvar$ has a type of Hilbert space $\cH_q$ as its state space. A quantum register $\bar{q} = q_1, q_2, \ldots, q_n$ is a finite sequence of distinct quantum variables, and its state space is
$\cH_{\bar{q}} = \bigotimes_{j=1}^n\cH_{q_j}$.

\begin{definition}
    [Syntax]\label{syntax}
    A $k$-parameterized quantum \textbf{while}-program with parameter $\bm{\theta}\in \mathbb{R}^k$,
    is generated by the syntax:
        \begin{align*}
            P(\bm{\theta}) \Coloneqq{} &\qskip \mid \qinit{q} 
            \mid \qut{U}{\bar{q}} \mid \qutp{\theta}{\sigma}{\bar{q}} \mid \\ 
            & P_1(\bm{\theta}); P_2(\bm{\theta}) \mid \qif (\Box m\cdot M[\bar{q}] = m \to P_m(\bm{\theta})) \qfi \mid\\ 
            & \qqwhile{\bar{q}}{P(\bm{\theta})}
        \end{align*} 
\end{definition}

\subsubsection{Explanation of the syntax:}
Statement $\qskip$ does nothing and terminates immediately. 
Statement $\qinit{q}$ sets the quantum variable $q$ to $\ket{0}$. 
Statement $\qut{U}{\bar{q}}$ means perform a unitary $U$ on the quantum register $\bar{q}$. Statement $\qutp{\theta}{\sigma}{\bar{q}}$ gives a special \emph{parameterized} form of unitary---\emph{density operator simulation}---with $\sigma$ a density operator and $\theta$ selected from $\bm{\theta}$. 
Statement $P_1(\bm{\theta}); P_2(\bm{\theta})$ means first executes $P_1(\bm{\theta})$, and when $P_1(\bm{\theta})$ terminates, it executes $P_2(\bm{\theta})$. 
Statement $\qif (\Box m\cdot M[\bar{q}] = m \to P_m(\bm{\theta})) \qfi$ means performs a measurement $M = \lbrace M_m\rbrace$ on $\bar{q}$ and then a subprogram $P_m(\bm{\theta})$ will be performed upon the outcome $m$ of the measurement. 
In $\qqwhile{\bar{q}}{P(\bm{\theta})}$, a binary measurement $M = \lbrace M_0, M_1\rbrace$ is performed, if the measurement outcome is $0$, then the program terminates; otherwise, the program executes the loop body $P(\bm{\theta})$ and continues the loop, potentially for an arbitrary number of rounds. 

\begin{remark}\label{rmk:1}
   We provide some remarks on the above syntax.
    \begin{itemize} [leftmargin=3mm]
        \item We add a statement $\qinitd{\sigma}{\bar{q}}$ as a more general initialization that sets the state of the quantum register $\bar{q}$ to be a representable density operator $\sigma$, where the ``representable'' means the density operator can be generated by a short parameterized quantum \textbf{while}-program $P$ without parameters and while-loop statement.
        \item We use $\qutp{\theta}{\sigma}{\bar{q}}$ to describe a generally parameterized unitary applied on $\bar{q}$. For any unitary $U$, there is a Hermitian operator $H$ such that $U = e^{-iH}$ and for any Hermitian operator $H$, $e^{-iH}$ (the quantum simulation of Hamiltonian $H$) is also a unitary.
        The parameterization we have chosen, i.e., density operator simulation ($e^{-i\theta \sigma}$), can also express general Hamiltonian simulation.
        \footnote{For any $e^{-i\theta H}$, we define a density operator $\sigma_H = (H - \mu I)/\tr(H - \mu I)$, where $\mu$ is the ground eigenvalue of $H$, $I$ the identity and $H\neq \mu I$. Then $e^{-i \theta H}$ is the same as $e^{-i\theta' \sigma_H}$ where $\theta'=\tr(H - \mu I)\theta$, since  
        $e^{-i\theta H}\rho e^{i\theta H} = e^{-i(\tr(H - \mu I) \theta) \sigma_{H}} \rho e^{i(\tr(H - \mu I) \theta) \sigma_{H}}, \forall \rho.$ }
        \item Our parameterization allows expressing many commonly used parameterized quantum gates, such as Pauli rotation gates and two-qubit coupling gates which are universal and can be reliably implemented in near-term quantum machines\footnote{Single-qubit Pauli rotation gates are given in the following form $R_{\Delta}(\theta):= \exp(\frac{-i\theta}{2}\Delta), \Delta\in \{X,Y,Z\}$. One can also extend Pauli rotations to multiple qubits. For example, consider two-qubit coupling gates $\{R_{\Delta\otimes \Delta}:=\exp(\frac{-i\theta}{2}\Delta\otimes \Delta)\}_{\Delta \in \{X,Y,Z\}}$. Note that these two-qubit gates can generate entanglement between two qubits. Combined with single-qubit rotations, they form a \emph{universal gate set} for quantum computation. 
        }.
    \end{itemize}
\end{remark}

\subsection{Denotational Semantics}\label{sec:semantics}
Following the semantics of quantum \textbf{while}-programs~\cite{ying2016foundations}, the denotational semantics of parameterized quantum \textbf{while}-programs can be defined.

\begin{definition}[Structural Representation of Denotational Semantics~\cite{ying2016foundations}]  
    Let $\cH_{all}$ denote the tensor product of the state spaces of all quantum variables and $\rho \in \cD(\cH_{all})$ indicate the (global) state of quantum variables.
    The denotational semantics of a parameterized quantum \textbf{while}-program $P(\bm{\theta})$ is a superoperator $\sem{P(\bm{\theta})}:\cD(\cH_{all})\to\cD(\cH_{all})$ inductively defined as:
    \begin{itemize}
        \item $\sem{\qskip}(\rho) = \rho$;
        \item $\sem{\qinit{q}}(\rho) = \sum_n\ketbra[q]{0}{n}\rho\ketbra[q]{n}{0}$;
        \item $\sem{\qut{U}{\bar{q}}}(\rho) = U\rho U^{\dagger}$;
        \item $\sem{\qutp{\theta}{\sigma}{\bar{q}}}(\rho) = e^{-i\theta\sigma}\rho e^{i\theta\sigma}$;
        \item $\sem{P_1(\bm{\theta)};P_2(\bm{\theta})}(\rho) = \sem{P_2(\bm{\theta})}(\sem{P_1(\bm{\theta})}(\rho))$;
        \item $\sem{\qif (\Box m\cdot M[\bar{q}] = m \to P_m(\bm{\theta})) \qfi}(\rho) = \sum_m \sem{P_m(\bm{\theta})}(\cE_m(\rho))$;
        \item $\sem{\qqwhile{\bar{q}}{P(\bm{\theta})}}(\rho) = \bigsqcup_{n=0}^{\infty}\sum_{k=0}^{n}\cE_0\circ(\sem{P(\bm{\theta})}\circ\cE_1)^k(\rho)$,
    \end{itemize}
    where $\{\ket{n}_q\}$ is an orthonormal basis of state space $\cH_q$ of variable $q$, $\cE_{m}: \rho \mapsto M_m\rho M_m^{\dagger}$ are defined for each measurement $M = \lbrace M_m\rbrace$ in $P(\bm{\theta})$ and $\bigsqcup$ stands for the least upper bound in the CPO of partial density operators with the L\"{o}wner order $\sqsubseteq$\footnote{The L\"owner order $\sqsubseteq$ is defined as $A \sqsubseteq B$ if $B-A$ is positive semidefinite.} (see ~\cite[Lemma 3.3.2]{ying2016foundations}).
\end{definition}

For a quantum program $P$, we define $var(P)$ to be the set of quantum variables $q\in \qvar$ appearing in a program $P$, and let
$\cH_{P} = \bigotimes_{q\in var(p)} \cH_q.$
When dealing with the semantics of a program $P(\bm{\theta})$, we only consider the states on $\cH_{P(\bm{\theta})}$, that is using $\rho\in \cH_{P(\bm{\theta})}$ to represent a product state $\rho\otimes \rho_0 \in \cDa$, where $ \rho_0\in \cD(\cH_{\qvar\setminus var(P(\bm{\theta}))})$.
When the dimension of $\cH_{P(\bm{\theta})}$ is finite, we have that $\cD(\cH_{P(\bm{\theta})})$ is a compact set, thus
\begin{equation}\label{semantics-series}
\begin{aligned}
    \sem{\qqwhile{\bar{q}}{P(\bm{\theta})}}(\rho)
    & ={} \sum_{k=0}^{\infty}\cE_0\circ(\sem{P(\bm{\theta})}\circ\cE_1)^k(\rho).
\end{aligned}
\end{equation}
Note further that a semantic mapping $\sem{\cdot}:\cDa\to\cDa$ defined on $\cDa$ can be used as a mapping on the set of linear operators $\cL(\cH_{all})$ by linear extension. 

\subsection{Expectation Functions and Differentiability}
\label{subsec:expectation}

The output of a quantum program is often regarded as the expectation of an observable obtained by measurements after its execution. 
We define the \emph{expectation functions} to capture the output of quantum programs, which is similar to the observable semantics introduced by~\cite{Zhu2020Differentiable}.

\begin{definition}
    [Expectation Function]
    For a parameterized quantum \textbf{while}-program $P(\bm{\theta})$ with parameter $\bm{\theta}\in\mathbb{R}^k$, an initial state $\rho\in \cDa$ and an observable $O$ on $\cH_{all}$, the expectation function $f: \mathbb{R}^k \to \mathbb{R}\cup \lbrace\pm \infty\rbrace$ that maps the parameter $\bm{\theta}$ to the output expectation is defined by
    \begin{equation}
        f(\bm{\theta}) = \tr(O\sem{P(\bm{\theta})}(\rho)). 
    \end{equation}
\end{definition}

For any $1\leq j\leq k$, the partial derivatives $\frac{\partial f}{\partial\theta_j}$ of expectation function $f$ with respect to parameter $\theta_j$ can be defined in the standard way.
Their existence in the finite-dimensional case is guaranteed by the following: 
\begin{theorem}[Differentiability] \label{thm:diff-exist}
    For a parameterized quantum \textbf{while}-program $P(\bm{\theta})$ with parameter $\bm{\theta}$ $= (\theta_1, \theta_2,$ $ \ldots, \theta_k)$ $\in\mathbb{R}^k$, an initial state $\rho\in \cDa$ and an observable $O$ on $\cH_{all}$, if
    $\cH_{P(\bm{\theta})}$ is finite-dimensional, then $\frac{\partial f}{\partial\theta_i}$ exists.
\end{theorem}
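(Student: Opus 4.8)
The plan is to prove the stronger statement that the map $\bm{\theta}\mapsto\sem{P(\bm{\theta})}$, regarded as a function into the finite-dimensional space of superoperators on $\cH_{P(\bm{\theta})}$, is $C^1$ (indeed $C^\infty$). This suffices: although $O$ and $\rho$ live on the possibly infinite-dimensional $\cH_{all}$, the superoperator $\sem{P(\bm\theta)}$ acts as the identity outside $\cH_{P(\bm\theta)}$, so expanding $\sem{P(\bm\theta)}$ in a fixed basis of the $(\dim\cH_{P(\bm\theta)})^4$-dimensional space of superoperators on $\cH_{P(\bm\theta)}$ exhibits $f(\bm\theta)=\tr(O\,\sem{P(\bm\theta)}(\rho))$ as a \emph{finite} linear combination of the coordinate functions of $\sem{P(\bm\theta)}$, with fixed coefficients, each finite because $\rho$ is trace-class and $O$ is bounded. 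Differentiability of $f$, hence existence of each $\partial f/\partial\theta_i$, then follows from differentiability of those coordinates.

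I would argue by structural induction on $P(\bm\theta)$. All constructs other than the while-loop are routine: $\qskip$, $\qinit{q}$ and $\qut{U}{\bar q}$ are constant in $\bm\theta$; $\qutp{\theta}{\sigma}{\bar q}$ has semantics $\rho\mapsto e^{-i\theta\sigma}\rho\,e^{i\theta\sigma}$, which is entire in $\theta$ since the matrix exponential is; sequential composition is a composition of two $C^1$ matrix-valued maps, hence $C^1$ by the chain rule; and the measurement branch has semantics $\sum_m \sem{P_m(\bm\theta)}\circ\cE_m$, a finite sum of $C^1$ maps. So the induction hypothesis is preserved by every finitary construct.

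The crux is the loop $\qqwhile{\bar q}{P(\bm\theta)}$. Writing $g(\bm\theta):=\sem{P(\bm\theta)}$, which is $C^1$ (hence with locally bounded derivative) by the induction hypothesis, and using \eqref{semantics-series}, valid because $\cD(\cH_{P(\bm\theta)})$ is compact, the loop semantics on a given $\rho$ is the pointwise-convergent series $\sum_{k=0}^{\infty}\cE_0\circ(g(\bm\theta)\circ\cE_1)^k(\rho)$. I would fix $\bm\theta_0$ and invoke Lemma~\ref{lem:a9} — whose proof rests on the compactness of the finite-dimensional state space — to obtain $C>0$ and $\lambda\in(0,1)$ such that the $k$-th segment of the loop decays like $C\lambda^k$, \emph{uniformly} for $\bm\theta$ in a compact neighbourhood $U$ of $\bm\theta_0$, the uniformity coming from continuity of $g$ and the fact that the decay is an open condition in $\bm\theta$. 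The formal $\theta_i$-derivative of the $k$-th segment expands by the Leibniz rule as $\cE_0\circ\sum_{j=1}^{k}(g\circ\cE_1)^{j-1}\circ(\partial_{\theta_i}g)\circ\cE_1\circ(g\circ\cE_1)^{k-j}(\rho)$; combining the exponential decay with a uniform bound on $\norm{\partial_{\theta_i}g}$ over $U$ and the fact that $\cE_0$, $\cE_1$ and $g$ are trace-non-increasing completely positive maps, hence contractions in the trace norm, one shows that the trace norm of this $k$-th derivative term is $\mathcal{O}(k\lambda^k)$, so the series of formal derivatives converges uniformly on $U$. The standard theorem on termwise differentiation of a uniformly convergent series then shows $\bm\theta\mapsto\sem{\qqwhile{\bar q}{P(\bm\theta)}}(\rho)$ is differentiable on $U$ with the expected derivative; re-running the estimate with $k^m\lambda^k$ in place of $k\lambda^k$ upgrades this to $C^\infty$. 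This closes the induction, and applying the linear functional $\tr(O\,\cdot\,)$ finishes the theorem.

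I expect the main obstacle to be making the exponential-decay rate \emph{uniform in $\bm\theta$}: a priori the constant $\lambda$ could drift to $1$ as $\bm\theta$ varies, and what Lemma~\ref{lem:a9} really controls is the behaviour on the part of the state space reachable from $\rho$, which itself moves with $\bm\theta$. Pinning this down cleanly — passing to a suitable power of $g(\bm\theta_0)\circ\cE_1$, using continuity of the operator norm, and arguing that the resulting bound is locally uniform — is exactly where the compactness input earns its keep. A related subtlety is that a naive tail estimate alone bounds the $k$-th derivative term only by a \emph{constant} (non-summable) amount, since the parameter-derivative spreads across all $k$ positions of the unrolled loop; one genuinely needs decay from \emph{both} the initial and the final segment of each term, equivalently the reindexing $\sum_{k}\sum_{j\le k}=\sum_{j}\sum_{k\ge j}$ used in \eqref{eq:all_partial_derivative}, and, in the corner case where the loop does not terminate almost surely, the extra observation that the surviving mass is confined to a $g(\bm\theta)$-invariant subspace on which $\cE_0$ vanishes — once more a consequence of the finite-dimensional decomposition behind Lemma~\ref{lem:a9}. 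Everything else (the chain and Leibniz rules, the contraction bounds, entire-ness of the matrix exponential) is routine.
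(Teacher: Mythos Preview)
Your route — direct structural induction showing $\bm\theta\mapsto\sem{P(\bm\theta)}$ is $C^1$ — is genuinely different from the paper's. The paper does not argue by induction on the program; it obtains Theorem~\ref{thm:diff-exist} as a by-product of Theorem~\ref{thm:exact}, whose proof unfolds $\sem{P(\bm\theta)}$ into \emph{computation paths} of the associated super-operator-valued transition system and shows (Lemma~A.10 in the appendix) that the path-indexed series of derivatives converges uniformly on closed intervals. Two features of that route are worth contrasting with yours: each path's derivative contribution is bounded by a constant times the path's own probability $\tr(\cE_\pi(\rho))$, so summability reduces to a tail estimate on a probability-like measure over paths; and uniformity in the parameter is extracted via Dini's theorem applied to the monotone partial sums of absolute values, rather than by arguing that the decay rate in Lemma~\ref{lem:a9} is locally constant in $\bm\theta$ — the obstacle you single out is simply sidestepped. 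Your approach buys a cleaner, self-contained differentiability argument that does not touch the code-transformation machinery; the paper's buys a single proof that simultaneously yields correctness of the transformation and differentiability.

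That said, I do not think your sketch closes in the non-almost-surely-terminating case. Writing the $k$-th derivative term as $\sum_{j=1}^k\cE_0\circ(g\circ\cE_1)^{k-j}\circ(\partial g\circ\cE_1)\circ(g\circ\cE_1)^{j-1}(\rho)$, Lemma~\ref{lem:a9} applied to the outer $k-j$ iterations bounds each summand in trace norm by $C\lambda^{k-j}$, giving only $\mathcal{O}(1)$ per $k$, as you note. Recovering the missing $\lambda^{j}$ requires the inner segment $(g\circ\cE_1)^{j-1}(\rho)$ to decay, which fails exactly when the loop does not terminate almost surely. Your proposed fix — that the residual mass lives on the $g(\bm\theta)$-invariant subspace $\cX$ annihilated by $\cE_0$ — is the right picture but does not finish the job: the induction hypothesis only says $g$ is $C^1$, so $\partial g$ is merely \emph{some} bounded linear map, not completely positive and with no reason to respect the $\cX\oplus\cY$ decomposition. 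Mass that has settled into $\cX$ can be kicked back into $\cY$ by $\partial g\circ\cE_1$, after which only the outer $\lambda^{k-j}$ damping applies, and the series over $k$ diverges. The paper's path decomposition avoids this because it never treats the loop body as an opaque $C^1$ map: it unfolds $g$ all the way down to individual transitions, so that each derivative insertion is a single commutator sandwiched between genuinely CP maps, and the bound by the full path probability survives. To make your induction go through you would need to carry a stronger hypothesis of roughly that shape — e.g., that $\partial g$ factors as a finite sum of CP-map/commutator/CP-map sandwiches — which is essentially rebuilding the path picture inside the induction.
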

\begin{proof}
    This is a corollary of Theorem~\ref{thm:exact}.
\end{proof}

As shown in the following example, however, it is possible that the expectation function $f$ is non-differentiable when $\cH_{P(\bm{\theta})}$ is an infinite-dimensional space.

\begin{example}
    [Non-differentiable Infinite-Dimensional quantum program]\label{eg:non-diff}
    Let $q, r$ be two qubits with state space $\cH_q = \cH_r = \Span\{\ket{0},\ket{1}\}$,
    $t_1, t_2$ be quantum variables with state space $\cH_\infty = \lbrace \sum_{n=-\infty}^{\infty}\alpha_n\ket{n}: \alpha_n\in \mathbb{C}, \sum_n\abs{\alpha_n}^2<\infty\rbrace$, and $\theta \in \mathbb{R}$ be a parameter.
    Consider the following parameterized quantum program $P(\theta)$:{\small\setlength{\lineskip}{0pt}
    \begin{align*}
        P(\theta) \equiv{}&  \qinit{q}; \qinit{t_1}; & C(\theta) \equiv{}& q \coloneqq 0;t_1 \coloneqq 0;\\
        &\qwhile M[q] = 1 \qdo && \qwhile q \neq 1 \qdo \\
        &\quad \qut{R}{t_1}; &&\quad t_1\coloneqq t_1+1;\\
        &\quad
            \tikz[baseline]{\node[anchor=base,fill=black!5,rounded corners=4pt,pin={[overlay,pin edge={overlay,black!5,thick},draw=black!40,text=black!40,rounded corners=4pt,thick]left:\rotatebox{90}{$\qutp{2^{t_1}\theta}{\ketbra{+}{+}}{r}$}}]{$\begin{aligned}
                &\qinit{r};\qinit{t_2}; \\
                &\qwhile M[q] = 1 \qdo \\
                &\quad \qutp{\theta}{\ketbra{+}{+}}{r};\qut{R}{t_2};\\
                &\quad \qut{EX}{t_1,t_2,q} \\
                &\qod; \\
            \end{aligned}$}} &&\quad \tikz[baseline]{\node[anchor=base,fill=black!5,rounded corners=4pt,pin={[overlay,pin edge={overlay,black!5,thick},draw=black!40,text=black!40,rounded corners=4pt,thick]left:\rotatebox{90}{$r\coloneqq \sin(2^{t_1}\theta)+1$}}]{$\begin{aligned}
                &r\coloneqq 1;t_2\coloneqq 0; \\
                &\qwhile t_2 \neq t_1 \qdo \\
                &\quad r\coloneqq r*2; t_2\coloneqq t_2+1 \\
                &\qod; \\
                &r\coloneqq \sin(r * \theta)+1;
            \end{aligned}\vphantom{\begin{aligned}
                &\qinit{r};\qinit{t_1}; \\
                &\qwhile M[q] = 1 \qdo \\
                &\quad \qutp{\theta}{\braket{+}{+}}{r};\qut{R}{t_2};\\
                &\quad \qut{EX}{t_1,t_2,q} \\
                &\qod; \\
            \end{aligned}}$}}\\
        &\quad \qut{H}{q}; &&\quad q\coloneqq 0\oplus_{\frac{1}{2}}1;\\
        &\qod && \qod
    \end{align*}}
    where:
    \begin{itemize}
        \item $M = \lbrace M_0 = \ketbra{1}{1}, M_1 = \ketbra{0}{0}\rbrace$ is the measurement on qubit $q$ in the computational basis;
        \item $R = \sum_{j}\ketbra{j+1}{j}$ is the right-translation operator on $t_1$; and
        \item $EX = \sum_{2^j=k}\ketbra{jk}{jk}\otimes X + \sum_{2^j\neq k}\ketbra{jk}{jk}\otimes I$ is a unitary that performs $X$ operation on $q$ if $t_1$ and $t_2$ is in state $\ket{j}$ and $\ket{k}$, respectively, and $k = 2^j$ for any $j,k\in \mathbb{Z}$.
    \end{itemize}
    For an initial state $\rho = \ketbra[q]{0}{0}\otimes\ketbra[r]{0}{0}\otimes\ketbra[t_1]{0}{0}\otimes\ketbra[t_2]{0}{0}$ and an observable $O = 2\ketbra[r]{\psi}{\psi}$ with $\ket{\psi} =  (\ket{0} - i\ket{1})/\sqrt{2}$, a calculation using (\ref{semantics-series}) yields the expectation function of $P(\theta)$:
    \begin{align*}
        f(\theta) &= \sum_{k=1}^{\infty} \frac{1}{2^k}\tr\left(2\ketbra[r]{\psi}{\psi} e^{-i2^k \theta \ketbra[r]{+}{+}}\ketbra[r]{0}{0} e^{i2^k \theta \ketbra[r]{+}{+}}\right)
        = \sum_{k=1}^{\infty}\frac{1+\sin(2^k\theta)}{2^k} = 1+\sum_{k=1}^{\infty}\frac{1}{2^k}\sin(2^k\theta), 
    \end{align*}
    which is well-defined. 
    However, {\color{red}$f$ is non-differentiable everywhere} due to Weierstrass's non-differentiable function~\cite[Theorem 1.31]{hardy1916weierstrass}:
    the function
    $S(x) = \sum_{n=0}^{\infty}a^n\sin(b^nx)$
    converges uniformly on $\mathbb{R}$, which implies $S$ is continuous on $\mathbb{R}$, but nowhere differentiable for any $0<a<1,ab\geq 1$.

    The probabilistic program $C(\theta)$ is a counterpart of $P(\theta)$ for illustration, where $q\coloneqq 0\oplus_{\frac{1}{2}}1$ assigns $0$ to $q$ with probability $\frac{1}{2}$ and $1$ to $q$ otherwise.
    The boxed commands assigns $\sin(2^{t_1}\theta) + 1$ to $r$, thus we can see that
    the expectation of variable $r$ after runs $C(\theta)$ is also $f(\theta)$ and non-differentiable everywhere.
\end{example}

\section{Automatic Differentiation for Unbounded Quantum Loops}\label{sec:auto_diff}
In this section, we develop the AD technique for parameterized quantum \textbf{while}-programs to overcome the major difficulty of finding analytical derivatives of unbounded loops.

\subsection{Differentiation on a Single-Occurrence of Parameter}\label{subsec-commutator}
Our first contribution is a new DSOP technique, called the commutator-form rule.
\citet{PhysRevLett.118.150503} and \citet{mitarai2018quantum} first proposed a derivative formula for Pauli rotations, which is named by \citet{schuld2019evaluating} as the ``parameter-shift rule'' to handle the case of  $U(\theta) = e^{-i\theta H}$ with $H$ having at most two distinct eigenvalues.

Our commutator-form rule is designed to be applicable to $e^{-i\theta H}$ for general $H$.
Technically, it was inspired by a few existing works~\cite{mitarai2018quantum, beer2020training, Lloyd_2014} which leverage the commutator form for various purposes. 
We also note some recent independent developments~\cite{wierichs2021general, izmaylov2021analytic, kyriienko2021generalized} of variants of the parameter-shift rules to handle more general $e^{-i\theta H}$. 
However, our rule is based on a very different technique, which could be of independent interest by itself. 
Precisely, 

\begin{lemma}\label{lem:a4}
    Let $\cH_1, \cH_2, \cH_3$ be Hilbert spaces and $\cE_1:\cD(\cH_1)$ $\to \cD(\cH_2), \cE_2:\cD(\cH_2) \to \cD(\cH_3)$ be superoperators. For any Hermitian operator $H$ on $\cH_2$ and $\theta\in\mathbb{R}$, we define
    $\cE_{H,\theta}(\rho) = e^{-i\theta H}\rho e^{i\theta H}$ for all $\rho\in\cD(\cH_2)$. Then for any density operator $\rho$ on $\cH_1$: 
    \[ \frac{\mathrm{d}}{\mathrm{d} \theta}(\cE_2\circ\cE_{H, \theta}\circ\cE_1(\rho)) = \cE_2\circ \cE_{H,\theta}(-i[H, \cE_1(\rho)]),\]
    where commutator $[\cdot,\cdot]$ is defined as follows: $[A,B]=AB-BA$ for any operators $A$ and $B$.
\end{lemma}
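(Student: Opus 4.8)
The plan is to reduce the claim to a one-variable differentiation of the matrix exponential $e^{-i\theta H}$ and then push the result through the two fixed superoperators $\cE_1$ and $\cE_2$ by linearity. First I would observe that, since $\cE_1$ does not depend on $\theta$, the state $\cE_1(\rho)$ is a fixed operator $\tau \in \cD(\cH_2)$, so it suffices to prove $\frac{\mathrm{d}}{\mathrm{d}\theta}\big(e^{-i\theta H}\tau e^{i\theta H}\big) = e^{-i\theta H}\big(-i[H,\tau]\big)e^{i\theta H}$ and then apply $\cE_2$ to both sides, using that $\cE_2$ is a bounded linear map (so it commutes with the derivative, which is a limit of difference quotients). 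The outer composition with $\cE_2$ and the identification $\cE_{H,\theta}(-i[H,\tau]) = e^{-i\theta H}(-i[H,\tau])e^{i\theta H}$ then give exactly the stated right-hand side.

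For the core identity I would differentiate the product $e^{-i\theta H}\tau e^{i\theta H}$ using the product rule together with the standard facts $\frac{\mathrm{d}}{\mathrm{d}\theta}e^{-i\theta H} = -iH e^{-i\theta H} = e^{-i\theta H}(-iH)$ and $\frac{\mathrm{d}}{\mathrm{d}\theta}e^{i\theta H} = e^{i\theta H}(iH) = (iH)e^{i\theta H}$ (these hold because $H$ commutes with $e^{\pm i\theta H}$). This yields
\[
\frac{\mathrm{d}}{\mathrm{d}\theta}\big(e^{-i\theta H}\tau e^{i\theta H}\big) = (-iH)e^{-i\theta H}\tau e^{i\theta H} + e^{-i\theta H}\tau e^{i\theta H}(iH).
\]
Now I would insert $e^{i\theta H}e^{-i\theta H} = I$ in the first term (writing $(-iH)e^{-i\theta H} = e^{-i\theta H}(-iH)$ since $H$ and $e^{-i\theta H}$ commute) and $e^{-i\theta H}e^{i\theta H}=I$ appropriately in the second, to factor out $e^{-i\theta H}$ on the left and $e^{i\theta H}$ on the right, arriving at $e^{-i\theta H}\big(-iH\tau + \tau iH\big)e^{i\theta H} = e^{-i\theta H}\big(-i[H,\tau]\big)e^{i\theta H}$, as desired.

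The main obstacle is a dimension/rigor issue rather than an algebraic one: the $\cH_i$ need not be finite-dimensional, so one must be careful that the product rule applies to the (operator-norm) derivative of $\theta\mapsto e^{-i\theta H}$ for a possibly unbounded Hermitian $H$, and that $\cE_2$, being a (completely positive, trace-non-increasing, hence norm-bounded) superoperator, may legitimately be exchanged with $\frac{\mathrm{d}}{\mathrm{d}\theta}$. In the intended application $H$ acts on a finite-dimensional register or is bounded (indeed $H=\sigma$ a density operator in the language, or a Hamiltonian with bounded spectrum), so $e^{-i\theta H}$ is norm-differentiable with derivative $-iHe^{-i\theta H}$ and the manipulations are immediate; I would state this boundedness hypothesis explicitly (or restrict to the finite-dimensional $\cH_2$ that the paper uses elsewhere) and then the rest is the routine computation above. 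A final minor point to check is that $-i[H,\tau]$ is a legitimate argument for $\cE_2\circ\cE_{H,\theta}$ viewed as a linear map on $\cL(\cH_2)$ — this is fine because we already extended all semantic maps to linear operators by linear extension, as noted just before the lemma.
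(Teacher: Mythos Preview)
Your proposal is correct and follows essentially the same approach as the paper: both differentiate $e^{-i\theta H}\cE_1(\rho)e^{i\theta H}$ via the product rule, use that $H$ commutes with $e^{\pm i\theta H}$ to factor as $e^{-i\theta H}(-i[H,\cE_1(\rho)])e^{i\theta H}$, and pass $\cE_2$ through the derivative by linearity. Your additional remarks on boundedness and the linear extension of $\cE_2$ are more careful than the paper's bare computation, but the underlying argument is identical.
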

\begin{proof}
    See Appendix~\ref{prf:a4}.
\end{proof}

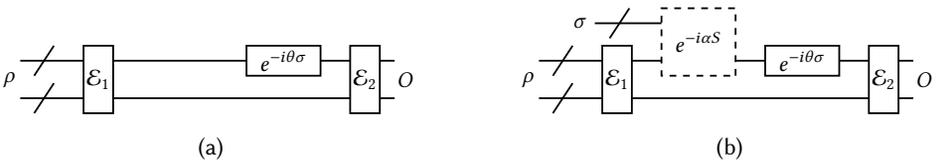
\begin{figure}[ht]
  \centering
  \vspace{-12pt}
  \begin{subfigure}[b]{.49\linewidth}
  \centering
      \scalebox{0.85}{
\begin{tikzpicture}[scale=1.10000,x=1pt,y=1pt,thick]
\filldraw[color=white] (0.000000, -7.500000) rectangle (152.000000, 37.500000);
\draw[color=black] (0.000000,15.000000) -- (152.000000,15.000000);
\draw[color=black] (0.000000,0.000000) -- (152.000000,0.000000);
\filldraw[color=white,fill=white] (0.000000,-3.750000) rectangle (-4.000000,18.750000);
\draw[color=black] (2.000000,7.500000) node[left] {$\rho$};
\draw[color=black] (150.000000,7.500000) node[right] {$O$};
\draw (6.000000, 9.000000) -- (14.000000, 21.000000);
\draw (6.000000, -6.000000) -- (14.000000, 6.000000);
\draw (32.000000,15.000000) -- (32.000000,0.000000);
\begin{scope}
\draw[fill=white] (32.000000, 7.500000) +(-45.000000:8.485281pt and 19.091883pt) -- +(45.000000:8.485281pt and 19.091883pt) -- +(135.000000:8.485281pt and 19.091883pt) -- +(225.000000:8.485281pt and 19.091883pt) -- cycle;
\clip (32.000000, 7.500000) +(-45.000000:8.485281pt and 19.091883pt) -- +(45.000000:8.485281pt and 19.091883pt) -- +(135.000000:8.485281pt and 19.091883pt) -- +(225.000000:8.485281pt and 19.091883pt) -- cycle;
\draw (32.000000, 7.500000) node {$\cE_1$};
\end{scope}
\begin{scope}[xshift=42]
\draw[fill=white] (65.000000, 15.000000) +(-45.000000:21.213203pt and 8.485281pt) -- +(45.000000:21.213203pt and 8.485281pt) -- +(135.000000:21.213203pt and 8.485281pt) -- +(225.000000:21.213203pt and 8.485281pt) -- cycle;
\clip (65.000000, 15.000000) +(-45.000000:21.213203pt and 8.485281pt) -- +(45.000000:21.213203pt and 8.485281pt) -- +(135.000000:21.213203pt and 8.485281pt) -- +(225.000000:21.213203pt and 8.485281pt) -- cycle;
\draw (65.000000, 15.000000) node {$e^{-i\theta \sigma}$};
\end{scope}
\draw (140.000000,15.000000) -- (140.000000,0.000000);
\begin{scope}
\draw[fill=white] (140.000000, 7.500000) +(-45.000000:8.485281pt and 19.091883pt) -- +(45.000000:8.485281pt and 19.091883pt) -- +(135.000000:8.485281pt and 19.091883pt) -- +(225.000000:8.485281pt and 19.091883pt) -- cycle;
\clip (140.000000, 7.500000) +(-45.000000:8.485281pt and 19.091883pt) -- +(45.000000:8.485281pt and 19.091883pt) -- +(135.000000:8.485281pt and 19.091883pt) -- +(225.000000:8.485281pt and 19.091883pt) -- cycle;
\draw (140.000000, 7.500000) node {$\cE_2$};
\end{scope}
\end{tikzpicture}}
      \caption{}\label{fig:ca}
  \end{subfigure}
  \begin{subfigure}[b]{.49\linewidth}
  \centering
      \scalebox{0.85}{
\begin{tikzpicture}[scale=1.10000,x=1pt,y=1pt,thick]
\filldraw[color=white] (0.000000, -7.500000) rectangle (152.000000, 37.500000);
\draw[color=black] (0.000000,15.000000) -- (152.000000,15.000000);
\draw[color=black] (0.000000,0.000000) -- (152.000000,0.000000);
\filldraw[color=white,fill=white] (0.000000,-3.750000) rectangle (-4.000000,18.750000);
\draw[color=black] (2.000000,7.500000) node[left] {$\rho$};
\draw[color=black] (150.000000,7.500000) node[right] {$O$};
\draw (6.000000, 9.000000) -- (14.000000, 21.000000);
\draw (6.000000, -6.000000) -- (14.000000, 6.000000);
\draw (32.000000,15.000000) -- (32.000000,0.000000);
\begin{scope}
\draw[fill=white] (32.000000, 7.500000) +(-45.000000:8.485281pt and 19.091883pt) -- +(45.000000:8.485281pt and 19.091883pt) -- +(135.000000:8.485281pt and 19.091883pt) -- +(225.000000:8.485281pt and 19.091883pt) -- cycle;
\clip (32.000000, 7.500000) +(-45.000000:8.485281pt and 19.091883pt) -- +(45.000000:8.485281pt and 19.091883pt) -- +(135.000000:8.485281pt and 19.091883pt) -- +(225.000000:8.485281pt and 19.091883pt) -- cycle;
\draw (32.000000, 7.500000) node {$\cE_1$};
\end{scope}
\begin{scope}[xshift=42]
\draw[fill=white] (65.000000, 15.000000) +(-45.000000:21.213203pt and 8.485281pt) -- +(45.000000:21.213203pt and 8.485281pt) -- +(135.000000:21.213203pt and 8.485281pt) -- +(225.000000:21.213203pt and 8.485281pt) -- cycle;
\clip (65.000000, 15.000000) +(-45.000000:21.213203pt and 8.485281pt) -- +(45.000000:21.213203pt and 8.485281pt) -- +(135.000000:21.213203pt and 8.485281pt) -- +(225.000000:21.213203pt and 8.485281pt) -- cycle;
\draw (65.000000, 15.000000) node {$e^{-i\theta \sigma}$};
\end{scope}
\begin{scope}[xshift=-42]
\draw[color=black] (65,30.000000) -- (107.000000,30.000000);
\draw[color=black] (65.000000,30.000000) node[left] {$\sigma$};
\draw (107.000000,30.000000) -- (107.000000,15.000000);
\draw (71.000000, 24.000000) -- (79.000000, 36.000000);
\draw[fill=white,dashed] (107.000000, 22.500000) +(-45.000000:21.213203pt and 19.091883pt) -- +(45.000000:21.213203pt and 19.091883pt) -- +(135.000000:21.213203pt and 19.091883pt) -- +(225.000000:21.213203pt and 19.091883pt) -- cycle;
\clip (107.000000, 22.500000) +(-45.000000:21.213203pt and 19.091883pt) -- +(45.000000:21.213203pt and 19.091883pt) -- +(135.000000:21.213203pt and 19.091883pt) -- +(225.000000:21.213203pt and 19.091883pt) -- cycle;
\draw (107.000000, 22.500000) node {$e^{-i\alpha S}$};
\end{scope}
\draw (140.000000,15.000000) -- (140.000000,0.000000);
\begin{scope}
\draw[fill=white] (140.000000, 7.500000) +(-45.000000:8.485281pt and 19.091883pt) -- +(45.000000:8.485281pt and 19.091883pt) -- +(135.000000:8.485281pt and 19.091883pt) -- +(225.000000:8.485281pt and 19.091883pt) -- cycle;
\clip (140.000000, 7.500000) +(-45.000000:8.485281pt and 19.091883pt) -- +(45.000000:8.485281pt and 19.091883pt) -- +(135.000000:8.485281pt and 19.091883pt) -- +(225.000000:8.485281pt and 19.091883pt) -- cycle;
\draw (140.000000, 7.500000) node {$\cE_2$};
\end{scope}
\end{tikzpicture}}
      \caption{}\label{fig:cb}
  \end{subfigure}
  \vspace{-5pt}
  \caption{Introduce the commutator with a similar circuit.}\label{fig:commutator}
  \vspace{-12pt}
  \Description{commutator}
\end{figure}

\subsubsection{Commutator-Form Rule.}
The way of introducing commutators is visualized in \figref{fig:commutator}. We define 
\begin{equation}
    f(\theta) = \tr(O \cE_2(e^{-i\theta\sigma}\cE_1(\rho) e^{i\theta\sigma}))
\end{equation}
as the expectation function in \figref{fig:ca} and 
\begin{equation}\label{eq:swap_g}
    g(\theta;\alpha) = \tr(O\cE_2(e^{-i\theta\sigma}e^{-i\alpha S}\cE_1(\rho)\otimes \sigma e^{i\alpha S}e^{i\theta\sigma}))
\end{equation}
as the expectation function in \figref{fig:cb}, where $S$ is the SWAP operator\footnote{The SWAP operator $S$ on a space $\cH\otimes\cH$ is defined as $S(\ket{a}\otimes\ket{b}) = \ket{b}\otimes\ket{a}$ for any $\ket{a},\ket{b}\in\cH$ that swaps the states of two systems.}. With Lemma~\ref{lem:a4}, we have
\[\frac{\mathrm{d}}{\mathrm{d} \theta}f(\theta) = \tr\left(O \cE_2\left(e^{-i\theta\sigma}(-i[\sigma\otimes I, \cE_1(\rho)]) e^{i\theta\sigma}\right)\right).\]
Inspired by the trick of applying unitary transformation $e^{-i\theta\rho}$ of any density operator $\rho$ in quantum principal component analysis~\cite{Lloyd_2014}, we find that for any $\alpha\in(0,\frac{\pi}{2})$:
\begin{equation}\label{eq:com}
    \begin{aligned}
        \llap{(\text{commutator-form rule})\qquad} \frac{\mathrm{d}}{\mathrm{d} \theta}f(\theta) = \frac{1}{\sin(2\alpha)}\left(g(\theta;\alpha) - g(\theta;-\alpha)\right).
    \end{aligned}
\end{equation}

\subsection{Code Transformation for Unbounded Loops}\label{sec:unbounded}

Our AD scheme (Fig.~\ref{fig:running_example_programs}) could leverage any DSOP technique (both the commutator-form rule and the parameter-shift rule). 
Due to the space limit, we illustrate the code-transformation based on the commutator-form rule and leave the details based on the parameter-shift rule in Appendix~\ref{subsec-phase}. 
As mentioned in the introduction, we construct the code transformation with respect to parameter $\theta$ (denoted $T_{\theta}$ in \figref{fig:trans_rules})  in the pure quantum fashion so that differential programs can be written in the same syntax as the original one. 
However, it is not hard to see the correctness proof directly carries over to the hybrid quantum-classical case.  

\begin{definition}[Code Transformation]\label{dfn:derivative}
    For a parameterized quantum \textbf{while}-program $P(\bm{\theta})$ with parameter $\bm{\theta}\in\mathbb{R}^k$, its differential program with respect to $\theta$ is defined as a parameterized quantum \textbf{while}-program $\qd{\theta}(P(\bm{\theta}))$:
        \[ \qd{\theta}(P(\bm{\theta})) = \dinit;T_{\theta}(P(\bm{\theta})), \]
    with $\dinit$ defined as follows and $T_{\theta}, C, GP$ given in \figref{fig:trans_rules}, 
    \begin{align*}
        \dinit \equiv{}& \qinit{q_1};\qinit{q_2};\qinit{q_c};\qut{C}{q_c};\qut{GP}{q_c,q_2}.
    \end{align*}
\end{definition}

\begin{figure}[t]
    \centering
    \small
    \vspace{-4pt}
    \begin{align*}
        T_{\theta}(\qskip) \equiv{} \qskip \qquad\qquad
        T_{\theta}(\qinit{q}) &\equiv{} \qinit{q} \qquad\qquad
        T_{\theta}(\qut{U}{\bar{q}}) \equiv{} \qut{U}{\bar{q}} \\
        T_{\theta}(\qutp{\theta'}{\sigma}{\bar{q}}) &\equiv{} \qutp{\theta'}{\sigma}{\bar{q}} \quad (\theta' \neq \theta) \\
        T_{\theta}(P_1(\bm{\theta});P_2(\bm{\theta})) &\equiv{} T_{\theta}(P_1(\bm{\theta}));T_{\theta}(P_2(\bm{\theta})) \\
        T_{\theta}(\qif (\Box m\cdot M[\bar{q}] = m \to P_m(\bm{\theta})) \qfi) &\equiv{} \qif (\Box m\cdot M[\bar{q}] = m \to T_{\theta}(P_m(\bm{\theta}))) \qfi \\
        T_{\theta}(\qwhile M[\bar{q}] = 1 \qdo P(\bm{\theta}) \qod) &\equiv{} \qwhile M[\bar{q}] = 1 \qdo T_{\theta}(P(\bm{\theta})) \qod
    \end{align*}
    \vspace{-13pt}
    \begin{align*}
        \begin{aligned}
        (*)\  T_{\theta}(\qutp{\theta}{\sigma}{\bar{q}}) \equiv{} \vphantom{\qif (M_{q_1,q_2}[q_1, q_2] = 0 \to{} \qut{C}{q_c};\qut{GP}{q_c,q_2}} \\
        \vphantom{\Box{} = 1 \to{} \qut{X}{q_1}; \qinitd{\textstyle\frac{I}{2}}{q_2}; \qinitd{\sigma}{\bar{q}'};} \\
        \vphantom{\phantom{\,\, = 1 \to{}} \qut{AS}{q_2,\bar{q},\bar{q}'}} \\
        \vphantom{\Box{} = 2 \to{} \qskip ) \qfi;{\color{black}\qutp{\theta}{\sigma}{\bar{q}}}}
        \end{aligned}&{\color{blue}\begin{aligned} \vphantom{T_{\theta}(\qutp{\theta}{\sigma}{\bar{q}}) \equiv{}}
            \qif (M_{q_1,q_2}[q_1, q_2] &= 0 \to{} \qut{C}{q_c};\qut{GP}{q_c,q_2} \\
            \Box{} &= 1 \to{} \qut{X}{q_1}; \qinitd{\textstyle\frac{I}{2}}{q_2}; \qinitd{\sigma}{\bar{q}'};\\
            & \phantom{\,\, = 1 \to{}} \qut{AS}{q_2,\bar{q},\bar{q}'}\\
            \Box{} &= 2 \to{} \qskip ) \qfi;{\color{black}\qutp{\theta}{\sigma}{\bar{q}}}
            \end{aligned}}
    \end{align*}
    \vspace{-10pt}
    \caption{Code transformation rules with respect to parameter $\theta$. The blue part of $(*)$ refers to the EUL part of \figref{fig:running_example_programs}, where $q_1, q_2$ are two qubit variables, $q_c$ is a quantum variable with state space $\cH_c = \Span\lbrace \ket{n}:n \in \mathbb{Z}\rbrace$, $M_{q_1,q_2} = \lbrace M_0 = \ketbra{00}{00}, M_1 = \ketbra{01}{01},$ $ M_2=\ketbra{10}{10}+\ketbra{11}{11}\rbrace$,
    $C = \sum_{j=0}^{\infty}\ketbra{j+1}{j}$ is the right-translation operator,
    $GP = \sum_{j=1}^{\infty}\ketbra{j}{j}\otimes R_y(2\arcsin(\sqrt{b_j}))$ and $b_j= \mu(j)/(1-\sum_{k=1}^{j-1}\mu(k))$, $AS = \ketbra{0}{0}\otimes e^{-i\frac{\pi}{4}S_{\bar{q}, \bar{q}'}} + \ketbra{1}{1}\otimes e^{i\frac{\pi}{4}S_{\bar{q}, \bar{q}'}}$ and  $S_{\bar{q}, \bar{q}'}$ is the SWAP operator between $\cH_{\bar{q}}$ and $\cH_{\bar{q}'}$.}
    \label{fig:trans_rules}
    \Description{Code transformation rules}
\end{figure}

The connection to the hybrid quantum-classical case can
be seen as follows:
the blue part of rule $(*)$ in~\figref{fig:trans_rules} corresponds to the EUL commands of our AD scheme in~\figref{fig:running_example_programs}.
Our construction $T_{\theta}$ guarantees that no entanglement will be created between $q_1, q_2, q_c$ and other quantum variables in the differential program, which means that there is only classical correlation rather than quantum correlation (see~\cite[Section VI. Bipartite Entanglement]{RevModPhys.81.865}) between $q_1, q_2, q_c$ and other quantum variables.
Moreover, $q_c$ will be always in its basis states $\{\ket{n}:n\in \mathbb{Z}\}$ and $q_1,q_2$ are two qubit variables.
Therefore, $q_1, q_2, q_c$ can be separated from the differential program and simulated efficiently by a classical computer.

To establish the correctness of our code transformation, we develop the following lemma about finite-dimensional quantum programs in light of Example~\ref{eg:non-diff}.

\begin{lemma}\label{lem:a9}
  Consider a quantum loop $P \equiv \qqwhile{\bar{q}}{Q}$.  
  Assume that the state space $\cH_P$ is finite-dimensional. We define superoperators $\cE_i: \cD(\cH_P) \to \cD(\cH_P)$ by $\cE_i(\rho) = M_i \rho M_i^{\dagger}$, $i=0,1$ and  $\cE: \cD(\cH_P) \to \cD(\cH_P)$ by $\cE(\rho)= \sem{Q}(\rho)$. Then for any $\epsilon \in (0, 1)$, there exists $N = N_{\epsilon} > 0$ such that $\forall n \in \mathbb{N}, \forall \rho \in \cD(\cH_P),$
  \[ \tr(\cE_0\circ(\cE\circ\cE_1)^n(\rho)) \leq \epsilon^{\lfloor \frac{n}{N}\rfloor}\tr(\rho).\]
\end{lemma}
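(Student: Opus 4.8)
The plan is to pass to the Heisenberg (predual) picture, where the bound uniform in $\rho$ becomes a single operator inequality. Write $\cG = \cE\circ\cE_1$ and let $\cG^{\dagger}$ denote its Heisenberg dual; since $\cG$ is completely positive (a composition of CP maps), $\cG^{\dagger}$ is completely positive, hence linear and Löwner-monotone. Then $\tr(\cE_0\circ\cG^{n}(\rho)) = \tr(M_0^{\dagger}M_0\,\cG^{n}(\rho)) = \tr(A_n\rho)$ with $A_n := (\cG^{\dagger})^{n}(M_0^{\dagger}M_0)\geq 0$. Because this must hold for every $\rho\in\cD(\cH_P)$, it suffices to prove the operator inequality $A_n \leq \epsilon^{\lfloor n/N\rfloor} I$ (Löwner order) for a suitable $N = N_\epsilon$; then $\tr(A_n\rho)\leq \epsilon^{\lfloor n/N\rfloor}\tr(\rho)$ for $\rho\succeq 0$ is immediate.

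First I would track the ``survival observables'' $D_n := (\cG^{\dagger})^{n}(I)$. Since $\cE=\sem{Q}$ is trace–non-increasing, $\cE^{\dagger}(I)\leq I$, so $D_1 = \cE_1^{\dagger}(\cE^{\dagger}(I)) \leq \cE_1^{\dagger}(I) = M_1^{\dagger}M_1 \leq I = D_0$; applying the monotone map $\cG^{\dagger}$ repeatedly gives a decreasing chain $I = D_0 \geq D_1 \geq \cdots \geq 0$. As the $D_n$ stay in the order interval $\{0\leq D\leq I\}$ of the finite-dimensional space of Hermitian operators on $\cH_P$, this chain converges in norm to a fixed point $D_\infty$ with $\cG^{\dagger}(D_\infty)=D_\infty$ --- this is where finite-dimensionality (compactness) enters. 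The bridge between $A_n$ and $D_n$ is the telescoping inequality $A_n \leq D_n - D_{n+1}$, proved by induction: the base case is $M_0^{\dagger}M_0 = I - M_1^{\dagger}M_1 \leq I - \cE_1^{\dagger}(\cE^{\dagger}(I)) = D_0 - D_1$, and the inductive step is obtained by applying $\cG^{\dagger}$ to both sides. (Summing over $n$ incidentally re-derives $\sum_n A_n \leq I$, i.e. that the loop semantics is trace–non-increasing.)

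Next I would extract the exponential rate. Set $E_n := D_n - D_\infty \geq 0$; then $E_n = (\cG^{\dagger})^{n}(E_0)$ with $E_0 = I - D_\infty$, the chain $E_0 \geq E_1 \geq \cdots$ decreases in norm to $0$ (so $\norm{E_n}\to 0$), and $0\leq E_n \leq E_0$ forces $\ker E_0 \subseteq \ker E_n$, i.e. every $E_n$ is supported on the fixed subspace $W := \operatorname{range}(E_0)$. On $W$ the operator $E_0$ is positive definite, so $E_0 \geq \lambda\,P_W$ where $\lambda>0$ is its least nonzero eigenvalue and $P_W$ the projection onto $W$ (if $E_0 = 0$ the lemma is trivial). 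Pick $N$ with $\norm{E_N}\leq \epsilon\lambda$; since $E_N$ is supported on $W$, this gives $E_N \leq \norm{E_N}P_W \leq \epsilon\lambda P_W \leq \epsilon E_0$. Applying $(\cG^{\dagger})^{n}$ (linear, monotone) and using $(\cG^{\dagger})^{n}(E_0)=E_n$ yields $E_{n+N} \leq \epsilon\,E_n$ for all $n$; iterating gives $E_{kN}\leq \epsilon^{k}I$, and for $n = kN+r$ with $0\leq r<N$, $E_n = (\cG^{\dagger})^{r}(E_{kN}) \leq \epsilon^{k}(\cG^{\dagger})^{r}(I) = \epsilon^{k}D_r \leq \epsilon^{k}I$ with $k = \lfloor n/N\rfloor$. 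Combined with $A_n \leq D_n - D_{n+1} = E_n - E_{n+1} \leq E_n$, this gives $A_n \leq \epsilon^{\lfloor n/N\rfloor}I$, as required.

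The main obstacle is the case where the loop does \emph{not} terminate with probability one (e.g. $\cE=\mathrm{id}$ with $M_1$ unitary), so that $D_\infty\neq 0$ and the naive hope ``$\tr(\cG^{N}(\rho))\leq c\,\tr(\rho)$ uniformly in $\rho$ for some $c<1$'' fails outright; one genuinely has to peel off the fixed point $D_\infty$. What makes the argument go through is that \emph{all} residuals $E_n$ live on the single subspace $W=\operatorname{range}(E_0)$ on which $E_0$ is bounded below, so the merely qualitative convergence $\norm{E_N}\to 0$ (from compactness) can be upgraded to the quantitative operator inequality $E_N\leq\epsilon E_0$, which then propagates multiplicatively along the positive map $\cG^{\dagger}$. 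The only properties of the loop body used are that $\sem{Q}$ is completely positive and trace–non-increasing.
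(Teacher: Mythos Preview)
Your proof is correct and takes a genuinely different route from the paper's. Both arguments pass to the Heisenberg picture and exploit finite-dimensionality, but the mechanism for extracting the exponential rate is different. The paper builds the ``non-terminating'' subspace $\cX=\supp(\sigma)$ from the Ces\`aro average $\sigma=\lim_{n}\frac{1}{n}\sum_{k\leq n}\cG^{k}(I)$, shows $\cE_0^{*}(I)\sqsubseteq P_{\cY}$ for $\cY=\cX^{\perp}$, and then proves $(\cG^{*})^{N}(P_{\cY})\sqsubseteq\epsilon P_{\cY}$ by applying Dini's theorem on the compact unit sphere to upgrade pointwise convergence of $\tr(P_{\cY}\cG^{n}(\ketbra{\psi}{\psi}))\to 0$ to uniform convergence. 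You instead work directly with the monotone sequence $D_n=(\cG^{\dagger})^{n}(I)\searrow D_{\infty}$, peel off the fixed point to get $E_n=D_n-D_{\infty}\to 0$ in norm, and use the spectral gap of $E_0$ on its own support to turn $\norm{E_N}\leq\epsilon\lambda$ into the operator inequality $E_N\leq\epsilon E_0$, which then self-propagates under $\cG^{\dagger}$. Your argument is more elementary---no Ces\`aro limit, no Dini---and the telescoping bound $A_n\leq D_n-D_{n+1}=E_n-E_{n+1}$ is a clean replacement for the paper's projection bound $\cE_0^{*}(I)\sqsubseteq P_{\cY}$. The paper's approach, on the other hand, makes explicit contact with quantum Markov chain structure (the invariant subspace $\cX$) and yields a contraction stated purely in terms of the projection $P_{\cY}$, which is convenient for the later variance estimates; your $W=\operatorname{range}(E_0)$ plays the analogous role but is defined more implicitly.
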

\begin{proof}
  See Appendix~\ref{prf:a9}.
\end{proof}
The above lemma ensures the probability that the finite-dimensional program runs out of the loop has an exponential decay on the number of loop iterations. 
This observation leads to an exponential decay of partial derivatives for corresponding occurrences of the parameter, which in turn guarantees the existence of the derivative  and the validity of exchanging the order between the infinite summation and the derivation. 

\begin{theorem}
    [Correctness of Code Transformation]\label{thm:exact}
    Given a parameterized quantum \textbf{while}-program $P(\bm{\theta})$ with parameter $\bm{\theta}\in\mathbb{R}^k$ and finite-dimensional state space $\cH_{P(\bm{\theta})}$, an observable $O$ and an input state $\rho$. Let $f(\bm{\theta})$
    be the expectation function of $P(\bm{\theta})$ with respect to $\rho$ and $O$. Then the partial derivative of $f$ with respect to $\theta$ is
    \begin{equation}\label{eq:main}
         \frac{\partial }{\partial \theta}f(\bm{\theta}) = \tr\left((O_d\otimes O)\sem{\qd{\theta}(P(\bm{\theta}))}(\rho)\right),
    \end{equation}
the expectation function of $\qd{\theta}(P(\bm{\theta}))$ with respect to $\bm{\theta}$, observable $O_d\otimes O$ and input state $\rho$ with 
$O_d = \sum_{j=1}^{\infty}\frac{2}{\mu(j)}\ketbra{j}{j}\otimes \ketbra{1}{1}\otimes Z$ is an observable on $\cH_{q_c}\otimes\cH_{q_1}\otimes \cH_{q_2}$.
\end{theorem}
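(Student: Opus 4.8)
The plan is to verify Equation~\eqref{eq:main} by unfolding the denotational semantics of $\qd{\theta}(P(\bm{\theta})) = \dinit; T_\theta(P(\bm{\theta}))$ and matching it term-by-term against the infinite sum that the partial derivative $\partial_\theta f(\bm{\theta})$ must equal. First I would fix, for each occurrence of $\qutp{\theta}{\sigma}{\bar q}$ along an execution path, the summation term $a_j$ as in Equation~\eqref{eq:all_partial_derivative}: the derivative of $f$ decomposes (once we know it exists and commutes with the infinite sum) into $\sum_j a_j$, where $a_j$ is the contribution of differentiating the $j$-th occurrence of $\theta$, with all other occurrences left untouched and a DSOP insertion (the commutator-form rule of Lemma~\ref{lem:a4}, realized by the $AS$ gate producing $-i[\sigma,\cdot]$ via $e^{\mp i\frac{\pi}{4}S}$) at that position. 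This step relies crucially on Lemma~\ref{lem:a9}: finite-dimensionality forces $\tr(\cE_0\circ(\cE\circ\cE_1)^n(\rho)) \le \epsilon^{\lfloor n/N\rfloor}\tr(\rho)$, so the partial derivative of each $f_n$ (the $n$-loop branch) is a sum of at most $n$ DSOP terms each bounded in norm by something decaying geometrically; this gives uniform convergence of $\sum_n \partial_\theta f_n$, which simultaneously establishes \ref{eq:r4} and \ref{eq:r5}, hence Theorem~\ref{thm:diff-exist} as the promised corollary, and legitimizes the reindexing $\sum_n\sum_{j\le n} = \sum_j\sum_{n\ge j}$ used to isolate $a_j$.

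Next I would compute the right-hand side of~\eqref{eq:main} directly. The key structural facts about $T_\theta$ to establish first (by induction on program syntax, which is routine given the transformation rules in \figref{fig:trans_rules}) are: (i) the ancillas $q_1,q_2,q_c$ never become entangled with the working registers — $q_c$ stays in a computational basis state $\ket{n}$, and $q_1,q_2$ are qubits — so the joint state is always of the form $\sum_n p_n \ketbra{n}{n}_{q_c}\otimes(\text{qubit state on }q_1,q_2)\otimes \rho_n$; (ii) before the $j$-th DSOP gate has fired, $q_1=0$, and the branch $M_{q_1,q_2}=1$ (i.e. $q_1=0,q_2=1$) triggers the DSOP insertion and sets $q_1\leftarrow 1$, after which the program behaves exactly like $P(\bm\theta)$; (iii) the $C$ (right-translation) and $GP$ ($R_y$-rotation controlled on $q_c$) gates, invoked every time an occurrence of $\theta$ is reached while $q_1=0$, implement precisely the probabilistic choice $q_2 \coloneqq 0\oplus_{b_j}1$ with $b_j=\mu(j)/(1-\sum_{k<j}\mu(k))$ when $q_c=j$, so that the event "$q_1=1, q_2=0, q_c=j$ fired at the $j$-th occurrence" carries probability $\mu(j)$, exactly the importance-sampling weight. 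Then tracing the observable $O_d\otimes O$ with $O_d=\sum_j \frac{2}{\mu(j)}\ketbra{j}{j}_{q_c}\otimes\ketbra{1}{1}_{q_1}\otimes Z_{q_2}$ picks out, for each $j$, the amplitude $\frac{2}{\mu(j)}$ times the $Z$-expectation on $q_2$ times the indicator of $q_1=1$ — and the $\frac1{\sin(2\alpha)}(g(\theta;\alpha)-g(\theta;-\alpha))$ structure of the commutator-form rule with $\alpha=\pi/4$ (so $\sin(2\alpha)=1$, and the $Z$ on $q_2=\ket{0}$ vs $\ket{1}$ gives the $\pm$ difference, with $\ket{+}$-state injection $\frac I2$ on $q_2$ and $\sigma$ injection on the swap partner $\bar q'$) reproduces exactly the commutator $-i[\sigma,\cdot]$ at the $j$-th position. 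Summing over $j$ and using the probabilities $\mu(j)$ cancelling against $\frac{2}{\mu(j)}\cdot\frac12$ recovers $\sum_j a_j = \partial_\theta f(\bm\theta)$.

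The main obstacle I expect is the bookkeeping in step two: proving rigorously, by structural induction over the (recursively defined, CPO-limit) semantics of nested while-loops, the invariant that the ancilla block $(q_1,q_2,q_c)$ is only classically correlated with the rest and that the probability attached to "$a_j$ fires" is $\mu(j)$ independently of the surrounding execution path — in particular that this holds uniformly across all $n>j$-loop branches and survives the $\bigsqcup$ limit. This requires the \ref{condition} (converging-rate condition) $\lim_n \sqrt[n]{\mu(n)}=1$ to ensure $b_j\in[0,1]$ is well-defined for all $j$ (i.e. $\sum_{k<j}\mu(k)<1$) and that the rearranged double series $\sum_j a_j$ (with $a_j = \sum_{n\ge j}\frac{1}{2^{n+1}}(\cdots)$-type tails, bounded via Lemma~\ref{lem:a9}) is absolutely convergent so Fubini applies; the delicate point is interchanging this limit with the trace against the unbounded observable $O_d$, which I would handle by the geometric bound from Lemma~\ref{lem:a9} dominating $\frac{2}{\mu(j)}$ times the (exponentially small in the number of loop iterations) weight of the branch in which $q_c$ reaches $j$. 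Once the invariant and the convergence are in hand, the identification of the RHS with $\sum_j a_j$ and hence with $\partial_\theta f$ is a direct computation, and Theorem~\ref{thm:diff-exist} drops out as the special case where we only need existence of the limit, not its closed form.
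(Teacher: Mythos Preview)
Your proposal is correct and follows the paper's approach: decompose over execution paths, apply Lemma~\ref{lem:a4} per occurrence of $\theta$, cancel the $\mu(j)$ weight against the $2/\mu(j)$ coefficient in $O_d$, and invoke Lemma~\ref{lem:a9} for uniform convergence of the resulting series. Two small corrections: the paper carries out the path decomposition via a modified Super-operator-Valued Transition System (Definition~\ref{dfn:A.3}) rather than structural induction on syntax, which cleanly separates the per-path identity (Lemma~\ref{lem:a5}) from the sum-over-paths convergence (Lemma~\ref{lem:a10}) and handles the nested-while bookkeeping you flag; and the \ref{condition} is \emph{not} needed for Theorem~\ref{thm:exact} (only for the variance bound in Theorem~\ref{thm:bound2}), because the $\mu(j)$-cancellation in Lemma~\ref{lem:a5} happens per path \emph{before} summing over paths, so the remaining series has $j$-th term bounded by twice the weight of paths with at least $j$ occurrences of $\theta$, and Lemma~\ref{lem:a9} alone suffices for convergence regardless of how $\mu$ decays.
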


\begin{proof}
    [Outline of the proof] 
    We can take Fig.~\ref{fig:running_example_programs} as an example to briefly illustrate the outline of the proof, while the full details are deferred to Appendix~\ref{prf:exact}.
\begin{enumerate}[leftmargin=3mm]
    \item Since our AD is performed by inserting commands, the execution branches of $P(\theta)$'s differential program in Fig.~\ref{fig:running_example:c} are the same as $P(\theta)$ in Fig.~\ref{fig:running_example:a}. Thus, we consider each execution path of $P(\theta)$.
    \item For a fixed execution path of $P(\theta)$, its derivative has the form $\partial_{\theta}f_k(\theta) \equiv \frac{1}{2^{k+1}}\sum_{j=1}^k(g_{\theta})^{j-1}\circ \left(\partial_\theta g_{\theta}\right)\circ(g_{\theta})^{k-j}$. 
    The $\partial_{\theta}f_k(\theta)$ corresponds to perform $k$ times differentiation operations in different occurrences of $\theta$. For the same branch of the fixed execution path, $P(\theta)$'s differential program can also perform the same $k$ times differentiation operations with probability $\mu(1),\ldots,\mu(k)$.
    Then $P(\theta)$'s differential program can produce $\partial_{\theta}f_k(\theta)$ by estimation of expectation.
    \item Finally, one adds up all the $\partial_{\theta}f_k(\theta)$ that are produced by $P(\theta)$'s differential program with respect to $P(\theta)$'s execution paths: $\sum_{k=1}^{\infty}\partial_{\theta}f_k(\theta)$, and prove it is uniformly convergent, the main challenging of the proof that relies on the finite-dimensional condition, and equal to $f(\theta)$'s derivative.
\end{enumerate}
\end{proof}

\section{Implementation and Sample Complexity}\label{sec:implement}
In this section, we discuss the implementation of our AD scheme and analyze its efficiency in terms of sample complexity, the number of required samples to estimate gradients.

\subsection{Implementation in A Hybrid Style}\label{sec:hybrid}
There are a few candidates of high-level quantum programming languages which support hybrid quantum-classical programming with classical control flow, e.g.~Microsoft's Q\#~\cite{Svore_2018} and ETH Z\"urich's Silq~\cite{10.1145/3385412.3386007}. 
Since Q\# provides a Python package \code{qsharp} that enables simulation of Q\# programs from regular Python programs, we choose Python and Q\# to implement a parser that transforms parameterized quantum \textbf{while}-programs (with a restricted set of unitaries) to Q\# and implement AD to generate Q\# codes for evaluating gradients.

Our current implementation supports parameterized Pauli rotations and controlled Pauli rotations as follows: 
\begin{align*}
    \left\lbrace \begin{aligned} 
    & R_{\Delta}(\theta) = e^{-i\frac{\theta}{2} \Delta}, e^{-i\frac{\theta}{2}\ketbra{1}{1}\otimes \Delta}, R_{\Delta\otimes\Delta}(\theta) = e^{-i\frac{\theta}{2}\Delta\otimes \Delta} \end{aligned}  : \Delta = X, Y, Z; \theta\in \bm{\theta}\right\rbrace.
\end{align*}
The Pauli rotations and controlled Pauli rotations are internally replaced with their corresponding density operator form, e.g., unitary $e^{-i\frac{\theta}{2}X\otimes X}$ is replaced by $ e^{-i(2\theta)\frac{X\otimes X+I}{4}}$ with $(X\otimes X+I)/4$, a density operator, then we can apply our technique of AD to it and get the derivative with a scale $2$.
 
\subsection{Variance and Sample Complexity} \label{sec:sample}
Our main theorem (Theorem~\ref{thm:exact}) asserts that the desired partial derivative can be expressed by the expectation of observable $O_d\otimes O$ with respect to state $\sem{\qd{\theta}(P(\bm{\theta}))}(\rho)$, which we denote 
$\langle O_d\otimes O\rangle$ for simplicity. 
We denote the \emph{sample complexity} as the number of repetitions to estimate $\langle O_d\otimes O \rangle$ to a given precision $\delta$.
To estimate the sample complexity, we consider the \emph{variance} of observable $O_d\otimes O$: 
$
    \mathrm{Var}(O_d\otimes O) = \left\langle \left(O_d\otimes O - \langle O_d\otimes O\rangle\right)^2\right\rangle = \left\langle O_d^2\otimes O^2\right\rangle - \left\langle O_d\otimes O\right\rangle^2.
$

Inspired by the  ``Occurrence Count for $\theta$'' in~\cite{Zhu2020Differentiable}, 
we introduce two technical notions, i.e., the ``Running Count for $\theta$'' in program $P(\bm{\theta})$, denoted $RC_{\theta}(P(\bm{\theta}))$, as the number of occurrences of $\theta$ in $P(\bm{\theta})$ and the ``Loop Count'' in $P(\bm{\theta})$, denoted $LC(P(\bm{\theta}))$, as the number of while-loop statements in $P(\bm{\theta})$, for upper bounding $\left\langle O_d^2\otimes O^2\right\rangle$.
For formal definitions of $RC_{\theta}(P(\bm{\theta}))$ and $LC(P(\bm{\theta}))$, please refer to Appendix~\ref{appendix:variance}.
We also need a terminating condition of parameterized programs in order to upper bound $\left\langle O_d^2\otimes O^2\right\rangle$.

\begin{definition}
    [Almost Sure Termination~\cite{ying2016foundations}]
    A parameterized quantum \textbf{while}-program $P(\bm{\theta})$ terminates almost surely at $\bm{\theta}$ if $\tr(\sem{P(\bm{\theta})}(\rho)) = \tr(\rho)$ for any $\rho \in \cD(\cH_{P(\bm{\theta})})$.  
\end{definition}

\begin{theorem}\label{thm:bound2}
    In the same setting as in Theorem~\ref{thm:exact} and distribution $\mu:\mathbb{Z}_+\to [0,1]$ satisfies \ref{condition}, if all the \textbf{while}-statements (subprograms) in $P(\bm{\theta})$ terminate almost surely, then $\left\langle O_d^2\otimes O^2\right\rangle$
    is bounded.
    Additionally, if the distribution $\mu:\mathbb{Z}_+\to [0,1]$ satisfies
    \begin{equation}\label{eq:distribution}
        \mu(n) \propto \frac{1}{n\ln^{1+s}(n+e)}  \text{ with constant } s \in (0,1],
    \end{equation}
    we have
    \[\left\langle O_d^2\otimes O^2\right\rangle \in \mathcal{O}\left(M_1^2\ln^{1+s}(M_1+e)+M_1^{3+s}M_2^{2+s}C(M_2)\right)\]
    with $M_1 = RC_{\theta}(P(\bm{\theta}))$, $M_2 = LC(P(\bm{\theta}))$ and $C(M_2)$ is a non-zero function of $M_2$.
\end{theorem}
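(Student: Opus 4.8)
The plan is to bound $\langle O_d^2 \otimes O^2 \rangle$ by tracking how the squared weight $(2/\mu(j))^2$ interacts with the probability that the differential program actually executes a DSOP operation at the $j$-th occurrence of $\theta$ along a given execution path. First I would set up notation mirroring the proof of Theorem~\ref{thm:exact}: fix an execution path of $P(\bm{\theta})$, let $n$ be the number of occurrences of $\theta$ on that path, and recall that the EUL block assigns the $j$-th occurrence (for $j \le n$) the probability $\mu(j)$ of being the one where DSOP is triggered (with $q_1$ flipping to $1$ and $q_2 = 0$). The observable $O_d = \sum_j \frac{2}{\mu(j)}\ketbra{j}{j}\otimes\ketbra{1}{1}\otimes Z$ picks out exactly those branches, so $\langle O_d^2\otimes O^2\rangle$ decomposes, path by path, into a sum over $j$ of $\left(\frac{2}{\mu(j)}\right)^2 \cdot \mu(j) \cdot (\text{weight of that path-prefix}) \cdot \langle O^2 \rangle_{\text{sub}}$, i.e.\ each term carries a factor $4/\mu(j)$. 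Summing over $j$ from $1$ to (a random) $n$ and over all paths, using Lemma~\ref{lem:a9} to get the exponential decay of the probability that more than $n$ loop iterations occur, reduces everything to controlling $\sum_{j} \frac{1}{\mu(j)} \Pr[\text{at least } j \text{ occurrences of } \theta]$.

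The next step is to convert ``number of occurrences of $\theta$ exceeds $j$'' into a statement about loop iteration counts. Each occurrence of $\theta$ lives inside some nesting of at most $M_2 = LC(P(\bm{\theta}))$ while-loops, and a path with $j$ occurrences of $\theta$ forces the loops enclosing that occurrence to iterate a total of $\Omega(j^{1/M_2})$ times in aggregate (or one of them $\Omega(j/M_1)$ times, since there are $M_1 = RC_\theta$ syntactic occurrences per full pass of the loop body). Here is where almost-sure termination enters: without it $\tr(\sem{P(\bm\theta)}(\rho))$ could be $<\tr(\rho)$, but that only helps convergence; what almost-sure termination of every while-subprogram buys is that Lemma~\ref{lem:a9}'s constant $N_\epsilon$ can be chosen uniformly, giving $\Pr[\ge j \text{ occurrences}] \le C(M_2)\,\epsilon^{\,c\, j^{1/M_2}/M_1}$-type bounds, or more crudely a bound decaying faster than any polynomial in $j$ once $j \gg M_1 M_2$. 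So $\sum_j \frac{1}{\mu(j)}\Pr[\ge j] $ converges provided $1/\mu(j)$ grows subexponentially — which is exactly the \ref{condition}, $\lim_n \sqrt[n]{\mu(n)} = 1$. This establishes the first, qualitative assertion: $\langle O_d^2\otimes O^2\rangle$ is bounded.

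For the quantitative bound under the specific choice $\mu(n)\propto 1/(n\ln^{1+s}(n+e))$, I would split the sum $\sum_j \frac{1}{\mu(j)}\Pr[\ge j]$ at a threshold $j^\star \asymp (M_1 M_2)^{O(1)}$ (the regime where the exponential decay of Lemma~\ref{lem:a9} kicks in). For $j \le j^\star$ the probability is at most $1$ and $\frac{1}{\mu(j)} \lesssim j\ln^{1+s}(j+e)$, so the partial sum is $O\big((j^\star)^2 \ln^{1+s}(j^\star+e)\big)$; substituting $j^\star \asymp M_1 M_2$-ish and being slightly generous with the powers gives the $M_1^2\ln^{1+s}(M_1+e)$ term together with the $M_1^{3+s}M_2^{2+s}C(M_2)$ term, where the extra powers of $M_1, M_2$ absorb (i) the per-pass multiplicity $M_1$ converting occurrence-count to loop-count, (ii) the $M_2$-fold nesting inflating the effective decay exponent, and (iii) the function $C(M_2)$ coming from the accumulated constants in $N_\epsilon$ across nested applications of Lemma~\ref{lem:a9}. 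For $j > j^\star$ the super-exponential tail makes the contribution $O(1)$, negligible against the polynomial terms. Finally one multiplies through by the trivial bounds $\norm{O}^2$ and $\tr(\rho)\le 1$ (absorbed into the $\mathcal{O}(\cdot)$) to land on the stated estimate.

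The main obstacle will be step two: making the passage from ``$\ge j$ occurrences of $\theta$'' to a clean iteration-count lower bound completely rigorous when the $M_2$ while-loops are arbitrarily nested and interleaved with conditionals, and then threading Lemma~\ref{lem:a9} through that nesting to get a decay rate that is explicit in both $M_1$ and $M_2$. A naive induction on loop nesting loses a constant factor per level, which is fine for the qualitative boundedness claim but must be bookkept carefully — and honestly packaged into the unspecified function $C(M_2)$ — to match the stated polynomial form; getting the exponents $3+s$ and $2+s$ exactly right (rather than merely some polynomial) is the delicate part of the calculation.
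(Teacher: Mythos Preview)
Your overall strategy---partition execution paths by the number of occurrences of $\theta$, convert a large occurrence count into a large loop-iteration count via pigeonhole, and then pit the subexponential growth of $1/\mu(j)$ (guaranteed by the \ref{condition}) against the exponential decay of Lemma~\ref{lem:a9}---is exactly the skeleton of the paper's proof (Theorem~\ref{thm:bound} in Appendix~\ref{appendix:variance}, from which Theorem~\ref{thm:bound2} is read off). Your pigeonhole heuristic ``$j$ occurrences forces some loop to iterate $\Omega((j/M_1)^{1/M_2})$ times'' is precisely the paper's partition into shells $\Gamma_\theta^{((n+1)^{M_2}M_1-1)}\setminus\Gamma_\theta^{(n^{M_2}M_1-1)}$.

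There is, however, a genuine gap in your decomposition and, tied to it, a misidentification of where almost-sure termination enters. When you write the per-path contribution as $(2/\mu(j))^2\cdot\mu(j)\cdot(\text{path weight})\cdot\langle O^2\rangle_{\text{sub}}$, you are implicitly assuming the post-DSOP state is still $\cE_\pi(\rho)$. It is not: the commutator-form DSOP injects a fresh copy of $\sigma$ on the ancilla $\bar q'$ and applies $e^{\pm i\frac{\pi}{4}S}$, so in the analogue of Lemma~\ref{lem:a5} for $O_d^2\otimes O^2$ (the paper's Lemma~\ref{lem:a12}) each $j$ contributes \emph{two} terms: the direct one $\tr(O^2\cE_\pi(\rho))$ and a ``swapped'' one $\tr\bigl(O^2\,\cE_{\pi_{i_j}\cdots\pi_n}(\tr_{\bar q_{i_j}}(\cE_{\pi_1\cdots\pi_{i_j-1}}(\rho))\otimes\sigma_{i_j})\bigr)$. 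Your reduction to $\sum_j\frac{1}{\mu(j)}\Pr[\ge j\text{ occurrences}]$ handles the first term but not the second: for the swapped term the suffix is applied to a \emph{fresh} state, so the run of $\ge n$ consecutive loop iterations that your pigeonhole produces may be split at $i_j$ into $n_1$ iterations in the prefix and $n_2$ in the suffix. The suffix piece still yields $\epsilon^{\lfloor n_2/N_\epsilon\rfloor}$ via Lemma~\ref{lem:a9}, but the prefix piece ends mid-loop with no terminating $\cE_0$, so Lemma~\ref{lem:a9} does not apply to it. This is exactly where the paper invokes Lemma~\ref{lem:a13}: under almost-sure termination one has $\tr((\cE\circ\cE_1)^{n_1}(\rho))\le\epsilon^{\lfloor n_1/N_\epsilon\rfloor}\tr(\rho)$ even without the final $\cE_0$, and combining the two pieces gives the overall decay $\epsilon^{\lfloor(n-1)/N_\epsilon\rfloor-1}$. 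Your stated reason for needing almost-sure termination---uniformity of the constant $N_\epsilon$---is not the operative one (Lemma~\ref{lem:a9} is already uniform in $\rho$); without the swapped term and Lemma~\ref{lem:a13} your argument bounds only half of $\langle O_d^2\otimes O^2\rangle$.
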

\begin{proof}
See Appendix~\ref{appendix:variance}.
\end{proof}

\subsubsection{Comparison with \citet{Zhu2020Differentiable}.}
In the case of no unbounded loops in~\cite{Zhu2020Differentiable}, we have  $M_2 = 0$ and the bound given in Theorem~\ref{thm:bound2} becomes
$\mathcal{O}\left(M_1^2\ln^{1+s}(M_1+e)\right)$,
which implies the sample complexity  $\mathcal{O}(M_1^{2}\ln^{1+s}(M_1+e)/\delta^2)$ by Chebyshev's Inequality. This is comparable to the sample complexity  $\mathcal{O}(m^2/\delta^2)$
estimated in~\cite{Zhu2020Differentiable},  where $m \approx RC_{\theta}(P(\bm{\theta}))$ $ = M_1$, as all of the loops considered there are bounded and thus can be unfolded to nested conditional statements.

\subsubsection{Empirical Estimation of the Sample bound.}
The bound in Theorem~\ref{thm:bound2} could, however, be loose in practice, which would cost unnecessary samples. 
To resolve this issue, we develop an \emph{empirical} estimation of the sample bound, which usually leads to tighter bounds in our case studies.  

Our key idea is that one can empirically estimate $\langle O_d^2 \otimes O^2\rangle$ by sampling as we did for $\langle O_d \otimes O\rangle$ so as to get a better empirical bound than analytical ones. 
To that end, one can apply a similar technique in Theorem~\ref{thm:bound2} to bound $\langle O_d^4 \otimes O^4\rangle$ and hence the number of samples required to estimate $\langle O_d^2 \otimes O^2\rangle$.
However, at this time, we can tolerate a much larger additive error $\delta$ since $\langle O_d^2 \otimes O^2\rangle$ could be large itself, which makes $1/\delta^2$ in Chebyshev's Inequality scale nicely. 

\section{Case Studies} \label{sec:case_study}
In this section, we present the case studies to demonstrate the feasibility of our framework, including \emph{parameterized amplitude amplification}, \emph{quantum walk based search algorithm} and \emph{repeat-until-success unitary implementation}.
{
The chosen case studies, all of which contain unbounded quantum loops, are non-trivial and realistic examples from quantum literature.
We don't choose typical variational algorithms, e.g., QAOA~\cite{farhi2014quantum}, VQE~\cite{peruzzo2014variational}, or some variants studied in the previous work of differentiable quantum programming~\cite{Zhu2020Differentiable} since they don't contain unbounded loops.
Similarly, because there is no realistic example yet of nested loops as existing quantum algorithms are far less than classical, we don't artificially construct experiments for nested loops.
On the other hand, our proposed commutator-form rule provides a more concise form than the parameter-shift rule for general Hamiltonian (e.g., Hamiltonian in QAOA~\cite{hadfield2019quantum}) and our inductively defined code transformation can handle nested loops.

\subsubsection{Experiment Workflow.} 
For experiments, our framework provides a unified principled way to identify suitable parameters of parameterized quantum \textbf{while}-programs automatically as follows:\\
\textbf{Given:} A parameterized quantum \textbf{while}-program $P(\bm{\theta}), \bm{\theta}\in \mathbb{R}^k, k\geq 1$, an quantum state $\rho$ as program's input and an observable $O$ defined on $\cH_{P(\bm{\theta})}$.\\
\textbf{Workflow:}\begin{enumerate}
        \item Use the implemented parser in Sect.~\ref{sec:hybrid} to convert the program $P(\bm{\theta})$ to \emph{Q\# functions} that can sample the value and the partial derivatives of expectation function $f(\bm{\theta}) = \tr(O\sem{P(\bm{\theta})}(\rho))$, which is the objective function to optimize. 
        \item Use the empirical estimation of the sample bound developed in Sect.~\ref{sec:sample} to estimate the number of samples needed for sampling the partial derivatives of $f(\bm{\theta})$.
        \item Use a gradient-based optimizer (in our experiments, we choose Adam optimizer~\cite{DBLP:journals/corr/KingmaB14}) to \emph{maximize/minimize} $f(\bm{\theta})$, where the initial value of parameters $\bm{\theta}_0$ is usually randomly given and the gradient of $f(\bm{\theta})$ is estimated by the Q\# functions (all run on the simulator provided by Q\#) in (1) with the number of samples estimated in (2).
    \end{enumerate}
}

In all experiments, the distribution $\mu$ in code transformation for AD is chosen as the distribution in Eq.~\eqref{eq:distribution} with $s = 0.25$.
Our experiments are performed on a desktop computer with Intel(R) Core(TM) i7-9700 CPU @ 3.00GHz $\times$ 8 Processor and 16GB RAM.

\subsection{Parameterized Amplitude Amplification}\label{sec:case_paa}
As the first case study, we show that our framework can be used to obtain a better parameter in the example of parameterized AA as in~\figref{fig:paa} than those given in the existing literature~\cite{andres2020quantum, mizel2009critically} analytically by hand.

\subsubsection{Parameterized AA Program.}
Consider the parameterized AA in a single-qubit system.
Given $p\in (0,1)$, suppose we have a single qubit unitary $A$ such that
$A\ket{0} = \sqrt{1-p}\ket{0} +\sqrt{p}\ket{1}$,
and its inverse $A^{\dagger}$.
State $\ket{1}$ is our target state.
The details of parameterized AA program are listed in~\figref{fig:paa_experiment}, where we put the overview of parameterized AA and its instance $P_1(\theta)$ used in this experiment together.

\begin{figure}[ht]
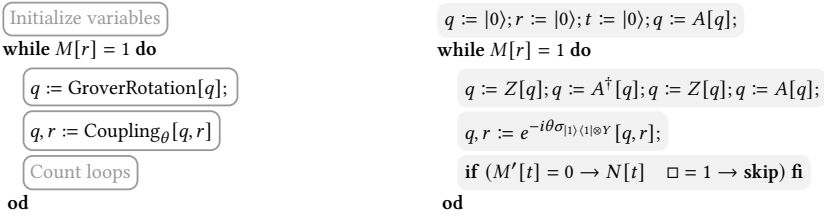

    \vspace{-4pt}
    \centering
    \small
    \hfill
    \begin{subfigure}[b]{0.4\linewidth}
        \centering
        \scalebox{0.85}{$\begin{aligned}
            & \tikz[baseline]{\node[anchor=base,draw=black!40,text=black!40,rounded corners=4pt,thick] (init_code) {Initialize variables\vphantom{$\qinit{q}; \qinit{r}; \qinit{t};\qut{A}{q};$}};}\\[-2pt]
            & \qwhile M[r] = 1 \qdo \\[-2pt]
            & \quad \tikz[baseline]{\node[anchor=base,draw=black!40,rounded corners=4pt,thick] (grover) {$q: = \text{GroverRotation}[q]; \vphantom{\qut{Z}{q}; \qut{A^{\dagger}}{q};\qut{Z}{q};\qut{A}{q};}$};} \\[-2pt]
            & \quad \tikz[baseline]{\node[anchor=base,draw=black!40,rounded corners=4pt,thick] (paau) {$q, r: = \text{Coupling}_{\theta}[q, r]\vphantom{\qutp{\theta}{\sigma_{\ketbra{1}{1}\otimes Y}}{q,r};}$};}  \\[-2pt]
            & \quad \tikz[baseline]{\node[anchor=base,draw=black!40,text=black!40,rounded corners=4pt,thick] (count) {Count loops\vphantom{$\qif (M'[t] = 0 \to N[t]\quad \Box{}= 1 \to \qskip )\qfi$}};}\\[-2pt]
            & \qod
        \end{aligned}$}
        \caption{Overview of parameterized AA in~\figref{fig:paa}.}
    \end{subfigure}
    \hfill
    \begin{subfigure}[b]{0.54\linewidth}
        \centering
        \scalebox{0.85}{$\begin{aligned}
            & \tikz[baseline]{\node[fill=black!5,draw=black!5,thick,text=black,rounded corners=4pt,] (init_code1) {$\qinit{q}; \qinit{r}; \qinit{t};\qut{A}{q};$};} \\[-2pt]
            & \qwhile M[r] = 1 \qdo \\[-2pt]
            & \quad \tikz[baseline]{\node[fill=black!5,draw=black!5,thick,text=black,rounded corners=4pt,] (grover1) {$\qut{Z}{q}; \qut{A^{\dagger}}{q};\qut{Z}{q};\qut{A}{q};$};} \\[-2pt]
            & \quad \tikz[baseline]{\node[fill=black!5,draw=black!5,thick,text=black,rounded corners=4pt,] (paau1) {$\qutp{\theta}{\sigma_{\ketbra{1}{1}\otimes Y}}{q,r};$};}  \\[-2pt]
            & \quad \tikz[baseline]{\node[fill=black!5,draw=black!5,thick,text=black,rounded corners=4pt,] (count1) {$\qif (M'[t] = 0 \to N[t]\quad \Box{}= 1 \to \qskip )\qfi $};} \\[-2pt]
            & \qod
        \end{aligned}$
        }
        \caption{Specific instance of parameterized AA: $P_1(\theta)$.}
        \label{fig:paa_specific}
    \end{subfigure}
    \hfill
    \vspace{-4pt}
    \caption{Parameterized AA used in the experiment.}
    \label{fig:paa_experiment}
\end{figure}

In~\figref{fig:paa_specific}, $q,r$ are qubit variables, measurement $M = \lbrace M_0 = \ketbra{1}{1}, M_1 = \ketbra{0}{0} \rbrace$ and $\sigma_{\ketbra{1}{1}\otimes Y} = (\ketbra{1}{1}\otimes Y + I \otimes I)/4$.
The variable $r$ together with the unitary $e^{-i\theta \sigma_{\ketbra{1}{1}\otimes Y}}$ and measurement $M$ forms a $\frac{\theta}{4}$-measurement in~\cite{andres2020quantum}.
To count the calls of $A$ and $A^{\dagger}$ (the running number of loops), we introduce a block of ``count loops'' that doesn't affect the behavior of parameterized AA, where $t$ is a quantum variable in the space
$\cH_{p} = \Span\left\lbrace \ket{0},\ldots,\ket{4\lfloor 1/\sqrt{p}\rfloor}\right\rbrace$, unitary $N = \sum_{n=0}^{4\lfloor 1/\sqrt{p}\rfloor-1}\ketbra[t]{n+1}{n} + \ketbra[t]{0}{4\lfloor 1/\sqrt{p}\rfloor }$ and measurement 
$M' = \left\lbrace M_0' = \sum_{n=0}^{4\lfloor 1/\sqrt{p}\rfloor-1}\ketbra[t]{n}{n}, M_1' = I - M_0'\right\rbrace$.
With the conditional statement of measurement $M'$, the variable $t$ will remain unchanged once it reaches the state $\ket{4\lfloor 1/\sqrt{p}\rfloor }$.

The program's input can be arbitrary since there are variables' initialization in~\figref{fig:paa_specific}.
The observable we choose is $O_1 = \frac{\sqrt{p}}{4}\sum_{n=1}^{4\lfloor 1/\sqrt{p}\rfloor} n\ketbra[t]{n}{n}$, which expresses the running number of loop iterations and represents the total running time.%
\footnote{This is only an approximation of the total running time since the state of $t$ will always be $\ket{4\lfloor 1/\sqrt{p}\rfloor }$ after $4\lfloor 1/\sqrt{p}\rfloor$ loop iterations.
But this does not matter, because the running number of loop iterations is concentrated below $1/\sqrt{p}$.
}
The scale $\sqrt{p}/4$ is used to normalize the output of $O_1$.
Our \textbf{target} is to \emph{identify $\theta$ so that the expectation of the running time ($O_1$) of parameterized AA in~\figref{fig:paa_specific} is minimized}.

We summarize below the needed configuration in the experiment workflow.
\begin{description}
    \item[\textbf{Given:}] Parameterized AA program $P_1(\theta)$, arbitrary input state $\rho$ and observable $O_1$.
    \item[\textbf{Workflow:}] In (2), samples' number: $5/\sqrt{p} \times 10^3$. In (3), Adam's setting: $\beta_1 = 0.9, \beta_2 = 0.999, \alpha = 0.1$; initial parameter: $\theta = 4 \arccos((1-2\sqrt{p(1-p)})/(1+2\sqrt{p(1-p)}))$ (analytical but sub-optimal value from~\cite{mizel2009critically});
    Goal: \textbf{minimize} the expectation function of observable $O_1$.
\end{description}

For the number of samples, a numerical calculation based on a finer version of Theorem~\ref{thm:bound2} provides $799.72$ as the bound of $\langle O_d^2\otimes O_1^2\rangle$  with $p = 1/100$.
However, applying our empirical estimation, the actual value of $\langle O_d^2\otimes O_1^2\rangle$ would be bounded by 44.26 when $p=1/100$, which leads to the current $5/\sqrt{p}\times 10^3$ bound ($= 5\times 10^4$ when $p =1/100$) with additive error $\delta=0.1$ by Chebyshev's Inequality. 
Please refer to details in Appendix~\ref{sec:details}.

\subsubsection{Results.} We choose $p = 1/10^2,1/15^2,\ldots,1/30^2$ to run this experiment. In Table~\ref{tab:2}, we list the value of $\langle 4O_1\rangle = 4\langle O_1\rangle$, which expresses the (approximate) ratio of the number of loops to $1/\sqrt{p}$, that we find in this experiment (See the column ``Ours''), as well as those in previous works~\cite{andres2020quantum} (see the column ``B'') and \cite{mizel2009critically} (see the column ``C'') for the probability $p$ specified in each row.
For each $p$, a better result (both smaller $\langle 4O_1\rangle$ and smaller variance $\text{Var}(4O_1)$ that implies less fluctuation around the expectation) is found by our experiment. 
Recall that both B and C results are based on analytical forms developed by domain experts. 

Since our goal is to minimize the expectation of $O_1$, we find that the experimental results confirm the feasibility of our framework and also validate the experiment workflow for automatically getting suitable parameters.

\begin{table}[ht]
    \vspace{-5pt}
    \centering
    \scalebox{0.85}{
    \begin{tabular}{cp{30pt}<{\centering}p{36pt}<{\centering}p{36pt}<{\centering}p{36pt}<{\centering}p{36pt}<{\centering}p{36pt}<{\centering}}
        \toprule
        & \multicolumn{2}{c}{Ours} & \multicolumn{2}{c}{B} & \multicolumn{2}{c}{C}\\
        \cline{2-7}$p$ & $\langle 4O_1\rangle$ & $\text{Var}(4O_1)$ & $\langle 4O_1\rangle$ & $\text{Var}(4O_1)$ & $\langle 4O_1\rangle$ & $\text{Var}(4O_1)$\\
        \hline
        $1/10^2$ & $\bm{0.6499}$ & $\bm{0.1184}$ & $1.6884$ & $1.6480$ & $0.6773$ & $0.1872$\\
        \hline
        $1/15^2$ & $\bm{0.6733}$ & $\bm{0.1232}$ & $1.7404$ & $1.6336$ & $0.7018$ & $0.1872$\\
        \hline
        $1/20^2$ & $\bm{0.6885}$ & $\bm{0.1216}$ & $1.7541$ & $1.6752$ & $0.7182$ & $0.1904$ \\
        \hline
        $1/25^2$ & $\bm{0.6926}$ & $\bm{0.1200}$ & $1.7622$ & $1.7504$ & $0.7245$ & $0.1936$ \\
        \hline
        $1/30^2$ & $\bm{0.6934}$ & $\bm{0.1152}$ & $1.7742$ & $1.6624$ & $0.7253$ & $0.2112$ \\
        \hline
    \end{tabular}}
    \caption{Experiment results on parameterized AA.  The smaller $\langle 4O_1 \rangle$, the better query complexity.}\label{tab:2}
    \vspace{-15pt}
\end{table}

\subsection{Quantum Walk with Parameterized Shift Operator}
Quantum walk (QW) algorithms~\cite{ambainis2001one, szegedy2004quantum,childs2009quantum,wong2017equivalence},   which shares some similarities with Grover's algorithm, 
are vibrant in the area of quantum algorithms.
In the context of the grid search,  \citet{benioff2000space} observed that the standard Grover's search algorithm needs $\Omega(N)$ steps to find a marked vertex in an $\sqrt{N}\times \sqrt{N}$ grid.  
Quantum walk with a natural (`moving') shift operator $S_m$, which keeps the direction (also called the coin) after every move, also takes at least $\Omega(N)$ steps to find a marked vertex in this grid~\cite{ambainis2005coins}.
\begin{align*}
    S_f: \ket{\Leftarrow, x,y} &\to \ket{\Rightarrow, x-1,y} & S_m: \ket{\Leftarrow, x,y} &\to \ket{\Leftarrow, x-1,y} \\
    \ket{\Rightarrow, x,y} &\to \ket{\Leftarrow, x+1, y} & \ket{\Rightarrow, x,y} &\to \ket{\Rightarrow, x+1, y} \\
    \ket{\Uparrow, x, y} &\to \ket{\Downarrow, x,y+1} & \ket{\Uparrow, x, y} &\to \ket{\Uparrow, x,y+1} \\
    \ket{\Downarrow, x, y} &\to \ket{\Uparrow, x,y-1} & \ket{\Downarrow, x, y} &\to \ket{\Downarrow, x,y-1}
\end{align*}
They resolved this issue by introducing another shift operator $S_f$, which can be interpreted as changing direction after every move, and the quantum walk associated with $S_f$ takes $O(\sqrt{N}\log N)$ steps to find a marked vertex in this grid.

We can see that designing a shift operator, the direction (coin) transformation, is important for the performance of the quantum walk.
It motivates us to parameterize the shift operator and use our framework to determine a good shift operator.

\subsubsection{Parameterized QW Program.}
The quantum walk search algorithm in~\cite{ambainis2005coins} first initializes the coin variable and position variable in the uniform superposition and applies the marked quantum walk operator for several times (here we apply it twice), then measure the position variable to check if the measured vertex is the marked one.
We parameterize the shift operator and write it as follows:
\begin{equation}\label{prog:pqw}
    \begin{aligned}
        P_2(\theta_1, \theta_2)
        \equiv{}& \qinit{t};\\
        & \qwhile M[q_x,q_y] = 1 \qdo \\
        & \quad \qinit{c_{x}};\qinit{c_{y}};\qinit{q_x};\qinit{q_y};\\
        & \quad \qut{H}{c_x};\qut{H}{c_y};\qut{\tilde{H}}{q_x,q_y}; \\
        & \quad \qut{C}{c_x,c_y,q_x,q_y}; \qut{S(\theta_1,\theta_2)}{c_x,c_y,q_x,q_y}; \\
        & \quad \qut{C}{c_x,c_y,q_x,q_y}; \qut{S(\theta_1,\theta_2)}{c_x,c_y,q_x,q_y}; \\
        & \quad \qif (M'[t] = 0 \to A[t] \quad \Box{}= 1 \to \qskip)\qfi \qod
    \end{aligned}
\end{equation}
where $c_x$ and $c_y$ are two qubit variables for coin, indicating the direction $\Leftarrow, \Rightarrow$ and $\Uparrow, \Downarrow$, respectively.
$q_x, q_y$ are two variables with space $\cH_{\sqrt{N}} = \lbrace\ket{0}, \ldots, \ket{\sqrt{N}-1}\rbrace$, indicating the position.
The variable $t$ is a variable with space $\cH_{\sqrt{N}}$, for counting the running times of loops.
$\tilde{H}$ is a Hadamard-like unitary to create uniform superposition on $q_x, q_y$, which is composited by local operations that only allow transition on adjacent position, e.g., $\ket{x,y}$ and $\ket{x-1\!\!\mod \sqrt{N},y}$.
$C$ is the marking coin operator in~\cite{ambainis2005coins} and $S(\theta_1,\theta_2) = e^{-i\theta_2\ketbra[c_y]{+}{+}} e^{-i\theta_1\ketbra[c_x]{+}{+}}S_m $ is the parameterized shift operator, which can be implemented by a subprogram as follows:
\[\qut{S_m}{c_x,c_y,q_x,q_y};\qut{e^{-i\theta_1\ketbra[c_x]{+}{+}}}{c_x};\qut{e^{-i\theta_2\ketbra[c_y]{+}{+}}}{c_y}\]
In particular, we have $S(0,0)=S_m$ and $S(\pi, \pi)=S_f$. 
$A = \sum_{n=0}^{\sqrt{N}-1}\ketbra{n+1}{n}+\ketbra{0}{\sqrt{N}}$ adds $t$ by $1$ in every loop and measurement $M'$ checks the value of $t$ by 
$M' = \left\{ M_0' = \sum_{n=0}^{\sqrt{N}-1}\ketbra[t]{n}{n}, M_1' = I - M_0'\right\}.$

In this experiment, we choose $N = 16$ and the grid is $\{(i,j):0\leq i,j\leq 3\}$.
The loop measurement $M$ is $\lbrace M_0 = \ketbra[q_x]{3}{3}\otimes\ketbra[q_y]{3}{3}, M_1 = I- M_0\rbrace$ with $(3,3)$ being the marked vertex for convenience.
The input state can be arbitrary since all variables in $P_2(\theta_1,\theta_2)$ will be initialized.
The observable we choose is $O_2 = \frac{1}{\sqrt{N}}\sum_{n=1}^{\sqrt{N}} n\ketbra[t]{n}{n}$, which expresses the running number of loop
iterations as the $O_1$ in Sect.~\ref{sec:case_paa}.
Our \textbf{target} is to \emph{identify shift operator $S(\theta_1^*, \theta^*_2)$ that minimizes the expectation function of observable $O_2$ with $P_2(\theta_1, \theta_2)$}.

We summarize below the needed configuration in the experiment workflow.
\begin{description}
    \item[\textbf{Given:}] Parameterized QW program $P_2(\theta_1,\theta_2)$, arbitrary input state $\rho$ and observable $O_2$.
    \item[\textbf{Workflow:}] In (2), samples' number: $2\times 10^4$ empirically chosen with details in Appendix~\ref{sec:details}. In (3), Adam's setting: $\beta_1 = 0.9, \beta_2 = 0.999, \alpha = 0.1$; initial parameter: $\theta_1, \theta_2$ are randomly initialized in $2\pi\times [0.1, 0.9]$ to avoid certain extreme cases when $(\theta_1,\theta_2)$ is close to $(0,0)$; Goal: \textbf{minimize} the expectation function of observable $O_2$.
\end{description}

\subsubsection{Results.} During the optimizing process, we recorded the Mean-Squared-Error (MSE) distance\footnote{MSE distance between $(\theta_1, \theta_2)$ and $(\pi,\pi)$ is  $\frac{1}{2}((\theta_1-\pi)^2+ (\theta_2-\pi)^2)$.} of 
parameters $(\theta_1,\theta_2)$ from $(\pi,\pi)$, which is shown in \figref{fig:mse}.
Each colored line in \figref{fig:mse} represents an independent optimization with different initial parameters.
In particular, the initial parameters of the red line in \figref{fig:mse} are manually set to $(0.2\pi,0.2\pi)$ to be far away from $(\pi,\pi)$. 
All independent training optimizing threads converge to the shift operator $S(\pi, \pi)=S_f$ after $60$ steps, which recovers the operator $S_f$ by human design~\cite{ambainis2005coins}, automatically in our experiment.

\begin{SCfigure}
\scalebox{0.65}{
    \includegraphics{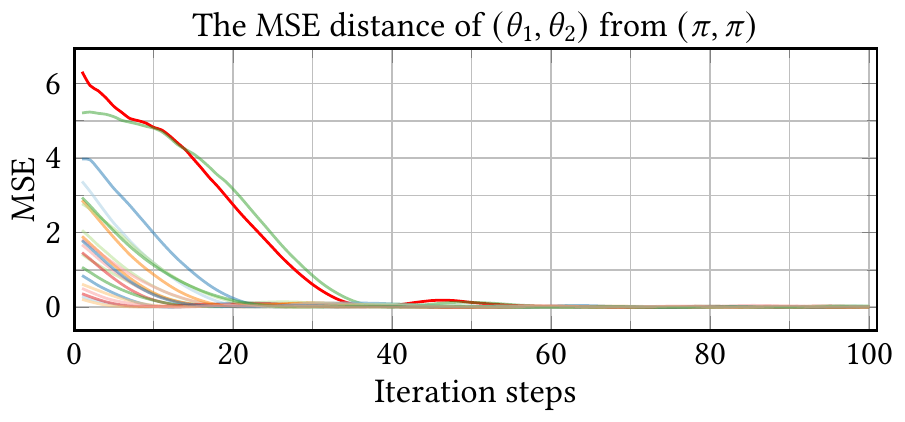}}
    \caption{MSE distance with respect to the iteration steps in optimizing $P_2(\theta_1,\theta_2)$. Differently colored lines represent $21$ experiments with randomly initialized parameters $(\theta_1,\theta_2)$.}\label{fig:mse}
\end{SCfigure}

\subsection{Repeat-Until-Success Unitary Implementation}
In this subsection, we 
demonstrate that our framework can learn realizable instances of repeat-until-success (RUS) circuits. RUS depicts a design pattern, repeating an operation until getting the desired result, which has been widely used in quantum circuit design~\cite{10.5555/2685179.2685181,10.5555/3179320.3179329,PhysRevLett.95.030505,PhysRevLett.114.080502}.
A general layout of RUS circuits~\cite{PhysRevLett.114.080502} is shown in Fig.~\ref{fig:rus_circuit}, where the dashed part is always applied if the measurement outcome is undesirable.
Notice that $W_j$ in Fig.~\ref{fig:rus_circuit} is designed to  restore the state of the system to $\ket{0}\ket{\psi}$ based on the measurement outcome, as only one copy of $\ket{\psi}$ is provided.
The RUS circuits have been shown to achieve a better (expected) depth over ancilla-free techniques for single-qubit unitary decomposition~\cite{PhysRevLett.114.080502, 10.5555/2685179.2685181}.
\begin{figure}[ht]
    \centering
    \scalebox{0.9}{
    \includegraphics{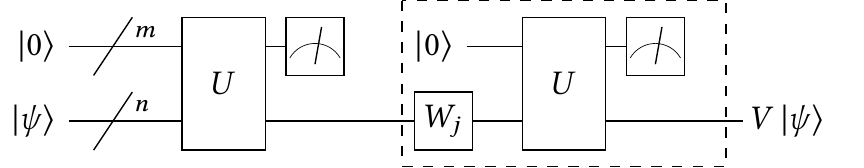}
    }
    \caption{RUS design circuit to implement unitary $V$ \cite{PhysRevLett.114.080502}.}\label{fig:rus_circuit}
\end{figure}

\subsubsection{Parameterized RUS Program.} Consider the  program:
\begin{equation}\label{prog:prus}
    \begin{aligned}
        P_3(\theta_1, \theta_2, \theta_3)
        \equiv{}& \qinit{r}; \qut{U}{q,r}; \\
        &\qwhile M[r] = 1 \qdo \qut{W(\theta_1,\theta_2,\theta_3)}{q}; \qinit{r}; \qut{U}{q,r}; \qod,
    \end{aligned}
\end{equation}
where $q$ is a qubit variable and $r$ is an ancilla qubit variable, measurement $M = \lbrace M_0=\ketbra[r]{0}{0}, M_1=\ketbra[r]{1}{1}\rbrace$.
We are provided with a unitary $U$ that will induce the  desired operation on $q$ if the outcome of performing measurement $M$ after execution of $\qinit{r}; \qut{U}{q,r}$ is $0$,
otherwise, we need a recovery operation $W(\theta_1,\theta_2,\theta_3) = e^{-i\theta_1\ketbra[q]{0}{0}}e^{-i\theta_2\ketbra[q]{+}{+}}e^{-i\theta_3\ketbra[q]{0}{0}}$, a fully parameterized single-qubit unitary ($Z$-$X$ decomposition in~\cite{nielsen2010quantum}) that can be implemented by a subprogram as follows,
\[\qutp{\theta_3}{\ketbra[q]{0}{0}}{q};\qutp{\theta_2}{\ketbra[q]{+}{+}}{q};\qutp{\theta_1}{\ketbra[q]{0}{0}}{q}\]
to restore the state of $q$ and repeat the whole process until obtaining the outcome $0$.

In this experiment, the unitary $U$ in program $P_3(\theta_1,\theta_2,\theta_3)$ is chosen as $ (\ketbra[r]{0}{0}\otimes V_1 + \ketbra[r]{1}{1}\otimes V_2)(H\otimes I)$ with randomly generated single-qubit unitaries $V_1$ and $V_2$.
Our \textbf{target} is to \emph{identify suitable parameters $\theta_1^*,\theta_2^*,\theta_3^*$ such that program $P_3(\theta_1^*,\theta_2^*,\theta_3^*)$ acts as the same as the unitary $V_1$ for an information-complete basis
\[\{\ket{\psi_1} = \ket{0}, \ket{\psi_2}=\ket{1}, \ket{\psi_3} = \ket{+} = (\ket{0}+\ket{1})/\sqrt{2}, \ket{\psi_4} = \ket{Y_+}=(\ket{0}+i\ket{1})/\sqrt{2}\}\]
of variable $q$}. That is, for any $1\leq j\leq 4$,
$\sem{P_3(\theta_1^*,\theta_2^*,\theta_3^*)}(\ketbra[q]{\psi_j}{\psi_j}) = V_1\ketbra[q]{\psi_j}{\psi_j}V_1^{\dagger}$, which is equivalent to
$\tr\left(V_1\ketbra[q]{\psi_j}{\psi_j}V_1^{\dagger} \sem{P_3(\theta_1^*,\theta_2^*,\theta_3^*)}(\ketbra[q]{\psi_j}{\psi_j})\right) = 1$.
Therefore, we choose four pairs of input states and observable
$(\rho_j = \ketbra[q]{\psi_j}{\psi_j}, O_{3,j} = V_1\ketbra[q]{\psi_j}{\psi_j}V_1^{\dagger}), 1\leq j \leq 4$ and denote the expectation function of program $P_3(\theta_1,\theta_2,\theta_3)$ with respect to input state $\rho_j$ and observable $O_{3,j}$ as $f_j(\theta_1,\theta_2,\theta_3)$ for $1\leq j \leq 4$.
In order to optimize the four functions $f_j,1\leq j\leq 4$ simultaneously close to $1$, we introduce a MSE loss function $l(\theta_1,\theta_2,\theta_3)= \frac{1}{4}\sum_{j=1}^4 f_j(\theta_1,\theta_2,\theta_3)- 1)^2$ to be minimized in the experiment workflow. 

We summarize below the needed configuration in the experiment workflow.

\begin{description}
    \item[\textbf{Given:}] Parameterized RUS program $P_3(\theta_1,\theta_2,\theta_3)$ and four pairs of input state and observable $(\rho_j,O_{3,j}), 1\leq j\leq 4$.
    \item[\textbf{Workflow:}] In (2), samples' number: $4.7\times 10^4$ empirically chosen with details in Appendix~\ref{sec:details}. In (3), Adam's setting: $\beta_1 = 0.9, \beta_2 = 0.999, \alpha = 0.2$; initial parameter: $\theta_1, \theta_2$ are randomly initialized; goal: \textbf{minimize} the MSE loss function $l(\theta_1,\theta_2,\theta_3)$.
\end{description}

\subsubsection{Results.} We did $10$ independent optimizing. In each optimization, $V_1$ and $V_2$ are randomly generated.
With the iteration steps less than $60$, the MSE loss $l$ can be reduced to $0.0001$, which implies $f_j$ is greater than $0.98$ for all $j$.
Because $\lbrace \ketbra[q]{\psi_j}{\psi_j}\rbrace_{j=1}^4$ forms a complete basis of $\cD(\cH_{q})$, we can conclude that the program $P_3$ produces an approximate operation as desired $V_1$ in each optimizing.
Therefore, the experimental result confirms the feasibility of our framework and also validates the experiment workflow for automatically getting suitable parameters.

\begin{acks}                            
  We thank anonymous reviewers for constructive suggestions that improve the presentation of the paper. 
  W. Fang and M. Ying were partially supported by the 
  \grantsponsor{}{National Key R\&D Program of China} under Grant No.~\grantnum{}{2018YFA0306701} and the
  \grantsponsor{}{National Natural Science Foundation of China}{} under Grant No.~\grantnum{}{61832015}. 
  X.W. was partially funded by the U.S. Department of Energy, Office of Science, Office of
Advanced Scientific Computing Research, Quantum Testbed Pathfinder Program under Award
Number DE-SC0019040 and by the U.S. National Science Foundation
grant CCF-1942837 (CAREER).
 
\end{acks}

\section*{Code availability}
\addcontentsline{toc}{section}{Code availability}
The code for implementation (Sect.~\ref{sec:implement}) and experiments (Sect.~\ref{sec:case_study}) are available at \url{https://github.com/njuwfang/DifferentiableQPL}.

\bibliography{ref}

\clearpage

\appendix

\section{Quantum Preliminaries} \label{sec:prelim}
In this section, we recall some basic knowledge of quantum computing. The reader can consult the
standard textbook~\cite[Chapter 2, 4]{nielsen2010quantum} for more details.

\subsection{States and Hilbert Spaces}
The state space of an isolated quantum system is represented by a complex Hilbert space.
We use the Dirac notation $\ket{\psi}$ to denote a vector in a Hilbert space.
The (vector dual) Hermitian conjugate of $\ket{\psi}$ is denoted by $\bra{\psi}$.
The inner product of $\ket{\psi}$ and $\ket{\phi}$ is denoted by $\braket{\phi}{\psi}$, considered as a shorthand for $\bra{\phi}(\ket{\psi})$.
The norm of a vector $\ket{\psi}$ is defined as $\norm{\ket{\psi}} = \sqrt{\braket{\psi}{\psi}}$. A unit vector is referred to as a \emph{pure state}.

For example, a quantum bit (qubit) has a two-dimensional state space $\cH_2$ with $\ket{0} = (1,0)^{\dagger}$ and $\ket{1} = (0,1)^{\dagger}$ form an orthonormal basis of $\cH_2$.
A pure state $ \ket{\psi}\in \cH_2$ can be represented by $\alpha\ket{0} + \beta\ket{1}$ with $\abs{\alpha}^2 + \abs{\beta}^2 = 1$.
There are also two states of a qubit often appear:
$\ket{+} = \frac{1}{\sqrt{2}}(\ket{0}+\ket{1}), \enspace \ket{-} = \frac{1}{\sqrt{2}}(\ket{0} - \ket{1}).$
They also form a basis of $\cH_2$.

A (linear) operator is a linear mapping between Hilbert spaces and the set of all operators from $\cH$ to $\cH'$ is denoted by $\cL(\cH,\cH')$.
Specifically, an operator $A\in\cL(\cH,\cH)$ is said an operator on $\cH$ and write $A\in \cL(\cH)$. 
We often write $I_{\cH}$ for the identity operator on $\cH$. 

The Hermitian conjugate (adjoint) of an operator $A$ is denoted by $A^{\dagger}$.
An operator $A$ on $\cH$ is \emph{Hermitian} if $A^{\dagger} = A$.
An operator $A$ on $\cH$ is \emph{positive semidefinite} if for all vectors $\ket{\psi} \in \cH$, $\bra{\psi}A\ket{\psi} \geq 0$.
The L\"{o}wner order $\sqsubseteq$ is defined as $ A \sqsubseteq B$ if $B - A$ is positive semidefinite.
The \emph{trace} of an operator $A$ on $\cH$ is defined as $\tr(A) = \sum_{j}\bra{\psi_j}A\ket{\psi_j}$, with $\lbrace \ket{\psi_j}\rbrace$ an orthonormal basis of $\cH$.

When the state of a quantum system is not completely known, people may think it as a \emph{mixed state} (ensemble of pure state) $\lbrace (p_j, \ket{\psi_j}) \rbrace$ meaning that it is at $\ket{\psi_j}$ with probability $p_j$.
A \emph{density operator} for this system is defined as $ \rho = \sum_{j} p_j\ketbra{\psi_j}{\psi_j}.$
Formally, a density operator $\rho$ on a Hilbert space $\cH$ is a positive semidefinite operator with $\tr(\rho) = 1$.
Moreover, a \emph{partial density operator} $\rho$ on $\cH$ is defined as a positive semidefinite operator with $\tr(\rho) \leq 1$.
We use $\cD(\cH)$ to denote the set of partial density operators on $\cH$.

\subsection{Quantum Operations}
\subsubsection{Unitary transformations.} An operator $U$ on a Hilbert space $\cH$ is a \emph{unitary transformation} if $U^{\dagger}U = UU^{\dagger}=I_{\cH}$.
A unitary transformation $U$ describes the evolution from any pure state $\ket{\psi}$ to $U\ket{\psi}$.
For mixed states, this evolution is reformulated as from any mixed state $\rho$ to $U\rho U^{\dagger}$.

There are some important unitaries: the \emph{Hadamard} operator $H$, which has the action $H\colon \ket{0} \mapsto \ket{+}$, $\ket{1}\mapsto \ket{-}$;
the \emph{Pauli} operators: $X,Y,Z$ with $X\colon \ket{0} \mapsto \ket{1}$, $\ket{1} \mapsto \ket{0}$, $Y\colon \ket{0} \mapsto i\ket{1}, \ket{1}\mapsto -i\ket{1}$, $Z\colon \ket{0}\mapsto \ket{0}, \ket{1}\mapsto -\ket{1}$.

\subsubsection{Measurements and observables.}
A measurement on a system with a state space $\cH$ is described by a collection $\lbrace M_m \rbrace$ of \emph{measurement operators} on $\cH$ with the \emph{completeness equation}:
$ \sum_m M_m^{\dagger}M_m = I_{\cH}.$
When performing a measurement $\lbrace M_m \rbrace$ on a pure state $\ket{\psi}$ and a mixed state $\rho$, the outcome of index $m$ occurs with probability
$p(m) = \bra{\psi}M_m^{\dagger}M_m\ket{\psi}$ and $p(m) = \tr(M_m \rho M_m^{\dagger})$, the corresponding state of the system after the measurement is $\ket{\psi_m} =  {M_m\ket{\psi}}/{\sqrt{p(m)}}$ and $\rho_m = {M_m\rho M_m^{\dagger}}/{p(m)}$, respectively.
In the context of mixed states, if we don't know the outcome of the measurement, the state of the system after the measurement can be described by 
$\sum_{m} p(m)\rho_m = \sum_m M_m\rho M_m^{\dagger}.$

A projective measurement is often described by an \emph{observable}, $M$, a Hermitian operator on $\cH$.
The  spectral decomposition of $M= \sum_m m P_m$, corresponds to a quantum measurement $\lbrace P_m \rbrace$ with measurement outcome $m$ for each $P_m$. The average value of this measurement performed on a state $\ket{\psi}$ is
$\bra{\psi}M\ket{\psi}.$
The value $\bra{\psi}M\ket{\psi}$ is often written as $\langle M \rangle$ and called the \emph{expectation} of $M$.
For a mixed state $\rho$, the expectation of $M$ is $\tr(M \rho)$.

\subsubsection{General quantum operations.}
A \emph{superoperator} is a linear mapping between $\cL(\cH)$ and $\cL(\cH')$.
For mixed states, unitary transformations and measurements can be described by a general form of completely-positive and trace-non-increasing superoperators, which has the \emph{Kraus representation}: $\sum_j E_j (\cdot) E_j^{\dagger}$ with $E_j \in \cL(\cH, \cH')$ and $\sum_j E_j^{\dagger}E_j \sqsubseteq I_{\cH}$~\cite{wolf2012quantum}.
The Schr\"{o}dinger-Heisenberg dual of a superoperator $\cE$ with the Kraus representation $\cE(\cdot) = \sum_j E_j(\cdot)E_j^{\dagger}$ is $\cE^*(\cdot) = \sum_j E_j^{\dagger} (\cdot) E_j$.

\subsection{Composite systems and Tensor Products}
The tensor product of two vectors $\ket{\psi_1}$ and $\ket{\psi_2}$ is denoted by $\ket{\psi_1}\otimes\ket{\psi_2}$, which is sometimes written as $\ket{\psi_1}\ket{\psi_2}$ or even $\ket{\psi_1\psi_2}$ for short.
The tensor product of two Hilbert spaces $\cH_1$ and $\cH_2$ is denoted by $\cH_1\otimes \cH_2$.
For any linear operator $A_1 \in \cL(\cH_1)$ and $A_2 \in \cL(\cH_2)$, their tensor product operator $A_1\otimes A_2 \in \cL(\cH_1\otimes \cH_2)$ is defined by linear extensions of $A_1\otimes A_2 (\ket{\psi_1}\otimes \ket{\psi_2}) = (A_1\ket{\psi_1})\otimes (A_2\ket{\psi_2})$ for any $\ket{\psi_1}\in\cH_1, \ket{\psi_2}\in \cH_2$.

The state space of a composite quantum system is the tensor product of its components' state spaces, e.g., if a system with two components in state $\ket{\psi_1} \in \cH_1$ and state $\ket{\psi_2} \in \cH_2$ respectively, then the joint state of the composite system is $\ket{\psi_1}\otimes\ket{\psi_2} \in \cH_1\otimes \cH_2$.
For mixed states, if a system with two components in the state $\rho_1 \in \cD(\cH_1)$ and the state $\rho_2 \in \cD(\cH_2)$ respectively, then the joint state of the composite system is $\rho_1\otimes\rho_2 \in \cD(\cH_1\otimes \cH_2)$.

For two Hilbert spaces $\cH_1$, $\cH_2$ and any operator $A \in \cL(\cH_1\otimes \cH_2)$, the \emph{partial trace} over space $\cH_2$ of $A$ is
$ \tr_{\cH_2}(A) = \sum_{j}(I_{\cH_1}\otimes\bra{\psi_j})A(I_{\cH_1}\otimes\ket{\psi_j}) \in \cL(\cH_1)$,
where $\lbrace \ket{\psi_j}\rbrace$ is an orthonormal basis of $\cH_2$ and we often write $\tr_2$ for $\tr_{\cH_2}$ if there is no ambiguity. 

\section{Practical Variance Bound for Differential Programs}\label{appendix:variance}
In this section, we give fine bounds of $\langle O_d^2\otimes O^2\rangle$ and $\langle O_d^4\otimes O^4\rangle$ that are used in our case studies for estimating the number of samples. Before that, we give the formal definitions of the two previous notions, $RC_{\theta}(P(\bm{\theta}))$ and $LC(P(\bm{\theta}))$.

\begin{definition}
    The ``Running Count for $\theta$'' in $P(\bm{\theta})$, denoted $RC_{\theta}(P(\bm{\theta}))$, is defined inductively on the program structure:
    \begin{itemize}
        \item   $RC_{\theta}(P(\bm{\theta})) = 0$ for  $P(\bm{\theta}) \equiv \qskip$, $ \qinit{q}$ or $ \qut{U}{\bar{q}}$.
        \item If $P(\bm{\theta}) \equiv \qutp{\theta'}{\sigma}{\bar{q}}$, then  $RC_{\theta}(P(\bm{\theta})) = 1$ when  $\theta'$ is $\theta$; otherwise, $RC_{\theta}(P(\bm{\theta})) = 0$.
        \item If $P(\bm{\theta}) \equiv P_1(\bm{\theta});P_2(\bm{\theta})$, then  $RC_{\theta}(P(\bm{\theta})) = RC_{\theta}(P_1(\bm{\theta})) + RC_{\theta}(P_2(\bm{\theta}))$.
        \item If $P(\bm{\theta}) \equiv \qif (\Box m\cdot M[\bar{q}] = m \to P_{m}(\bm{\theta})) \qfi$, then  $RC_{\theta}(P(\bm{\theta})) = \max_m RC_{\theta}(P_m(\bm{\theta}))$.
        \item If  $P(\bm{\theta}) \equiv \qqwhile{\bar{q}}{Q(\bm{\theta})}$, then  $RC_{\theta}(P(\bm{\theta}))= RC_{\theta}(Q(\bm{\theta}))$.
    \end{itemize}
\end{definition}

\begin{definition}\label{dfn:lc}
    The ``Loop Count'' in $P(\bm{\theta})$, denoted $LC(P(\bm{\theta}))$, is defined by induction on the program structure as follows:
    \begin{itemize}
        \item  $LC(P(\bm{\theta})) = 0$ for $P(\bm{\theta}) \equiv \qskip$, 
     $ \qinit{q}$,  $ \qut{U}{\bar{q}}$ or $ \qutp{\theta'}{\sigma}{\bar{q}}$.     \item If $P(\bm{\theta}) \equiv P_1(\bm{\theta});P_2(\bm{\theta})$, then  $LC(P(\bm{\theta})) = LC(P_1(\bm{\theta})) + LC(P_2(\bm{\theta}))$.
        \item If $P(\bm{\theta}) \equiv \qif (\Box m\cdot M[\bar{q}] = m \to P_{m}(\bm{\theta})) \qfi$, then  $LC(P(\bm{\theta})) = \sum_m LC(P_m(\bm{\theta}))$.
        \item If  $P(\bm{\theta}) \equiv \qqwhile{\bar{q}}{Q(\bm{\theta})}$, then  $LC(P(\bm{\theta}))$ $= LC(Q(\bm{\theta})) + 1$.
    \end{itemize}
\end{definition}

\begin{theorem}\label{thm:bound}
    In the same setting as in Theorem~\ref{thm:exact}, for a fixed $\bm{\theta}$, if all the \textbf{while}-statements (subprograms) in $P(\bm{\theta})$ terminate almost surely, then the expectation of $O_d^2\otimes O^2$: 
    \[\left\langle O_d^2\otimes O^2\right\rangle = \tr\left(O_d^2\otimes O^2\sem{\qd{\theta}(P(\bm{\theta}))}(\rho)\right)\]
    is upper-bounded by
    \begin{equation}\label{eq:var}
        \begin{aligned}
        & M^2\bigggl(4S(M_1) + \sum_{k=1}^{\infty}\biggl((M_2+(k-1)(k^{M_2-1}-1)_+) S\left((k+1)^{M_2}M_1-1\right)\left(2\epsilon^{\lfloor \frac{k-1}{N_{\epsilon}}\rfloor} + 2\epsilon^{\lfloor \frac{k-1}{N_{\epsilon}}\rfloor-1}\right)\biggr)\bigggr),
    \end{aligned}
    \end{equation}
    where \begin{itemize}
        \item $M$ is the largest eigenvalue of $\abs{O}$;
        \item $M_1 = RC_{\theta}(P(\bm{\theta}))$, $M_2=LC(P(\bm{\theta}))$;
        \item $\mu$ is the distribution we adopted in code transformation rules and satisfies \ref{condition};
        \item $S(n) \equiv \sum_{j=1}^{n}1/\mu(j)$ for every $n\geq 1$; $(x)_+\equiv \max\lbrace 0, x\rbrace$;
        \item $\epsilon \in (0,1)$ and $N_{\epsilon}$ is the largest number of $N$ in Lemma~\ref{lem:a9} that is applied to all $M_2$ loop statements in $P(\bm{\theta})$.
    \end{itemize}
\end{theorem}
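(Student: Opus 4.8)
The plan is to bound $\langle O_d^2 \otimes O^2\rangle$ by carefully tracing which terms of the differential-program semantics contribute, and then controlling each contribution using the exponential-decay estimate of Lemma~\ref{lem:a9}. First I would unfold $\sem{\qd{\theta}(P(\bm{\theta}))}(\rho)$ into a sum over execution paths, exactly as in the proof of Theorem~\ref{thm:exact}: since the EUL/DSOP commands are inserted without altering the quantum branches of $P(\bm{\theta})$, each execution path of the differential program is indexed by (i) an execution path of $P(\bm{\theta})$ together with (ii) the position $j \in \mathbb{Z}_+$ at which the DSOP commands fire (or the event that they never fire). On the path where DSOP fires at position $j$, the observable $O_d^2 \otimes O^2$ contributes a factor $\bigl(\tfrac{2}{\mu(j)}\bigr)^2$ from $O_d^2$, so the key quantity to bound is $\sum_j \tfrac{1}{\mu(j)^2}\cdot(\text{probability that DSOP fires at }j\text{ on a length-}n\text{ path})\cdot(\text{weight of that path})$. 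Because the firing position is chosen via the distribution $\mu$ independently of the path, and because the trace of the sub-normalized state reaching the $j$-th occurrence of $\theta$ is at most $1$, the probability-weight of firing at position $j$ is $\le \mu(j)$; together with $O_d^2$'s $1/\mu(j)^2$ this leaves a single $1/\mu(j)$, which is why the partial sums $S(n) = \sum_{j=1}^n 1/\mu(j)$ appear.

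Next I would organize the path sum by the number $k$ of top-level loop iterations. On a path with $k$ iterations of a loop nest of depth $M_2$, the total number of occurrences of $\theta$ encountered is at most $(k+1)^{M_2} M_1 - 1$ (each nesting level multiplies the iteration budget, and $M_1 = RC_\theta$ counts occurrences per pass through the body); hence the firing position $j$ ranges over $1,\dots,(k+1)^{M_2}M_1-1$, contributing the factor $S\bigl((k+1)^{M_2}M_1-1\bigr)$. The combinatorial prefactor $M_2 + (k-1)(k^{M_2-1}-1)_+$ accounts for how many distinct ``loop-iteration signatures'' realize exactly $k$ iterations at the outermost level when there are $M_2$ nested loops — this is the bookkeeping term that is $M_2$ when $M_2=1$ and grows polynomially otherwise. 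Finally, the $\bigl(2\epsilon^{\lfloor (k-1)/N_\epsilon\rfloor} + 2\epsilon^{\lfloor (k-1)/N_\epsilon\rfloor-1}\bigr)$ factor is precisely the trace bound from Lemma~\ref{lem:a9}: the probability that the loop (equivalently, any of the $M_2$ loops, after taking $N_\epsilon$ as the worst-case $N$ over all of them) performs $\ge k-1$ iterations decays like $\epsilon^{\lfloor (k-1)/N_\epsilon\rfloor}$, and the two terms cover both the ``continue'' and ``just-exited'' configurations. The leading $4 S(M_1)$ term handles the paths on which the outer loop is not entered at all (so at most $M_1$ occurrences, hence $S(M_1)$, with the $4$ from $O_d^2$'s constant). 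The overall $M^2$ is the operator-norm bound $\|O^2\| \le M^2$ from $O^2 \preceq M^2 I$.

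The main obstacle — and where I expect to spend the real effort — is making the combinatorial accounting of nested loops rigorous: precisely defining what an ``execution path'' of a depth-$M_2$ loop nest is, showing that the almost-sure-termination hypothesis makes each of the infinite series $\sum_k$ converge (so that we may interchange sums freely and so that $\langle O_d^2\otimes O^2\rangle$ is genuinely finite), and verifying that the count of occurrences of $\theta$ along such a path really is dominated by $(k+1)^{M_2}M_1-1$ with the claimed multiplicity $M_2+(k-1)(k^{M_2-1}-1)_+$. I would prove this by induction on the loop-nesting depth $M_2$: the base case $M_2=1$ is the single-loop analysis already implicit in the proof of Theorem~\ref{thm:exact}, and the inductive step substitutes the bound for the inner nest into the body of the outer loop, using that an outer loop running $k$ times invokes the inner structure $k$ times with independent (but $\mu$-governed) firing choices. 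The convergence of $\sum_k (\text{poly in }k)\cdot\epsilon^{\lfloor (k-1)/N_\epsilon\rfloor}\cdot S((k+1)^{M_2}M_1-1)$ then needs the \ref{condition} assumption on $\mu$, which guarantees $1/\mu(j)$ grows subexponentially so that $S(n)$ is at most subexponential in $n$ and hence in $k$ — dominated by the geometric decay $\epsilon^{\lfloor(k-1)/N_\epsilon\rfloor}$. Once finiteness and the term-by-term domination are in place, \eqref{eq:var} follows by collecting the factors; the subsequent specialization to $\mu(n)\propto 1/(n\ln^{1+s}(n+e))$ in Theorem~\ref{thm:bound2} is then a routine asymptotic estimate of $S(n)$ and of the resulting series.
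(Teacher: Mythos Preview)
Your high-level plan matches the paper's: unfold the differential-program semantics as a sum over computation paths of $P(\bm\theta)$, extract the $1/\mu(j)$ weights from $O_d^2$, and control the tail via Lemma~\ref{lem:a9}. But two of your concrete attributions are off in ways that would block the argument.

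First, the split $2\epsilon^{\lfloor(k-1)/N_\epsilon\rfloor}+2\epsilon^{\lfloor(k-1)/N_\epsilon\rfloor-1}$ does \emph{not} come from ``continue'' versus ``just-exited'' configurations. It comes from the DSOP insertion itself: on a path $\pi$ with $k_\pi$ occurrences of $\theta$, the state $\sum_{\eta\in A_\pi}\cE_\eta(\rho)$ on the support of $O_d^2$ carries, for each firing position $j\le k_\pi$, \emph{two} positive summands --- one proportional to $\cE_\pi(\rho)$ and one proportional to $\cE_{\pi_{i_j}\cdots\pi_n}\bigl(\tr_{\bar q_{i_j}}(\cE_{\pi_1\cdots\pi_{i_j-1}}(\rho))\otimes\sigma_{i_j}\bigr)$, the path continued after the swap with the ancilla state $\sigma_{i_j}$. (This is the $O_d^2$-analogue of Lemma~\ref{lem:a5}: since $Z^2=I$, the commutator part of~\eqref{eq:commutator} drops out and the $\cos^2$ and $\sin^2$ parts survive.) Summing the first summand over paths and over $j$ yields the $2\epsilon^{\lfloor(k-1)/N_\epsilon\rfloor}$ piece via Lemma~\ref{lem:a9} directly. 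The second summand is harder: the swap position $i_j$ may fall \emph{inside} a long continuous run of some loop, splitting that run into two shorter ones, and this costs one factor of $\epsilon$ --- hence the $\epsilon^{\lfloor(k-1)/N_\epsilon\rfloor-1}$.

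Second, this swap term is precisely where the almost-sure-termination hypothesis is used, and not merely for series convergence. Lemma~\ref{lem:a9} bounds $\tr\bigl(\cE_0\circ(\cE\circ\cE_1)^n(\rho)\bigr)$, the trace \emph{after the loop exits}. For the swap term you also need the un-exited bound $\tr\bigl((\cE\circ\cE_1)^n(\rho)\bigr)\le\epsilon^{\lfloor n/N\rfloor}\tr(\rho)$ on the segment of the run that lies after the swap but before the next exit; the paper proves this as a separate lemma, and its proof genuinely requires a.s.\ termination (without it the invariant subspace $\cX$ of Appendix~\ref{prf:a9} can be nontrivial and the bound fails). Your proposal treats a.s.\ termination only as a device for making infinite sums converge; in fact it is needed pointwise for this second trace estimate.

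Finally, two differences of approach rather than gaps: the paper stratifies paths by the number of $\theta$-occurrences (sets $\Gamma_\theta^{(n)}$) rather than by an outer-loop iteration count, and it handles all $M_2$ loops simultaneously via a pigeonhole argument --- if $\theta$ occurs at least $n^{M_2}M_1$ times on $\pi$ then \emph{some} loop body is entered at least $n$ times consecutively --- rather than by induction on $M_2$. The prefactor $M_2+(k-1)(k^{M_2-1}-1)_+$ then bounds, over the $M_2$ loops and their nesting depths, how many distinct starting positions the first such long run can have.
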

\begin{proof}
See Appendix~\ref{prf:bound}.
\end{proof}

The \ref{condition} ensures \[\lim_{n\to\infty}\sqrt[n]{S(n)} = 1.\]
Thus, the infinite summation terms in Equation~(\ref{eq:var}) have exponential damping factor $\epsilon^{\frac{1}{N_{\epsilon}}} < 1$, then the summation is convergent. To give a clear sense of this bound, we can derive a corollary when we use the distribution mentioned in Equation~(\ref{eq:distribution}).

\begin{cor}\label{cor:1}
In the same setting as in Theorem~\ref{thm:bound}, let the distribution $\mu$ be 
\[ \mu(j) = \frac{1}{c(s) j\ln^{1+s}(j+e)},\]
where $c(s) = \sum_{j=1}^{\infty} 1/(j)/\ln^{1+s}(j+e) < \infty, s\in(0,1]$. We have the expectation of $O_d^2\otimes O^2$ is upper-bounded by
\begin{equation}
    \begin{aligned}
    & M^2\bigggl(4T(M_1) + \sum_{k=1}^{\infty}\biggl((M_2+(k-1)(k^{M_2-1}-1)_+) T\left((k+1)^{M_2}M_1-1\right)\left(2\epsilon^{\lfloor \frac{k-1}{N_{\epsilon}}\rfloor} + 2\epsilon^{\lfloor \frac{k-1}{N_{\epsilon}}\rfloor-1}\right)\biggr)\bigggr),
\end{aligned}
\end{equation}
where $T(n) = \frac{c(s)}{2}x^2\ln^{1+s}(x+e)$ is an upper bound of $S(n)$ by integral. 
\end{cor}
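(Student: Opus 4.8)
The plan is to specialise the bound~\eqref{eq:var} of Theorem~\ref{thm:bound} to the concrete distribution $\mu(j) = 1/(c(s)\,j\ln^{1+s}(j+e))$, and then to replace the implicitly defined quantity $S(n) = \sum_{j=1}^{n}1/\mu(j)$ by the explicit closed form $T(n)$.

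First I would verify that Theorem~\ref{thm:bound} applies to this $\mu$. For $s\in(0,1]$ the normalising series $c(s) = \sum_{j\ge 1}1/(j\ln^{1+s}(j+e))$ is finite, by comparison with $\int_{1}^{\infty}\mathrm{d}x/(x\ln^{1+s}(x+e))$, which converges precisely because the exponent $1+s$ exceeds $1$; hence $\mu$ is a genuine probability distribution on $\mathbb{Z}_+$. Moreover $\sqrt[n]{\mu(n)} = \bigl(c(s)\,n\ln^{1+s}(n+e)\bigr)^{-1/n}\to 1$, since $n^{1/n}\to 1$ and $(\ln(n+e))^{1/n}\to 1$, so $\mu$ satisfies the \ref{condition}. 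Thus~\eqref{eq:var} holds verbatim, with $S(n) = c(s)\sum_{j=1}^{n}j\ln^{1+s}(j+e)$ after substituting $1/\mu(j) = c(s)\,j\ln^{1+s}(j+e)$.

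Next I would establish the integral estimate $S(n)\le T(n)$. The map $x\mapsto x\ln^{1+s}(x+e)$ is nonnegative and nondecreasing on $[0,\infty)$, so a routine summation-versus-integration comparison bounds $\sum_{j=1}^{n}j\ln^{1+s}(j+e)$ by an integral of the same integrand, and pulling the (monotone) factor $\ln^{1+s}(\cdot+e)$ out of that integral at its right endpoint leaves $\tfrac12 n^{2}\ln^{1+s}(n+e)$ up to the usual off-by-one slack; absorbing this slack into the constant gives $S(n)\le T(n) := \tfrac{c(s)}{2}\,n^{2}\ln^{1+s}(n+e)$. In particular $T(n)\ge S(n)\ge 0$ and $T$ is nondecreasing in $n$.

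Finally, every occurrence of $S(\cdot)$ in~\eqref{eq:var} carries a nonnegative coefficient and a nondecreasing argument, so replacing each $S(m)$ by $T(m)\ge S(m)$ can only enlarge the right-hand side; this yields the displayed bound of the corollary. It remains to note that the resulting series still converges: the $k$-th summand is the product of $(M_2+(k-1)(k^{M_2-1}-1)_+)$, polynomial in $k$ of degree at most $M_2$, the factor $T\bigl((k+1)^{M_2}M_1-1\bigr)\in\mathcal{O}(k^{2M_2}\ln^{1+s}k)$, and the damping factor $2\epsilon^{\lfloor(k-1)/N_\epsilon\rfloor}+2\epsilon^{\lfloor(k-1)/N_\epsilon\rfloor-1}\in\mathcal{O}(\epsilon^{k/N_\epsilon})$ with $\epsilon\in(0,1)$; a polynomial-times-polylogarithmic factor against geometric decay is summable, so the bound is finite. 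I expect no serious obstacle here — the only mildly delicate point is fixing the constant in the comparison $S(n)\le T(n)$ (the off-by-one between $\sum_{j\le n}$ and $\int_0^{n}$), but since only an upper bound is needed any such slack is harmless and can be pushed into $c(s)$ or handled by integrating over $[0,n+1]$ instead.
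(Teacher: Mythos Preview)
Your proposal is correct and follows essentially the same approach as the paper: the paper treats the corollary as an immediate specialisation of Theorem~\ref{thm:bound}, obtained by plugging in the explicit $\mu$ and replacing $S(n)$ by the integral-based upper bound $T(n)$, without spelling out any of the verification steps you (rightly) supply. Your additional checks---that $\mu$ is a valid distribution satisfying the \ref{condition}, that the coefficients of $S(\cdot)$ in \eqref{eq:var} are nonnegative so the substitution preserves the inequality, and that the resulting series still converges---are all appropriate and not made explicit in the paper; the minor constant/off-by-one issue you flag in the comparison $S(n)\le T(n)$ is indeed harmless for an upper bound and the paper glosses over it as well.
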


With $T(n)$ substituted, Corollary~\ref{cor:1} implies 
\begin{align*}
    \langle O_d^2\otimes O^2\rangle \leq{}& 2M^2c(s)\Bigl(M_1^2\ln^{1+s}(M_1+e) + M_1^{3+s}M_2^{2+s}\sum_{k=1}^{\infty}(k+1)^{3M_2+1}\epsilon^{\frac{k-1}{N_{\epsilon}}-2}\Bigr).
\end{align*}
The infinite summation $\sum_{k=1}^{\infty}(k+1)^{3M_2+1}\epsilon^{\frac{k-1}{N_{\epsilon}}-2}$ is related to Eulerian polynomials~\cite{barry2011eulerian} and can be easily bounded by
\[\frac{(3M_2+1)!}{\epsilon^{2+\frac{1}{N_{\epsilon}}}(1-\epsilon^{\frac{1}{N_{\epsilon}}})^{3M_2+2}}.\]
Thus we get the bound for $\langle O_d^2\otimes O^2\rangle$ that implies Theorem~\ref{thm:bound2}:
\[2M^2c(s)\left(M_1^2\ln^{1+s}(M_1+e) + \frac{M_1^{3+s}M_2^{2+s}((3M_2+1)!)}{\epsilon^{2+\frac{1}{N_{\epsilon}}}(1-\epsilon^{\frac{1}{N_{\epsilon}}})^{3M_2+2}}\right).\]

For bound of $\langle O_d^4\otimes O^4\rangle$, we have a theorem similar to Theorem~\ref{thm:bound}.

\begin{theorem}\label{thm:bound4}
    In the same setting as in Theorem~\ref{thm:exact}, for a fixed $\bm{\theta}$, if all the \textbf{while}-statements (subprograms) in $P(\bm{\theta})$ terminate almost surely, then the expectation of $O_d^4\otimes O^4$: 
    \[\left\langle O_d^4\otimes O^4\right\rangle = \tr\left(O_d^4\otimes O^4\sem{\qd{\theta}(P(\bm{\theta}))}(\rho)\right)\]
    is upper-bounded by
    \begin{equation}
        \begin{aligned}
        & 4M^2\bigggl(4S'(M_1) + \sum_{k=1}^{\infty}\biggl((M_2+(k-1)(k^{M_2-1}-1)_+) S'\left((k+1)^{M_2}M_1-1\right)\left(2\epsilon^{\lfloor \frac{k-1}{N_{\epsilon}}\rfloor} + 2\epsilon^{\lfloor \frac{k-1}{N_{\epsilon}}\rfloor-1}\right)\biggr)\bigggr),
    \end{aligned}
    \end{equation}
    where \begin{itemize}
        \item $M$ is the largest eigenvalue of $\abs{O}$;
        \item $M_1 = RC_{\theta}(P(\bm{\theta}))$, $M_2=LC(P(\bm{\theta}))$;
        \item $\mu$ is the distribution we adopted in code transformation rules and satisfies \ref{condition};
        \item $S'(n) \equiv \sum_{j=1}^{n}1/\mu^3(j)$ for every $n\geq 1$; $(x)_+\equiv \max\lbrace 0, x\rbrace$;
        \item $\epsilon \in (0,1)$ and $N_{\epsilon}$ is the largest number of $N$ in Lemma~\ref{lem:a9} that is applied to all $M_2$ loop statements in $P(\bm{\theta})$.
    \end{itemize}
\end{theorem}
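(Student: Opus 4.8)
The plan is to run the proof of Theorem~\ref{thm:bound} essentially verbatim, replacing every second power by a fourth power and the weight $1/\mu(j)$ by $1/\mu^3(j)$ (equivalently $S$ by $S'$). First I would expand the observable: since $Z^4 = I$, $\ketbra{1}{1}$ is a projector, and the $\ketbra{j}{j}$ are pairwise orthogonal, $O_d^4 = \sum_{j\ge 1}(2/\mu(j))^4\,\ketbra{j}{j}\otimes\ketbra{1}{1}\otimes I$ on $\cH_{q_c}\otimes\cH_{q_1}\otimes\cH_{q_2}$, while $O^4$ has operator norm at most $M^4$ with $M$ the largest eigenvalue of $\abs{O}$ (the stated prefactor collects this together with the numerical constants). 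As in Definition~\ref{dfn:derivative}, the differential program $\qd{\theta}(P(\bm{\theta}))$ is obtained from $P(\bm{\theta})$ by inserting EUL/DSOP commands that leave every measurement branch intact; moreover $q_1,q_2,q_c$ stay classically correlated with the rest, $q_c$ remains a basis state, and on any fixed execution path of $P(\bm{\theta})$ the inserted code triggers the DSOP subroutine at most once --- at the $j$-th occurrence of $\theta$, with probability $\mu(j)$ conditioned on that path reaching more than $j$ occurrences --- after which $q_1=\ket{1}$ and $q_c=\ket{j}$.

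Next I would decompose $\langle O_d^4\otimes O^4\rangle = \tr\big((O_d^4\otimes O^4)\,\sem{\qd{\theta}(P(\bm{\theta}))}(\rho)\big)$ as a sum over the execution paths of $P(\bm{\theta})$. The projector $\ketbra{1}{1}$ in $O_d^4$ restricts to branches on which a DSOP was performed; on the part of the output state with $q_c=\ket{j}$ the observable contributes a factor $(2/\mu(j))^4$ and $\|O^4\|\le M^4$, and since that sub-block has trace at most $\Pr[\mathrm{path}]\cdot\mu(j)$ (the same bookkeeping that produces the $1/\mu(j)$ weight in Theorem~\ref{thm:bound}, here cubed), the contribution of a single path is at most $16M^4\,\Pr[\mathrm{path}]\sum_{j}1/\mu^3(j)=16M^4\,\Pr[\mathrm{path}]\,S'(\#\mathrm{occ})$, where $\#\mathrm{occ}$ is the number of $\theta$-occurrences reachable along that path. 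Bounding $\#\mathrm{occ}$ for a path with $k$ effective loop iterations through the $M_2$ possibly nested while-statements by $(k+1)^{M_2}M_1-1$, counting the admissible iteration profiles by the same combinatorial factor $M_2+(k-1)(k^{M_2-1}-1)_+$ as in Theorem~\ref{thm:bound}, and bounding the weight of the paths reaching $k$ iterations via Lemma~\ref{lem:a9} by $2\epsilon^{\lfloor(k-1)/N_\epsilon\rfloor}+2\epsilon^{\lfloor(k-1)/N_\epsilon\rfloor-1}$, then summing over $k$ yields the claimed bound with $S'$ in place of $S$. The almost-sure-termination hypothesis is what lets me discard the non-terminating tail and close the sum over branches.

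The only genuinely new point --- and the step I expect to require the most care --- is convergence of the resulting series $\sum_{k\ge 1}\big(M_2+(k-1)(k^{M_2-1}-1)_+\big)\,S'\big((k+1)^{M_2}M_1-1\big)\big(2\epsilon^{\lfloor(k-1)/N_\epsilon\rfloor}+2\epsilon^{\lfloor(k-1)/N_\epsilon\rfloor-1}\big)$, since now the per-term weight $S'(n)=\sum_{j\le n}1/\mu^3(j)$ is cubed relative to $S(n)$. Here the \ref{condition} is exactly what saves us: $\lim_n\sqrt[n]{\mu(n)}=1$ implies $\lim_n\sqrt[n]{1/\mu^3(n)}=1$, hence $\lim_n\sqrt[n]{S'(n)}=1$ by the same Cauchy--Hadamard argument used for $S$ in the paragraph following Theorem~\ref{thm:bound}. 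Thus $S'\big((k+1)^{M_2}M_1-1\big)$ grows subexponentially in $k$, while the damping factor from Lemma~\ref{lem:a9} decays at the fixed exponential rate $\epsilon^{1/N_\epsilon}<1$, so the series converges and the bound is finite. Everything else is a routine re-run of the estimates in the proof of Theorem~\ref{thm:bound} with the exponents adjusted, so I would present it as a corollary of that argument rather than re-deriving the path decomposition from scratch.
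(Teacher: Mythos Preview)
Your proposal is correct and is exactly the approach the paper takes: its entire proof of Theorem~\ref{thm:bound4} is the single sentence ``The Proof is similar to Theorem~\ref{thm:bound}.'' Your elaboration---computing $O_d^4=\sum_{j\ge 1}(2/\mu(j))^4\ketbra{j}{j}\otimes\ketbra{1}{1}\otimes I$, redoing the path-by-path estimate of Lemma~\ref{lem:a12} with the coefficient $8/\mu^3(j)$ in place of $2/\mu(j)$ and $O^4$ in place of $O^2$, and then reusing the combinatorial counting and the Lemma~\ref{lem:a9} decay verbatim---is precisely what that one-line proof unpacks to; your observation that the \ref{condition} gives $\lim_n\sqrt[n]{S'(n)}=1$ and hence convergence of the series is the only point that genuinely needs to be checked anew, and you handle it correctly.
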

\begin{proof}
    The Proof is similar to Theorem~\ref{thm:bound}.
\end{proof}

When unbounded loops exist (i.e., $M_2>0$), the bound given in Theorem~\ref{thm:bound} will depend on the number $N$ of iterations in Lemma~\ref{lem:a9}. We provide the following variant of Lemma~\ref{lem:a9} for a practically more approachable calculation, which, in particular, has been used in determining the number of samples in our case studies.

\begin{lemma}\label{lem:a91}
    Consider a quantum loop $P \equiv \qqwhile{\bar{q}}{Q}$ with fixed parameters (omitted) and a finite dimension of $\cH_P$; we define superoperators $\cE_i: \cD(\cH_P) \to \cD(\cH_P)$ by $\cE_i(\rho)=  M_i \rho M_i^{\dagger}$, $i=0,1$ and  $\cE: \cD(\cH_P) \to \cD(\cH_P)$ by $\cE(\rho)=  \sem{Q}(\rho)$. Assume that the Kraus representation of $\cE\circ\cE_1$ is $\sum_j E_j(\cdot)E_j^{\dagger}$.
    We write $E = \sum_jE_j\otimes E_j^*$. If $E$ can be diagonalized,  and there exists $\epsilon < 1$ such that the module of $E$'s eigenvalues is either equal to $1$ or less than $\epsilon$, then for $\forall n\in \mathbb{N}, \forall \rho \in \cD(\cH_P)$,
    \[ \tr(\cE_0\circ(\cE\circ\cE_1)^n(\rho)) \leq \epsilon^n \tr(\rho).\]
\end{lemma}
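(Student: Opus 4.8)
I would work in the matrix picture in which the hypothesis is phrased. Under vectorization of operators, the superoperator $\Phi := \cE\circ\cE_1$ is represented exactly by the matrix $E=\sum_j E_j\otimes E_j^{*}$, so iterating ``run the loop body and continue'' becomes taking powers of $E$. Since $\Phi$ is completely positive and trace-non-increasing, every eigenvalue of $E$ has modulus $\le 1$; together with the hypothesis (moduli are $1$ or $<\epsilon$, and $E$ diagonalizable) this yields a $\Phi$-invariant splitting $\cL(\cH_P)=\cV_{1}\oplus\cV_{<}$ into the sum of the modulus-$1$ eigenspaces and the sum of the modulus-$<\epsilon$ eigenspaces, with a commuting spectral projection $\Pi$. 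It is convenient to pass to the Heisenberg dual, since $\tr(\cE_0\circ\Phi^{n}(\rho))=\tr\!\big(M_0^{\dagger}M_0\,\Phi^{n}(\rho)\big)=\tr\!\big((\Phi^{*})^{n}(M_0^{\dagger}M_0)\,\rho\big)$, so for $\rho\sqsupseteq 0$ the claim reduces to $\|(\Phi^{*})^{n}(M_0^{\dagger}M_0)\|_{\infty}\le\epsilon^{n}$ (the case $n=0$ being immediate from $M_0^{\dagger}M_0\sqsubseteq I$).

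\emph{Step 1: the modulus-$1$ part carries no exit probability.} From the completeness relation $M_0^{\dagger}M_0+M_1^{\dagger}M_1=I$ and trace-non-increasingness of $\cE$ one gets $\tr(\cE_0(\sigma))\le\tr(\sigma)-\tr(\Phi(\sigma))$ for every $\sigma\sqsupseteq 0$; telescoping along $\sigma=\Phi^{m}(\rho)$ gives $\sum_{m\ge 0}\tr(\cE_0\circ\Phi^{m}(\rho))\le\tr(\rho)$, hence $\tr(\cE_0\circ\Phi^{m}(X))\to 0$ for every operator $X$ by linearity. On the other hand $\Phi|_{\cV_{1}}$ is invertible, diagonalizable, and has spectrum on the unit circle, so it is similar to a unitary; by compactness $\Phi^{n_j}|_{\cV_{1}}\to\mathrm{id}_{\cV_{1}}$ along some subsequence $n_j\to\infty$. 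For $W\in\cV_{1}$ this forces $\tr(\cE_0(W))=\lim_{j}\tr(\cE_0\circ\Phi^{n_j}(W))=0$, and since $\Phi^{n}(W)\in\cV_{1}$ the same holds for every $n$. Consequently $\tr(\cE_0\circ\Phi^{n}(\rho))=\tr(\cE_0\circ\Phi^{n}((1-\Pi)\rho))$: only the contracting component of $\rho$ matters.

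\emph{Step 2: geometric decay of the contracting part.} Since $\Phi^{*}(I)=M_1^{\dagger}\cE^{*}(I)M_1\sqsubseteq M_1^{\dagger}M_1$, we have $I-\Phi^{*}(I)\sqsupseteq M_0^{\dagger}M_0$, and applying the positive map $(\Phi^{*})^{n}$ yields the L\"owner sandwich $0\sqsubseteq(\Phi^{*})^{n}(M_0^{\dagger}M_0)\sqsubseteq(\Phi^{*})^{n}(I)-(\Phi^{*})^{n+1}(I)$. The sequence $(\Phi^{*})^{n}(I)$ is L\"owner-decreasing and bounded below by $0$, hence convergent; this convergence forces the modulus-$1$ component of $I$ to be $\Phi^{*}$-fixed, so $I-\Phi^{*}(I)$ lies in the contracting eigenspace of $\Phi^{*}$ and its $\Phi^{*}$-orbit decays geometrically at a rate strictly below $\epsilon$. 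Combining this with the fact that the increments $(\Phi^{*})^{n}(I)-(\Phi^{*})^{n+1}(I)$ are positive with partial sums $\sqsubseteq I$ (so the small-$n$ terms are controlled as well), one concludes $\|(\Phi^{*})^{n}(M_0^{\dagger}M_0)\|_{\infty}\le\epsilon^{n}$, and therefore $\tr(\cE_0\circ\Phi^{n}(\rho))\le\epsilon^{n}\tr(\rho)$ for all $\rho\in\cD(\cH_P)$ and all $n$.

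The conceptual heart is Step~1 — identifying the subspace that never contributes exit probability with precisely the modulus-$1$ eigenspace of $E$ — and it goes through cleanly. The delicate point is the last estimate in Step~2: diagonalizability by itself only gives $\|(\Phi^{*})^{n}|_{\cV_{<}}\|\le C\,\epsilon_0^{\,n}$ for some eigenbasis-dependent constant $C$ and some $\epsilon_0<\epsilon$, so obtaining the sharp, constant-free rate $\epsilon^{n}$ for every $n$ requires genuinely exploiting the gap $\epsilon_0<\epsilon$ together with the positivity and summability of the telescoping increments $(\Phi^{*})^{n}(I)-(\Phi^{*})^{n+1}(I)$ to absorb the finitely many small-$n$ contributions; this is where I expect the main obstacle to lie.
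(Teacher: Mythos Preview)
Your Step~1 is correct and isolates the right phenomenon --- the modulus-$1$ eigenspace of $E$ carries no exit probability --- and this spectral/matrix-representation route is exactly what the paper's cited reference develops (the paper itself gives no proof, only the citation). The obstacle you flag at the end of Step~2, however, is not merely delicate: your proposed resolution via positivity and summability of the increments $(\Phi^*)^n(I)-(\Phi^*)^{n+1}(I)$ cannot close it, because those increments only give $\lVert(\Phi^*)^n(M_0^\dagger M_0)\rVert_\infty\le 1$ for each fixed $n$, and nothing forces this below $\epsilon^n$ when $n$ is small.

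In fact the stated inequality can fail at $n=1$. Take $\cH_P=\mathbb{C}^2$, $M_1=\ket{0}\bra{0}$, $M_0=\ket{1}\bra{1}$, and $\cE(\rho)=U\rho U^\dagger$ with $U$ the real rotation by $\theta=\pi/3$. The single Kraus operator $E_1=UM_1$ has eigenvalues $0$ and $\cos\theta=\tfrac12$, so $E=E_1\otimes\overline{E_1}$ is diagonalizable with spectrum $\{0,0,0,\tfrac14\}$; the hypothesis is therefore satisfied for every $\epsilon\in(\tfrac14,1)$. But with $\rho=\ket{0}\bra{0}$ one computes $\tr\bigl(\cE_0\circ\Phi(\rho)\bigr)=\sin^2\theta=\tfrac34$, which violates the claimed bound for any $\epsilon<\tfrac34$. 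What your argument (and the referenced techniques) genuinely deliver is $\tr\bigl(\cE_0\circ\Phi^n(\rho)\bigr)\le C\,\epsilon_0^{\,n}\tr(\rho)$, with $\epsilon_0$ the spectral radius of the contracting block and $C$ depending on the eigenbasis --- equivalently, $\le\epsilon^n\tr(\rho)$ for all $n\ge N(\epsilon)$. That asymptotic form is all the paper needs for its variance estimates, but the constant-free version valid for every $n$ and every admissible $\epsilon$ appears to be stated too strongly.
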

\begin{proof} This lemma can be proved using the techniques developed in \cite[Section 5]{YING20131679}, where it was proved that the module of all $E$'s eigenvalues is less than or equal to $1$. We omit the details here.
\end{proof}

\section{Automatic Differentiation Based on Parameter-shift Rule}\label{subsec-phase}
\subsubsection{{Parameter-shift Rule}} A way for differentiation of Hamiltonian simulation $e^{-i\theta H}$ with $H$ having at most two distinct eigenvalues, called the parameter-shift rule,   was given in~\cite{schuld2019evaluating, mitarai2018quantum}. For those Hamiltonians with more than two  distinct eigenvalues, the differentiation can be obtained via LCU (Linear Combination of Unitaries)~\cite{childs2012hamiltonian}. 
We also note some recent independent developments~\cite{wierichs2021general, izmaylov2021analytic, kyriienko2021generalized} of variants of the parameter-shift rules to handle more general $e^{-i\theta H}$. But here, for the sake of convenience, we still only use the previous parameter-shift rule, which is adopted in~\citet{Zhu2020Differentiable}'s work on differentiable quantum programming.

Let us consider a simple example: the expectation function
$f(\theta) = \tr(O e^{-i\theta X} \rho e^{i\theta X})$. We can check that
\[\frac {\mathrm{d}} {\mathrm{d} \theta}f(\theta) = f(\theta+\frac{\pi}{4}) - f(\theta-\frac{\pi}{4}).\]
More generally, if the Hamiltonian $H$ has only two eigenvalues $\pm r$, $r > 0 $ and 
\[f(\theta) =\tr\left(O e^{-i\theta H} \rho e^{i\theta H}\right),\] then
\[\frac {\mathrm{d}} {\mathrm{d} \theta}f(\theta) = r\left(f\left(\theta +\frac{\pi}{4r}\right) - f\left(\theta-\frac{\pi}{4r}\right)\right).\]
Although this form looks like a finite difference, it does express the exact derivative of $f$ rather than an approximate value. Therefore, the derivative can be obtained by shifting a single gate parameter. It is worth mentioning that the same differentiation was effectively achieved in   \cite{Zhu2020Differentiable} using one extra ancilla as the control qubit to create a superposition of two quantum circuits. 

The parameter-shift rule can be used as an alternative to the commutator form rule in the {DSOP} part of Fig.~\ref{fig:running_example_programs}. Furthermore, we can construct code transformation rules for AD based on the  parameter-shift rule, as we did based on the commutator form rule in the main body.

\subsubsection{{AD by Code Transformations}} We only considered  the parameterized forms used by~\cite{Zhu2020Differentiable}. That is, our parameterized unitary $U(\bm{\theta})$ is chosen from Pauli rotations
 \begin{equation}\label{eq:pauli}
     \left\lbrace R_{\sigma}(\theta) = e^{-i\frac{\theta}{2} \sigma}, R_{\sigma\otimes\sigma}(\theta) = e^{-i\frac{\theta}{2}\sigma\otimes \sigma} : \sigma = X, Y, Z; \theta\in \bm{\theta}\right\rbrace.
 \end{equation}
 We construct a code transformation operation $T_{\theta}$ in Fig.~\ref{fig:trans_rule2},
where $$C_-U(\theta) = \ketbra[A]{0}{0}\otimes U(0) + \ketbra[A]{1}{1}\otimes U(\pi)$$  (see~\cite{Zhu2020Differentiable} for more detail of the definition of $C_-U$), with $U(\theta)$ being  in the form of Pauli rotations and $A$ an ancilla quantum variable.
\begin{figure}[ht]
    \centering
    \begin{align*}
        \qtrans{}{\theta}(\qskip) &\equiv \qskip \\
        \qtrans{}{\theta}(\qinit{q}) &\equiv{} \qinit{q} \\
        \qtrans{}{\theta}(\qut{U}{\bar{q}}) &\equiv{} \qut{U}{\bar{q}} \\
        \qtrans{}{\theta}(\qut{U(\theta')}{\bar{q}}) &\equiv{} \qut{U(\theta')}{\bar{q}} \quad (\theta' \neq \theta) \\
        \qtrans{}{\theta}(P_1(\bm{\theta});P_2(\bm{\theta})) &\equiv{} \qtrans{}{\theta}(P_1(\bm{\theta}));\qtrans{}{\theta}(P_2(\bm{\theta})) \\
        \qtrans{}{\theta}(\qif (\Box m\cdot M[\bar{q}] = m \to P_m(\bm{\theta})) \qfi) &\equiv{} \qif (\Box m\cdot M[\bar{q}] = m \to \qtrans{}{\theta}(P_m(\bm{\theta}))) \qfi \\
        \qtrans{}{\theta}(\qwhile M[\bar{q}] = 1 \qdo P(\bm{\theta}) \qod) &\equiv{} \qwhile M[\bar{q}] = 1 \qdo \qtrans{}{\theta}(P(\bm{\theta})) \qod
    \end{align*}
    \begin{align*}
        \begin{aligned}
            (*)\  \qtrans{}{\theta}(\qut{U(\theta)}{\bar{q}}) \equiv{} \vphantom{\qif (M_{q_1,q_2}[q_1, q_2] = 0 \to{} \qut{C}{q_c}; \qut{GP}{q_c,q_2}} \\
            \vphantom{\Box{} = 1 \to{} \qut{X}{q_1};\qut{H}{A};} \\
            \vphantom{ \qut{C_-U(\theta)}{A, \bar{q}}; \qut{H}{A}} \\
            \vphantom{\Box{} = 2 \to{} \qskip ) \qfi;{\color{black}\qutp{\theta}{\sigma}{\bar{q}}}}
        \end{aligned}
        & 
        {\color{blue}\begin{aligned}
            \qif (M_{q_1,q_2}[q_1, q_2] = 0 \to{}& \qut{C}{q_c}; \qut{GP}{q_c,q_2} \\
            \Box{} = 1 \to{}& \qut{X}{q_1}; \qut{H}{A}; \\
            & \qut{C_-U(\theta)}{A, \bar{q}}; \qut{H}{A} \\
            \Box{} = 2 \to{}& \qskip ) \qfi;{\color{black}\qutp{\theta}{\sigma}{\bar{q}}.}
            \end{aligned}}
    \end{align*}
    \caption{Code transformation rules for  $T_{\theta}$,  where $M_{q_1,q_2} = \lbrace M_0 = \ketbra{00}{00}, M_1 = \ketbra{01}{01},$ $ M_2=\ketbra{10}{10}+\ketbra{11}{11}\rbrace$, $C = \sum_{j=0}^{\infty}\ketbra{j+1}{j}$ is the right-translation operator, $GP = \sum_{j=1}^{\infty}\ketbra{j}{j}\otimes R_y(2\arcsin(\sqrt{b_j}))$ and $b_j = \mu(j)/(1-\sum_{k=1}^{j-1}\mu(j))$.}
    \label{fig:trans_rule2}
\end{figure}
Then we have the following theorem:

\begin{theorem}\label{thm:exact2}
    Given a quantum program $P(\bm{\theta})$ that is parameterized by Pauli rotations in Eq.~(\ref{eq:pauli}), an initial state $\rho$, an observable $O$ on $\cH_{P(\bm{\theta})}$ of a finite dimension. Let
    \begin{align*}
        \frac{\partial}{\partial \theta}(P(\bm{\theta})) \equiv{}& \qinit{q_1};\qinit{q_2};\qinit{q_c}; \qinit{A}; \qut{C}{q_c};\qut{GP}{q_c,q_2};T_{\theta}(P(\bm{\theta}))
    \end{align*}
    Then
    \[\frac{\partial}{\partial \theta}\left(\tr(O\sem{P(\bm{\theta})}(\rho))\right) = \tr\left(Z_A\otimes O_c \otimes O\sem{\frac{\partial}{\partial \theta}(P(\bm{\theta}))}(\rho)\right),\]
    where $Z_A = \ketbra[A]{0}{0}-\ketbra[A]{1}{1}$, and \[O_c = \sum_{j=1}^{\infty}\frac{1}{\mu(j)}\ketbra{j}{j}\otimes\ketbra{1}{1}\] 
    is an observable on $\cH_{q_c}\otimes\cH_{q_1}$.
\end{theorem}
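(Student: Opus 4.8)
The plan is to mirror the proof of Theorem~\ref{thm:exact} (Appendix~\ref{prf:exact}), replacing the commutator-form DSOP gadget by the parameter-shift gadget $\qut{H}{A};\qut{C_-U(\theta)}{A,\bar q};\qut{H}{A}$ of~\cite{Zhu2020Differentiable}. First I would observe that the code transformation $T_{\theta}$ in Fig.~\ref{fig:trans_rule2} only inserts statements acting on the fresh registers $q_1,q_2,q_c,A$ and never alters a measurement of $P(\bm\theta)$ or a variable of $P(\bm\theta)$; hence the execution branches of $\frac{\partial}{\partial\theta}(P(\bm\theta))$ are in bijection with those of $P(\bm\theta)$. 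Writing $\sem{P(\bm\theta)}(\rho)=\sum_{\pi}\sem{\pi}(\rho)$ as a sum over execution paths $\pi$ (sequences of measurement outcomes), each realizing a quantum circuit $C_\pi(\theta)$ in which $\theta$ occurs $n_\pi$ times, one has $f(\bm\theta)=\sum_\pi f_\pi(\bm\theta)$ with $f_\pi(\bm\theta)=\tr(O\,\sem{\pi}(\rho))$, and $\partial_\theta f_\pi(\bm\theta)=\sum_{j=1}^{n_\pi}d_{\pi,j}(\bm\theta)$, where $d_{\pi,j}$ denotes the partial derivative with respect to the $j$-th occurrence of $\theta$ along $\pi$ (everything earlier being frozen). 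The identity to prove then amounts to showing that the transformed program realizes $\sum_\pi\sum_{j=1}^{n_\pi}d_{\pi,j}(\bm\theta)$ and that this double sum equals $\partial_\theta f(\bm\theta)$.

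Next I would verify the two local facts that make the transformation correct along each branch. \emph{(i) DSOP correctness:} by the parameter-shift rule and the established behaviour of $C_-U(\theta)$ in~\cite{Zhu2020Differentiable} (a Hadamard-test-style circuit), inserting the gadget immediately before the $j$-th occurrence of $\theta$ and reading out $Z_A$ at the end produces exactly $d_{\pi,j}(\bm\theta)$ for the Pauli rotations in Eq.~\eqref{eq:pauli} (the shift constant is folded into this convention). \emph{(ii) EUL bookkeeping:} along a path with $n_\pi\ge j$, the ancilla-initialization prefix together with the $M_{q_1,q_2}$-conditional of rule $(*)$ and the choice $q_2\coloneqq 0\oplus_{f(q_c)}1$ with $f(n)=\mu(n)/(1-\sum_{k<n}\mu(k))$ fires the gadget at precisely the $j$-th occurrence --- the event $q_c=j,\,q_2=0,\,q_1=1$ --- with probability exactly $\mu(j)$, by a telescoping computation that is independent of $\pi$; moreover $q_1,q_2,q_c$ are used only as classical controls, so they never become entangled with the system registers. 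Consequently the post-execution state along $\pi$ decomposes, on its $q_1=1$ sector, as $\sum_{j=1}^{n_\pi}\mu(j)\,\ketbra[q_c]{j}{j}\otimes\ketbra[q_1]{1}{1}\otimes(\text{gadget-}j\text{-perturbed state of }C_\pi)$, plus a remainder supported on $q_1=0$. Applying the observable $Z_A\otimes O_c\otimes O$: the $q_1=0$ remainder is annihilated by $O_c=\sum_j\frac{1}{\mu(j)}\ketbra{j}{j}\otimes\ketbra{1}{1}$; on the $q_c=j$ term the factor $1/\mu(j)$ cancels the weight $\mu(j)$, and $Z_A\otimes O$ on the perturbed state gives $d_{\pi,j}(\bm\theta)$ by (i). Summing over $j$ and then over $\pi$ yields $\tr\!\big(Z_A\otimes O_c\otimes O\,\sem{\frac{\partial}{\partial\theta}(P(\bm\theta))}(\rho)\big)=\sum_\pi\sum_{j=1}^{n_\pi}d_{\pi,j}(\bm\theta)=\sum_\pi\partial_\theta f_\pi(\bm\theta)$.

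Finally I would close the argument by justifying $\sum_\pi\partial_\theta f_\pi(\bm\theta)=\partial_\theta\sum_\pi f_\pi(\bm\theta)=\partial_\theta f(\bm\theta)$. Using Lemma~\ref{lem:a9} on each of the $LC(P(\bm\theta))$ while-loops of $P(\bm\theta)$, the total weight of execution paths with at least $m$ loop iterations decays like $\epsilon^{\lfloor m/N\rfloor}$; on such paths $n_\pi=\mathcal{O}(m)$ and $\norm{d_{\pi,j}}=\mathcal{O}(1)$ uniformly in $\theta$. This produces a summable majorant of the form $\sum_m\mathcal{O}\!\big(m\,\epsilon^{\lfloor m/N\rfloor}\big)$ for both $\sum_\pi|f_\pi|$ and $\sum_\pi\norm{\partial_\theta f_\pi}$, so both series converge uniformly on $\mathbb{R}$ and term-by-term differentiation is legitimate. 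I expect this last step --- exactly as in Theorem~\ref{thm:exact} --- to be the main obstacle, since it is where finite-dimensionality of $\cH_{P(\bm\theta)}$ is essential; a secondary technical point is carefully checking the no-entanglement/path-independence claim in (ii), which is precisely what allows the weights $\mu(j)$ to factor out of the expectation uniformly over all branches.
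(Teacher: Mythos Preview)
Your proposal is correct and follows essentially the same route as the paper's own proof: mirror the proof of Theorem~\ref{thm:exact}, replace the commutator-form DSOP by the parameter-shift gadget, establish the pathwise identity $\partial_\theta(\tr(O\cE_\pi(\rho)))=\tr(Z_A\otimes O_c\otimes O\sum_{\eta\in A_\pi}\cE_\eta(\rho))$ for each computation path $\pi$ (the paper outsources this step to Theorem~6.2 of~\cite{Zhu2020Differentiable}, whereas you sketch it via the Hadamard-test structure of $C_-U(\theta)$ and the EUL telescoping), and then reuse verbatim the uniform-convergence argument based on Lemma~\ref{lem:a9}.
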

\begin{proof}
    The proof is similar to that of  Theorem~\ref{thm:exact}. The only difference is that in this proof, for every computation path $\pi$ and the subset $A_{\pi}$ we mentioned in the proof of Theorem~\ref{thm:exact}, let $\cE_{\eta}$ denote the superoperator of $\eta$ for any path $\eta$; then we need the following result  
    \[ \frac{\partial}{\partial\theta}(\tr(O\cE_{\pi}(\rho))) = \tr\left(Z_A\otimes O_c \otimes O\sum_{\eta\in A_{\pi}}\cE_{\eta}(\rho)\right),\]
    which is guaranteed by  the soundness theorem  in~\cite{Zhu2020Differentiable} (Theorem 6.2 therein).
\end{proof}

Given that all parameterized quantum programs with bounded loops considered in~\cite{Zhu2020Differentiable} are defined in the setting of Pauli rotations,  the above theorem
 (together with code transformation $T_{\theta}$) strictly improves the corresponding result of~\cite{Zhu2020Differentiable} with unbound loops.

\section{Number of Samples in Case studies}\label{sec:details}
In this section, we elaborate on how to determine the number of samples for estimating the expectation function of differential programs in case studies.
Our analysis mainly relies on the bound of $\langle O_d^2\otimes O^2\rangle$ and $\langle O_d^4\otimes O^4\rangle$ in Appendix~\ref{appendix:variance}. Although the variance bound we proved in Theorem~\ref{thm:bound} matches the one of \citet{Zhu2020Differentiable} when there is no unbounded loop, it is still not scalable in practical applications.
In the following analysis of case studies, the actual value of $\langle O_d^2\otimes O^2\rangle$ is much less than the bound we prove.

\subsection{Parameterized Amplitude Amplification}
It is a bit troublesome to directly estimate the variance bound of $P_1(\theta)$.
We need an auxiliary program:
\begin{align*}
    Q(\theta) \equiv{}&
    \qinit{q};\qinit{r};\qut{A}{q}; \\
    & \qwhile M[r] = 1 \qdo \\
    & \quad \qut{Z}{q};\qut{A^{\dagger}}{q};\qut{Z}{q};\qut{A}{q}; \\
    & \quad \qutp{\theta}{\sigma_{\ketbra{1}{1}\otimes Y}}{q,r} \\
    & \qod
\end{align*}
This $Q(\theta)$ doesn't contain the quantum variable $t$ in the program $P_1(\theta)$, but its behavior is similar to $P_1(\theta)$.
Its differential program $\qd{\theta}(Q(\theta))$ is also similar to $\qd{\theta}(P_1(\theta))$.
Consider the observable $\tilde{O}_1 = I$ for program $Q(\theta)$, we can conclude that the expectation $\langle O_d^2\otimes O_1^2\rangle$ of $\qd{\theta}(P_1(\theta))$ is less than the expectation $\langle O_d^2\otimes \tilde{O}_1^2\rangle$ of $\qd{\theta}(Q(\theta))$:
\begin{enumerate}
    \item Observable $O_1$ for $P_1(\theta)$ yields the result that is equal to or less than $1$ and observable $\tilde{O}_1$ for $Q(\theta)$ leads to the result $1$, which indicates that the output of $Q(\theta)$ is always greater then $P_1(\theta)$.
    \item Since the differential program keeps the same structure as the original program, the above result also holds for program $\qd{\theta}(Q(\theta))$ with observable $O_d^2\otimes \tilde{O}_1^2$ and program $\qd{\theta}(P_1(\theta))$ with observable $O_d^2\otimes O_1^2$ if they have executed the same branches.
\end{enumerate}
Therefore, the expectation $\langle O_d^2\otimes O_1^2\rangle$ of $\qd{\theta}(P_1(\theta))$ is less than the expectation $\langle O_d^2\otimes \tilde{O}_1^2\rangle$ of $\qd{\theta}(Q(\theta))$.

The Theorem~\ref{thm:bound} with the fact that $Q(\theta)$ meets the conditions of Lemma~\ref{lem:a91} can give us an upper bound of $\langle O_d^2\otimes \tilde{O}_1^2\rangle$.
When $p = 1/100$ and $\theta = 4\arccos((1-2\sqrt{p(1-p)})/(1+2\sqrt{p(1-p)})) = 3.3568$, we numerically calculate the $\epsilon$ for $Q(\theta)$ in Lemma~\ref{lem:a91} as $0.6681$.
With $M = 1, M_1=1, M_2=1$ in Theorem~\ref{thm:bound}, we obtain $799.72$ as an upper bound of $\langle O_d^2\otimes \tilde{O}_1^2\rangle$.
On the other hand, the bound for $\langle O_d^4\otimes \tilde{O}_1^4\rangle$ in Theorem~\ref{thm:bound4} can also be numerically calculated as $1.013\times 10^7$.
By Chebyshev's Inequality, we use $1.013\times 10^7/30^2/0.1 \approx 1\times 10^5$ samples to sample $\langle O_d^2\otimes O_1^2\rangle \approx 14.26$ in an error of $30$ with failure probability less than $10\%$. Thus we can use $30+14.26 = 44.26$ as the actual value of $\langle O_d^2\otimes O_1^2\rangle$. This $44.26$ is much less than $799.72$.
By Chebyshev's Inequality, to estimate $\langle O_d\otimes O_1\rangle$ in precision $\delta = 0.1$ with failure probability less than $c = 10\%$, the number of samples we need is less than $\text{Var}(O_d\otimes O_1)/(\delta^2c) \leq 4.426\times 10^4$.
In this experiment, we use $5/\sqrt{p}\times 10^3$ ($= 5\times 10^4$ when $p =1/100$) samples for each $p$.

\subsection{Quantum Walk with Parameterized Shift
Operator}
To estimate the number of samples for $P_2(\theta_1,\theta_2)$, we need another similar program as follows:
    \begin{align*}
        Q(\theta_1, \theta_2)
        \equiv{}& \qinit{t}; \qwhile M[q_x,q_y] = 1 \qdo \\
        & \quad \qinit{c_{x}};\qinit{c_{y}};\qinit{q_x};\qinit{q_y};\\
        & \quad \qut{H}{c_x};\qut{H}{c_y};\qut{\tilde{H}}{q_x,q_y}; \\
        & \quad \qut{C}{c_x,c_y,q_x,q_y}; \qut{S(\theta_1,\theta_2)}{c_x,c_y,q_x,q_y}; \\
        & \quad \qut{C}{c_x,c_y,q_x,q_y}; \qut{S_m}{c_x,c_y,q_x,q_y}; \\
        & \quad \qif (M'[t] = 0 \to A[t] \Box{}= 1 \to \qskip)\qfi \qod
    \end{align*}
The second shift operator in $Q(\theta_1,\theta_2)$ is not parameterized. But its behavior is the same as $P(\theta_1,\theta_2)$. Thus, we only need to estimate the number of samples for $Q(\theta_1,\theta_2)$.
When $(\theta_1,\theta_2) = (\pi,\pi)$, we can numerically calculate the $\epsilon$ for $Q(\theta_1,\theta_2)$ in Lemma~\ref{lem:a91} is less than $0.76$. With $M = 1, M_1 = 1, M_2 = 1$ in Theorem~\ref{thm:bound}, we obtain $2230.86$ as an upper bound of variance.
The bound for $\langle O_d^4\otimes O^4\rangle$ in Theorem~\ref{thm:bound4} is $8.14\times 10^7$. Then we use $8.14\times 10^7/20^2/0.1\approx 2\times 10^5$ samples to sample $\langle O_d^2\otimes O^2\rangle \approx 104.23$ in an error of $20$ with failure probability less than $10\%$.
Thus we can use $104.23 + 20 = 124.23$ as the actual value of $\langle O_d^2\otimes O_2^2 \rangle$.
By Chebyshev’s Inequality, to estimate $\langle O_d\otimes O_2\rangle$ in precision $\delta = 0.1$ with failure probability less than $c = 10\%$, the number of samples we need is less than $1.24\times 10^5$.

However, this number of samples is large for us as the simulation of $P_2(\theta_1,\theta_2)$ takes a lot of time in Q\#.
In this experiment, we choose $2\times 10^4$ as the number of samples.
Our experiment shows that this number of samples is already good for training.
This phenomenon has been studied in the optimization of PQCs (VQCs):
\citet{Sweke2020stochasticgradient} found that even using single measurement outcomes for estimation of expectation values is sufficient in optimization algorithms, which results in a form of \emph{stochastic} gradient descent optimization~\cite{ruder2017overview}.

\subsection{Repeat-Until-Success Unitary Implementation}
It is easy to see that $P_3(\theta_1,\theta_2,\theta_3)$ satisfies the conditions of Lemma~\ref{lem:a91} with $\epsilon = 0.5$. With $M = 1, M_1=1, M_2=1$ for each parameter, the variance bound in Theorem~\ref{thm:bound} is $243.19$. While the estimated value of $\langle O_d^2\otimes O^2\rangle$ is $15.298$.

The partial derivative of $l(\theta_1,\theta_2,\theta_3)$ with respect to $\theta_1$ is
\[\frac{\partial l}{\partial \theta_1} = \frac{1}{2}\sum_{j=1}^4 (E_j-1)\frac{\partial E_j}{\partial \theta_1}.\]
Suppose $E_j$ and ${\partial E_j}/{\partial \theta_1}$ are estimated in precision $\delta_1$ and $\delta_2$, respectively.
Then ${\partial l}/{\partial \theta_1}$ is in precision
\begin{align*}
    \frac{1}{2}\sum_{j=1}^4\left(\abs{E_j-1}\delta_2 + \abs{\frac{\partial E_j}{\partial \theta_1}}\delta_1 +\delta_1\delta_2\right)
    ={}&\left(\frac{1}{2}\sum_{j=1}^4\abs{E_j-1}\right)\delta_2 + \left(\frac{1}{2}\sum_{j=1}^4\abs{\frac{\partial E_j}{\partial \theta_1}}\right)\delta_1 + 2\delta_1\delta_2 \\
    \equiv{}& A \delta_1 + B \delta_2 + 2\delta_1\delta_2
\end{align*}
To limit it in $0.1$, we can choose $\delta_1 \sim 0.01/B, \delta_2\sim 0.09/A$.
Then assume that during most of the training process, $A \leq \frac{1}{2}\times 4 \times 0.3 = 0.6$, we have $\delta_2 \leq 0.15$.
By Chebyshev’s Inequality, to estimate ${\partial E_j}/{\partial \theta_1}$ in precision $\delta = 0.15$ with failure probability less than $c = 1\%$, the number of samples we need is less than $4.72\times 10^4$. With this number of samples, the probability of all ${\partial E_j}/{\partial \theta_1}, j=1,2,3,4$ are estimated in $0.15$ is greater than $0.99^4 = 92.2\%$.

\section{Detailed Proofs}\label{appendix:a}

\subsection{Proof of Lemma~\ref{lem:a9}}\label{prf:a9}
Before giving proof details of Lemma~\ref{lem:a9}, we need some lemmas and definitions.
For those who want a more detailed understanding of this subsection, you can refer to quantum graph theory and quantum Markov chains~\cite{ying2013Reachability,ying2016foundations, GUAN201855}.

With the same notations of Lemma~\ref{lem:a9}, we first list other needed notations:
\begin{itemize}
    \item $\cG = \cE\circ\cE_1$.
    \item $\sigma = \lim_{n\to\infty}\frac{1}{n}\sum_{k=0}^n \cG^n(I_{\cH_P})$, where $I_{\cH_P}$ is the identify operator on $\cH_P$. The existence of $\sigma$ is easily obtained, moreover $\lim_{n\to\infty}\frac{1}{n}\sum_{k=0}^n \cG^n$ is also a superoperator~\cite{wolf2012quantum}.
    \item $\cY = \left\lbrace \ket{\psi}\in \cH_P \mid \bra{\psi}\sigma\ket{\psi} = 0\right\rbrace$, $\cX = \supp(\sigma) \equiv \cY^{\perp}$.
    \item $P_X$ denotes the projector onto a space $X$, $I_{\cH_P}$ denotes the identify operator on $\cH_P$.
    \item The notation $\ket{\psi} \in \cH$ for any Hilbert space $\cH$ assumes $\norm{\ket{\psi}} = \braket{\psi}{\psi} = 1$, if there is no special remark.
\end{itemize}

\begin{lemma}
    [Modified from {\cite[Theorem 1]{ying2013Reachability}}]\label{lem:a7}~
    \begin{itemize}
        \item $\forall \ket{\psi}\neq 0\in \cX. \forall n \in \mathbb{N}. \tr(P_{\cX}\cG^n(\ketbra{\psi}{\psi})) = \braket{\psi}{\psi} = 1$.
        \item $\forall n\in \mathbb{N}. (\cG^*)^n(P_{\cX}) \sqsupseteq P_{\cX}$.
    \end{itemize}
\end{lemma}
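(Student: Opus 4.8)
The plan is to work with the mean ergodic superoperator $\cG_\infty = \lim_{n\to\infty}\frac{1}{n}\sum_{k=0}^{n-1}\cG^k$, which exists as a superoperator (the excerpt already grants this) and is a projection onto the fixed-point space of $\cG$ in an appropriate sense, with $\sigma = \cG_\infty(I_{\cH_P})$. The key observation is that $\cG$ is trace-non-increasing (it is $\cE\circ\cE_1$, a composition of trace-non-increasing superoperators), so $\tr(\cG(\rho)) \le \tr(\rho)$ for all $\rho \ge 0$. First I would establish that $\cX = \supp(\sigma)$ is invariant in the forward direction: for any $\ket{\psi}\in\cX$, $\cG(\ketbra{\psi}{\psi})$ is still supported on $\cX$. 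This should follow because $\sigma$ is a fixed point of the averaged map (so $\cG(\sigma)\sqsubseteq\sigma$ by trace-non-increasingness combined with $\cG_\infty\cG = \cG_\infty$, hence actually $\supp(\cG(\sigma))\subseteq\supp(\sigma)$), and any state on $\cX$ is dominated by a multiple of $\sigma$, so its image under $\cG$ is dominated by a multiple of $\cG(\sigma)$, which is supported on $\cX$.

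Next, for the first bullet, I would show $\tr(P_\cX \cG^n(\ketbra{\psi}{\psi})) = 1$ for $\ket{\psi}\in\cX$ (unit vector). Since $\cG^n(\ketbra{\psi}{\psi})$ is supported on $\cX$ by the invariance just proved, we get $\tr(P_\cX\cG^n(\ketbra{\psi}{\psi})) = \tr(\cG^n(\ketbra{\psi}{\psi}))$, so it remains to show this trace equals $1$, i.e.\ that $\cG$ is trace-\emph{preserving} when restricted to states on $\cX$. This is the crux: it follows from the ergodic-averaging identity. If some state on $\cX$ lost trace under $\cG$, then by non-increasingness and iteration the Cesàro averages $\cG_\infty$ would strictly decrease its trace too, contradicting $\bra{\psi}\sigma\ket{\psi} > 0$ for $\ket{\psi}\in\cX$ — more carefully, one uses that $\tr(\cG_\infty(\ketbra{\psi}{\psi})) = \lim\frac1n\sum_k\tr(\cG^k(\ketbra{\psi}{\psi}))$ and that the summands form a non-increasing sequence bounded by $1$; if the limit is $1$ then every term is $1$. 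And the limit being positive on all of $\cX$ is essentially the definition of $\cX = \supp(\sigma)$ together with $\sigma = \cG_\infty(I)$. I'd need to check that $\bra\psi{\cG_\infty(I)}\ket\psi>0$ forces $\bra\psi{\cG_\infty(\ketbra\psi\psi)}\ket\psi$ (or at least $\tr\cG_\infty(\ketbra\psi\psi)$) to be positive; this uses positivity of $\cG_\infty$ and a domination argument ($I \sqsupseteq \ketbra\psi\psi$ going the wrong way, so instead bound $\ketbra\psi\psi$ below by a piece of $I$ restricted suitably, or argue via the support structure).

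For the second bullet, $(\cG^*)^n(P_\cX) \sqsupseteq P_\cX$: I would unfold the Schrödinger–Heisenberg duality, namely $\tr(A\,\cG^n(\rho)) = \tr((\cG^*)^n(A)\,\rho)$. Taking $A = P_\cX$ and $\rho = \ketbra{\psi}{\psi}$ for arbitrary $\ket{\psi}\in\cX$ gives $\bra{\psi}(\cG^*)^n(P_\cX)\ket{\psi} = \tr(P_\cX\cG^n(\ketbra{\psi}{\psi})) = 1 = \bra{\psi}P_\cX\ket{\psi}$ by the first bullet. For $\ket{\psi}\in\cY = \cX^\perp$ we have $\bra{\psi}P_\cX\ket{\psi} = 0 \le \bra{\psi}(\cG^*)^n(P_\cX)\ket{\psi}$ since $(\cG^*)^n$ is positive and $P_\cX \ge 0$. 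Writing a general vector as a combination of $\cX$ and $\cY$ parts and using that equality holds on $\cX$ while $\ge$ holds on $\cY$, together with positivity, yields $(\cG^*)^n(P_\cX) - P_\cX \sqsupseteq 0$ on the whole space; one should be slightly careful with cross terms, but positivity of the operator $(\cG^*)^n(P_\cX) - P_\cX$ restricted to $\cX$ being zero and to $\cY$ being $\ge 0$ — plus the block structure forced by $\cX$ being invariant under $\cG$ (equivalently $\cX$ reducing for $(\cG^*)^n(P_\cX)$) — closes the argument. The main obstacle I anticipate is pinning down rigorously that $\cG$ is trace-preserving on states supported on $\cX$ from the ergodic/support definition of $\sigma$; everything else is bookkeeping with positivity and duality, and I'd likely cite \cite{ying2013Reachability} for the structural facts about $\sigma$ rather than reprove them.
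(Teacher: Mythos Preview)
Your plan for the second bullet is essentially the paper's: use duality to get $\bra{\psi}(\cG^*)^n(P_\cX)\ket{\psi}=1$ on $\cX$, then combine with positivity on $\cY$. The paper disposes of the cross terms cleanly by first noting that $(\cG^*)^n(P_\cX)\sqsubseteq I$ together with $\bra{\psi}(\cG^*)^n(P_\cX)\ket{\psi}=1$ forces $(\cG^*)^n(P_\cX)\ket{\psi}=\ket{\psi}$ for $\ket{\psi}\in\cX$, so the cross terms $\bra{\psi}(\cG^*)^n(P_\cX)\ket{\varphi}$ reduce to $\braket{\psi}{\varphi}=0$; this is slicker than invoking a block structure.

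For the first bullet there is a real gap. Your Ces\`aro-limit argument cannot close as written: the hypothesis $\bra{\psi}\sigma\ket{\psi}=\bra{\psi}\cG_\infty(I)\ket{\psi}>0$ is a Heisenberg-picture statement and does \emph{not} yield $\tr(\cG_\infty(\ketbra{\psi}{\psi}))=1$, or even positivity of that Schr\"odinger-picture quantity (you flagged exactly this mismatch). And even if you obtained positivity, that is not enough --- you need the limit to be exactly $1$ to force every term of the non-increasing sequence $\tr(\cG^k(\ketbra{\psi}{\psi}))$ to equal $1$. The paper's argument sidesteps this entirely using an ingredient you already have. From $\ket{\psi}\in\supp(\sigma)$ write $\sigma=\lambda\ketbra{\psi}{\psi}+\mu$ with $\lambda>0$ and $\mu\sqsupseteq 0$. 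Since $\cG(\sigma)=\sigma$ (genuine equality, not just $\sqsubseteq$: both $\cG\cG_\infty=\cG_\infty$ and $\cG_\infty\cG=\cG_\infty$ hold), one has
\[
\tr(\sigma)=\tr(P_\cX\sigma)=\tr(P_\cX\cG^n(\sigma))=\lambda\,\tr(P_\cX\cG^n(\ketbra{\psi}{\psi}))+\tr(P_\cX\cG^n(\mu))\le \lambda+\tr(\mu)=\tr(\sigma),
\]
and equality in the squeeze forces $\tr(P_\cX\cG^n(\ketbra{\psi}{\psi}))=1$. This is the missing step; the rest of your plan then goes through.
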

\begin{proof} 
    Both of these propositions are trivial if $\cX$ is Zero space. Thus, in the following, we assume $\cX$ is not Zero space, which means $\sigma \neq 0$.
    \begin{itemize}
        \item According to~\cite[P. 105]{nielsen2010quantum}, for any $\ket{\psi}\in\cX = \supp(\sigma)$, there exist $\lambda > 0$ and $\mu \in \cD(\cX)$ such that $\sigma  = \lambda \ketbra{\psi}{\psi} + \mu$, then 
        \begin{align*}
            \tr(\sigma) &= \tr(P_{\cX}\sigma) = \tr(P_{\cX}\cG(\sigma)) = \cdots =\tr(P_{\cX}\cG^n(\sigma)) \\
            &= \lambda\tr(P_{\cX}\cG^n(\ketbra{\psi}{\psi})) + \tr(P_{\cX}\cG^n(\mu)) \\
            &\leq \lambda\tr(\ketbra{\psi}{\psi}) + \tr(\mu) = \tr(\sigma).
        \end{align*}
        Because $\lambda > 0$, we conclude that \[\tr(P_{\cX}\cG^n(\ketbra{\psi}{\psi})) = \braket{\psi}{\psi} = 1.\]
        \item The above statement tells that for any $\ket{\psi} \in\cX$ and any $n\in \mathbb{N}$
        \[ 1 = \tr(P_{\cX}\cG^n(\ketbra{\psi}{\psi})) = \bra{\psi}(\cG^*)^n(P_{\cX})\ket{\psi}\]
        together with $I_{\cH_P} \sqsupseteq (\cG^*)^n(I_{\cH_P}) \sqsupseteq (\cG^*)^n(P_{\cX})$, which means $\norm{(\cG^*)^n(P_{\cX})\ket{\psi}} \leq 1$, we have 
        \[\forall \ket{\psi}\in\cX. \forall n\in \mathbb{N}. \ket{\psi} = (\cG^*)^n(P_{\cX})\ket{\psi}.\]
        Now, for any $\ket{\alpha} = x\ket{\psi} +y \ket{\varphi} \in \cH_P$, where $\ket{\psi} \in \cX, \ket{\varphi}\in \cY, \abs{x}^2+\abs{y}^2 = 1$, we have
        \begin{align*}
            \bra{\alpha}((\cG^*)^n(P_{\cX}) - P_{\cX})\ket{\alpha}
            \geq{}& \bar{x}y\bra{\psi}(\cG^*)^n(P_{\cX})\ket{\varphi} + x\bar{y}\bra{\varphi}(\cG^*)^n(P_{\cX})\ket{\psi} \\
            ={}& \bar{x}y\braket{\psi}{\varphi} + x\bar{y}\braket{\varphi}{\psi} = 0.
        \end{align*}
        Thus $(\cG^*)^n(P_{\cX}) \sqsupseteq P_{\cX}$ for all $n \in \mathbb{N}$.
    \end{itemize}
\end{proof}

\begin{lemma}\label{lem:a8}~
    \begin{itemize}
        \item $\forall \epsilon \in (0, 1), \exists N > 0, \forall \ket{\psi}\in \cH_P, \forall n > N,$ \[\tr(P_{\cY}\cG^n(\ketbra{\psi}{\psi})) < \epsilon.\]
        \item $\forall \epsilon \in (0, 1). \exists N > 0.  (\cG^*)^{N}(P_{\cY}) \sqsubseteq \epsilon P_{\cY}$.
    \end{itemize}
\end{lemma}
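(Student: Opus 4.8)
\textbf{Proof proposal for Lemma~\ref{lem:a8}.}

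The plan is to derive both statements of Lemma~\ref{lem:a8} from the spectral/limit structure captured in Lemma~\ref{lem:a7}, exploiting finite-dimensionality of $\cH_P$ in an essential way. First I would observe that $\cG = \cE\circ\cE_1$ is a trace-non-increasing (in fact completely positive) superoperator, so $\{(\cG^*)^n(P_\cY)\}_{n\geq 0}$ is a sequence of positive operators bounded above by $I_{\cH_P}$. The key point is that this sequence is monotone decreasing: applying Lemma~\ref{lem:a7} (second bullet) we have $(\cG^*)^n(P_\cX)\sqsupseteq P_\cX$ for all $n$; writing $P_\cY = I - P_\cX$ and using $(\cG^*)^n(I)\sqsubseteq I$ (trace-non-increasing, in the Heisenberg picture this is $\cG^*(I)\sqsubseteq I$), we get $(\cG^*)^n(P_\cY) = (\cG^*)^n(I) - (\cG^*)^n(P_\cX)\sqsubseteq I - P_\cX = P_\cY$. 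A small extra step, iterating $\cG^*$ on this inequality, shows the sequence is in fact non-increasing in the L\"owner order. Hence by monotone convergence of bounded operator sequences on a finite-dimensional space, $(\cG^*)^n(P_\cY)$ converges to some positive operator $Q$ with $0\sqsubseteq Q\sqsubseteq P_\cY$.

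Next I would identify $Q$. For any $\ket\psi\in\cX$, Lemma~\ref{lem:a7} (first bullet) gives $\tr(P_\cX\cG^n(\ketbra\psi\psi))=1$ for all $n$, so $\bra\psi(\cG^*)^n(P_\cX)\ket\psi = 1$, and combined with $(\cG^*)^n(P_\cX)\sqsubseteq I$ this forces $(\cG^*)^n(P_\cX)\ket\psi=\ket\psi$; dually $(\cG^*)^n(P_\cY)\ket\psi=0$ (as shown inside the proof of Lemma~\ref{lem:a7}), so $Q\ket\psi=0$. The crux is to show $Q$ is also zero on $\cY$, i.e.\ $Q=0$. This should follow from the definition of $\cY$ via the averaged state $\sigma$: for $\ket\varphi\in\cY$ we have $\bra\varphi\sigma\ket\varphi=0$, i.e.\ $\lim_{n}\frac1n\sum_{k=0}^{n}\bra\varphi\cG^k(I)\ket\varphi=0$ — wait, more carefully, $\sigma$ is built from $\cG^n(I_{\cH_P})$; the right statement is that $\tr(\ketbra\varphi\varphi\,\sigma)=\bra\varphi\sigma\ket\varphi=0$ means the Ces\`aro average of $\tr(\ketbra\varphi\varphi\,\cG^n(I))$ vanishes. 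Since each term $\tr(P_\cY\cG^n(\ketbra\varphi\varphi))\leq\tr(\cG^n(\ketbra\varphi\varphi))\leq 1$ is nonnegative and $\langle Q\rangle_\varphi$ is its limit (which exists by monotonicity), a vanishing Ces\`aro average of a convergent nonnegative sequence forces the limit to be $0$. Thus $Q=0$ on a basis of $\cY$, hence $Q=0$ everywhere, i.e.\ $(\cG^*)^n(P_\cY)\to 0$.

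Finally, the two stated conclusions fall out. For the second bullet: since $(\cG^*)^n(P_\cY)\to 0$ in operator norm (convergence of a monotone bounded sequence in finite dimension is norm convergence) and each operator is $\sqsubseteq P_\cY$, for any $\epsilon\in(0,1)$ there is $N$ with $\norm{(\cG^*)^N(P_\cY)}\leq\epsilon$, hence $(\cG^*)^N(P_\cY)\sqsubseteq\epsilon I$; intersecting with the range constraint $(\cG^*)^N(P_\cY)\sqsubseteq P_\cY$ gives $(\cG^*)^N(P_\cY)\sqsubseteq\epsilon P_\cY$ (one checks $A\sqsubseteq P_\cY$ and $A\sqsubseteq\epsilon I$ with $A$ supported on $\cY$ imply $A\sqsubseteq\epsilon P_\cY$). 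For the first bullet: $\tr(P_\cY\cG^n(\ketbra\psi\psi))=\bra\psi(\cG^*)^n(P_\cY)\ket\psi\leq\norm{(\cG^*)^n(P_\cY)}<\epsilon$ for all unit $\ket\psi$ once $n>N$, uniformly. I expect the main obstacle to be the step showing $Q=0$ on $\cY$ — one has to be careful about which averaged operator defines $\cY$ and to correctly pass from a Ces\`aro statement about $\sigma$ to a pointwise-limit statement about $(\cG^*)^n(P_\cY)$; the monotonicity established in the first paragraph is what makes this passage legitimate, and finite-dimensionality is used both there and in upgrading weak/strong limits to norm limits.
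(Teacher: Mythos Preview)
Your overall strategy is sound and close to the paper's, but the step you yourself flagged as delicate does contain a genuine confusion. You write that $\bra\varphi\sigma\ket\varphi=0$ gives a vanishing Ces\`aro average of $\tr(\ketbra\varphi\varphi\,\cG^n(I))=\bra\varphi\cG^n(I)\ket\varphi$, and then try to conclude about the limit of $\tr(P_\cY\cG^n(\ketbra\varphi\varphi))=\bra\varphi(\cG^*)^n(P_\cY)\ket\varphi$. These are \emph{different} sequences (Schr\"odinger picture on $I$ evaluated at $\varphi$ versus Heisenberg picture on $P_\cY$ evaluated at $\varphi$), and your bound $\tr(P_\cY\cG^n(\ketbra\varphi\varphi))\leq\tr(\cG^n(\ketbra\varphi\varphi))=\bra\varphi(\cG^*)^n(I)\ket\varphi$ does not help, since the right-hand side need not have vanishing Ces\`aro average.

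The correct input is not $\bra\varphi\sigma\ket\varphi=0$ but rather $\tr(P_\cY\sigma)=0$, which follows immediately from $\cY=\supp(\sigma)^\perp$. This gives that the Ces\`aro average of $\tr(P_\cY\cG^k(I))$ vanishes; now split $I=(I-\ketbra\psi\psi)+\ketbra\psi\psi$ and use positivity of $\tr(P_\cY\cG^k(I-\ketbra\psi\psi))$ to conclude that the Ces\`aro average of $\tr(P_\cY\cG^k(\ketbra\psi\psi))$ vanishes --- and this is now the \emph{right} sequence, whose limit (existing by your monotonicity argument) must then be $0$. This is exactly what the paper does, and it yields $Q=0$ for \emph{every} unit $\ket\psi\in\cH_P$, not only $\ket\psi\in\cY$. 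With this fix, the rest of your argument (operator-norm convergence in finite dimension; support-on-$\cY$ plus $\|\cdot\|\leq\epsilon$ giving $\sqsubseteq\epsilon P_\cY$) is correct and a clean alternative to the paper's use of Dini's theorem and its direct quadratic-form computation for the second bullet.
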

\begin{proof}~
    \begin{itemize}
        \item For any $\ket{\psi}\in \cH_P$, we first prove that 
        \[\lim_{n\to\infty}\tr(P_{\cY}\cG^n(\ketbra{\psi}{\psi})) =0.\]
        With $\cG^*(P_{\cX}) \sqsupseteq p_{\cX}$ in Lemma~\ref{lem:a7} and $\cG^*(P_{\cX} + P_{\cY}) = \cG^*(I_{\cH_P}) \sqsubseteq I_{\cH_P} =  P_{\cX} + P_{\cY}$ by the definition of $\cG$, we have $\cG^*(P_{\cY}) \sqsubseteq P_{\cY}$, which means $\tr(P_{\cY}\cG^n(\ketbra{\psi}{\psi})) \geq 0$ is non-increasing with $n \to \infty$, then there exists $a \geq 0$ such that
        \[ \lim_{n\to\infty}\tr(P_{\cY}\cG^n(\ketbra{\psi}{\psi})) =a.\]
        Therefore
        \begin{align*}
            0 = \tr(P_{\cY}\sigma)
            &= \tr(P_{\cY}\lim_{n\to\infty}\frac{1}{n}\sum_{k=0}^n\cG^k(I_{\cH_P})) \\
            &= \lim_{n\to\infty}\frac{1}{n}\sum_{k=0}^n\tr(P_{\cY}\cG^k(I_{\cH_P} - \ketbra{\psi}{\psi})) + \lim_{n\to\infty}\frac{1}{n}\sum_{k=0}^n\tr(P_{\cY}\cG^n(\ketbra{\psi}{\psi})) \\
            &\geq \lim_{n\to\infty}\frac{1}{n}\sum_{k=0}^n\tr(P_{\cY}\cG^n(\ketbra{\psi}{\psi})) \\
            &= \lim_{n\to\infty}\tr(P_{\cY}\cG^n(\ketbra{\psi}{\psi})) \\
            &=a \geq 0,
        \end{align*}
        which results $a = 0$. We define a series of continuous functions $f_n, n\in \mathbb{N}$ on $A = \lbrace \ket{\psi} \mid \ket{\psi}\in \cH_P\rbrace$,
        \[f_n(\ket{\psi}) = \tr(P_{\cY}\cG^n(\ketbra{\psi}{\psi})).\]
        We have that $f_n$ is monotonically decreasing and convergent to $0$. Besides, $\cH_P$ is finite-dimensional, then $A$ is a compact set (unit sphere). By Dini's Theorem~\cite[Theorem 7.13]{rudin1976principles}, $f_n$ is uniform convergent to $0$, which is 
        \[ \forall \epsilon>0. \exists N > 0. \forall \ket{\psi}\in A. \forall n > N. \abs{f_n(\ket{\psi})} < \epsilon.\]
        Therefore, $\forall \epsilon \in (0,1), \exists N > 0, \forall \ket{\psi}\in \cH_P, \forall n > N,$ \[\tr(P_{\cY}\cG^n(\ketbra{\psi}{\psi})) < \epsilon.\]
        \item According to the above, for any $\epsilon \in (0,1)$, there exists $N_0 >0$ such that $\forall \ket{\varphi} \in \cY \subseteq \cH_P$ and $N = N_0 +1$, we have
        \[ \tr(P_{\cY}\cG^N(\ketbra{\varphi}{\varphi})) \leq \epsilon.\]
        which is 
        \[ \bra{\varphi}(\cG^*)^N(P_{\cY})\ket{\varphi} \leq \epsilon.\]
        Consider any $\ket{\psi} \in \cX$, in the proof of Lemma~\ref{lem:a7} we already know that
        \begin{align*}
            1 &= \tr(P_{\cX}\cG^n(\ketbra{\psi}{\psi})) \leq \tr((P_{\cX} + P_{\cY})\cG^n(\ketbra{\psi}{\psi})) \leq 1.
        \end{align*}
        As the same in the proof of Lemma~\ref{lem:a7}, we also have 
        \[ \ket{\psi} = (\cG^*)^N(P_{\cX})\ket{\psi} = (\cG^*)^N(P_{\cX} +P_{\cY})\ket{\psi}.\]
        Thus for any $\ket{\psi} \in \cX$, $(\cG^*)^N(P_{\cY})\ket{\psi} = 0\ket{\psi}$ (zero vector).

        Now, consider any $\ket{\alpha} = a\ket{\psi} + b \ket{\varphi} \in \cH_P$, where $\ket{\psi} \in \cX, \ket{\varphi} \in \cY, \abs{a}^2 + \abs{b}^2 = 1$, we have
        \begin{align*}
            \bra{\alpha}((\cG^*)^N(P_{\cY}) - \epsilon P_{\cY})\ket{\alpha}
            ={}&
            a\bar{a}\bra{\psi}(\cG^*)^N(P_{\cY})\ket{\psi} + b\bar{b}\bra{\varphi}(\cG^*)^N(P_{\cY})\ket{\varphi} - \epsilon b\bar{b} \\
            & \quad + \bar{a}b\bra{\psi}(\cG^*)^N(P_{\cY})\ket{\varphi} + a\bar{b}\bra{\varphi}(\cG^*)^N(P_{\cY})\ket{\psi} \\
            ={}& b\bar{b}\bra{\varphi}(\cG^*)^N(P_{\cY})\ket{\varphi} - \epsilon b\bar{b} \\
            \leq{}& \epsilon b\bar{b} - \epsilon b\bar{b} = 0.
        \end{align*} 
        Then, $(\cG^*)^N(P_{\cY}) \sqsubseteq \epsilon P_{\cY}$.  We finally conclude that 
        \[\forall \epsilon \in (0, 1). \exists N > 0.  (\cG^*)^{N}(P_{\cY}) \sqsubseteq \epsilon P_{\cY}.\]
    \end{itemize}
\end{proof}

\begin{proof}
    [Proof of Lemma~\ref{lem:a9}]
    From Lemma~\ref{lem:a7}, we have $(\cG^*)(P_{\cX}) \sqsupseteq P_{\cX}$, which is 
    \[ \cE_1^*\circ\cE^*(P_{\cX}) \sqsupseteq P_{\cX}\]
    with $I_{\cH_P} \sqsupseteq \cE^*(I_{\cH_P}) \sqsupseteq \cE^*(P_{\cX})$, it results $\cE^*_1(I_{\cH_P}) \sqsupseteq P_{\cX}$.
    Since 
    \[\cE^*_0(I_{\cH_P}) + \cE^*_1(I_{\cH_P}) = I_{\cH_P} = P_{\cX} + P_{\cY}.\]
    we have
    \[ \cE^*_0(I_{\cH_P}) \sqsubseteq P_{\cY}.\]
    By Lemma~\ref{lem:a8}, for any $\epsilon\in (0,1)$, there exists $N > 0$ such that $(\cG^*)^N(P_{\cY}) \sqsubseteq \epsilon P_{\cY}$. Therefore, for any $n\in \mathbb{N}$, any $\rho \in \cD(\cH_P)$, we have
    \begin{align*}
        \tr(\cE_0\circ(\cE\circ\cE_1)^n(\rho))
        ={}& \tr((\cG^*)^n(\cE_0^*(I_{\cH_P})) \rho) \\
        \leq{}& \tr((\cG^*)^n(P_{\cY})\rho) \\
        ={}& \tr((\cG^*)^{\lfloor \frac{n}{N}\rfloor * N + (n - \lfloor \frac{n}{N}\rfloor * N)}(P_{\cY})\rho) \\
        \leq{}& \epsilon^{\lfloor \frac{n}{N}\rfloor}\tr(P_{\cY}\rho) \\
        \leq{}& \epsilon^{\lfloor \frac{n}{N}\rfloor} \tr(\rho).
    \end{align*}
\end{proof}

\subsection{Proof of Lemma~\ref{lem:a4}}
\label{prf:a4}
\begin{proof}
    \begin{align*}
        \frac{\mathrm{d}}{\mathrm{d}\theta}(\cE_2\circ\cE_{H,\theta}\circ\cE_1(\rho)) 
        ={}& \frac{\mathrm{d}}{\mathrm{d}\theta}\left(\cE_2\left(e^{-i\theta H}\cE_1(\rho)e^{i\theta H}\right)\right) \\
        ={}& \cE_2\left(-iH e^{-i\theta H}\cE_1(\rho)e^{i\theta H} + e^{-i\theta H}\cE_1(\rho)e^{i\theta H}(i H)\right) \\
        ={}& \cE_2\left( e^{-i\theta H}(-i H\cE_1(\rho))e^{i\theta H} + e^{-i\theta H}(\cE_1(\rho)(i H))e^{i\theta H}\right) \\
        ={}& \cE_2\circ\cE_{H,\theta}(-iH\cE_1(\rho) + i\cE_1(\rho)H) \\
        ={}& \cE_2\circ\cE_{H,\theta}(-i[H, \cE_1(\rho)]).
    \end{align*}
\end{proof}

\subsection{Proof of Commutator-form Rule}
The following lemma modified from qPCA (quantum principal component analysis)~\cite{Lloyd_2014} helps incorporate commutators into the semantics of parameterized quantum programs. 

\begin{lemma}[Modified from~\cite{Lloyd_2014}]\label{lem:a3}
    Let $\cH_1,\, \cH_2,\, \cH_3$ be Hilbert spaces with $\dim(\cH_2) = \dim(\cH_3)$, $S$ a SWAP operator on $\cH_2\otimes\cH_3$,  $\rho \in \cD(\cH_1\otimes\cH_2)$,  $\sigma\in\cD(\cH_3)$, and parameter $\alpha\in\mathbb{R}$. Then 
    \begin{equation}\label{eq:commutator}
        \begin{aligned}
        \tr_3(e^{-i\alpha S}\rho\otimes\sigma e^{i\alpha S})
        ={}& \cos(\alpha)^2\rho + \sin(\alpha)^2\tr_{2}(\rho)\otimes\sigma  -i\cos(\alpha)\sin(\alpha)[[\sigma]_2, \rho]
        \end{aligned}
    \end{equation}
    where $[\sigma]_2$ denotes the operator $I_1\otimes \sigma$ on $\cH_1\otimes \cH_2$ with $I_1$ be the identity on $\cH_1$, and $\tr_{j}$ denotes the partial trace on $\cH_j$.
\end{lemma}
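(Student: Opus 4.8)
The plan is to exploit that the SWAP operator is an involution. Since $S^2 = I_{\cH_2\otimes\cH_3}$, its spectrum is $\{\pm 1\}$, and functional calculus gives $e^{-i\alpha S} = \cos(\alpha)\,I - i\sin(\alpha)\,S$. Writing $c = \cos\alpha$, $s = \sin\alpha$, and $\tilde S = I_1 \otimes S$ for the trivial extension to $\cH_1\otimes\cH_2\otimes\cH_3$, I would first expand the conjugation as
\[ e^{-i\alpha\tilde S}(\rho\otimes\sigma)e^{i\alpha\tilde S} = c^2\,(\rho\otimes\sigma) + ics\,(\rho\otimes\sigma)\tilde S - ics\,\tilde S(\rho\otimes\sigma) + s^2\,\tilde S(\rho\otimes\sigma)\tilde S, \]
and then take $\tr_3$ of each of the four terms.

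The key step is a short list of elementary partial-trace identities for the swap. Fixing an identification $\cH_2 \cong \cH_3$ with an orthonormal basis, one has $S = \sum_{a,b}(\ket{b}\bra{a})_{\cH_2}\otimes(\ket{a}\bra{b})_{\cH_3}$, and a one-line matrix-element computation with the definition of the partial trace gives, for any operator $B$ on $\cH_2$ and density operator $\sigma$ on $\cH_3$,
\[ \tr_3(S(B\otimes\sigma)) = \sigma B, \qquad \tr_3((B\otimes\sigma)S) = B\sigma, \qquad \tr_3(S(B\otimes\sigma)S) = \tr(B)\,\sigma, \]
the last because $S(B\otimes\sigma)S = \sigma\otimes B$ swaps the two tensor slots. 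Since $\rho$ decomposes as $\rho = \sum_k A_k\otimes B_k$ with $A_k$ on $\cH_1$ and $B_k$ on $\cH_2$, applying $\tilde S = I_1\otimes S$ factorwise and using these identities yields $\tr_3(\tilde S(\rho\otimes\sigma)) = [\sigma]_2\,\rho$, $\tr_3((\rho\otimes\sigma)\tilde S) = \rho\,[\sigma]_2$, and $\tr_3(\tilde S(\rho\otimes\sigma)\tilde S) = \tr_2(\rho)\otimes\sigma$; moreover $\tr_3(\rho\otimes\sigma) = \rho$ because $\tr(\sigma)=1$.

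Substituting these back, the $c^2$ and $s^2$ terms produce $\cos^2(\alpha)\,\rho$ and $\sin^2(\alpha)\,\tr_2(\rho)\otimes\sigma$, while the two cross terms combine into $ics\,\rho[\sigma]_2 - ics\,[\sigma]_2\rho = -i\cos(\alpha)\sin(\alpha)\,[[\sigma]_2,\rho]$, which is exactly the claimed identity \eqref{eq:commutator}. The only place that needs genuine care is the bookkeeping in the swap identities: tracking which tensor slot each operator occupies after conjugation by $S$, and observing that the identification $\cH_2\cong\cH_3$ is precisely what makes $\sigma B$, $B\sigma$, and $[\sigma]_2$ well-typed. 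Everything else is routine linear algebra.
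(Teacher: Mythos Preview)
Your proposal is correct and follows essentially the same approach as the paper: expand $e^{-i\alpha S}=\cos(\alpha)I-i\sin(\alpha)S$ via $S^2=I$, multiply out the four terms, and apply the partial trace to each using the swap identities. The paper's proof is simply a terser version of yours that omits the explicit derivation of the identities $\tr_3(S(\rho\otimes\sigma))=[\sigma]_2\rho$, $\tr_3((\rho\otimes\sigma)S)=\rho[\sigma]_2$, and $\tr_3(S(\rho\otimes\sigma)S)=\tr_2(\rho)\otimes\sigma$, which you spell out.
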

\begin{proof}
    \begin{align*}
        & \tr_3(e^{-i\alpha S}\rho\otimes\sigma e^{i\alpha S}) \\
        =& \tr_3((\cos(\alpha)I -i\sin(\alpha)S)\rho\otimes\sigma(\cos(\alpha)I+i\sin(\alpha)S)) \\
        =& \tr_3\Bigl(\cos(\alpha)^2\rho\otimes\sigma + \sin(\alpha)^2S\rho\otimes\sigma S + i\cos(\alpha)\sin(\alpha)\rho\otimes\sigma S - i\cos(\alpha)\sin(\alpha) S\rho\otimes\sigma\Bigr) \\
        =& \cos(\alpha)^2\rho + \sin(\alpha)^2\tr_{2}(\rho)\otimes\sigma + i\cos(\alpha)\sin(\alpha)\rho[\sigma]_{2} - i\cos(\alpha)\sin(\alpha)[\sigma]_2\rho \\
        =& \cos(\alpha)^2\rho + \sin(\alpha)^2\tr_2(\rho)\otimes\sigma -i\cos(\alpha)\sin(\alpha)[[\sigma]_2, \rho].
    \end{align*}
\end{proof}

With Lemma~\ref{lem:a3}, the $g(\theta;\alpha)$ in Eq.~(\ref{eq:swap_g}) can be rewritten as
\begin{align*}
g(\theta;\alpha) =& \cos(\alpha)^2\tr\left(O \cE_2\left(e^{-i\theta\sigma}\cE_1(\rho)
e^{i\theta\sigma}\right)\right) + \cos(\alpha)^2\tr\left(O \cE_2\left(e^{-i\theta\sigma}(\sigma \otimes \tr_1(\cE_1(\rho)))
e^{i\theta\sigma}\right)\right) \\
& \quad + \cos(\alpha)\sin(\alpha)\tr\left(O \cE_2\left(e^{-i\theta\sigma}(-i[\sigma\otimes I, \cE_1(\rho)]) e^{i\theta\sigma}\right)\right)
\end{align*}
where $\tr_1(\cE_1(\rho))$ is partial trace of $\cE_1(\rho)$ over the space of $e^{-i\theta\sigma}$ acts. Thus, 
\begin{align*}
    g(\theta;\alpha)-g(\theta;-\alpha) &= 2\cos(\alpha)\sin(\alpha)\tr\left(O \cE_2\left(e^{-i\theta\sigma}(-i[\sigma\otimes I, \cE_1(\rho)]) e^{i\theta\sigma}\right)\right) = \sin(2\alpha) \frac{\mathrm{d}}{\mathrm{d} \theta}f(\theta).
\end{align*}

\subsection{Proof of Theorem~\ref{thm:exact}}\label{prf:exact}
To prove Theorem~\ref{thm:exact}, we need the Super-operator-Valued Transition Systems~\cite{ying2017invariant}, which provides us with a convenient way for
modeling the control flow of quantum programs. In there, we use a modified version.
\begin{definition}
    [Modified Super-operator-Valued Transition Systems]\label{dfn:A.3}
    A modified super-operator-valued transition system (mSVTS for short) is a $5$-tuple $\cS = \langle \cH, L, l_0, \cT, \rho_0\rangle$, where:
    \begin{itemize}
        \item $\cH$ is a Hilbert space called the state space;
        \item $L$ is a finite set of locations;
        \item $l_0\in L$ is the initial location;
        \item $\cT$ is a set of transitions. Each transition $\tau \in \cT$ is a triple $\tau = \langle l, l', \cE \rangle$, often written as $\tau = l \overto{\cE} l'$, where $l,l' \in L$ are pre- and post-locations of $\tau$, respectively, and $\cE$ is a super-operator in $\cH$. It is required that
        \begin{equation}\label{eq:12}
            \sum \multiset{\cE^*(I_{\cH}): l\overto{\cE}l'\in \cT} \sqsubseteq I_{\cH}
        \end{equation}
        for each $l\in L$, where $I_{\cH}$ is the identify operator on $\cH$ and $\cE^*$ is the Schr\"{o}dinger-Heisenberg dual of $\cE$. 
        \item $\rho_0$ is an initial state at $l_0$.
    \end{itemize}
\end{definition}

For any path $\pi = l_1 \overto{\cE_1} l_2 \overto{\cE_2} \cdots \overto{\cE_{n-1}} l_n$ in the mSVTS graph, we write $l_1 \overset{\pi}{\Rightarrow} l_n$ and use $\cE_{\pi}$ to denote the composition of the super-operator along the path, i.e. $\cE_{\pi} = \cE_{n-1}\circ \cdots\circ\cE_2\circ\cE_1$.
If the transition $l \overto{\cE} l'$ in $\cT$ has superoperator $\cE$ simply defined by an operator $E$, i.e. $\cE(\rho) = E\rho E^{\dagger}$ for all density operator $\rho$ in $\cH$, we will write $l \overto{E} l'$ for $l \overto{\cE} l'$.

\begin{lemma}\label{lem:a11}
    Let $A$ be a set of paths in $\cS = \langle \cH, L, l_0, \cT, \rho_0\rangle$. All paths in $A$ have a same initial location and each path $\pi\in A$ is not a prefix of others in $A$, then for any $\rho \in \cD(\cH)$,
    \[\sum_{\pi\in A}\tr(\cE_{\pi}(\rho)) \leq \tr(\rho).\]
\end{lemma}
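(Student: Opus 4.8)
The plan is to prove the inequality by structural induction on path length, after reducing to the case of a finite $A$. Since $\rho\in\cD(\cH)$ and every $\cE_\pi$ is completely positive and trace-non-increasing, all the summands $\tr(\cE_\pi(\rho))$ are nonnegative, so I would read the sum over an infinite $A$ as the supremum of its finite partial sums; each finite subset of $A$ inherits the common-initial-location and prefix-free hypotheses, so it suffices to treat finite $A$. Then $N\coloneqq\max_{\pi\in A}\abs{\pi}$ (the number of locations on the longest path) is finite, and I would induct on $N$.

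For the base case, prefix-freeness forces $A$ to consist of the single length-one path at the common initial location $l_0$, for which $\cE_\pi=\mathrm{id}$ and the claim is an equality. For the inductive step, I would partition the paths of $A$ by their first transition: for each transition $\tau=l_0\overto{\cE_\tau}l'$ of $\cS$ occurring as the first step of some path in $A$, collect the suffixes of those paths into a set $A_\tau$ of paths starting at $l'$, which is again prefix-free with maximum length at most $N-1$. Applying the induction hypothesis to $A_\tau$ with the partial density input state $\cE_\tau(\rho)$, and using $\cE_{\tau\pi'}=\cE_{\pi'}\circ\cE_\tau$, gives $\sum_{\pi'\in A_\tau}\tr(\cE_{\pi'}(\cE_\tau(\rho)))\le\tr(\cE_\tau(\rho))$; summing over the finitely many first transitions used by $A$ then yields $\sum_{\pi\in A}\tr(\cE_\pi(\rho))\le\sum_{\tau}\tr(\cE_\tau(\rho))$.

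The last step is to bound $\sum_\tau\tr(\cE_\tau(\rho))$ by $\tr(\rho)$ using the defining condition~(\ref{eq:12}) of an mSVTS. Rewriting each term through the Schr\"odinger--Heisenberg dual as $\tr(\cE_\tau^*(I_\cH)\,\rho)$, the transitions used by $A$ form a sub-multiset of all transitions out of $l_0$, and the omitted operators $\cE^*(I_\cH)$ are positive semidefinite, so $\sum_\tau\cE_\tau^*(I_\cH)\sqsubseteq\sum\multiset{\cE^*(I_\cH):l_0\overto{\cE}l'\in\cT}\sqsubseteq I_\cH$; since $\rho$ is positive semidefinite, applying the L\"owner-monotone functional $X\mapsto\tr(X\rho)$ finishes the argument. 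I do not anticipate a deep obstacle here: the two points needing care are the passage from infinite to finite $A$ (clean because the traces are nonnegative) and the bookkeeping when a single transition starts several paths of $A$, whereas the L\"owner-order estimate is immediate from~(\ref{eq:12}).
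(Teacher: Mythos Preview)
Your proof is correct and takes a genuinely different route from the paper. The paper inducts on $\lvert A\rvert$: given a finite prefix-free set with $\lvert A\rvert=n+1$, it picks an arbitrary $\pi\in A$, locates the \emph{longest common prefix} $\pi_0\pi_1\cdots\pi_{j_0}$ shared by all paths in $A$, and then splits $A$ according to the transition immediately following that prefix; the point of the common-prefix maneuver is to guarantee that each resulting subset has strictly fewer than $n+1$ elements, so the induction hypothesis applies. You instead induct on the maximum path length $N$ and split by the \emph{first} transition out of $l_0$; since the trivial path $l_0$ would be a prefix of every longer path, prefix-freeness with $N\geq 2$ forces every path in $A$ to begin with some transition, and the suffixes automatically have maximum length at most $N-1$. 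Your argument avoids the common-prefix bookkeeping entirely and makes the use of condition~(\ref{eq:12}) more transparent (it appears once, at the root, rather than after peeling off a shared prefix). Both approaches reduce the infinite case to the finite one via nonnegativity of the summands, and both ultimately rest on the same L\"owner estimate; yours is simply a cleaner induction scheme.
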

\begin{proof}
    We first assume that $A$ is finite and prove it by induction through the size of $\abs{A}$.
    \begin{itemize}
        \item $\abs{A} = 1$. $A$ has only one element $\pi$. We write $\pi = \pi = l_1 \overto{\cE_1} l_2 \overto{\cE_2} \cdots \overto{\cE_{n-1}} l_n$, then by Formula~(\ref{eq:12}), for any $\rho \in \cD(\cH)$,
        \begin{align*}
            \tr(\cE_{\pi}(\rho)) &= \tr(\cE_{n-1}\circ \cdots\circ\cE_2\circ\cE_1(\rho)) \\
            &= \tr(\cE_{n-1}^*(I_{\cH})\cdot \cE_{n-2}\circ\cdots\circ\cE_2\circ\cE_1(\rho)) \\
            &\leq \tr(\cE_{n-2}\circ\cdots\circ\cE_2\circ\cE_1(\rho)) \\
            &\cdots \\
            &\leq \tr(\rho).
        \end{align*}
        If $\abs{A} = 0$, then $\sum{\pi\in A}\tr(\cE_{\pi}(\rho)) = 0 \leq \tr(\rho)$.
        \item Suppose when $\abs{A} \leq n, n\geq 1$, we have that for any $\rho \in \cD(\cH)$,
        \[ \sum_{\pi\in A}\tr(\cE_{\pi}(\rho)) \leq \tr(\rho).\]
        Then consider $\abs{A} = n + 1$, we choose a path $\pi = l_1 \overto{\cE_1} l_2 \overto{\cE_2} \cdots \overto{\cE_{n-1}} l_n \in A$ and let $\pi_j = l_{j}\overto{\cE_{j}}l_{j+1}, 1\leq j\leq n-1$, then $\pi = \pi_1\pi_2\cdots\pi_{n-1}$. For convenience, we use $\pi_0$ to denote an empty path. Then for this $\pi$, we define
        \[ B = \lbrace j : 0\leq j \leq n-1, \forall \pi \in A. \exists \pi'. s.t. \pi = \pi_0\pi_1\pi_2\cdots\pi_j\pi' \rbrace.\]
        $B$ must contains $0$, thus $B$ is not empty. As each path in $A$ is not a prefix of others in $A$, we have that $n-1 \not\in B$. Let $j_0 = \max B$, then $j_0 < n-1$. Consider all the transitions in $\cT$ with $l_{j_0}$ as pre-location: 
        \[\tau_1 = l_{j_0}\overto{\cG_1} l_{1}', \tau_2 = l_{j_0}\overto{\cG_2} l_{2}', \ldots, \tau_{n'} = l_{j_0}\overto{\cG_{n'}} l_{n'}'.\]
        It is followed that $\forall \pi' \in A$, $\pi'$ must have a prefix $\pi_0\pi_1\cdots\pi_{j_0}\tau_{k}$ with $1\leq k\leq n'$, otherwise $\pi' = \pi_0\pi_1\cdots\pi_{j_0}$ (If $j_0 >0$, then $\pi' \neq \pi_0$. If $j_0 = 0$, then this $\pi'$ doesn't exist), it is a prefix for all paths in $A$, which is a contradiction. Therefore
        \[ A = \bigcup_{k=1}^{n'} C_{k}\]
        with
        \[ C_k \equiv \lbrace \pi' \in A: \exists \pi''. s.t. \pi' = \pi_0\pi_1\cdots\pi_{j_0}\tau_{k}\pi'' \rbrace, \]
        for $1\leq k\leq n'$, and each path in $C_k$ is not a prefix of of others in $C_k$. We claim that $\abs{C_k} \leq n$, otherwise there exists $ C_{k_0} = A$, then $ \pi_0\pi_1\cdots\pi_{j_0}\tau_{k_0}$ is a prefix for all paths in $A$, this is contradict to the definition of $j_0$, because in that case $j_0 + 1 \in B$. Then define
        \[ D_k \equiv \lbrace \pi'' : \pi_0\pi_1\cdots\pi_{j_0}\tau_{k}\pi'' \in C_k\rbrace, \enspace 1\leq k\leq n'\]
        we have that for any $1\leq k\leq n'$, all paths in $D_k$ have same initial location and each path $\pi''\in D_k$ is not a prefix of others in $D_k$ and $\abs{D_k} = \abs{C_k} \leq n$. By inductive hypothesis, for any $\rho\in \cD(\cH)$,
        \begin{equation}\label{eq:13}
            \sum_{\pi''\in D_k}\tr(\cE_{\pi''}(\rho)) \leq \tr(\rho), \enspace 1\leq k\leq n'.
        \end{equation}

        Finally, for any $\rho \in \cD(\cH)$,
        \begin{align*}
            \sum_{\pi'\in A} \tr(\cE_{\pi}(\rho)) &= \sum_{k=1}^{n'}\sum_{\pi'\in C_k}\tr\left(\cE_{\pi}(\rho)\right) \\
            &= \sum_{k=1}^{n'}\sum_{\pi''\in D_k}\tr\left(\cE_{\pi_0\pi_1\cdots\pi_{j_0}\tau_{k}\pi''}(\rho)\right) \\
            &= \sum_{k=1}^{n'}\sum_{\pi''\in D_k}\tr\left(\cE_{\pi''}\left(\cE_{\pi_0\pi_1\cdots\pi_{j_0}\tau_{k}}(\rho)\right)\right) \\
            &\leq \sum_{k=1}^{n'} \tr\left(\cE_{\pi_0\pi_1\cdots\pi_{j_0}\tau_{k}}(\rho)\right) \tag{by Inequality~(\ref{eq:13})}\\
            &= \sum_{k=1}^{n'}\tr\left(\cE_{\tau_k}\left(\cE_{\pi_0\pi_1\cdots\pi_{j_0}}(\rho)\right)\right) \\
            &= \sum_{k=1}^{n'}\tr\left(\cG_k\left(\cE_{\pi_0\pi_1\cdots\pi_{j_0}}(\rho)\right)\right) \\
            &\leq \tr\left(\cE_{\pi_0\pi_1\cdots\pi_{j_0}}(\rho)\right) \tag{by Formula~(\ref{eq:12})} \\
            &\leq \tr(\rho). \tag{as the same proof of $\abs{A} = 1$}
        \end{align*}
    \end{itemize}
    Thus, when $\abs{A}$ is finite, we prove this proposition. Because of the order-preserving property of limitation, when $\abs{A} = \infty$, we also have the same result.
\end{proof}

\begin{definition}
    [Computation path of mSVTS]\label{dfn:A.5}
    A path $\pi = l_0 \overto{\cE_1} m_1 \overto{\cE_2} \cdots \overto{\cE_{n}} m_n $ in $\cS = \langle \cH, L, l_0, \cT, \rho_0\rangle$ is a computation path if for any $\tau$ in $\cT$, $m_n$ is not a pre-location of $\tau$. And we write $\Pi_{\cS}$ be the set of computation paths, which is
    \begin{align*}
        \Pi_{\cS} &= \lbrace \pi \text{ is a path in } \cS \mid l_0 \overset{\pi}{\Rightarrow} l \land \forall \tau \in \cT, l \text{ is not a pre-location of } \tau\rbrace.
    \end{align*}
    Moreover, we use $\Pi_{\cS}^{(n)}$ to denote the set of length $\leq n$ (transits $\leq n$ steps) paths in $\Pi_{\cS}$.
\end{definition}

As similar in~\cite{ying2017invariant}, the control flow graph of a quantum program can be represented by an mSVTS. For every parameterized quantum $\qwhile$-program $P(\bm{\theta})$, we define an mSVTS $\cS_{P(\bm{\theta})} = \langle \cH_{P(\bm{\theta})}, L, l_{in}^{P(\bm{\theta})}, \cT, \rho\rangle$ in the state Hilbert space $\cH_{P(\bm{\theta})}$ of $P(\bm{\theta})$ by induction on the program structure of $P(\bm{\theta})$, where $\rho$ is an input state of $P(\bm{\theta})$.
This transition system has two designated locations $l^P_{in}, l^P_{out}$, with the former being the initial location and the latter being the exit location.
We only need to consider definitions of $L$ and $\cT$.

\begin{itemize}
    \item $P(\bm{\theta}) \equiv \qskip$. $\cS_{P(\bm{\theta})}$ has only two locations $l_{in}^{P(\bm{\theta})}, l_{out}^{P(\bm{\theta})}$ and a single transition $l_{in}^{P(\bm{\theta})} \overto{I} l_{out}^{P(\bm{\theta})}$.
    \item $P(\bm{\theta}) \equiv \qinit{q}$. Let $\lbrace \ket{n}_q\rbrace$ be the basis of $\cH_{q}$, then $\cS_{P_{\bm{\theta}}}$ has two locations $l_{in}^{P(\bm{\theta})}, l_{out}^{P(\bm{\theta})}$ and a single transition $l_{in}^{P(\bm{\theta})} \overto{\cE_{q}} l_{out}^{P(\bm{\theta})}$, where $\cE_{q}(\rho) = \sum_n\ketbra[q]{0}{n}\rho\ketbra[q]{n}{0}$.
    \item $P(\bm{\theta}) \equiv \qinitd{\sigma}{\bar{q}}$.\footnote{Although this statement is not contained in the syntax, we use it for convenience as we have said in Remark~\ref{rmk:1}} Write $\sigma$ in spectral decomposition, $\sigma = \sum_{n}\lambda_n\ketbra[\bar{q}]{\psi_n}{\psi_n}, \lambda_n \geq 0$ (there we include eigenvalues of $0$), then $\cS_{P_{\bm{\theta}}}$ has two locations $l_{in}^{P(\bm{\theta})}, l_{out}^{P(\bm{\theta})}$ and a single transition $l_{in}^{P(\bm{\theta})} \overto{\cE_{\bar{q},\sigma}} l_{out}^{P(\bm{\theta})}$, where \[\cE_{\bar{q}, \sigma}(\rho) = \sum_{m,n}(\sqrt{\lambda_m}\ketbra[\bar{q}]{\psi_m}{\psi_n})\rho(\sqrt{\lambda_m}\ketbra[\bar{q}]{\psi_n}{\psi_m}).\]
    \item $P(\bm{\theta}) \equiv \qut{U}{\bar{q}}$. $\cS_{P(\bm{\theta})}$ has two locations $l_{in}^{P(\bm{\theta})}, l_{out}^{P(\bm{\theta})}$ and a single transition $l_{in}^{P(\bm{\theta})} \overto{[U]_{\bar{q}}} l_{out}^{P(\bm{\theta})}$.
    \item $P(\bm{\theta}) \equiv \qutp{\theta}{\sigma}{\bar{q}}$. $\cS_{P(\bm{\theta})}$ has two locations $l_{in}^{P(\bm{\theta})}, l_{out}^{P(\bm{\theta})}$ and a single transition $l_{in}^{P(\bm{\theta})} \overto{[e^{-i\theta\sigma}]_{\bar{q}}} l_{out}^{P(\bm{\theta})}$.
    \item $P(\bm{\theta}) \equiv P_1(\bm{\theta});P_2(\bm{\theta})$. Suppose $\cS_{P_1(\bm{\theta})}$, $\cS_{P_2(\bm{\theta})}$ are the control flow graphs of subprograms $P_1(\bm{\theta})$, $P_2(\bm{\theta})$, respectively. Then $\cS_{P(\bm{\theta})}$ in constructed as follows: we identify $l_{out}^{P_1(\bm{\theta})} = l_{in}^{P_2(\bm{\theta})}$ and concatenate $\cS_{P_1(\bm{\theta})}$, $\cS_{P_2(\bm{\theta})}$. We further set $l_{in}^{P_1(\bm{\theta})} = l_{in}^{P(\bm{\theta})}$, $l_{out}^{P_2(\bm{\theta})} = l_{out}^{P(\bm{\theta})}$.
    \item $P(\bm{\theta}) \equiv \qif (\Box m\cdot M[\bar{q}] = m \to P_{m}(\bm{\theta})) \qfi$. Suppose that $\cS_{P_m(\bm{\theta})}$ is the control flow graph of subprogram $P_m(\bm{\theta})$ for every $m$. Then $\cS_{P(\bm{\theta})}$ is constructed as follows: we put all $\cS_{P_m(\bm{\theta})}$'s together, and add a new location $l_{in}^{P(\bm{\theta})}$ and a transition $l_{in}^{P(\bm{\theta})} \overto{[M_m]_{\bar{q}}} l_{in}^{P_m(\bm{\theta})}$ for every $m$. We further set $l_{out}^{P_m(\bm{\theta})} = l_{out}^{P(\bm{\theta})}$ for all $m$.
    \item $P(\bm{\theta}) \equiv \qqwhile{\bar{q}}{Q(\bm{\theta})}$. We construct $\cS_{P(\bm{\theta})}$ from the control flow graph $\cS_{Q(\bm{\theta})}$ of subprogram $Q(\bm{\theta})$ as follows: we add two new locations $l_{in}^{P(\bm{\theta})}$, $l_{out}^{P(\bm{\theta})}$ and two transitions $l_{in}^{P(\bm{\theta})} \overto{[M_0]_{\bar[q]}} l_{out}^{P(\bm{\theta})}$, $l_{in}^{P(\bm{\theta})} \overto{[M_1]_{\bar{q}}} l_{in}^{Q(\bm{\theta})}$. We further identify $l_{out}^{Q(\bm{\theta})} = l_{in}^{P(\bm{\theta})}$.
\end{itemize}

There, we use the subscript $[U]_{\bar{q}}$ for unitary $U$ and quantum variables $\bar{q}$ to indicate that $U$ acts on the Hilbert space $\cH_{\bar{q}}$.

\begin{theorem}\label{thm:a1}
    For a parameterized quantum \textbf{while}-program $P(\bm{\theta})$ with an initial state $\rho$ and its mSVTS $ \cS_{P(\bm{\theta})}$, we have
    \[ \sem{P(\bm{\theta})}(\rho) = \sum_{\pi\in\Pi_{\cS_{P(\bm{\theta})}}}\cE_{\pi}(\rho) \equiv \bigsqcup_{n=1}^{\infty}\sum_{\pi\in\Pi_{\cS_{P(\bm{\theta})}^{(n)}}}\cE_{\pi}(\rho).\]
\end{theorem}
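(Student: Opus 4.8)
The plan is to prove the identity by structural induction on $P(\bm{\theta})$, carried out in parallel with the inductive definition of $\cS_{P(\bm{\theta})}$ above. First I would dispose of a well-definedness point: by Lemma~\ref{lem:a11} every finite partial sum $\sum_{\pi\in\Pi_{\cS_{P(\bm{\theta})}^{(n)}}}\cE_{\pi}(\rho)$ has trace at most $\tr(\rho)$; since $\Pi_{\cS}^{(n)}\subseteq\Pi_{\cS}^{(n+1)}$ and each $\cE_{\pi}(\rho)$ is positive semidefinite, these partial sums form an $\sqsubseteq$-increasing $\omega$-chain in the CPO $(\cD(\cH_{P(\bm{\theta})}),\sqsubseteq)$, so the least upper bound displayed on the right of the theorem exists and is exactly what the notation $\sum_{\pi\in\Pi_{\cS_{P(\bm{\theta})}}}\cE_{\pi}(\rho)$ abbreviates. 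I would also note, as a standing lemma proved by the same induction, that in every constructed $\cS_{P(\bm{\theta})}$ the exit location $l_{out}^{P(\bm{\theta})}$ is the unique location with no outgoing transition; this is what makes the path factorizations below well defined, and it implies in particular that $\Pi_{\cS_{P(\bm{\theta})}}$ consists exactly of the paths from $l_{in}^{P(\bm{\theta})}$ to $l_{out}^{P(\bm{\theta})}$.

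The base cases ($\qskip$, $\qinit{q}$, $\qinitd{\sigma}{\bar{q}}$, $\qut{U}{\bar{q}}$, $\qutp{\theta}{\sigma}{\bar{q}}$) are immediate: $\cS_{P(\bm{\theta})}$ has the single transition $l_{in}\overto{\cE}l_{out}$, so $\Pi_{\cS_{P(\bm{\theta})}}=\{\,l_{in}\overto{\cE}l_{out}\,\}$ and the claim is just $\sem{P(\bm{\theta})}(\rho)=\cE(\rho)$, which holds by the structural semantics (and, for the initialization statements, matches the defining formulas for $\cE_q$ and $\cE_{\bar{q},\sigma}$). For the conditional $\qif(\Box m\cdot M[\bar{q}]=m\to P_m(\bm{\theta}))\qfi$, every computation path of $\cS_{P(\bm{\theta})}$ begins with a single transition $l_{in}^{P(\bm{\theta})}\overto{[M_m]_{\bar{q}}}l_{in}^{P_m(\bm{\theta})}$ selecting a branch and continues as a computation path of $\cS_{P_m(\bm{\theta})}$; summing over this decomposition and invoking the induction hypothesis for each $P_m(\bm{\theta})$ gives $\sum_m\sem{P_m(\bm{\theta})}(M_m\rho M_m^{\dagger})=\sem{P(\bm{\theta})}(\rho)$. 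For sequential composition $P_1(\bm{\theta});P_2(\bm{\theta})$, every computation path of $\cS_{P(\bm{\theta})}$ factors uniquely as a computation path $\pi_1$ of $\cS_{P_1(\bm{\theta})}$ (ending at the junction $l_{out}^{P_1(\bm{\theta})}=l_{in}^{P_2(\bm{\theta})}$) followed by a computation path $\pi_2$ of $\cS_{P_2(\bm{\theta})}$, with $\cE_{\pi}=\cE_{\pi_2}\circ\cE_{\pi_1}$; reindexing the sum as $\sum_{\pi_2}\cE_{\pi_2}\bigl(\sum_{\pi_1}\cE_{\pi_1}(\rho)\bigr)$, using Scott-continuity of each superoperator $\cE_{\pi_2}$ to commute it past the least upper bound over $\pi_1$, and applying the induction hypotheses for $P_1$ and $P_2$, yields $\sem{P_2(\bm{\theta})}(\sem{P_1(\bm{\theta})}(\rho))$.

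The crux is the loop $P(\bm{\theta})\equiv\qqwhile{\bar{q}}{Q(\bm{\theta})}$. Write $\cE_i(\rho)=M_i\rho M_i^{\dagger}$ for $i=0,1$. I would set up a bijection between $\Pi_{\cS_{P(\bm{\theta})}}$ and pairs $(k,(\pi_1,\dots,\pi_k))$ with $k\ge 0$ and each $\pi_j\in\Pi_{\cS_{Q(\bm{\theta})}}$: a computation path of $\cS_{P(\bm{\theta})}$ takes the transition $l_{in}^{P(\bm{\theta})}\overto{[M_1]_{\bar{q}}}l_{in}^{Q(\bm{\theta})}$, then a computation path of $\cS_{Q(\bm{\theta})}$ from $l_{in}^{Q(\bm{\theta})}$ back to $l_{out}^{Q(\bm{\theta})}=l_{in}^{P(\bm{\theta})}$, repeated $k$ times, and finally exits by $l_{in}^{P(\bm{\theta})}\overto{[M_0]_{\bar{q}}}l_{out}^{P(\bm{\theta})}$; that each inner segment is indeed a full computation path of $\cS_{Q(\bm{\theta})}$ uses the standing uniqueness-of-exit lemma. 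Grouping the sum over $\Pi_{\cS_{P(\bm{\theta})}}$ by $k$ and applying the induction hypothesis $\sum_{\pi'\in\Pi_{\cS_{Q(\bm{\theta})}}}\cE_{\pi'}=\sem{Q(\bm{\theta})}$ to each of the $k$ blocks gives $\sum_{k=0}^{\infty}\cE_0\circ(\sem{Q(\bm{\theta})}\circ\cE_1)^{k}(\rho)$, which is $\sem{\qqwhile{\bar{q}}{Q(\bm{\theta})}}(\rho)$ by the definition of the denotational semantics (cf.~\eqref{semantics-series}).

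I expect the main obstacle to be making the loop step rigorous at the level of least upper bounds, because the left-hand side is truncated by \emph{path length} ($\Pi^{(n)}$) while the semantics of the loop is a supremum over \emph{loop count}, and the block decomposition mixes the two. I would resolve this with a standard interchange-of-directed-suprema argument: the doubly indexed family obtained by truncating both the loop count $k\le K$ and the $Q$-semantics to paths of length $\le n$ is monotone in each index and, by Lemma~\ref{lem:a11}, bounded in trace by $\tr(\rho)$, so its iterated suprema agree and coincide with the supremum along any cofinal chain; combined with Scott-continuity of composition of superoperators (so that the $n$-truncated $\sem{Q(\bm{\theta})}\circ\cE_1$ supremizes to $\sem{Q(\bm{\theta})}\circ\cE_1$, and $\cE_0\circ(-)$ passes through suprema), this matches the two sides. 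The same kind of care — commuting a superoperator past a least upper bound and interchanging nested suprema — already appears in the sequential case, so I would isolate it once as a small auxiliary fact and reuse it.
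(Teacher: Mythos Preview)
Your proposal is correct and follows essentially the same approach as the paper: structural induction on $P(\bm{\theta})$, with the same path factorizations in each case (branch-prefix for conditionals, $\pi_1\pi_2$ for sequencing, and the $(\eta_1\mu_1)\cdots(\eta_1\mu_k)\eta_0$ decomposition for loops). The only difference is presentational: where you invoke Scott-continuity and interchange of directed suprema, the paper instead proves both $\sqsubseteq$ and $\sqsupseteq$ by explicit sandwich inequalities on the truncated sums (e.g.\ bounding $\Pi^{(mn)}$ from above and $\Pi^{(\sum n_j+m+1)}$ from below in the loop case, then taking the inner limits $n_j\to\infty$ sequentially before the outer $m\to\infty$), which is exactly the concrete unpacking of your abstract argument.
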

\begin{proof}
    We prove it by induction through the program structure.
    \begin{itemize}
        \item $P(\bm{\theta}) \equiv \qskip$. We have $\Pi_{\cS_{P(\bm{\theta})}} = \lbrace l_{in}^{P(\bm{\theta})} \overto{I} l_{out}^{P(\bm{\theta})} \rbrace$, then \[\sum_{\pi\in\Pi_{\cS_{P(\bm{\theta})}}}\cE_{\pi}(\rho) = I\rho I = \rho = \sem{\qskip}(\rho).\]
        \item $P(\bm{\theta}) \equiv \qinit{q}$. We have $\Pi_{\cS_{P(\bm{\theta})}} = \lbrace l_{in}^{P(\bm{\theta})} \overto{\cE_{q}} l_{out}^{P(\bm{\theta})} \rbrace$, then
        \begin{align*}
            \sum_{\pi\in\Pi_{\cS_{P(\bm{\theta})}}}\cE_{\pi}(\rho) &= \cE_q(\rho) = \sum_n\ketbra[q]{0}{n}\rho\ketbra[q]{n}{0} = \sem{\qinit{q}}(\rho).
        \end{align*}
        \item $P(\bm{\theta}) \equiv \qinitd{\sigma}{\bar{q}}$. We have $\Pi_{\cS_{P(\bm{\theta})}} = \lbrace l_{in}^{P(\bm{\theta})} \overto{\cE_{\bar{q},\sigma}} l_{out}^{P(\bm{\theta})} \rbrace$, then
        \begin{align*}
            \sum_{\pi\in\Pi_{\cS_{P(\bm{\theta})}}}\cE_{\pi}(\rho)
            = \cE_{\bar{q},\sigma}(\rho)
            ={} \sum_{m,n}(\sqrt{\lambda_m}\ketbra[\bar{q}]{\psi_m}{\psi_n})\rho(\sqrt{\lambda_m}\ketbra[\bar{q}]{\psi_n}{\psi_m}).
        \end{align*}
        In the Remark~\ref{rmk:1}, the statement $\qinitd{\sigma}{\bar{q}}$ aims to set the state $\rho$ to $\tr_{\bar{q}}(\rho)\otimes \sigma$, which should be $\sem{\qinitd{\sigma}{\bar{q}}}(\rho)$. We can check that for any $\rho$
        \[\sum_{m,n}(\sqrt{\lambda_m}\ketbra[\bar{q}]{\psi_m}{\psi_n})\rho(\sqrt{\lambda_m}\ketbra[\bar{q}]{\psi_n}{\psi_m}) = \tr_{\bar{q}}(\rho)\otimes \sigma.\]
        Thus, 
        \[\sum_{\pi\in\Pi_{\cS_{P(\bm{\theta})}}}\cE_{\pi}(\rho) = \sem{\qinitd{\sigma}{\bar{q}}}(\rho).\]
        \item $P(\bm{\theta}) \equiv \qut{U}{\bar{q}}$. We have $\Pi_{\cS_{P(\bm{\theta})}} = \lbrace l_{in}^{P(\bm{\theta})} \overto{[U]_{\bar{q}}} l_{out}^{P(\bm{\theta})} \rbrace$, then
        \[\sum_{\pi\in\Pi_{\cS_{P(\bm{\theta})}}}\cE_{\pi}(\rho) = [U]_{\bar{q}}\rho [U^{\dagger}]_{\bar{q}} = \sem{\qut{U}{\bar{q}}}(\rho).\]
        \item $P(\bm{\theta}) \equiv \qutp{\theta}{\sigma}{\bar{q}}$. We have $\Pi_{\cS_{P(\bm{\theta})}} = \lbrace l_{in}^{P(\bm{\theta})} \overto{[e^{-i\theta\sigma}]_{\bar{q}}} l_{out}^{P(\bm{\theta})} \rbrace$, then \[\sum_{\pi\in\Pi_{\cS_{P(\bm{\theta})}}}\cE_{\pi}(\rho) = [e^{-i\theta\sigma}]_{\bar{q}}\rho [e^{i\theta\sigma}]_{\bar{q}} = \sem{\qutp{\theta}{\sigma}{\bar{q}}}(\rho).\]
        \item $P(\bm{\theta}) \equiv P_1(\bm{\theta});P_2(\bm{\theta})$. For any $\pi_1\in\Pi_{\cS_{P_1(\bm{\theta})}}$, $\pi_2\in\Pi_{\cS_{P_2(\bm{\theta})}}$, we can write 
        \[\pi_1 = l_{in}^{P_1(\bm{\theta})} \overto{\cE_{1}^{P_1(\bm{\theta})}} l_1^{P_1(\bm{\theta})} \overto{\cE_2^{P_1(\bm{\theta})}} \cdots \overto{\cE_m^{P_1(\bm{\theta})}} l_{out}^{P_1(\bm{\theta})}\]
        \[\pi_2 = l_{in}^{P_2(\bm{\theta})} \overto{\cE_{1}^{P_2(\bm{\theta})}} l_1^{P_2(\bm{\theta})} \overto{\cE_2^{P_2(\bm{\theta})}} \cdots \overto{\cE_n^{P_2(\bm{\theta})}} l_{out}^{P_2(\bm{\theta})}\]
        then, from our construction
        \begin{align*}
            \pi ={}& l_{in}^{P(\bm{\theta})} \overto{\cE_{1}^{P_1(\bm{\theta})}} l_1^{P_1(\bm{\theta})} \overto{\cE_2^{P_1(\bm{\theta})}} \cdots \quad \overto{\cE_m^{P_1(\bm{\theta})}} l_{out}^{P_1(\bm{\theta})} \\
            & \quad \overto{\cE_{1}^{P_2(\bm{\theta})}} l_1^{P_2(\bm{\theta})} \overto{\cE_2^{P_2(\bm{\theta})}} \cdots \overto{\cE_n^{P_2(\bm{\theta})}} l_{out}^{P(\bm{\theta})} \in \Pi_{\cS_{P(\bm{\theta})}}.
        \end{align*}
        For convenience, we write $\pi = \pi_{1}\pi_{2}$, then
        \[ \lbrace \pi_1\pi_2 \mid \pi_1\in\Pi_{\cS_{P_1(\bm{\theta})}}, \pi_2\in\Pi_{\cS_{P_2(\bm{\theta})}} \rbrace \subseteq \Pi_{\cS_{P(\bm{\theta})}}.\]
        On the other hand, for any $\pi\in\Pi_{\cS_{P(\bm{\theta})}}$, write
        \[\pi = l_{in}^{P(\bm{\theta})} \overto{\cE_{1}^{P(\bm{\theta})}} l_1^{P(\bm{\theta})} \overto{\cE_2^{P(\bm{\theta})}} \cdots \overto{\cE_m^{P(\bm{\theta})}} l_{out}^{P(\bm{\theta})}.\]
        From the construction, we have that $l_{in}^{P(\bm{\theta})} = l_{in}^{P_1(\bm{\theta})}$, $l_{out}^{P(\bm{\theta})} = l_{out}^{P_2(\bm{\theta})}$. Then we can define $k$ to be the first index such that $l_k^{P(\bm{\theta})}$ is in $\cS_{P_1(\bm{\theta})}$ and $l_{k+1}^{P(\bm{\theta})}$ is in $\cS_{P_2(\bm{\theta})}$.
        Moreover, for any location $l\neq l_{out}^{P_1(\bm{\theta})}$ in $\cS_{P_1(\bm{\theta})}$, its post-location is still in $\cS_{P_1(\bm{\theta})}$, then we have $l_k^{P(\bm{\theta})} = l_{out}^{P_1(\bm{\theta})}$. By construction, $l_{in}^{P_2(\bm{\theta})} = l_{out}^{P_1(\bm{\theta})} = l_k^{P(\bm{\theta})}$ and for any location $l$ in in $\cS_{P_2(\bm{\theta})}$, its post-location is still in $\cS_{P_2(\bm{\theta})}$, thus for all $j\geq k+1$, $l_{j}^{P(\bm{\theta})}$ is in $\cS_{P_2(\bm{\theta})}$. Then
        \[\pi_1 = l_{in}^{P(\bm{\theta})} \overto{\cE_{1}^{P(\bm{\theta})}} l_1^{P(\bm{\theta})} \overto{\cE_2^{P(\bm{\theta})}} \cdots \overto{\cE_k^{P(\bm{\theta})}} l_{k}^{P(\bm{\theta})} \in \Pi_{\cS_{P_1(\bm{\theta})}}\]
        \[\pi_2 = l_{k}^{P(\bm{\theta})} \overto{\cE_{k+1}^{P(\bm{\theta})}} l_{k+1}^{P(\bm{\theta})} \overto{\cE_{k+1}^{P(\bm{\theta})}} \cdots \overto{\cE_k^{P(\bm{\theta})}} l_{m}^{P(\bm{\theta})} \in \Pi_{\cS_{P_2(\bm{\theta})}}\]
        therefore
        \[ \lbrace \pi_1\pi_2 \mid \pi_1\in\Pi_{\cS_{P_1(\bm{\theta})}}, \pi_2\in\Pi_{\cS_{P_2(\bm{\theta})}} \rbrace \supseteq \Pi_{\cS_{P(\bm{\theta})}}.\]
        Because all $\cE_{\pi}(\rho)$ are positive semidefinite, we consider the partial summation, for any $n, m\geq 1$:
        \begin{equation}\label{eq:1}
            \begin{aligned}
                \sum_{\pi \in\Pi_{\cS_{P(\bm{\theta})}}^{(n)}}\cE_{\pi}(\rho) \sqsubseteq{}& \sum_{\pi \in\Pi_{\cS_{P_2(\bm{\theta})}}^{(n)}}\sum_{\pi \in\Pi_{\cS_{P_1(\bm{\theta})}}^{(n)}} \cE_{\pi_1\pi_2}(\rho)
                ={} \sum_{\pi \in\Pi_{\cS_{P_2(\bm{\theta})}}^{(n)}} \cE_{\pi_2}\left(\sum_{\pi \in\Pi_{\cS_{P_1(\bm{\theta})}}^{(n)}} \cE_{\pi_1}(\rho)\right)
            \end{aligned}
        \end{equation}
        \begin{equation}\label{eq:2}
            \begin{aligned}
                \sum_{\pi \in\Pi_{\cS_{P(\bm{\theta})}}^{(m+n)}}\cE_{\pi}(\rho) \sqsupseteq{}& \sum_{\pi \in\Pi_{\cS_{P_2(\bm{\theta})}}^{(m)}}\sum_{\pi \in\Pi_{\cS_{P_1(\bm{\theta})}}^{(n)}} \cE_{\pi_1\pi_2}(\rho)
                ={} \sum_{\pi \in\Pi_{\cS_{P_2(\bm{\theta})}}^{(m)}} \cE_{\pi_2}\left(\sum_{\pi \in\Pi_{\cS_{P_1(\bm{\theta})}}^{(n)}} \cE_{\pi_1}(\rho)\right)
            \end{aligned}
        \end{equation}
        By the inductive hypothesis, for any $\sigma$: 
        \[ \sum_{\pi \in\Pi_{\cS_{P_1(\bm{\theta})}}} \cE_{\pi}(\sigma) = \sem{P_1(\bm{\theta})}(\sigma)\]
        \[ \sum_{\pi \in\Pi_{\cS_{P_2(\bm{\theta})}}} \cE_{\pi}(\sigma) = \sem{P_2(\bm{\theta})}(\sigma)\]
        Therefore, in equation~(\ref{eq:1}), we have
        \begin{align*}
            \sum_{\pi \in\Pi_{\cS_{P(\bm{\theta})}}^{(n)}}\cE_{\pi}(\rho) &\sqsubseteq \sum_{\pi \in\Pi_{\cS_{P_2(\bm{\theta})}}^{(n)}} \cE_{\pi_2}\left(\sum_{\pi \in\Pi_{\cS_{P_1(\bm{\theta})}}^{(n)}} \cE_{\pi_1}(\rho)\right) \\
            & \sqsubseteq \sum_{\pi \in\Pi_{\cS_{P_2(\bm{\theta})}}^{(n)}} \cE_{\pi_2}\left( \sem{P_1(\bm{\theta})}(\rho)\right) \\
            & \sqsubseteq \sem{P_2(\bm{\theta})}(\sem{P_1(\bm{\theta})}(\rho)) \\
            & = \sem{P_1(\bm{\theta});P_2(\bm{\theta})}(\rho)
        \end{align*}
        and, in equation~(\ref{eq:2}), let $n \to \infty$, we have for any $m\geq 1$:
        \begin{align*}
            \sum_{\pi \in\Pi_{\cS_{P(\bm{\theta})}}}\cE_{\pi}(\rho) & \sqsupseteq \sum_{\pi \in\Pi_{\cS_{P_2(\bm{\theta})}}^{(m)}} \cE_{\pi_2}\left(\sem{P_1(\bm{\theta})}(\rho)\right)
        \end{align*}
        then, let $m\to \infty$, we get
        \begin{align*}
            \sum_{\pi \in\Pi_{\cS_{P(\bm{\theta})}}}\cE_{\pi}(\rho) & \sqsupseteq \sem{P_2(\bm{\theta})}(\sem{P_1(\bm{\theta})}(\rho)) = \sem{P_1(\bm{\theta});P_2(\bm{\theta})}(\rho).
        \end{align*}
        Thus, $\sum_{\pi \in\Pi_{\cS_{P(\bm{\theta})}}}\cE_{\pi}(\rho) = \sem{P_1(\bm{\theta});P_2(\bm{\theta})}(\rho)$.
        \item $P(\bm{\theta}) \equiv \qif (\Box m\cdot M[\bar{q}] = m \to P_{m}(\bm{\theta})) \qfi$. Let $\pi_m = l_{in}^{P(\bm{\theta})} \overto{[M_m]_{\bar{q}}} l_{in}^{P_m(\bm{\theta})}$, then
        \[ \Pi_{\cS_{P(\bm{\theta})}} = \bigcup_{m} \lbrace \pi_m\pi \mid \pi \in\Pi_{\cS_{P_m(\bm{\theta})}}\rbrace\]
        therefore
        \begin{align*}
            \sum_{\pi\in\Pi_{\cS_{P(\bm{\theta})}}}\cE_{\pi}(\rho) &= \sum_m\sum_{\pi\in\Pi_{\cS_{P_m(\bm{\theta})}}}\cE_{\pi_m\pi}(\rho) \\
            &= \sum_m\sum_{\pi\in\Pi_{\cS_{P_m(\bm{\theta})}}} \cE_{\pi}(\cE_{\pi_m}(\rho)) \\
            &= \sum_m \sum_{\pi\in\Pi_{\cS_{P_m(\bm{\theta})}}} \cE_{\pi}([M_m]_{\bar{q}}\rho[M_m^{\dagger}]_{\bar{q}}) \\
            &= \sum_m \sem{P_m(\bm{\theta})}([M_m]_{\bar{q}}\rho[M_m^{\dagger}]_{\bar{q}}) \\
            &= \sem{\qif (\Box m\cdot M[\bar{q}] = m \to P_{m}(\bm{\theta})) \qfi}(\rho).
        \end{align*}
        In there, the second-to-last equality is provided by the inductive hypothesis.
        \item $P(\bm{\theta}) \equiv \qqwhile{\bar{q}}{Q(\bm{\theta})}$. Let $\pi_0 = l_{in}^{P(\bm{\theta})} \overto{[M_0]_{\bar[q]}} l_{out}^{P(\bm{\theta})}$, $\pi_1 = l_{in}^{P(\bm{\theta})} \overto{[M_1]_{\bar{q}}} l_{in}^{Q(\bm{\theta})}$. As same before, we can obtain that any path in $\Pi_{\cS_{P(\bm{\theta})}}$ has the form $\eta_0$ or $\pi_1\eta_1\pi_1\eta_2\cdots\pi_1\eta_n\pi_0$, $\eta_j\in\cS_{Q(\bm{\theta})}, j=1,2,\ldots, n$, which is
        \begin{align*}
            \Pi_{\cS_{P(\bm{\theta})}}
            =& \lbrace \pi_0\rbrace \cup \bigl\lbrace \pi_1\eta_1\pi_1\eta_2\cdots\pi_1\eta_n\pi_0 \mid (n\in \mathbb{N}_{+}) \land(\forall 1\leq j\leq n. \eta_j\in\Pi_{\cS_{Q(\bm{\theta})}})\bigr\rbrace,
        \end{align*}
        then, for any $m, n, n_j\geq 1, j=1,2,\ldots,m$:
        \begin{align*}
            \Pi_{\cS_{P(\bm{\theta})}}^{(mn)}
            \subseteq{}& \lbrace \pi_0\rbrace \cup\bigl\lbrace \pi_1\eta_1\pi_1\eta_2\cdots\pi_1\eta_k\pi_0 \mid (k\in \mathbb{N}^{+}, k\leq mn) \land(\forall 1\leq l\leq k. \eta_l\in\Pi_{\cS_{Q(\bm{\theta})}}^{(mn)})\bigr\rbrace
        \end{align*}
        \begin{align*}
            \Pi_{\cS_{P(\bm{\theta})}}^{((\sum_{j=1}^mn_j)+m+1)}
            \supseteq{}& \lbrace \pi_0\rbrace \cup\bigl\lbrace \pi_1\eta_1\pi_1\eta_2\cdots\pi_1\eta_k\pi_0 \mid (k\in \mathbb{N}^{+}, k\leq m) \land(\forall 1\leq l\leq k. \eta_l\in\Pi_{\cS_{Q(\bm{\theta})}}^{(n_l)})\bigr\rbrace
        \end{align*}
        Thus
        \begin{align}
            & \sum_{\pi\in\Pi_{\cS_{P(\bm{\theta})}}^{(mn)}}\cE_{\pi}(\rho) \notag\\
            \sqsubseteq{} & \cE_{\pi_0}(\rho) + \sum_{k=1}^{mn}\sum_{\eta_1,\eta_2,\ldots\eta_k\in\Pi_{\cS_{Q(\bm{\theta})}}^{(mn)}}\cE_{\pi_1\eta_1\pi_1\eta_2\cdots\pi_1\eta_k\pi_0}(\rho) \notag\\
            ={}& \cE_{\pi_0}(\rho) + \sum_{k=1}^{mn}\cE_{\pi_0}\Biggggl(\sum_{\eta_k\in\Pi_{\cS_{Q(\bm{\theta})}}^{(mn)}}\cE_{\eta_k}\circ\cE_{\pi_1}\Biggggl( \cdots \left(\sum_{\eta_1\in\Pi_{\cS_{Q(\bm{\theta})}}^{(mn)}} \cE_{\eta_1}\circ\cE_{\pi_1}(\rho)\right)\cdots\Biggggr)\Biggggr) \notag\\
            \sqsubseteq{} & \cE_{\pi_0}(\rho) + \sum_{k=1}^{mn}\cE_{\pi_0}\Biggggl(\sum_{\eta_k\in\Pi_{\cS_{Q(\bm{\theta})}}}\cE_{\eta_k}\circ\cE_{\pi_1}\Biggggl( \cdots \left(\sum_{\eta_1\in\Pi_{\cS_{Q(\bm{\theta})}}} \cE_{\eta_1}\circ\cE_{\pi_1}(\rho)\right)\cdots\Biggggr)\Biggggr) \notag\\
            ={}& \cE_{\pi_0}(\rho) + \sum_{k=1}^{mn}\cE_{\pi_0}\left(\sem{Q(\bm{\theta})}\circ\cE_{\pi_1}\right)^k(\rho) \tag{\text{by inductive hypothesis}} \\
            \sqsubseteq{} & \sem{\qqwhile{\bar{q}}{Q(\bm{\theta})}}(\rho) \label{eq:a1}
        \end{align}
        and
        \begin{align*}
            & \sum_{\pi\in\Pi_{\cS_{P(\bm{\theta})}}^{((\sum_{j=1}^mn_j)+m+1)}}\cE_{\pi}(\rho) \\
            \sqsupseteq{}& \cE_{\pi_0}(\rho) + \sum_{k=1}^{m}\sum_{\eta_k\in\Pi_{\cS_{Q(\bm{\theta})}}^{(n_k)}}\cdots\sum_{\eta_1\in\Pi_{\cS_{Q(\bm{\theta})}}^{(n_1)}}\cE_{\pi_1\eta_1\pi_1\eta_2\cdots\pi_1\eta_k\pi_0}(\rho) \\
            ={}& \cE_{\pi_0}(\rho) + \sum_{k=1}^{m}\cE_{\pi_0}\Biggggl(\sum_{\eta_k\in\Pi_{\cS_{Q(\bm{\theta})}}^{(n_k)}}\cE_{\eta_k}\circ\cE_{\pi_1}\Biggggl( \cdots \left(\sum_{\eta_1\in\Pi_{\cS_{Q(\bm{\theta})}}^{(n_1)}} \cE_{\eta_1}\circ\cE_{\pi_1}(\rho)\right)\cdots\Biggggr)\Biggggr).
        \end{align*}
        Use the inductive hypothesis and let $n_1\to \infty$, we have
        \begin{align*}
            & \sum_{\pi\in\Pi_{\cS_{P(\bm{\theta})}}} \cE_{\pi}(\rho) \\
            \sqsupseteq{} & \cE_{\pi_0}(\rho) + \sum_{k=1}^{m}\cE_{\pi_0}\Biggggl(\sum_{\eta_k\in\Pi_{\cS_{Q(\bm{\theta})}}^{(n_k)}}\cE_{\eta_k}\circ\cE_{\pi_1}\Biggggl( \cdots \left(\sum_{\eta_2\in\Pi_{\cS_{Q(\bm{\theta})}}^{(n_2)}} \cE_{\eta_1}\circ\cE_{\pi_1}(\sem{Q(\bm{\theta})}\circ\cE_{\pi_1}(\rho))\right)\cdots\Biggggr)\Biggggr)
        \end{align*}
        as the same, let $n_2\to \infty, \ldots, n_m\to \infty$ in order, we get
        \[ \sum_{\pi\in\Pi_{\cS_{P(\bm{\theta})}}} \cE_{\pi}(\rho)\sqsupseteq \cE_{\pi_0}(\rho) + \sum_{k=1}^{m}\cE_{\pi_0}\left(\sem{Q(\bm{\theta})}\circ\cE_{\pi_1}\right)^k(\rho).\]
        then let $m\to \infty$,
        \[ \sum_{\pi\in\Pi_{\cS_{P(\bm{\theta})}}} \cE_{\pi}(\rho)\sqsupseteq \sem{\qqwhile{\bar{q}}{Q(\bm{\theta})}}(\rho).\]
        Back to the Formula~(\ref{eq:a1}), let $m\to \infty$, we have
        \[ \sum_{\pi\in\Pi_{\cS_{P(\bm{\theta})}}} \cE_{\pi}(\rho)\sqsubseteq \sem{\qqwhile{\bar{q}}{Q(\bm{\theta})}}(\rho).\]
        Therefore, 
        \[\sum_{\pi\in\Pi_{\cS_{P(\bm{\theta})}}} \cE_{\pi}(\rho)= \sem{\qqwhile{\bar{q}}{Q(\bm{\theta})}}(\rho).\]
    \end{itemize}
\end{proof}

\begin{figure*}[ht]
    \centering
    \begin{subfigure}[b]{0.4\linewidth}
        \centering
        \begin{tikzpicture}[thick]
    \filldraw[white] (-2.5,-2) rectangle (2.5,2);
    \node[location, label={$l_0$}] (l0) at (-1, 0) {};
    \node[location, label={$l_1$}] (l1) at (1, 0) {};
    \path[draw, ->] (l0) -- node[transition] {$e^{-i \theta \sigma}$} (l1);
    \node at (-1.5, -0.3) (li) {};
    \path[draw, ->] (li) edge[bend left] node[pos=-0.4] {$\rho_0$} (l0);
\end{tikzpicture}
        \caption{mSVTS for $\qutp{\theta}{\sigma}{\bar{q}}$}
        \label{fig:msvtsa}
    \end{subfigure}
    \begin{subfigure}[b]{0.54\linewidth}
    \centering
        \begin{tikzpicture}[thick]
    \filldraw[white] (-3.8,-1.8) rectangle (2.7,2.2);
    \node[location, label=93:$l_0$] (l0) at (-2, 0) {};
    \node at (-2.7, -0.2) (li) {};
    \path[draw, ->] (li) edge[bend left] node[pos=-0.4] {$\rho_0\otimes\sigma_{1,2,c}$} (l0);
    \node[location, label=-85:$l_3$] (l3) at (1.5, 0) {};
    \node[location, label=$l_2$] (l2) at (-0.2, 0) {};
    \draw[->] (l0) -- node[transition] {$[M_2]_{q_1,q_2}$} (l2);
    \draw[->] (l2) -- node[transition] {$I$} (l3);
    \node[location, label=left:$l_4$] (l4) at (-1.9, 1.2) {};
    \draw[->] (l0) -- node[transition] {$[M_1]_{q_1,q_2}$} (l4);
    \node[location, label=$l_5$] (l5) at (-1.0, 1.7) {};
    \draw[->] (l4) -- node[transition] {$[X]_{q_1}$} (l5);
    \node[location, label=$l_6$] (l6) at (0.3, 1.7) {};
    
    \node[location, label=$l_8$] (l8) at (-1, -1.4) {};
    \draw[->] (l0) -- node[transition] {$[M_0]_{q_1,q_2}$} (l8);
    \node[location, label=$l_9$] (l9) at (0.4, -1.4) {};
    \draw[->] (l8) -- node[transition] {$[C]_{q_c}$} (l9);
    
    \draw[->] (l9) -- node[transition] {$[GP]_{q_c,q_2}$} (l3);
    \node[location, label=60:$l_7$] (l7) at (1.2,1.2) {};
    \draw[->] (l5) -- node[transition] {$\cE_{\bar{q}', \sigma}$} (l6);
    \draw[->] (l7) -- node[transition] {$[AS]_{q_2,\bar{q}, \bar{q}'}$} (l3);
    \draw[->] (l6) -- node[transition] {$\cE_{q_2, \frac{I}{2}}$} (l7);
    \node[location, label=$l_1$] (l1) at (3, 0) {};
    \draw[->] (l3) -- node[transition] {$[e^{-i\theta\sigma}]_{\bar{q}}$} (l1);
\end{tikzpicture}
        \caption{mSVTS for $T_{\theta}(\qutp{\theta}{\sigma}{\bar{q}})$}
        \label{fig:msvtsb}
    \end{subfigure}
    \caption{Example of mSVTS for parameterized quantum $\qwhile\!$-program.}
    \label{fig:msvts}
    \Description{mSVTS}
\end{figure*}
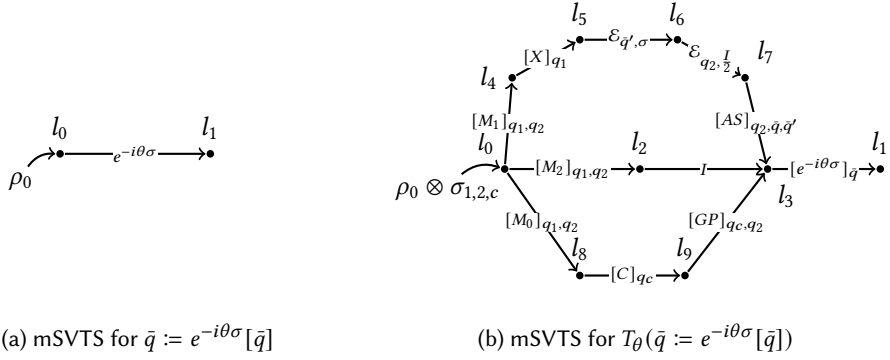

\figref{fig:msvts} shows the mSVTS for $\qutp{\theta}{\sigma}{\bar{q}}$ and the mSVTS for $T_{\theta}(\qutp{\theta}{\sigma}{\bar{q}})$. With this, we can construct an mSVTS for $\qd{\theta}(P(\bm{\theta}))$ by only modifying the mSVTS of $P(\bm{\theta})$.

\begin{lemma}\label{lem:a2}
    Let $\cS_{P(\bm{\theta})}$ be the control-flow graph of a parameterized quantum $\qwhile\!$-program $P(\bm{\theta})$. We can modify $\cS_{P(\bm{\theta})}$ to be the control-flow graph of $Q( \bm{\theta}) \equiv \qd{\theta}(P(\bm{\theta})) = \dinit;T_{\theta}(P(\bm{\theta}))$ as follows:
    \begin{enumerate}[label=(\arabic*).]
        \item For $\dinit$: We add $4$ locations $l_{in}^{Q}, l_{1}^Q, l_{2}^Q, l_{3}^Q, l_{4}^Q$ and five transitions $l_{in}^{Q}\overto{\cE_{q_1}} l_{1}^Q$, $l_{1}^{Q}\overto{\cE_{q_2}} l_{2}^Q$, $l_{2}^{Q}\overto{\cE_{q_c}} l_{3}^Q$, $l_{3}^Q \overto{[C]_{q_c}} l_{4}^Q$, $l_{4}^{Q}\overto{[GP]_{q_c,q_2}} l_{in}^{P(\bm{\theta})}$.
        \item For $T_{\theta}(P(\bm{\theta}))$: As shown in Figure~\ref{fig:msvts}, for every transition $a \overto{[e^{-i\theta\sigma}]_{\bar{q}}} b$ with a parameter symbol $\theta$ in $\cS_{P(\bm{\theta})}$, we add $8$ locations: $l_{a,b,j}, 2\leq j\leq 9$ and replace the transition $a \overto{e^{-i\theta\sigma}} b$ by $11$ transitions: $a \overto{[M_{1}]_{q_1,q_2}} l_{a,b,4}$, $l_{a,b,4} \overto{[X]_{q_1}} l_{a,b,5}$, $l_{a,b,5}\overto{\cE_{\bar{q}', \sigma}} l_{a,b,6}$, $l_{a,b,6}\overto{\cE_{q_2, \frac{I}{2}}} l_{a,b,7}$, $l_{a,b,7}\overto{[AS]_{q_2,\bar{q}, \bar{q}'}} l_{a,b,3}$, $a \overto{[M_2]_{q_1,q_2}} l_{a,b,2}$, $l_{a,b,2}\overto{I} l_{a,b,3}$, $a\overto{[M_0]_{q_1,q_2}} l_{a,b,8}$, $l_{a,b,8}\overto{[C]_{q_c}} l_{a,b,9}$, $l_{a,b,9} \overto{[GP]_{q_c,q_2}} l_{a,b,3}$, $l_{a,b,3} \overto{[e^{-i\theta\sigma}]_{\bar{q}}} b$.
    \end{enumerate}
    Then we get an mSVTS $\cS_{Q(\bm{\theta})}$, we have that $\cS_{Q(\bm{\theta})}$ represents the control-flow graph of $Q(\bm{\theta})$.
\end{lemma}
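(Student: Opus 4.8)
The plan is to decompose $Q(\bm{\theta}) \equiv \dinit; T_{\theta}(P(\bm{\theta}))$ along its outermost sequential composition and then verify modifications (1) and (2) separately. For the outermost ``$;$'', the generic construction concatenates $\cS_{\dinit}$ with $\cS_{T_{\theta}(P(\bm{\theta}))}$, identifying $l_{out}^{\dinit}$ with $l_{in}^{T_{\theta}(P(\bm{\theta}))}$. First I would unfold $\dinit \equiv \qinit{q_1};\qinit{q_2};\qinit{q_c};\qut{C}{q_c};\qut{GP}{q_c,q_2}$ and apply the rules for $\qinit{\cdot}$, $\qut{\cdot}{\cdot}$ and ``$;$'' repeatedly; this produces exactly a chain of six locations joined by the five transitions $\cE_{q_1}$, $\cE_{q_2}$, $\cE_{q_c}$, $[C]_{q_c}$, $[GP]_{q_c,q_2}$, whose first location is the initial one and whose last location is glued to $l_{in}^{T_{\theta}(P(\bm{\theta}))}$. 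Granting (from the induction below) that $T_{\theta}$ preserves the designated locations, so that $l_{in}^{T_{\theta}(P(\bm{\theta}))} = l_{in}^{P(\bm{\theta})}$, this is precisely modification (1).

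For the $T_{\theta}$ part I would induct on the structure of $P(\bm{\theta})$, proving \emph{simultaneously} that (i) $\cS_{T_{\theta}(P(\bm{\theta}))}$ has the same designated locations $l_{in}, l_{out}$ as $\cS_{P(\bm{\theta})}$, and (ii) $\cS_{T_{\theta}(P(\bm{\theta}))}$ is obtained from $\cS_{P(\bm{\theta})}$ by replacing every transition of the form $a \overto{[e^{-i\theta\sigma}]_{\bar{q}}} b$ by the eight-location, eleven-transition gadget of modification (2). The basic statements $\qskip$, $\qinit{q}$, $\qut{U}{\bar{q}}$ and $\qutp{\theta'}{\sigma}{\bar{q}}$ with $\theta'\neq\theta$ are left untouched by $T_{\theta}$ and carry no $\theta$-labelled transition, so there is nothing to do. The genuine base case is $P(\bm{\theta}) \equiv \qutp{\theta}{\sigma}{\bar{q}}$, whose mSVTS is the single transition $l_{in} \overto{[e^{-i\theta\sigma}]_{\bar{q}}} l_{out}$: here $T_{\theta}(\qutp{\theta}{\sigma}{\bar{q}})$ is the program in rule $(*)$, a sequential composition of a three-way conditional on $M_{q_1,q_2}$ (branches $\qut{C}{q_c};\qut{GP}{q_c,q_2}$ of length two, $\qut{X}{q_1};\qinitd{\frac{I}{2}}{q_2};\qinitd{\sigma}{\bar{q}'};\qut{AS}{q_2,\bar{q},\bar{q}'}$ of length four, and $\qskip$ of length one; recall that the mSVTS rules already cover $\qinitd{\cdot}{\cdot}$) followed by $\qutp{\theta}{\sigma}{\bar{q}}$. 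Unrolling these constructions and matching the resulting locations and transitions against \figref{fig:msvtsb} shows that the single $\theta$-transition is replaced by exactly the claimed gadget with $a = l_{in}$, $b = l_{out}$. For the compound constructs $P_1;P_2$, $\qif (\ldots) \qfi$ and $\qqwhile{\bar{q}}{Q(\bm{\theta})}$ I would use that $T_{\theta}$ distributes over the construct ($T_{\theta}(P_1;P_2) = T_{\theta}(P_1);T_{\theta}(P_2)$, and likewise for the conditional and the loop) and that the mSVTS of a compound program is assembled by gluing the sub-mSVTSs at their designated locations; since every $\theta$-transition lies inside a unique sub-mSVTS, the local gadget replacement commutes with all the gluings, which closes the induction. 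Composing with the $\dinit$ part via the outer ``$;$'' rule then yields exactly $\cS_{Q(\bm{\theta})}$ as described.

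The main obstacle is the base case $P(\bm{\theta})\equiv\qutp{\theta}{\sigma}{\bar{q}}$: one must expand the nested conditional-and-sequencing structure of rule $(*)$ and check, transition by transition, that the eight auxiliary locations $l_{a,b,2},\dots,l_{a,b,9}$ and the eleven transitions produced by the generic mSVTS construction agree with \figref{fig:msvts} and the list in modification (2) --- in particular noting that $\qinitd{\frac{I}{2}}{q_2}$ and $\qinitd{\sigma}{\bar{q}'}$ act on disjoint registers, so the order in which their two transitions appear is immaterial for the control-flow graph. Everything else is routine bookkeeping once the distributivity of $T_{\theta}$ over the program constructs and the preservation of the designated locations have been recorded.
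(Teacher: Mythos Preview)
Your proposal is correct and follows essentially the same route as the paper: decompose $Q(\bm{\theta})=\dinit;T_{\theta}(P(\bm{\theta}))$, handle $\dinit$ by unfolding, and establish (2) by structural induction on $P(\bm{\theta})$ using the distributivity of $T_{\theta}$ over the compound constructs. Your explicit inclusion of ``designated locations are preserved'' in the induction hypothesis, and your observation that the two $\qinitd{\cdot}{\cdot}$ transitions appear in swapped order in rule~$(*)$ versus modification~(2)/Figure~\ref{fig:msvtsb} (harmless since they act on disjoint registers, so the composed superoperators agree), are details the paper's proof leaves implicit.
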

\begin{proof} 
    By definition, the control-flow graph of $Q(\bm{\theta}) = \dinit;$ $T_{\theta}(P(\bm{\theta}))$ can be constructed from $\cS_{\dinit}$ and $\cS_{T_{\theta}(P(\bm{\theta}))}$. In (1), it is obvious that (1) constructs the control-flow graph of the program $\dinit$. We next prove that in (2), it produces a control-flow graph for $T_{\theta}(P(\bm{\theta}))$. We prove this by induction through the program structure of $P(\bm{\theta})$.
    \begin{itemize}
        \item $P(\bm{\theta}) \equiv \qskip$, or $\qinit{q}$, or $\qut{U}{\bar{q}}$, or $\qutp{\theta'}{\sigma}{\bar{q}}$ (the symbol $\theta' \neq \theta$).
        By definition of $T_{\theta}$, $T_{\theta}(P(\bm{\theta})) = P(\bm{\theta})$ and we also see that $P(\bm{\theta})$ doesn't contain statement using $\theta$, then $\cS_{P(\bm{\theta})}$ has no transition that contains $\theta$.
        Thus, in (2), we don't change the $\cS_{P(\bm{\theta})}$. We have that it is still $\cS_{P(\bm{\theta})}$.
        \item $P(\bm{\theta}) \equiv \qutp{\theta}{\sigma}{\bar{q}}$. In (2), our construction comes from the \figref{fig:msvts}, we can check that the outcome represents the control-flow graph of $T_{\theta}(\qutp{\theta}{\sigma}{\bar{q}})$.
        \item $P(\bm{\theta}) \equiv P_1(\bm{\theta});P_2(\bm{\theta})$. Using
        \[T_{\theta}(P_1(\bm{\theta});P_2(\bm{\theta})) = T_{\theta}(P_1(\bm{\theta}));T_{\theta}(P_2(\bm{\theta}))\]
        and the inductive hypothesis on $P_1(\bm{\theta})$ and $P_2(\bm{\theta})$, the replacements in (2) are carried internally in $P_1(\bm{\theta})$ and $P_2(\bm{\theta})$, then concatenate them.
        This procedure is what we do in the definition of the control-flow graph with mSVTS.
        Thus, the outcome represents the control-flow graph of $T_{\theta}(P(\bm{\theta}))$.
        \item $P(\bm{\theta}) \equiv \qif (\Box m\cdot M[\bar{q}] = m \to P_{m}(\bm{\theta})) \qfi$.
        We have $T_{\theta}(\qif (\Box m\cdot M[\bar{q}] = m \to P_{m}(\bm{\theta})) \qfi) = \qif (\Box m\cdot M[\bar{q}] = m \to T_{\theta}(P_{m}(\bm{\theta}))) \qfi$ and the inductive hypothesis on $P_m(\bm{\theta})$.
        Then the rest is as same as above.
        \item $P(\bm{\theta}) \equiv \qqwhile{\bar{q}}{P'(\bm{\theta})}$.
        We have that \newline  $T_{\theta}(\qqwhile{\bar{q}}{P'(\bm{\theta})})$ $ =$ $\qwhile$ $M[\bar{q}]= 1 \qdo$ $T_{\theta}(P'(\bm{\theta}))$ and the inductive hypothesis on $P'(\bm{\theta})$.
        Then the rest is as same as above.
    \end{itemize}
    Therefore we get $\cS_{\dinit}$ and $\cS_{T_{\theta}(P(\bm{\theta}))}$.
    With $\cS_{\dinit}$ has the exit location $l_{in}^{P(\bm{\theta})}$ and $\cS_{T_{\theta}(P(\bm{\theta}))}$ has the same location $l_{in}^{P(\bm{\theta})}$ as the initial location, we have that $\cS_{Q(\bm{\theta})}$ represents the control-flow graph of $\qd{\theta}(P(\bm{\theta}))$.
\end{proof}

As the same settings in Lemma~\ref{lem:a2}, let $\eta_{\dinit} = l_{in}^{Q}\overto{\cE_{q_1}} l_{1}^{Q}\overto{\cE_{q_2}} l_{2}^{Q}\overto{\cE_{q_c}} l_{3}^{Q}\overto{[C]_{q_c}} l_{4}^Q \overto{[GP]_{q_c, q_2}} l_{in}^{P(\bm{\theta})}$ and for every transition $a \overto{[e^{-i\theta\sigma}]_{\bar{q}}} b$ with parameter symbol $\theta$ in $\cS_{P(\bm{\theta})}$, we write
\begin{align*}
    \eta_{a,b,0} &= a\overto{[M_0]_{q_1,q_2}} l_{a,b,8}\overto{[C]_{q_c}} l_{a,b,9} \overto{[GP]_{q_c,q_2}} l_{a,b,3} \overto{[e^{-i\theta\sigma}]_{\bar{q}}} b \\
    \eta_{a,b,1} &= a \overto{[M_{1}]_{q_1,q_2}} l_{a,b,4} \overto{[X]_{q_1}} l_{a,b,5} \overto{\cE_{\bar{q}', \sigma}} l_{a,b,6} \overto{\cE_{q_2,\frac{I}{2}}} l_{a,b,7} \overto{[AS]_{q_2,\bar{q}, \bar{q}'}} l_{a,b,3} \overto{[e^{-i\theta\sigma}]_{\bar{q}}} b \\
    \eta_{a,b,2} &= a \overto{[M_2]_{q_1,q_2}} l_{a,b,2}\overto{I} l_{a,b,3} \overto{[e^{-i\theta\sigma}]_{\bar{q}}} b
\end{align*}
For each $\pi\in\Pi_{\cS_{P(\bm{\theta})}}$ with $\pi = l_{in}^{P(\bm{\theta})} \overto{\cE_1} m_1 \overto{\cE_2} m_2 \overto{\cE_3} \cdots \overto{\cE_n} m_n$ and $1\leq i_1 < i_2< \ldots<i_k \leq n$ such that $\cE_{i_j} = [e^{-i\theta\sigma_{i_j}}]_{\bar{q_{i_j}}}$ for $j = 1, 2,\ldots, k$ (which is that the path $\pi$ has $k$ times occurrences of  the parameter symbol $\theta$), we write $\pi_j = m_{j-1} \overto{\cE_j} m_j$ for $1\leq j\leq n$ ($m_0 = l_{in}^{P(\bm{\theta})}$) and then define
\begin{equation}
    \begin{aligned}
    & A_{\pi} \\
    \equiv{}& \Bigl\lbrace\Bigr. \eta_{dinit}\pi_1\cdots\pi_{i_1-1}\mu_{i_1}\pi_{i_1+1}\cdots\pi_{i_k-1}\mu_{i_k}\pi_{i_k+1}\cdots\pi_n \mid  \mu_{i_j} \in \lbrace \eta_{m_{i_j-1}, m_{i_j}, 0}, \eta_{m_{i_j-1}, m_{i_j}, 1}, \eta_{m_{i_j-1}, m_{i_j}, 2}\rbrace \Bigl.\Bigr\rbrace
\end{aligned}
\end{equation}
In there, the set $A_{\pi}$ is obtained by replacing each $\pi_{i_j}$, which contains parameter symbol $\theta$, with one of $\eta_{m_{i_j-1}, m_{i_j}, 0}$, $\eta_{m_{i_j-1}, m_{i_j}, 1}$, $\eta_{m_{i_j-1}, m_{i_j}, 2}$ and adding $\eta_{dinit}$ to the front of $\pi$. We have the following lemma.

\begin{lemma}\label{lem:a6}
    As the same settings in Lemma~\ref{lem:a2}, and $A_{\pi}, \pi\in\Pi_{\cS_{P(\bm{\theta})}}$ defined above, we have 
    \begin{equation}\label{eq:3}
        \Pi_{\cS_{Q(\bm{\theta})}} = \bigcup_{\pi\in \Pi_{\cS_{P(\bm{\theta})}}}A_{\pi}.
    \end{equation}
\end{lemma}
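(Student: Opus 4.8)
The plan is to prove the set equality~(\ref{eq:3}) by induction on the program structure of $P(\bm{\theta})$, mirroring the inductive construction of $\cS_{Q(\bm{\theta})}$ from $\cS_{P(\bm{\theta})}$ given in Lemma~\ref{lem:a2}, together with the definition of $A_{\pi}$. The key observation is that $\cS_{Q(\bm{\theta})}$ is obtained from $\cS_{P(\bm{\theta})}$ by two purely \emph{local} operations: prepending the linear chain $\eta_{\dinit}$ at the initial location, and replacing each edge $a\overto{[e^{-i\theta\sigma}]_{\bar{q}}}b$ carrying the parameter symbol $\theta$ by the three-branch gadget whose branches are exactly $\eta_{a,b,0}$, $\eta_{a,b,1}$, $\eta_{a,b,2}$ (see \figref{fig:msvts}). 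Because none of the newly added locations $l_{a,b,j}$ is a pre-location of any transition outside its own gadget (each gadget's three branches reconverge at the single location $l_{a,b,3}$ before emitting the original edge $l_{a,b,3}\overto{[e^{-i\theta\sigma}]_{\bar{q}}}b$), a computation path of $\cS_{Q(\bm{\theta})}$ is forced to traverse each gadget atomically: once it enters a gadget via one of the three measurement edges from $a$, it must follow that branch to completion and exit at $b$. This is the structural fact that makes the bijection work.

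First I would establish the inclusion $\bigcup_{\pi}A_{\pi}\subseteq\Pi_{\cS_{Q(\bm{\theta})}}$, which is the routine direction: given $\pi\in\Pi_{\cS_{P(\bm{\theta})}}$ and a choice of $\mu_{i_j}\in\{\eta_{m_{i_j-1},m_{i_j},0},\eta_{m_{i_j-1},m_{i_j},1},\eta_{m_{i_j-1},m_{i_j},2}\}$ at each $\theta$-occurrence, one checks directly that the concatenation $\eta_{\dinit}\pi_1\cdots\mu_{i_1}\cdots\mu_{i_k}\cdots\pi_n$ is a legal path in $\cS_{Q(\bm{\theta})}$ (every consecutive pair of edges composes, since $\eta_{\dinit}$ ends at $l_{in}^{P(\bm{\theta})}$, each $\mu_{i_j}$ starts at $m_{i_j-1}$ and ends at $m_{i_j}$), and that its terminal location $m_n$ is still not a pre-location of any transition in $\cS_{Q(\bm{\theta})}$ — the only way that could fail is if the gadget construction made $m_n$ a pre-location of a new edge, but gadget edges emanate only from $a$-type and $l_{a,b,j}$-type locations, and $m_n$ is a $\theta$-occurrence-free sink of $\cS_{P(\bm{\theta})}$. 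For the reverse inclusion $\Pi_{\cS_{Q(\bm{\theta})}}\subseteq\bigcup_{\pi}A_{\pi}$, I would take an arbitrary computation path $\xi$ of $\cS_{Q(\bm{\theta})}$, observe it must start at $l_{in}^Q$ (the unique initial location), hence begins with the forced prefix $\eta_{\dinit}$ since $l_{in}^Q,l_1^Q,l_2^Q,l_3^Q,l_4^Q$ each have a unique outgoing transition, arriving at $l_{in}^{P(\bm{\theta})}$. From there I would argue that $\xi$ decomposes uniquely as an alternation of $\cS_{P(\bm{\theta})}$-edges and complete gadget traversals, and that ``collapsing'' each gadget traversal $\mu_{i_j}$ back to the single edge $m_{i_j-1}\overto{[e^{-i\theta\sigma}]_{\bar{q}}}m_{i_j}$ yields a path $\pi$ that is a computation path of $\cS_{\cS_{P(\bm{\theta})}}$ (its terminal location is the same $m_n$, still a sink), whence $\xi\in A_{\pi}$ by construction. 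The inductive step simply checks that this gadget-collapsing is compatible with sequential composition, conditionals, and while-loops — exactly the cases handled in Lemma~\ref{lem:a2} — so that the decomposition respects subprogram boundaries.

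The main obstacle I anticipate is the careful bookkeeping in the reverse inclusion: one must argue that \emph{every} maximal path of $\cS_{Q(\bm{\theta})}$ respects gadget boundaries, i.e.\ that a computation path cannot ``get stuck'' inside a gadget or exit one prematurely. This requires verifying that each intermediate gadget location ($l_{a,b,4},\dots,l_{a,b,9}$ and $l_{a,b,2}$) has exactly one outgoing transition, so the only branching happens at the entry location $a$ via the three measurement edges $[M_0]_{q_1,q_2},[M_1]_{q_1,q_2},[M_2]_{q_1,q_2}$, and that $l_{a,b,3}$'s unique outgoing edge is $[e^{-i\theta\sigma}]_{\bar{q}}$ to $b$ — all of which can be read off from the transition list in Lemma~\ref{lem:a2}(2). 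A secondary subtlety, only relevant if one wants~(\ref{eq:3}) to be a genuine disjoint-union decomposition (which the subsequent proof of Theorem~\ref{thm:exact} will likely need), is to note that the $A_{\pi}$ are pairwise disjoint: distinct $\pi\in\Pi_{\cS_{P(\bm{\theta})}}$ collapse from distinct elements of $\cup_\pi A_\pi$ because the collapsing map is well-defined on paths of $\cS_{Q(\bm{\theta})}$. I would state this disjointness as part of the lemma's conclusion or as an immediate remark, since it follows from the uniqueness of the gadget decomposition established in the reverse inclusion.
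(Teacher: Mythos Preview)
Your proposal is correct and takes essentially the same approach as the paper, which dispatches the lemma in a single sentence (``By constructions of $\cS_{Q(\bm{\theta})}$ and $A_{\pi}$, Formula~(\ref{eq:3}) is easy to see''); you have simply unpacked what ``easy to see'' means, namely that the gadget replacement is local and each intermediate gadget location has a unique outgoing transition, so computation paths of $\cS_{Q(\bm{\theta})}$ factor uniquely through gadgets. The induction on program structure you mention is not needed---your direct gadget-collapsing argument already suffices globally---and your remark on disjointness of the $A_\pi$ is a useful addition that the paper leaves implicit but does rely on in the subsequent summations.
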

\begin{proof}
    By constructions of $\cS_{Q(\bm{\theta})}$ and $A_{\pi}$, Formula~(\ref{eq:3}) is easy to see.
\end{proof}

\begin{lemma}\label{lem:a5}
    As the same settings in Lemma~\ref{lem:a2}, for any observable $O$ on $\cH_{P(\bm{\theta})}$, the same $O_d$ in Theorem~\ref{thm:exact} and $\pi\in\Pi_{\cS_{P(\bm{\theta})}}$ with $\pi = l_{in}^{P(\bm{\theta})} \overto{\cE_1} m_1 \overto{\cE_2} m_2 \overto{\cE_3} \cdots \overto{\cE_n} m_n$ and $1\leq i_1 < i_2< \ldots<i_k \leq n$ such that $\cE_{i_j} = [e^{-i\theta\sigma_{i_j}}]_{\bar{q_{i_j}}}$ for $j = 1, 2,\ldots, k$, we have
    \begin{equation}\label{eq:4}
        \begin{aligned}
            \tr\left(O_d\otimes O \sum_{\eta\in A_{\pi}}\cE_{\eta}(\rho)\right)
            =& \sum_{j=1}^k\Bigl( \tr\bigl(O\cE_{\pi_{i_j}\pi_{i_j+1}\cdots\pi_n}(-i[[\sigma_{i_j}]_{\bar{q}_{i_j}}, \cE_{\pi_1\cdots\pi_{i_j-1}}(\rho)])\bigr)\Bigr).
        \end{aligned}
    \end{equation}
    where $[\sigma_{i_j}]_{\bar{q}_{i_j}}$ denotes the operator $\sigma_{i_j}$ that acts on the Hilbert space $\cH_{\bar{q}_{i_j}}$.
\end{lemma}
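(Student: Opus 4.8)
The plan is to compute the left-hand side of~(\ref{eq:4}) by decomposing $A_\pi$ according to which occurrences of $\theta$ get the ``differentiation branch'' versus the ``pass-through branch''. Recall that $A_\pi$ is obtained from $\pi$ by replacing each $\pi_{i_j}$ with exactly one of $\eta_{m_{i_j-1},m_{i_j},0}$, $\eta_{m_{i_j-1},m_{i_j},1}$, $\eta_{m_{i_j-1},m_{i_j},2}$ and prepending $\eta_{\dinit}$. First I would analyze the action of each of these three sub-paths on the ancilla registers $q_1,q_2,q_c$ and on the working registers. The key structural fact — following directly from the construction in Lemma~\ref{lem:a2} and Fig.~\ref{fig:msvts} — is that $\eta_{\cdot,\cdot,1}$ (the branch through $[M_1]_{q_1,q_2}$, $[X]_{q_1}$, $\cE_{\bar q',\sigma}$, $\cE_{q_2,I/2}$, $[AS]_{q_2,\bar q,\bar q'}$) is the \emph{only} branch that performs a differentiation operation: it flips $q_1$ from $0$ to $1$ and, via the SWAP-based gadget $AS$ together with $\mathrm{Lemma~\ref{lem:a3}}$, injects the commutator $-i[[\sigma_{i_j}]_{\bar q_{i_j}},\cdot]$ into the state while leaving a $\ketbra{1}{1}$ marker on $q_1$ and a factor depending on $q_2$. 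Meanwhile $\eta_{\cdot,\cdot,0}$ (through $[M_0]$, $[C]$, $[GP]$) advances the counter $q_c$ and re-seeds $q_2$ according to the distribution $\mu$, and $\eta_{\cdot,\cdot,2}$ (through $[M_2]$, identity) simply passes through once $q_1$ has already been set to $1$.

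The second step is a bookkeeping argument showing that, because of the measurement $M_{q_1,q_2}=\{\ketbra{00}{00},\ketbra{01}{01},\ketbra{10}{10}+\ketbra{11}{11}\}$ guarding each gadget, along any path in $A_\pi$ the differentiation branch $\eta_{\cdot,\cdot,1}$ can be taken \emph{at most once}: once $q_1=1$, all later occurrences are forced into branch $2$; before $q_1=1$, each occurrence either takes branch $0$ (advance $q_c$, $q_2=1$) or branch $1$ (differentiate, set $q_1=1$, $q_2=0$). Summing $\cE_\eta(\rho)$ over $A_\pi$ therefore splits as: one term with no differentiation at all (all branches $0$ then... — but note $q_1$ never becomes $1$, so $O_d$ contributes $\tr(\ketbra{1}{1}\,\cdot)=0$ and this term vanishes), plus, for each $j\in\{1,\dots,k\}$, the term where occurrence $i_j$ is the first (hence only) one to take branch $1$. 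In that $j$-th term, occurrences $i_1,\dots,i_{j-1}$ take branch $0$, so $q_c$ gets advanced $j-1$ times from its initial value $1$ (set by $\eta_{\dinit}$'s $[C]_{q_c}$), landing at $\ket{j}$; the $GP$ gates have re-seeded $q_2$ so that the probability amplitudes accumulate to exactly $b_1\cdots$ giving the normalization $\mu(j)$, which is then cancelled by the $\frac{2}{\mu(j)}$ in $O_d=\sum_j \frac{2}{\mu(j)}\ketbra{j}{j}\otimes\ketbra{1}{1}\otimes Z$. Tracing the $q_2$ and $Z$ factors through $\cE_{q_2,I/2}$ and the $AS$ gadget, and applying Lemma~\ref{lem:a3} to $e^{\mp i\frac\pi4 S_{\bar q,\bar q'}}$ controlled on $q_2$, the residual operator on the working space is precisely $-i[[\sigma_{i_j}]_{\bar q_{i_j}},\cE_{\pi_1\cdots\pi_{i_j-1}}(\rho)]$ propagated forward by $\cE_{\pi_{i_j}\pi_{i_j+1}\cdots\pi_n}$ — the factor $2$ in $O_d$ cancels the $\frac14$-type constants from $AS$ and $\cE_{q_2,I/2}$. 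This yields exactly the right-hand side of~(\ref{eq:4}).

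The main obstacle I expect is the careful tracking of the numerical constants and the ancilla states through the $AS$ gadget: $AS=\ketbra{0}{0}\otimes e^{-i\frac\pi4 S}+\ketbra{1}{1}\otimes e^{i\frac\pi4 S}$ acts on $q_2$ (which was put in the maximally mixed state $I/2$ by $\cE_{q_2,I/2}$) controlling a SWAP between $\bar q$ and the freshly-initialized $\bar q'$ (in state $\sigma_{i_j}$ via $\qinitd{\sigma}{\bar q'}$, i.e. $\cE_{\bar q',\sigma}$). One must verify that taking the partial trace over $q_2$ and $\bar q'$, weighted by the $Z$ observable on $q_2$ inside $O_d$, extracts exactly the antisymmetric (commutator) part with coefficient $-i$ — this is where Lemma~\ref{lem:a3}'s identity $\tr_3(e^{-i\alpha S}\rho\otimes\sigma e^{i\alpha S})=\cos^2\alpha\,\rho+\sin^2\alpha\,\tr_2(\rho)\otimes\sigma-i\cos\alpha\sin\alpha[[\sigma]_2,\rho]$ at $\alpha=\pm\frac\pi4$ does the work, since $\cos\frac\pi4\sin\frac\pi4=\frac12$ and the $Z$-weighting on the control qubit kills the symmetric $\cos^2$ and $\sin^2$ pieces while doubling the commutator piece. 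I would isolate this computation as a short sub-claim, prove it once for a single occurrence, and then the general case follows by the decomposition of $A_\pi$ described above together with the linearity and composition properties of superoperators.
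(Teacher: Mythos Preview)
Your proposal is correct and follows essentially the same approach as the paper: the paper's proof is precisely the explicit iterative computation of $\sum_{\eta\in A_\pi}\cE_\eta(\rho)$ that realizes your decomposition by ``which occurrence $j$ takes the differentiation branch,'' using Lemma~\ref{lem:a3} at $\alpha=\pm\tfrac{\pi}{4}$ for the $AS$ gadget and then tracing against $O_d$ to cancel the $\mu(j)$ factors and extract the commutator. One small slip to fix when you write it up: branch~$0$ corresponds to $M_0=\ketbra{00}{00}$ (so it selects $q_2=0$, not $q_2=1$) and branch~$1$ to $M_1=\ketbra{01}{01}$ (selecting $q_2=1$); this does not affect your argument.
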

\begin{proof}
    We only need to consider the state in $\cH_{q_c}\otimes\cH_{q_1}\otimes\cH_{q_2}\otimes\cH_{P(\bm{\theta})}$.
    \begin{align*}
        &\sum_{\eta\in A_{\pi}}\cE_{\eta}(\rho) \\
        =& \sum_{\genfrac{}{}{0pt}{4}{j=1,2,\ldots,k}{\mu_{i_j} \in \lbrace \eta_{m_{i_j-1}, m_{i_j}, 0}, \eta_{m_{i_j-1}, m_{i_j}, 1}, \eta_{m_{i_j-1}, m_{i_j}, 2}\rbrace}} \Bigl( \cE_{\eta_{dinit}\pi_1\cdots\pi_{i_1-1}\mu_{i_1}\pi_{i_1+1}\cdots\pi_{i_k-1}\mu_{i_k}\pi_{i_k+1}\cdots\pi_n}(\rho)\Bigr) \\
        =& \sum_{\genfrac{}{}{0pt}{4}{j=1,2,\ldots,k}{\mu_{i_j} \in \lbrace \eta_{m_{i_j-1}, m_{i_j}, 0}, \eta_{m_{i_j-1}, m_{i_j}, 1}, \eta_{m_{i_j-1}, m_{i_j}, 2}\rbrace}} \Bigl( \cE_{\pi_{i_1+1}\cdots\pi_{i_k-1}\mu_{i_k}\pi_{i_k+1}\cdots\pi_n}(\cE_{\eta_{dinit}\pi_1\cdots\pi_{i_1-1}\mu_{i_1}}(\rho))\Bigr) \\
        =& \sum_{\genfrac{}{}{0pt}{4}{j=2,\ldots,k}{\mu_{i_j} \in \lbrace \eta_{m_{i_j-1}, m_{i_j}, 0}, \eta_{m_{i_j-1}, m_{i_j}, 1}, \eta_{m_{i_j-1}, m_{i_j}, 2}\rbrace}} \Bigl( \cE_{\pi_{i_1+1}\cdots\pi_{i_k-1}\mu_{i_k}\pi_{i_k+1}\cdots\pi_n}\bigl( \sum_{h_1 = 0}^2\cE_{\eta_{m_{i_1-1}, m_{i_1}, h_1}}(\cE_{\eta_{dinit}\pi_1\cdots\pi_{i_1-1}}(\rho))\bigr)\Bigr).
    \end{align*}
    Let $\ket{\psi_j} = R_y(2\arcsin(\sqrt{b_j}))\ket{0}$, $b_j = \mu(j)/(1-\sum_{k=1}^{j-1}\mu(k))$, then with Lemma~\ref{lem:a3} and definition of $\cE_{\eta_{dinit}\pi_1\cdots\pi_{i_1-1}}$, we have
    \begin{align*}
        &\sum_{h_1 = 0}^2\cE_{\eta_{m_{i_1-1}, m_{i_1}, h_1}}(\cE_{\eta_{dinit}\pi_1\cdots\pi_{i_1-1}}(\rho)) \\
        =&\sum_{h_1 = 0}^2\cE_{\eta_{m_{i_1-1}, m_{i_1}, h_1}}\bigl( \ketbra[c]{1}{1}\otimes\ketbra[1]{0}{0}\otimes\ketbra[2]{\psi_1}{\psi_1}\otimes\cE_{\pi_1\cdots\pi_{i_1-1}}(\rho)\bigr) \\
        =& \left(\sum_{l=2}^{\infty}\mu(l)\right)\ketbra[c]{2}{2}\otimes\ketbra[1]{0}{0}\otimes\ketbra[2]{\psi_2}{\psi_2}\otimes \cE_{\pi_1\cdots\pi_{i_1}}(\rho)  \\
        & \  + \frac{\mu(1)}{2} \ketbra[c]{1}{1}\otimes\ketbra[1]{1}{1}\otimes\biggl( \\
        & \quad \ketbra[2]{0}{0}\otimes\Bigl(\cos(\frac{\pi}{4})^2 \cE_{\pi_1\cdots\pi_{i_1}}(\rho) + \sin(\frac{\pi}{4})^2 \cE_{\pi_{i_1}}(\tr_{\bar{q}_{i_1}}(\cE_{\pi_1\cdots\pi_{i_1-1}}(\rho))\otimes\sigma_{i_1}) \\
        &\quad + \frac{1}{2}\sin(\frac{\pi}{2})\cE_{\pi_{i_1}}(-i[[\sigma_{i_1}]_{\bar{q}_{i_1}}, \cE_{\pi_1\cdots\pi_{i_1-1}}(\rho)])\Bigr)\\
        & \quad \ketbra[2]{1}{1}\otimes\Bigl(\cos(-\frac{\pi}{4})^2 \cE_{\pi_1\cdots\pi_{i_1}}(\rho) + \sin(-\frac{\pi}{4})^2 \cE_{\pi_{i_1}}(\tr_{\bar{q}_{i_1}}(\cE_{\pi_1\cdots\pi_{i_1-1}}(\rho))\otimes\sigma_{i_1}) \\
        &\quad + \frac{1}{2}\sin(-\frac{\pi}{2})\cE_{\pi_{i_1}}(-i[[\sigma_{i_1}]_{\bar{q}_{i_1}}, \cE_{\pi_1\cdots\pi_{i_1-1}}(\rho)])\Bigr)\biggr).
    \end{align*}
    We find that in the second term $q_1$ is in $\ketbra{1}{1}$, then in the later execution, it will never go into $M_1$ or $M_2$, thus
    \begin{align*}
        &\sum_{h_2=0}^2 \sum_{h_1 = 0}^2\cE_{\eta_{m_{i_2-1}, m_{i_2}, h_2}}\bigl( \cE_{\pi_{i_1+1}\cdots\pi_{i_2-1}}(\cE_{\eta_{m_{i_1-1}, m_{i_1}, h_1}}(\cE_{\eta_{dinit}\pi_1\cdots\pi_{i_1-1}}(\rho)))\bigr) \\
        =& \left(\sum_{l=3}^{\infty}\mu(l)\right)\ketbra[c]{3}{3}\otimes\ketbra[1]{0}{0}\otimes\ketbra[2]{\psi_3}{\psi_3}\otimes\cE_{\pi_1\cdots\pi_{i_2}}(\rho)  \\
        & \  + \frac{\mu(2)}{2} \ketbra[c]{2}{2}\otimes\ketbra[1]{1}{1}\otimes\biggl( \\
        & \quad \ketbra[2]{0}{0}\otimes\Bigl(\cos(\frac{\pi}{4})^2 \cE_{\pi_1\cdots\pi_{i_2}}(\rho) + \sin(\frac{\pi}{4})^2 \cE_{\pi_{i_2}}(\tr_{\bar{q}_{i_2}}(\cE_{\pi_1\cdots\pi_{i_2-1}}(\rho))\otimes\sigma_{i_2}) \\
        &\quad + \frac{1}{2}\sin(\frac{\pi}{2})\cE_{\pi_{i_2}}(-i[[\sigma_{i_2}]_{\bar{q}_{i_2}}, \cE_{\pi_1\cdots\pi_{i_2-1}}(\rho)])\Bigr)\\
        & \quad \ketbra[2]{1}{1}\otimes\Bigl(\cos(-\frac{\pi}{4})^2 \cE_{\pi_1\cdots\pi_{i_2}}(\rho) + \sin(-\frac{\pi}{4})^2 \cE_{\pi_{i_2}}(\tr_{\bar{q}_{i_2}}(\cE_{\pi_1\cdots\pi_{i_2-1}}(\rho))\otimes\sigma_{i_2}) \\
        &\quad + \frac{1}{2}\sin(-\frac{\pi}{2})\cE_{\pi_{i_2}}(-i[[\sigma_{i_2}]_{\bar{q}_{i_2}}, \cE_{\pi_1\cdots\pi_{i_2-1}}(\rho)])\Bigr)\biggr) \\
        & \  + \frac{\mu(1)}{2} \ketbra[c]{1}{1}\otimes\ketbra[1]{1}{1}\otimes\biggl( \\
        & \quad \ketbra[2]{0}{0}\otimes\Bigl(\cos(\frac{\pi}{4})^2 \cE_{\pi_1\cdots\pi_{i_1}}(\rho) + \sin(\frac{\pi}{4})^2 \cE_{\pi_{i_1}}(\tr_{\bar{q}_{i_1}}(\cE_{\pi_1\cdots\pi_{i_1-1}}(\rho))\otimes\sigma_{i_1}) \\
        &\quad + \frac{1}{2}\sin(\frac{\pi}{2})\cE_{\pi_{i_1}}(-i[[\sigma_{i_1}]_{\bar{q}_{i_1}}, \cE_{\pi_1\cdots\pi_{i_1-1}}(\rho)])\Bigr)\\
        & \quad \ketbra[2]{1}{1}\otimes\Bigl(\cos(-\frac{\pi}{4})^2 \cE_{\pi_1\cdots\pi_{i_1}}(\rho) + \sin(-\frac{\pi}{4})^2 \cE_{\pi_{i_1}}(\tr_{\bar{q}_{i_1}}(\cE_{\pi_1\cdots\pi_{i_1-1}}(\rho))\otimes\sigma_{i_1}) \\
        &\quad + \frac{1}{2}\sin(-\frac{\pi}{2})\cE_{\pi_{i_1}}(-i[[\sigma_{i_1}]_{\bar{q}_{i_1}}, \cE_{\pi_1\cdots\pi_{i_1-1}}(\rho)])\Bigr)\biggr)
    \end{align*}
    step by step,
    \begin{align*}
        \sum_{\eta\in A_{\pi}}\cE_{\eta}(\rho) 
        =& \left(\sum_{l=k+1}^{\infty}\mu(l)\right)\ketbra[c]{k+1}{k+1}\otimes\ketbra[1]{0}{0}\otimes\ketbra[2]{\psi_{k+1}}{\psi_{k+1}}\otimes\cE_{\pi}(\rho)  \\
        &\  + \sum_{j=1}^k\biggl(\frac{\mu(j)}{2} \ketbra[c]{j}{j}\otimes\ketbra[1]{1}{1}\otimes\biggl( \\
        & \quad \ketbra[2]{0}{0}\otimes\Bigl(\frac{1}{2}\cE_{\pi}(\rho) + \frac{1}{2} \cE_{\pi_{i_j}\pi_{i_j+1}\cdots\pi_n}(\tr_{\bar{q}_{i_j}}(\cE_{\pi_1\cdots\pi_{i_j-1}}(\rho))\otimes\sigma_{i_j}) \\
        &\quad + \frac{1}{2}\cE_{\pi_{i_j}\pi_{i_j+1}\cdots\pi_n}(-i[[\sigma_{i_j}]_{\bar{q}_{i_j}}, \cE_{\pi_1\cdots\pi_{i_j-1}}(\rho)])\Bigr)\\
        & \quad \ketbra[2]{1}{1}\otimes\Bigl(\frac{1}{2} \cE_{\pi}(\rho) + \frac{1}{2} \cE_{\pi_{i_j}\pi_{i_j+1}\cdots\pi_n}(\tr_{\bar{q}_{i_j}}(\cE_{\pi_1\cdots\pi_{i_j-1}}(\rho))\otimes\sigma_{i_j}) \\
        &\quad - \frac{1}{2}\cE_{\pi_{i_j}\pi_{i_j+1}\cdots\pi_n}(-i[[\sigma_{i_j}]_{\bar{q}_{i_j}}, \cE_{\pi_1\cdots\pi_{i_j-1}}(\rho)])\Bigr)\biggr)\biggr).
    \end{align*}
    With $O_d = \sum_{j=1}^{\infty}\frac{2}{\mu(j)}\ketbra{j}{j}\otimes\ketbra{1}{1}\otimes Z$, an observable on $\cH_{q_c}\otimes\cH_{q_1}\otimes \cH_{q_2}$, we have
    \begin{align*}
        \tr\left(O_d\otimes O\sum_{\eta\in A_{\pi}}\cE_{\eta}(\rho)\right)
        =& \sum_{j=1}^k\Bigl( \tr\bigl(O\cE_{\pi_{i_j}\pi_{i_j+1}\cdots\pi_n}(-i[[\sigma_{i_j}]_{\bar{q}_{i_j}}, \cE_{\pi_1\cdots\pi_{i_j-1}}(\rho)])\bigr)\Bigr).
    \end{align*}
    
\end{proof}

We already know that
\[
    \sem{\qd{\theta}(P(\bm{\theta}))}(\rho) = \sum_{\pi\in\Pi_{\cS_{Q(\bm{\theta})}}} \cE_{\pi}(\rho)
    = \sum_{\pi\in \Pi_{\cS_{P(\bm{\theta})}}}\sum_{\eta\in A_{\pi}} \cE_{\eta}(\rho)
\]
then, 
\begin{equation}\label{eq:18}
    \begin{aligned}
        & \tr\left(O_d\otimes O \sem{\qd{\theta}(P(\bm{\theta}))}(\rho)\right) \\
        ={}& \sum_{\pi\in \Pi_{\cS_{P(\bm{\theta})}}}\tr\left(O_d\otimes O\sum_{\eta\in A_{\pi}} \cE_{\eta}(\rho)\right).
    \end{aligned}
\end{equation}
We should carefully consider the convergence of the above summation.

\begin{lemma}
    \label{lem:a10}
    As the same settings in Theorem~\ref{thm:exact} and\\  Lemma~\ref{lem:a2}, we fix $\bm{\theta} = \bm{\theta}^*$, for $x\in \mathbb{R}$, $n\in \mathbb{N}$, let
    \[h_n(x) = \sum_{\pi\in\Pi_{\cS_{P(\bm{\theta}^*[\theta\mapsto x])}}^{(n)}}\tr\left(O_d\otimes O\sum_{\eta\in A_{\pi}}\cE_{\eta}(\rho)\right)\]
    then, $\lim_{n\to \infty} h_n(x)$ exists, which is exactly
    \[ \tr\left(O_d\otimes O \sem{\qd{\theta}(P(\bm{\theta}^*[\theta\mapsto x]))}(\rho)\right)\]
    and $h_n(x)$ is uniform convergent on any close interval.
\end{lemma}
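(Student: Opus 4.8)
The plan is to recognise $h_n$ as the derivative of a finite sum whose pointwise limit is already understood, and then to obtain uniform control of the tail from the finite-dimensionality hypothesis.

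First I would unfold $h_n$ via Lemmas~\ref{lem:a2}, \ref{lem:a6} and~\ref{lem:a5}. By Lemma~\ref{lem:a5}, for each computation path $\pi = l_{in}^{P}\overto{\cE_1}m_1\overto{\cE_2}\cdots\overto{\cE_n}m_n$ of $\cS_{P(\bm{\theta}^*[\theta\mapsto x])}$ with parameter transitions at positions $i_1<\cdots<i_{k_\pi}$, the quantity $\tr(O_d\otimes O\sum_{\eta\in A_\pi}\cE_\eta(\rho))$ equals
\[ D_\pi(x) \;:=\; \sum_{j=1}^{k_\pi}\tr\!\left(O\,\cE_{\pi_{i_j}\cdots\pi_n}\!\big(-i[[\sigma_{i_j}]_{\bar q_{i_j}},\,\cE_{\pi_1\cdots\pi_{i_j-1}}(\rho)]\big)\right). \]
Applying Lemma~\ref{lem:a4} with $\cE_1=\cE_{\pi_1\cdots\pi_{i_j-1}}$, $\cE_2=\cE_{\pi_{i_j+1}\cdots\pi_n}$ and $H=\sigma_{i_j}$ identifies the $j$-th summand with the partial derivative of $\tr(O\cE_\pi(\rho))$ with respect to the $j$-th occurrence of $x$; since the parameter transitions are the only $x$-dependent factors of $\cE_\pi$, the product rule gives $D_\pi(x)=\frac{d}{dx}\tr(O\cE_\pi(\rho))$. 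As $\Pi^{(n)}_{\cS_{P}}$ is finite, this yields $h_n(x)=\frac{d}{dx}\,g_n(x)$ with $g_n(x):=\tr(O\sum_{\pi\in\Pi^{(n)}_{\cS_{P(\bm{\theta}^*[\theta\mapsto x])}}}\cE_\pi(\rho))$. By Theorem~\ref{thm:a1} the partial sums $\sum_{\pi\in\Pi^{(n)}}\cE_\pi(\rho)$ increase in the L\"owner order to $\sem{P(\bm{\theta}^*[\theta\mapsto x])}(\rho)$; writing $O=O_+-O_-$ with $O_\pm\sqsupseteq 0$ bounded, each of $\tr(O_\pm\cdot)$ is monotone along this sequence, so $g_n(x)\to g(x):=\tr(O\sem{P(\bm{\theta}^*[\theta\mapsto x])}(\rho))$ for every $x$.

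The crux is that $h_n=g_n'$ converges uniformly on a compact interval $I$. I would reorganise the double sum $\sum_{\pi}\sum_{j}$ by grouping the summands by the \emph{pointed prefix} $\xi\,t$ ending at a parameter transition $t$: writing $\tau_{\xi,t}:=\cE_t(-i[[\sigma]_{\bar q},\cE_\xi(\rho)])$, the summands sharing $\xi t$ together contribute $\sum_{\zeta}\tr(O\cE_\zeta(\tau_{\xi,t}))=\tr\!\big(O\,\sem{R_{\xi,t}}(\tau_{\xi,t})\big)$, where $R_{\xi,t}$ is the remaining program at the post-location of $t$ and $\sum_\zeta\cE_\zeta=\sem{R_{\xi,t}}$ by Theorem~\ref{thm:a1}. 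Since $\sem{R_{\xi,t}}$ is trace-non-increasing, conjugation by $e^{-ix\sigma}$ is trace-norm-preserving, $\norm{[[\sigma]_{\bar q},\cE_\xi(\rho)]}_1\le 2\norm{\sigma}_{\mathrm{op}}\tr(\cE_\xi(\rho))$, and $\tr(\cE_\xi(\rho))$ does not depend on $x$, each grouped term — and by Lemma~\ref{lem:a11} each of its partial sums $\sum_{|\zeta|\le m}\tr(O\cE_\zeta(\tau_{\xi,t}))$ — is bounded in absolute value by $2\norm{O}_{\mathrm{op}}\,\sigma_{\max}\,\tr(\cE_\xi(\rho))$, uniformly in $x\in I$. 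It then remains to establish the finiteness of $\sum_{(\xi,t)}\tr(\cE_\xi(\rho))<\infty$ over all prefixes $\xi$ from $l_{in}^{P}$ ending at a parameter-transition location: this is where the finite dimension of $\cH_{P(\bm{\theta}^*)}$ enters, since by Lemma~\ref{lem:a9} (with its supporting Lemmas~\ref{lem:a7}--\ref{lem:a8}) the weight $\tr(\cE_\xi(\rho))$ decays exponentially in the number of while-loop iterations performed by $\xi$, while Lemma~\ref{lem:a11} bounds the total weight of the prefixes of any fixed length by $\tr(\rho)$; balancing these, iteratively over nested loops, forces the series to converge. Given this $x$-independent dominating bound and the fact that there are only finitely many prefixes of any bounded length in the finite control-flow graph, a dominated-convergence argument over the prefix decomposition yields $h_n\to\sum_{(\xi,t)}\tr(O\sem{R_{\xi,t}}(\tau_{\xi,t}))$ uniformly on $I$. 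Combining uniform convergence of $g_n'=h_n$ with the pointwise convergence $g_n\to g$ via the classical theorem on termwise differentiation shows $g$ is differentiable with $g'=\lim h_n$, and unwinding Lemma~\ref{lem:a6} and the grouping identifies $\lim h_n(x)$ with $\tr\!\big(O_d\otimes O\,\sem{\qd{\theta}(P(\bm{\theta}^*[\theta\mapsto x]))}(\rho)\big)$, as claimed.

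The main obstacle is the finiteness of $\sum_{(\xi,t)}\tr(\cE_\xi(\rho))$, i.e.\ the uniform summability of the tail. Two features make it delicate: first, the fully expanded sum over (path, parameter-occurrence) pairs is in general only conditionally convergent — a prefix admits infinitely many completions when the remaining program still contains a loop — so the Weierstrass $M$-test cannot be applied naively, and one must group by prefix and exploit the L\"owner-order convergence of $\sum_\zeta\cE_\zeta$ after splitting the Hermitian $\tau_{\xi,t}$ into positive and negative parts; second, extracting the exponential decay of $\tr(\cE_\xi(\rho))$ in loop depth from Lemma~\ref{lem:a9} requires the finite-dimensional hypothesis and careful bookkeeping for nested loops and for the transient/recurrent decomposition of each loop's state space — exactly the phenomenon that fails in Example~\ref{eg:non-diff}, where an inner loop runs $2^k$ times on the $k$-th outer iteration.
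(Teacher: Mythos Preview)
Your overall architecture---Lemma~\ref{lem:a5} to unwind $\tr(O_d\otimes O\sum_{\eta\in A_\pi}\cE_\eta(\rho))$, Lemma~\ref{lem:a4} to recognise it as $\frac{d}{dx}\tr(O\cE_\pi(\rho))$, and then tail control via Lemmas~\ref{lem:a9} and~\ref{lem:a11}---is exactly the paper's. The difference is in how you pass from pointwise to uniform convergence, and that is where there is a genuine gap.

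You assert that ``$\tr(\cE_\xi(\rho))$ does not depend on $x$'' and use this to feed a Weierstrass/dominated-convergence argument. This is false in general. A prefix $\xi$ reaching the $j$-th occurrence of $\theta$ (for $j\ge 2$) necessarily passes through the earlier occurrences, so $\cE_\xi$ contains the $x$-dependent unitary conjugations $e^{-ix\sigma_{i_1}}(\cdot)e^{ix\sigma_{i_1}},\ldots$; these are trace-preserving, but if any of them precedes a measurement branch $M_m(\cdot)M_m^\dagger$ along $\xi$---which is exactly what happens inside a while-loop whose guard is measured after the parameterised gate---then $\tr(\cE_\xi(\rho))=\tr\big(M_m^\dagger M_m\, e^{-ix\sigma}\mu\, e^{ix\sigma}\big)$ does depend on $x$. (Concretely, in the paper's parameterised-AA example the gate acts on $(q,r)$ and the loop measurement is on $r$, so the branch probabilities genuinely vary with $x$.) Consequently your majorant $\sum_{(\xi,t)}\tr(\cE_\xi(\rho))$ is itself an $x$-dependent series, and bounding it via Lemma~\ref{lem:a9} only produces an $x$-dependent constant $N_x$; you have not obtained an $x$-uniform dominating series, so the $M$-test does not close.

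The paper confronts the same obstruction and resolves it differently: it bounds $\big|\tr(O_d\otimes O\sum_{\eta\in A_\pi}\cE_\eta(\rho))\big|\le 2\,|\pi|\,\|O\|\,\tr(\cE_\pi(\rho))$, partitions paths into shells by length, and shows that the resulting shell-sum $H_n(x)$ is (i) nonnegative, (ii) continuous in $x$, and (iii) pointwise summable (via Lemma~\ref{lem:a9} with an $x$-dependent $N_x$). Monotonicity of the partial sums $\sum_{k\le K}H_k(x)$ then lets Dini's theorem upgrade pointwise to uniform convergence on compact intervals---precisely the step your dominated-convergence shortcut tries to avoid. To repair your route you would either have to prove an $x$-uniform version of Lemma~\ref{lem:a9} on compact parameter sets (i.e.\ a single $N$ valid for all $x\in I$, which needs an extra compactness argument over both the state sphere and $I$), or fall back to the paper's monotone/Dini mechanism.
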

\begin{proof}
    Let $M$ be the largest eigenvalue of $\abs{O}$. For any $ \pi\in\Pi_{\cS_{P(\bm{\theta})}}$ with $\pi = l_{in}^{P(\bm{\theta})} \overto{\cE_1} m_1 \overto{\cE_2} m_2 \overto{\cE_3} \cdots \overto{\cE_n} m_n$, if symbol $\theta$ does not appear in $\pi$, then
    \[ \abs{\tr\left(O_d\otimes O \sum_{\eta\in A_{\pi}}\cE_{\eta}(\rho)\right)} = 0.\]
    otherwise, there exist $1\leq i_1 < i_2< \ldots<i_k \leq n$ such that $\cE_{i_j} = [e^{-i\theta\sigma_{i_j}}]_{\bar{q_{i_j}}}$ for $j = 1, 2,\ldots, k$, then by Lemma~\ref{lem:a5}, we have
    \begin{align*}
        \tr\left(O_d\otimes O \sum_{\eta\in A_{\pi}}\cE_{\eta}(\rho)\right) 
        =& \sum_{j=1}^k\Bigl( \tr\bigl(O\cE_{\pi_{i_j}\pi_{i_j+1}\cdots\pi_n}(-i[[\sigma_{i_j}]_{\bar{q}_{i_j}}, \cE_{\pi_1\cdots\pi_{i_j-1}}(\rho)])\bigr)\Bigr)
    \end{align*}
    then,
    \begin{align*}
        \abs{\tr\left(O_d\otimes O \sum_{\eta\in A_{\pi}}\cE_{\eta}(\rho)\right)} 
        \leq{} & \sum_{j=1}^k \abs{\tr\bigl(O\cE_{\pi_{i_j}\pi_{i_j+1}\cdots\pi_n}(-i[[\sigma_{i_j}]_{\bar{q}_{i_j}}, \cE_{\pi_1\cdots\pi_{i_j-1}}(\rho)])\bigr)} \\
        \leq{} & \sum_{j=1}^kM\abs{\tr\bigl(\cE_{\pi_{i_j}\pi_{i_j+1}\cdots\pi_n}(-i[[\sigma_{i_j}]_{\bar{q}_{i_j}}, \cE_{\pi_1\cdots\pi_{i_j-1}}(\rho)])\bigr)} \\
        \leq{} & \sum_{j=1}^k 2M \tr\bigl(\cE_{\pi_{i_j}\pi_{i_j+1}\cdots\pi_n}(\cE_{\pi_1\cdots\pi_{i_j-1}}(\rho))\bigr) \\
        ={} & \sum_{j=1}^k 2M\tr(\cE_{\pi}(\rho)) = 2kM\tr(\cE_{\pi}(\rho)).
    \end{align*}
     Thus for any $\pi\in \Pi_{\cS_{P(\bm{\theta})}}$ with length $n$,
    \begin{equation}\label{eq:17}
        \begin{aligned}
            \abs{\tr\left(O_d\otimes O \sum_{\eta\in A_{\pi}}\cE_{\eta}(\rho)\right)}
            \leq{}& 2nM\tr(\cE_{\pi}(\rho)).
        \end{aligned} 
    \end{equation}

    Let $M_1$ be the size of transition set of $\cS_{P(\bm{\theta})}$, $M_2$ be the number of occurrences of \textbf{while} statements in $P(\bm{\theta})$, which means that $P(\bm{\theta})$ contains following subprograms
    \[ P_{j}(\bm{\theta}) \equiv \qwhile M^{(j)}[\bar{q}_j] = 1 \qdo Q_{j}(\bm{\theta}) \qod, \enspace j = 1,\ldots, M_2.\] 
    these $P_j(\bm{\theta})$ can contain each others. In there we assume $M_2 \geq 1$, otherwise $P(\bm{\theta})$ doesn't contain \textbf{while} statement, then the conclusion is trivial. By Lemma~\ref{lem:a9}, we fix $\bm{\theta}$ and choose $\epsilon = \frac{1}{2}$, then for these $M_2$ subprograms, there exist $N_1,\ldots,N_{M_2}$ such that
    \begin{align*}
        & \forall n \geq 0, \forall \rho \in \cH_{P(\bm{\theta})}, 1\leq j \leq M_2, \\
        & \tr(\cE_0^{(j)}\circ(\sem{Q_j(\bm{\theta})}\circ\cE_1^{(j)})^n(\rho)) \leq \left(\frac{1}{2}\right)^{\lfloor\frac{n}{N_j}\rfloor}\tr(\rho).
    \end{align*}
    Let $N = \max_{j=1,\ldots, M_2}N_j$, then
    \begin{equation}\label{eq:11}
        \begin{aligned}
            &\forall n \geq 0, \forall \rho \in \cH_{P(\bm{\theta})}, 1 \leq j \leq M_2, \\
        &\tr(\cE_0^{(j)}\circ(\sem{Q_j(\bm{\theta})}\circ\cE_1^{(j)})^n(\rho)) \leq \left(\frac{1}{2}\right)^{\lfloor\frac{n}{N}\rfloor}\tr(\rho).
        \end{aligned}
    \end{equation}

    For any $n \geq 2$, we consider $\pi \in \Pi_{\cS_{P(\bm{\theta})}}^{((n+1)^{M_2} M_1-1)} \setminus\Pi_{\cS_{P(\bm{\theta})}}^{(n^{M_2} M_1-1)}$, the length of $\pi$ is at least $n^{M_2} M_1$, then $\pi$ has at least $n^{M_2}M_1/ M_1 = n^{M_2}$ locations in $\cS_{P(\bm{\theta})}$ repeatedly appear, which is caused by \textbf{while} statements. Then  $\pi$ has at least $n^{M_2}$ times runs into the loop bodies of these $P_j(\bm{\theta}), j=1,\ldots,M_2$. 
    
    Let $a(\lbrace P_1(\bm{\theta}), \ldots, P_{M_2}(\bm{\theta})\rbrace)$ denote the possible maximum times of $\pi$ runs into the loop bodys of these $P_j(\bm{\theta}), j=1,\ldots,M_2$ on the condition that $\pi$ only continuously runs into the loop body of each $P_{j}(\bm{\theta})$ with no more than $n-1$ times. We can assume that $P_1(\bm{\theta})$ is the first \textbf{while} statement appear in $P(\bm{\theta})$ and it contains $0\leq t \leq M_2-1$ \textbf{while} statement $P_{j_1}(\bm{\theta}), \ldots, P_{j_t}(\bm{\theta})$, $2\leq j_1 \leq \cdots \leq j_t \leq M_2$, then 
    \begin{align*}
        & a(\lbrace P_1(\bm{\theta}), \ldots, P_{M_2}(\bm{\theta})\rbrace) \\
        \leq{}& \underbrace{(n-1)}_{\text{for }P_1(\bm{\theta})} + \underbrace{(n-1)a(\lbrace P_{j_1}(\bm{\theta}), \ldots, P_{j_t}(\bm{\theta})\rbrace)}_{\text{for those \textbf{while} statements in the loop body of } P_1(\bm{\theta})}  \\
        & \quad + \underbrace{a(\lbrace P_2(\bm{\theta}), \ldots, P_{M_2}(\bm{\theta})\rbrace \setminus \lbrace P_{j_1}(\bm{\theta}), \ldots, P_{j_t}(\bm{\theta})\rbrace)}_{\text{for those \textbf{while} statements not in the loop body of } P_1(\bm{\theta})} \\
        \leq{} & (n-1) + n a(\lbrace P_{2}(\bm{\theta}), \ldots, P_{M_2}(\bm{\theta})\rbrace) \\
        & \cdots \\
        \leq{} & \sum_{j=0}^{M_2-1}(n-1)n^j = n^{M_2} - 1 < n^{M_2}.
    \end{align*}
    This conflicts to $\pi$ has at least $n^{M_2}$ times runs into the loop body of these $P_j(\bm{\theta}), j=1,\ldots,M_2$. Therefore, for any $\pi\in \Pi_{\cS_{P(\bm{\theta})}}^{((n+1)^{M_2} M_1-1)} \setminus\Pi_{\cS_{P(\bm{\theta})}}^{(n^{M_2} M_1-1)}$, there exists $1\leq j_0 \leq M_2$ such that $\pi$ continuously runs into the loop body of $P_{j_0}(\bm{\theta})$ with more than $n-1$ times, which is $\pi$ can be written as
    \[ \pi = \pi_1\eta_1\mu_1\eta_1\mu_2\cdots\eta_1\mu_{t}\eta_0\pi_2\]
    with $\eta_1 = l_{in}^{P_{j_0}(\bm{\theta})}\overto{\cE_{1}^{(j_0)}} l_{in}^{Q_{j_0}(\bm{\theta})}$, $\eta_0 = l_{in}^{P_{j_0}(\bm{\theta})}\overto{\cE_{0}^{(j_0)}} l_{out}^{P_{j_0}(\bm{\theta})}$, $\mu_j \in \Pi_{\cS_{Q_{j_0}(\bm{\theta})}}, 1\leq j\leq t, t \geq n$. 
    
    We define the following set for each $1\leq j\leq M_2$, $n\geq 2$ and $0\leq m\leq (n+1)^{M_2}M_1-1$,
    \begin{align*}
        A_j^{(n, m)} \equiv{}&\biggl\lbrace \pi \in \Pi_{\cS_{P(\bm{\theta})}}^{((n+1)^{M_2} M_1-1)} \setminus\Pi_{\cS_{P(\bm{\theta})}}^{(n^{M_2} M_1-1)} : \pi = \pi_1\eta_1\mu_1\eta_1\mu_2\cdots\eta_1\mu_{n}\eta_0\pi_2, \\
        & \qquad \eta_1 = l_{in}^{P_{j}(\bm{\theta})}\overto{\cE_{1}^{(j)}} l_{in}^{Q_{j}(\bm{\theta})}, \eta_0 = l_{in}^{P_{j}(\bm{\theta})}\overto{\cE_{0}^{(j)}} l_{out}^{P_{j}(\bm{\theta})}, \mu_k \in \Pi_{\cS_{Q_{j}(\bm{\theta})}}, 1\leq k\leq n, \\
        & \qquad \pi_1 \text{ contains $m$ times $\eta_1$}, \pi_2 \text{ may contain } \eta_1 \biggr\rbrace
    \end{align*}
    then
    \[ \Pi_{\cS_{P(\bm{\theta})}}^{((n+1)^{M_2} M_1-1)} \setminus\Pi_{\cS_{P(\bm{\theta})}}^{(n^{M_2} M_1-1)} = \bigcup_{j=1}^{M_2}\left(\cup_{m=0}^{n^{M_2}M_1-1}A_{j}^{(n,m)}\right).\]
    For each $A_{j}^{(n,m)}$, we define 
    \begin{align*}
        B_{j}^{(n,m)} \equiv{}& \biggl\lbrace \pi_1\eta_1:
        \pi_1\eta_1\mu_1\eta_1\mu_2\cdots\eta_1\mu_{n}\eta_0\pi_2 \in A_j^{(n,m)}, \\
        & \qquad \eta_1 = l_{in}^{P_{j}(\bm{\theta})}\overto{\cE_{1}^{(j)}} l_{in}^{Q_{j}(\bm{\theta})}, \eta_0 = l_{in}^{P_{j}(\bm{\theta})}\overto{\cE_{0}^{(j)}} l_{out}^{P_{j}(\bm{\theta})}, \mu_k \in \Pi_{\cS_{Q_{j}(\bm{\theta})}}, 1\leq k\leq n, \\
        & \qquad \pi_1 \text{ contains $m$ times $\eta_1$}, \pi_2 \text{ may contain } \eta_1 \biggr\rbrace,\\
        C_{j}^{(n,m)} \equiv{}& \biggl\lbrace \mu_1, \mu_2, \ldots, \mu_n:
        \pi_1\eta_1\mu_1\eta_1\mu_2\cdots\eta_1\mu_{n}\eta_0\pi_2 \in A_j^{(n,m)}, \\
        & \qquad \eta_1 = l_{in}^{P_{j}(\bm{\theta})}\overto{\cE_{1}^{(j)}} l_{in}^{Q_{j}(\bm{\theta})}, \eta_0 = l_{in}^{P_{j}(\bm{\theta})}\overto{\cE_{0}^{(j)}} l_{out}^{P_{j}(\bm{\theta})}, \mu_k \in \Pi_{\cS_{Q_{j}(\bm{\theta})}}, 1\leq k\leq n, \\
        & \qquad \pi_1 \text{contains $m$ times $\eta_1$}, \pi_2 \text{ may contain } \eta_1 \biggr\rbrace, \\
        D_{j}^{(n,m)} \equiv{}& \biggl\lbrace \pi_2:
        \pi_1\eta_1\mu_1\eta_1\mu_2\cdots\eta_1\mu_{n}\eta_0\pi_2 \in A_j^{(n,m)}, \\
        & \qquad \eta_1 = l_{in}^{P_{j}(\bm{\theta})}\overto{\cE_{1}^{(j)}} l_{in}^{Q_{j}(\bm{\theta})}, \eta_0 = l_{in}^{P_{j}(\bm{\theta})}\overto{\cE_{0}^{(j)}} l_{out}^{P_{j}(\bm{\theta})}, \mu_k \in \Pi_{\cS_{Q_{j}(\bm{\theta})}}, 1\leq k\leq n, \\
        & \qquad \pi_1 \text{ contains $m$ times $\eta_1$}, \pi_2 \text{ may contain } \eta_1 \biggr\rbrace. \\
    \end{align*}
    $B_{i}^{(n,m)}, C_{i}^{(n,m)}, D_{i}^{(n,m)}$ are all finite, and we have, 
    \begin{align*}
        A_{j}^{(n,m)} \subseteq E_{i}^{(n,m)} \equiv{}& \Bigl\lbrace \pi_1\eta_1\mu_1\eta_1\mu_2\cdots\eta_1\mu_{n}\eta_0\pi_2 :  \pi_1\eta_1\in B_{i}^{(n,m)}, \pi_2\in D_{i}^{(n,m)}, \mu_k \in C_{i}^{(n,m)}, 1\leq k\leq n\Bigr\rbrace,
    \end{align*}
    \[ C_{j}^{(n,m)} \subseteq \Pi_{\cS_{Q_{j}(\bm{\theta})}}.\]
    By Theorem~\ref{thm:a1}, for any $\rho\in \cD(\cH_{P(\bm{\theta})}), 1\leq j\leq M_2,$
    \begin{equation}\label{eq:14}
        \sum_{\mu\in C_{j}^{(n,m)}}\cE_{\mu}(\rho) \sqsubseteq \sum_{\mu\in \Pi_{\cS_{Q_{j}(\bm{\theta})}}}\cE_{\mu}(\rho) = \sem{Q_{j}(\bm{\theta})}(\rho).
    \end{equation}
    As regards to $B_{j}^{(n,m)}$, any $\pi, \pi' \in B_{j}^{(n,m)}, \pi\neq \pi'$, we have $\pi = \pi_1\eta_1, \pi' = \pi_1'\eta_1$ and $\pi_1, \pi_1'$ contains $m$ times of $\eta_1$, then $\pi, \pi'$ are not prefix of each other (otherwise $\pi = \pi'$).  Then $B_{j}^{(n,m)}$ satisfies the conditions of Lemma~\ref{lem:a11}, we have that for any $\rho\in \cD(\cH_{P(\bm{\theta})})$,
    \begin{equation}
        \label{eq:15}
        \sum_{\pi\in B_{j}^{(n,m)}}\tr(\cE_{\pi}(\rho)) \leq \tr(\rho), \enspace 1\leq j\leq M_2.
    \end{equation}
    For any $\pi, \pi' \in D_{j}^{(n,m)}, \pi \neq \pi'$, $\pi$ and $\pi'$ have $l_{out}^{P(\bm{\theta)}}$ as last location, which has no post-location, then $\pi, \pi'$ are not prefix of each other. $D_{j}^{(n,m)}$ also satisfies the conditions of Lemma~\ref{lem:a11}, then for any $\rho \in \cD(\cH_{P(\bm{\theta})})$,
    \begin{equation}
        \label{eq:16}
        \sum_{\pi\in D_{j}^{(n,m)}}\tr(\cE_{\pi}(\rho)) \leq \tr(\rho), \enspace 1\leq j\leq M_2.
    \end{equation}

    With Formula~(\ref{eq:14}), for any $\rho \in \cD(\cH_{P(\bm{\theta})})$ and any $1\leq j\leq M_2$, we have
    \begin{align*}
        &\phantom{=} \sum_{\pi\in A_j^{(n,m)}} \cE_{\pi}(\rho) \sqsubseteq \sum_{\pi\in E_{i}^{(n,m)}} \cE_{\pi}(\rho) \\
        &= \sum_{\pi_1\eta_1\mu_1\eta_1\mu_2\cdots\eta_1\mu_{n}\eta_0\pi_2\in E_{i}^{(n,m)}} \Bigl( \cE_{\pi_1\eta_1\mu_1\eta_1\mu_2\cdots\eta_1\mu_{n}\eta_0\pi_2}(\rho)\Bigr) \\
        &= \sum_{\pi_1\eta_1\mu_1\eta_1\mu_2\cdots\eta_1\mu_{n}\eta_0\pi_2\in E_{i}^{(n,m)}}\Bigl( \cE_{\pi_2}\circ\cE_{\eta_0}\circ\cE_{\mu_n}\circ\cE_{\eta_1}\circ \cdots\circ\cE_{\mu_1}\circ\cE_{\eta_1}\circ\cE_{\pi_1}(\rho)\Bigr) \\
        &= \left(\sum_{\pi_2\in D_{j}^{(n,m)}}\cE_{\pi_2}\right)\circ\cE_{\eta_0}\circ\left(\sum_{\mu_n\in C_{j}^{(n,m)}}\cE_{\mu_n}\right)\circ\cE_{\eta_1}\circ \cdots \circ \left(\sum_{\mu_1\in C_{j}^{(n,m)}}\cE_{\mu_1}\right)\circ\left(\sum_{\pi_1\eta_1\in B_{j}^{(n,m)}}\cE_{\pi_1\eta_1}\right)(\rho) \\
        &\sqsubseteq \left(\sum_{\pi_2\in D_{j}^{(n,m)}}\cE_{\pi_2}\right)\circ\cE_{\eta_0} \circ\underbrace{\sem{Q_{j}(\bm{\theta})}\circ\cE_{\eta_1}\circ \cdots \circ \sem{Q_{j}(\bm{\theta})}}_{n \text{ times } \sem{Q_{j}(\bm{\theta})}} \circ\left(\sum_{\pi_1\eta_1\in B_{j}^{(n,m)}}\cE_{\pi_1\eta_1}\right)(\rho) \\
        &= \left(\sum_{\pi_2\in D_{j}^{(n,m)}}\cE_{\pi_2}\right)\circ\cE_{\eta_0} \circ{\left(\sem{Q_{j}(\bm{\theta})}\circ\cE_{\eta_1}\right)^{n-1}}\circ\sem{Q_j(\bm{\theta})} \circ\left(\sum_{\pi_1\eta_1\in B_{j}^{(n,m)}}\cE_{\pi_1\eta_1}\right)(\rho)
    \end{align*}
    thus,
    \begin{align*}
        & \sum_{\pi\in A_j^{(n,m)}} \tr\left(\cE_{\pi}(\rho)\right) \\
        \leq{}& \tr\biggggl(\left(\sum_{\pi_2\in D_{j}^{(n,m)}}\cE_{\pi_2}\right)\circ\cE_{\eta_0} \circ{\left(\sem{Q_{j}(\bm{\theta})}\circ\cE_{\eta_1}\right)^{n-1}}\circ\sem{Q_j(\bm{\theta})} \circ\left(\sum_{\pi_1\eta_1\in B_{j}^{(n,m)}}\cE_{\pi_1\eta_1}\right)(\rho)\biggggr) \\
        \leq{}& \tr\biggggl(\cE_{\eta_0}\circ{\left(\sem{Q_{j}(\bm{\theta})}\circ\cE_{\eta_1}\right)^{n-1}}\circ\sem{Q_j(\bm{\theta})} \circ\left(\sum_{\pi_1\eta_1\in B_{j}^{(n,m)}}\cE_{\pi_1\eta_1}\right)(\rho)\biggggr) \tag{by Formula~(\ref{eq:16})} \\
        \leq{}& \left(\frac{1}{2}\right)^{\lfloor \frac{n-1}{N}\rfloor}\tr\left(\sem{Q_j(\bm{\theta})}\circ\left(\sum_{\pi_1\eta_1\in B_{j}^{(n,m)}}\cE_{\pi_1\eta_1}\right)(\rho)\right) \tag{by Formula~(\ref{eq:11})} \\
        \leq{}& \left(\frac{1}{2}\right)^{\lfloor \frac{n-1}{N}\rfloor}\tr\left(\sum_{\pi_1\eta_1\in B_{j}^{(n,m)}}\cE_{\pi_1\eta_1}(\rho)\right) \\
        \leq{}& \left(\frac{1}{2}\right)^{\lfloor \frac{n-1}{N}\rfloor} \tr(\rho). \tag{by Formula~(\ref{eq:15})}
    \end{align*}
    Then,
    \begin{equation}
        \label{eq:21}
        \begin{aligned}
            \sum_{\pi \in \Pi_{\cS_{P(\bm{\theta})}}^{((n+1)^{M_2} M_1-1)} \setminus\Pi_{\cS_{P(\bm{\theta})}}^{(n^{M_2} M_1-1)}}\tr(\cE_{\pi}(\rho)) 
            \leq{}& \sum_{j=1}^{M_2}\sum_{m=0}^{(n+1)^{M_2}M_1-1}\sum_{\pi\in A_{j}^{(n,m)}} \tr(\cE_{\pi}(\rho)) \\
            \leq{}& M_1M_2(n+1)^{M_2}\left(\frac{1}{2}\right)^{\lfloor \frac{n-1}{N}\rfloor} \tr(\rho).
        \end{aligned}
    \end{equation}
    with Formula~(\ref{eq:17}), we have for any $\rho \in \cD(\cH_{P(\bm{\theta})})$,
    \begin{align*}
        & \sum_{\pi \in \Pi_{\cS_{P(\bm{\theta})}}^{((n+1)^{M_2} M_1-1)} \setminus\Pi_{\cS_{P(\bm{\theta})}}^{(n^{M_2} M_1-1)}}\abs{\tr\left( O_d\otimes O\sum_{\eta\in A_{\pi}}\cE_{\eta}(\rho)\right)} \\
        \leq{}& \sum_{\pi \in \Pi_{\cS_{P(\bm{\theta})}}^{((n+1)^{M_2} M_1-1)} \setminus\Pi_{\cS_{P(\bm{\theta})}}^{(n^{M_2} M_1-1)}} 2\left((n+1)^{M_2}M_1 - 1\right)M\tr(\cE_{\pi}(\rho)) \\
        \leq{}& 2\left((n+1)^{M_2}M_1 - 1\right)MM_1M_2(n+1)^{M_2}\left(\frac{1}{2}\right)^{\lfloor \frac{n-1}{N}\rfloor} \tr(\rho) \\
        \leq{}& 4\left((n+1)^{M_2}M_1 - 1\right)MM_1M_2(n+1)^{M_2}\left(\frac{1}{2}\right)^{\frac{n-1}{N}} \tr(\rho).
    \end{align*}
    As the $N$ is dependent on $\bm{\theta}$, we can obtain that for any $\bm{\theta}$, there exists $N_{\bm{\theta}} > 0$ such that for any $\rho \in \cD(\cH_{P(\bm{\theta})})$ and $n\geq 2$,
    \begin{align*}
        & \sum_{\pi \in \Pi_{\cS_{P(\bm{\theta})}}^{((n+1)^{M_2} M_1-1)} \setminus\Pi_{\cS_{P(\bm{\theta})}}^{(n^{M_2} M_1-1)}}\abs{\tr\left( O_d\otimes O\sum_{\eta\in A_{\pi}}\cE_{\eta}(\rho)\right)} \\
        \leq{}& 4\left((n+1)^{M_2}M_1 - 1\right)MM_1M_2(n+1)^{M_2}\left(\frac{1}{2}\right)^{\frac{n-1}{N_{\bm{\theta}}}} \tr(\rho)
    \end{align*}
    then, it is easy to see that for any $x$, there exists $N_x >0$ such that for any $\rho \in \cD(\cH_{P(\bm{\theta})})$ and $n \geq 2$,
    \begin{align*}
        H_n(x) \equiv{}& \sum_{\pi \in \Pi_{\cS_{P(\bm{\theta}^*[\theta\mapsto x])}}^{((n+1)^{M_2} M_1-1)} \setminus\Pi_{\cS_{P(\bm{\theta}^*[\theta\mapsto x])}}^{(n^{M_2} M_1-1)}}\abs{\tr\left( O_d\otimes O\sum_{\eta\in A_{\pi}}\cE_{\eta}(\rho)\right)} \\
        \leq{}& 4\left((n+1)^{M_2}M_1 - 1\right)MM_1M_2(n+1)^{M_2}\left(\frac{1}{2}\right)^{\frac{n-1}{N_x}} \tr(\rho).
    \end{align*}
    With
    \begin{align*}
        \lim_{n\to\infty}\sqrt[n]{4\left((n+1)^{M_2}M_1 - 1\right)MM_1M_2(n+1)^{M_2}\left(\frac{1}{2}\right)^{\frac{n-1}{N_x}}} &= \left(\frac{1}{2}\right)^{\frac{1}{N_x}} \\
        &< 1
    \end{align*}
    we have that for any $x\in\mathbb{R}$, $\sum_{n=1}^{\infty} H_n(x)$ is convergent. Since each term of $H_n(x)$ is non-negative, $H_n(x), n\in \mathbb{N}$ is a monotone sequence of continuous functions, then by Dini's Theorem~\cite[Theorem 7.13]{rudin1976principles}, $H_n(x)$ is uniform convergent on any close interval.

    Now, for any $n \geq 3$ and any $x\in \mathbb{R}$, we have
    \begin{align*}
        \abs{h_{n^{M_2}M_1-1}(x)}
        &= \abs{\sum_{\pi\in \Pi_{\cS_{P(\bm{\theta}^*[\theta\mapsto x])}}^{(n^{M_2}M_1-1)}}\tr\left(O_d\otimes O \sum_{\eta\in A_{\pi}}\cE_{\eta}(\rho)\right)} \\
        &\leq \sum_{\pi\in \Pi_{\cS_{P(\bm{\theta}^*[\theta\mapsto x])}}^{(n^{M_2}M_1-1)}}\abs{\tr\left(O_d\otimes O \sum_{\eta\in A_{\pi}}\cE_{\eta}(\rho)\right)} \\
        &= \sum_{\pi\in \Pi_{\cS_{P(\bm{\theta}^*[\theta\mapsto x])}}^{(2^{M_2}M_1-1)}}\abs{\tr\left(O_d\otimes O \sum_{\eta\in A_{\pi}}\cE_{\eta}(\rho)\right)} \\
        & \quad + \sum_{k=2}^{n-1} \sum_{\pi \in \Pi_{\cS_{P(\bm{\theta}^*[\theta\mapsto x])}}^{((k+1)^{M_2} M_1-1)} \setminus\Pi_{\cS_{P(\bm{\theta}^*[\theta\mapsto x])}}^{(k^{M_2} M_1-1)}} \abs{\tr\left(O_d\otimes O \sum_{\eta\in A_{\pi}}\cE_{\eta}(\rho)\right)} \\
        &= \sum_{\pi\in \Pi_{\cS_{P(\bm{\theta}^*[\theta\mapsto x])}}^{(2^{M_2}M_1-1)}}\abs{\tr\left(O_d\otimes O \sum_{\eta\in A_{\pi}}\cE_{\eta}(\rho)\right)} + \sum_{k=2}^{n-1} H_{k}(x).
    \end{align*}
    Because $H_n(x)$ is uniform convergent on any close interval, then $h_n(x)$ is uniform convergent on any close interval. We can also check that
    \begin{align*}
        \lim_{n\to\infty}h_n(x) &= \lim_{n\to\infty} \sum_{\pi\in\Pi_{\cS_{P(\bm{\theta}^*[\theta\mapsto x])}}^{(n)}}\tr\left(O_d\otimes O\sum_{\eta\in A_{\pi}}\cE_{\eta}(\rho)\right) \\
        &= \sum_{\pi\in\Pi_{\cS_{P(\bm{\theta}^*[\theta\mapsto x])}}}\tr\left(O_d\otimes O\sum_{\eta\in A_{\pi}}\cE_{\eta}(\rho)\right) \\
        &= \tr\left(O_d\otimes O \sem{\qd{\theta}(P(\bm{\theta}^*[\theta\mapsto x]))}(\rho)\right). \tag{by Equation~(\ref{eq:18})}
    \end{align*}
\end{proof}

\begin{proof}
    [proof of Theorem~\ref{thm:exact}]
    As the same settings in Lemma~\ref{lem:a2}, for each $\pi\in\Pi_{P(\bm{\theta})}$, we define
    \begin{align*}
        f_{\pi}(\bm{\theta}) &= \tr(O\cE_{\pi}(\rho)) \\
        g_{\pi}(\bm{\theta}) &= \sum_{\eta\in A_{\pi}}\tr(O_d\otimes O \cE_{\eta}(\rho))
    \end{align*}
    With Lemma~\ref{lem:a5}, we assume $\pi = l_{in}^{P(\bm{\theta})} \overto{\cE_1} m_1 \overto{\cE_2} m_2 \overto{\cE_3} \cdots \overto{\cE_n} m_n$ and $1\leq i_1 < i_2< \ldots<i_k \leq n$ such that $\cE_{i_j} = [e^{-i\theta\sigma_{i_j}}]_{\bar{q_{i_j}}}$ for $j = 1, 2,\ldots, k$, then
    \begin{align*}
        g_{\pi}(\bm{\theta}) =& \sum_{j=1}^k\Bigl(\tr\bigl(O\cE_{\pi_{i_j}\cdots\pi_{n}}(-i[[\sigma_{i_j}]_{\bar{q}_{i_j}}, \cE_{\pi_1\cdots\pi_{i_j-1}}(\rho)])\bigr)\Bigr).
    \end{align*}
    With Lemma~\ref{lem:a4}, we have
    \begin{align*}
        \frac{\partial}{\partial \theta_j}(f_{\pi}(\bm{\theta}))
        ={} \tr\bigl(O\cE_{\pi_{i_j}\cdots\pi_{n}}(-i[[\sigma_{i_j}]_{\bar{q}_{i_j}}, \cE_{\pi_1\cdots\pi_{i_j-1}}(\rho)])\bigr).
    \end{align*}
    for $j = 1,2,\ldots, k$ (which is considered as the $j$-th occurrence of $\theta$), then
    \begin{align*}
        \frac{\partial}{\partial \theta}(f_{\pi}(\bm{\theta}))
        ={}\sum_{j=1}^{k}\tr\bigl(O\cE_{\pi_{i_j}\cdots\pi_{n}}(-i[[\sigma_{i_j}]_{\bar{q}_{i_j}}, \cE_{\pi_1\cdots\pi_{i_j-1}}(\rho)])\bigr).
    \end{align*}
    Thus,
    \begin{equation}\label{eq:6}
        \frac{\partial}{\partial \theta}(f_{\pi}(\bm{\theta})) = g_{\pi}(\bm{\theta}).
    \end{equation}

    For $n\geq 1$, let 
    \[ f_n(\bm{\theta}) = \sum_{\pi\in \Pi_{\cS_{P(\bm{\theta})}}^{(n)}} f_{\pi}(\bm{\theta})\]
    \[ g_n(\bm{\theta}) = \sum_{\pi\in \Pi_{\cS_{P(\bm{\theta})}}^{(n)}} g_{\pi}(\bm{\theta})\]
    we have
    \[ \lim_{n\to\infty}f_n(\bm{\theta}) = f(\bm{\theta})\]
    \[ \lim_{n\to\infty} g_n(\bm{\theta}) = g(\bm{\theta})\]
    the correctness and existence of the second equation are guaranteed by Lemma~{\ref{lem:a10}}. By Equation~(\ref{eq:6}), we can easily check that for any $n \geq 1$,
    \[\frac{\partial}{\partial \theta}f_n(\bm{\theta}) = g_{n}(\bm{\theta}).\]
    With Lemma~\ref{lem:a10}, $g_n(\bm{\theta})$ is uniform convergent on a close interval $[\theta-\epsilon, \theta+\epsilon]$ for any $\alpha \in \mathbb{R}$ and any $\epsilon > 0$, which means $\frac{\partial}{\partial \theta}(f_n(\bm{\theta}))$ is uniform convergent on $[\theta-\epsilon, \theta+\epsilon]$, then
    \[ \lim_{n\to\infty}\frac{\partial}{\partial \theta}f_n(\bm{\theta}) = \frac{\partial}{\partial \theta}(\lim_{n\to\infty}f_n(\bm{\theta})).\]
    Thus,
    \begin{align*}
        \frac{\partial}{\partial \theta}f(\bm{\theta}) &= \frac{\partial}{\partial \theta}\left(\lim_{n\to\infty}f_n(\bm{\theta})\right) = \lim_{n\to\infty}\frac{\partial}{\partial \theta}f_n(\bm{\theta}) = \lim_{n\to\infty}g_{n}(\bm{\theta}) = g(\bm{\theta}).
    \end{align*}
\end{proof}

\subsection{Proof of Theorem~\ref{thm:bound}}\label{prf:bound}
The proof is based on Appendix~\ref{prf:exact}. To obtain a more accurate estimation, we define the set $\Gamma_{\theta}^{(n)} \subseteq \Pi_{\cS_{P(\bm{\theta})}}$ for a given parameter symbol $\theta$ and $n\in \mathbb{N}$:
\begin{align*}
    \Gamma_{\theta}^{(n)} \equiv{}& \lbrace \pi \in \Pi_{\cS_{P(\bm{\theta})}} : \text{$\theta$ appears $k$ times on path $\pi$ and $0\leq k \leq n$}\rbrace.
\end{align*}
For the set $\Gamma_{\theta}^{(n)}$, we have a lemma similar to Lemma ~\ref{lem:a5}.
\begin{lemma}\label{lem:a12}
    For any $\pi\in\Gamma_{\theta}^{(n)}\setminus \Gamma_{\theta}^{(0)}, n \geq 1$, $\pi$ can be written as $\pi = l_{in}^{P(\bm{\theta})} \overto{\cE_1} l_1 \overto{\cE_2} l_2 \overto{\cE_3} \cdots \overto{\cE_m} l_m$ with $1\leq i_1 < \ldots<i_k \leq m$, $1\leq k\leq n$ such that $\cE_{i_j} = [e^{-i\theta\sigma_{i_j}}]_{\bar{q_{i_j}}}$ for $j = 1,\ldots, k$. These $\cE_{i_j}, 1\leq j\leq k$ correspond to $k$ times occurrences of $\theta$. Then for any observable $O$, we have
        \begin{align*}
            &\tr\left(O_d^2\otimes O^2 \sum_{\eta\in A_{\pi}}\cE_{\eta}(\rho)\right) \\
            =& \sum_{j=1}^k\biggl(\frac{2}{\mu(j)} \biggl( \tr(O^2\cE_{\pi}(\rho)) + \tr\bigl(O^2\cE_{\pi_{i_j}\pi_{i_j+1}\cdots\pi_n}(\tr_{\bar{q}_{i_j}}(\cE_{\pi_1\cdots\pi_{i_j-1}}(\rho))\otimes\sigma_{i_j})\bigr)\biggr)\biggr).
        \end{align*}
\end{lemma}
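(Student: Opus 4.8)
\textbf{Proof plan for Lemma~\ref{lem:a12}.}
The plan is to reduce the identity to the explicit form of $\sum_{\eta\in A_\pi}\cE_\eta(\rho)$ that was already worked out in the proof of Lemma~\ref{lem:a5}. That computation shows that for any computation path $\pi\in\Pi_{\cS_{P(\bm{\theta})}}$ with $k$ occurrences of $\theta$ at positions $i_1<\dots<i_k$, the operator $\sum_{\eta\in A_\pi}\cE_\eta(\rho)$ splits into an ``overshoot'' summand, supported on $\ketbra{k+1}{k+1}\otimes\ketbra{0}{0}\otimes\ketbra{\psi_{k+1}}{\psi_{k+1}}$ over $\cH_{q_c}\otimes\cH_{q_1}\otimes\cH_{q_2}$ and proportional to $\cE_\pi(\rho)$, plus, for each $j=1,\dots,k$, a summand carrying the prefactor $\tfrac{\mu(j)}{2}$, supported on $\ketbra{j}{j}\otimes\ketbra{1}{1}$ over $\cH_{q_c}\otimes\cH_{q_1}$, whose $q_2=0$ and $q_2=1$ sub-branches each contain the combination $\tfrac12\cE_\pi(\rho)+\tfrac12\cE_{\pi_{i_j}\cdots}(\tr_{\bar{q}_{i_j}}(\cE_{\pi_1\cdots\pi_{i_j-1}}(\rho))\otimes\sigma_{i_j})\pm\tfrac12\cE_{\pi_{i_j}\cdots}(-i[[\sigma_{i_j}]_{\bar{q}_{i_j}},\cE_{\pi_1\cdots\pi_{i_j-1}}(\rho)])$, with opposite signs on the commutator part. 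Since $\Gamma_\theta^{(n)}\setminus\Gamma_\theta^{(0)}\subseteq\Pi_{\cS_{P(\bm{\theta})}}$, this decomposition applies verbatim to the paths considered here; the extra constraint $1\le k\le n$ encoded by $\Gamma_\theta^{(n)}$ plays no role in the per-path algebra.

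Second, I would compute $O_d^2$. Because the $\ketbra{j}{j}$ are mutually orthogonal rank-one projectors on $\cH_{q_c}$, $\ketbra{1}{1}$ is idempotent on the qubit $q_1$, and $Z^2=I$ on the qubit $q_2$, we obtain $O_d^2=\sum_{j\ge 1}\tfrac{4}{\mu(j)^2}\,\ketbra{j}{j}\otimes\ketbra{1}{1}\otimes I_{q_2}$. Substituting this into $\tr\big((O_d^2\otimes O^2)\sum_{\eta\in A_\pi}\cE_\eta(\rho)\big)$ and proceeding summand by summand: the overshoot summand, whose $q_1$ factor is $\ketbra{0}{0}$, is annihilated by the $\ketbra{1}{1}$ on $q_1$ inside $O_d^2$ and contributes $0$; for the $j$-th summand the $q_c$-projector selects the matching term $\tfrac{4}{\mu(j)^2}$, which combined with the prefactor $\tfrac{\mu(j)}{2}$ gives the net coefficient $\tfrac{2}{\mu(j)}$; tracing $q_c$ and $q_1$ against $\ketbra{j}{j}$ and $\ketbra{1}{1}$ contributes a factor $1$ each, and the identity $I_{q_2}$ forces the $q_2=0$ and $q_2=1$ branches to be \emph{summed}, whereupon the two commutator contributions $\pm\tfrac12\cE_{\pi_{i_j}\cdots}(-i[[\sigma_{i_j}]_{\bar{q}_{i_j}},\cdot])$ cancel and the two symmetric pieces $\cE_\pi(\rho)$ and $\cE_{\pi_{i_j}\cdots}(\tr_{\bar{q}_{i_j}}(\cE_{\pi_1\cdots\pi_{i_j-1}}(\rho))\otimes\sigma_{i_j})$ each acquire coefficient $1$. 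Contracting the remainder with $O^2$ and summing over $j$ yields exactly the stated formula.

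I do not anticipate a genuine obstacle: once the decomposition of Lemma~\ref{lem:a5} is in hand the argument is bookkeeping on tensor factors. The one point deserving care is that $O_d^2$ really does collapse to the identity on the $q_2$-register, so that the two $q_2$-branches are added rather than subtracted --- subtracting them would reinstate the commutator (derivative) term, as in Lemma~\ref{lem:a5}, whereas adding them is precisely what produces the ``derivative-free'' expression of Lemma~\ref{lem:a12}. If a fully self-contained proof is desired, one re-derives the decomposition of $\sum_{\eta\in A_\pi}\cE_\eta(\rho)$ exactly as in Lemma~\ref{lem:a5}, invoking the qPCA identity of Lemma~\ref{lem:a3} at each occurrence of $\theta$. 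The point worth emphasizing downstream is that the weights $\tfrac{1}{\mu(j)}$ enter $\langle O_d^2\otimes O^2\rangle$ only \emph{linearly}, not quadratically --- this is what makes the variance bound of Theorem~\ref{thm:bound} finite under the \ref{condition}.
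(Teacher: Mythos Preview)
Your proposal is correct and follows essentially the same route as the paper, which simply says ``the proof is similar to the proof of Lemma~\ref{lem:a5}.'' You have made explicit the one new ingredient the paper leaves implicit: that $O_d^2=\sum_{j\ge 1}\tfrac{4}{\mu(j)^2}\ketbra{j}{j}\otimes\ketbra{1}{1}\otimes I_{q_2}$ (because $Z^2=I$), so the two $q_2$-branches are added rather than subtracted, cancelling the commutator term and leaving the stated derivative-free expression with coefficient $\tfrac{\mu(j)}{2}\cdot\tfrac{4}{\mu(j)^2}=\tfrac{2}{\mu(j)}$.
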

\begin{proof}
    The proof is similar to the proof of Lemma~\ref{lem:a5}.
\end{proof}

We also need another lemma that is similar to Lemma~\ref{lem:a9}.

\begin{lemma}\label{lem:a13}
    Consider a quantum loop $P \equiv \qqwhile{\bar{q}}{Q}$ with fixed parameters (omitted). Assume that the state space $\cH_P$ is finite-dimensional and $P$ terminates almost surely. We define superoperators $\cE_i: \cD(\cH_P) \to \cD(\cH_P)$ by $\cE_i(\rho) = M_i \rho M_i^{\dagger}$, $i=0,1$ and  $\cE: \cD(\cH_P) \to \cD(\cH_P)$ by $\cE(\rho)= \sem{Q}(\rho)$. Then for any $\epsilon \in (0, 1)$, there exists $N = N_{\epsilon} > 0$ such that $\forall n \in \mathbb{N}, \forall \rho \in \cD(\cH_P),$
    \[ \tr((\cE\circ\cE_1)^n(\rho)) \leq \epsilon^{\lfloor \frac{n}{N}\rfloor}\tr(\rho).\]
\end{lemma}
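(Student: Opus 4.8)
The plan is to reduce the claim to Lemma~\ref{lem:a9}, using almost-sure termination of $P$ as the bridge between the ``survival probability'' $\tr((\cE\circ\cE_1)^n(\rho))$ and the tail of the ``exit-at-step-$k$'' probabilities $\tr(\cE_0\circ(\cE\circ\cE_1)^k(\rho))$, the latter of which Lemma~\ref{lem:a9} already bounds geometrically.

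First I would fix $\rho\in\cD(\cH_P)$, write $\cG=\cE\circ\cE_1$, and set $q_n=\tr(\cG^n(\rho))$ and $p_n=\tr(\cE_0(\cG^n(\rho)))$. The key one-step estimate is $q_{n+1}\le q_n-p_n$: since $\cE=\sem{Q}$ is trace-nonincreasing and $M_0^\dagger M_0+M_1^\dagger M_1=I$, one has $q_{n+1}=\tr(\cE(\cE_1(\cG^n(\rho))))\le\tr(M_1\cG^n(\rho)M_1^\dagger)=\tr(\cG^n(\rho))-\tr(M_0\cG^n(\rho)M_0^\dagger)=q_n-p_n$. In particular $(q_n)_n$ is non-increasing, and telescoping gives $q_N\le\tr(\rho)-\sum_{n=0}^{N-1}p_n$ for every $N$. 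Because $\cH_P$ is finite-dimensional, the series representation of the while-loop semantics (the instance of \eqref{semantics-series} with $\cE=\sem{Q}$) is valid, so $\tr(\sem{P}(\rho))=\sum_{n=0}^\infty p_n$; and almost-sure termination of $P$ says $\tr(\sem{P}(\rho))=\tr(\rho)$. Substituting yields the clean inequality $q_N\le\sum_{n=N}^\infty p_n$. This is the only place the almost-sure-termination hypothesis is used, and it is genuinely needed, since otherwise $q_n$ need not even tend to $0$.

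Next I would invoke Lemma~\ref{lem:a9}: fix an auxiliary $\epsilon'\in(0,1)$, let $N'=N'_{\epsilon'}$ be the constant it provides, so $p_n\le\epsilon'^{\lfloor n/N'\rfloor}\tr(\rho)$ uniformly in $n$ and $\rho$, and sum the geometric tail, noting that each integer value is attained by $\lfloor\,\cdot\,/N'\rfloor$ on at most $N'$ consecutive indices: $q_N\le\sum_{n=N}^\infty p_n\le\tr(\rho)\sum_{n=N}^\infty\epsilon'^{\lfloor n/N'\rfloor}\le\frac{N'}{1-\epsilon'}\epsilon'^{\lfloor N/N'\rfloor}\tr(\rho)$, i.e. $q_N\le C\epsilon'^{\lfloor N/N'\rfloor}\tr(\rho)$ with $C=N'/(1-\epsilon')$.

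It then remains to massage $C\epsilon'^{\lfloor N/N'\rfloor}$ into the exact form $\epsilon^{\lfloor N/N_\epsilon\rfloor}$ demanded by the statement. Take $N_\epsilon=tN'$ for a large integer $t$ to be chosen. For $n$ with $\lfloor n/N_\epsilon\rfloor=0$ the bound is trivial since $q_n\le\tr(\rho)$; for $mN_\epsilon\le n<(m+1)N_\epsilon$ with $m\ge1$, monotonicity gives $q_n\le q_{mN_\epsilon}\le C(\epsilon'^{t})^m\tr(\rho)$, and it suffices to pick $t$ with $C\epsilon'^{t}\le\epsilon$ and $\epsilon'^{t}\le\epsilon$ (possible because $\epsilon'^{t}\to0$), since then $C(\epsilon'^{t})^m=C\epsilon'^{t}(\epsilon'^{t})^{m-1}\le\epsilon\cdot\epsilon^{m-1}=\epsilon^m=\epsilon^{\lfloor n/N_\epsilon\rfloor}\tr(\rho)$ after reinstating the $\tr(\rho)$ factor. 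The only mildly fiddly part of the whole argument is this constant-matching; the substantive content is the almost-sure-termination-to-tail reduction together with the already-established Lemma~\ref{lem:a9}.
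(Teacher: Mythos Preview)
Your argument is correct, but it takes a genuinely different route from the paper. The paper does not invoke Lemma~\ref{lem:a9} as a black box; instead it revisits the internal structure of the proof of Lemma~\ref{lem:a9}, namely the decomposition $\cH_P=\cX\oplus\cY$ and Lemma~\ref{lem:a8}. The key observation there is that almost-sure termination forces $\cX=\{0\}$, i.e.\ $P_{\cY}=I$, so Lemma~\ref{lem:a8} yields directly $(\cG^*)^N(I)\sqsubseteq\epsilon I$, whence $\tr(\cG^n(\rho))=\tr((\cG^*)^n(I)\rho)\le\epsilon^{\lfloor n/N\rfloor}\tr(\rho)$ in one line, with no tail-summation or constant-matching step.

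Your approach, by contrast, is entirely ``from the outside'': it uses only the statement of Lemma~\ref{lem:a9} together with the elementary telescoping identity $q_N\le\sum_{n\ge N}p_n$ furnished by almost-sure termination. This buys modularity---you never need to know what $\cX$ and $\cY$ are---at the cost of the extra geometric-tail bound and the final step massaging $C\epsilon'^{\lfloor n/N'\rfloor}$ into $\epsilon^{\lfloor n/N_\epsilon\rfloor}$. The paper's proof is shorter precisely because it reuses the dual-picture estimate $(\cG^*)^N(P_{\cY})\sqsubseteq\epsilon P_{\cY}$ already established inside the proof of Lemma~\ref{lem:a9}.
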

\begin{proof}
    Because $P$ terminates almost surely, we have that the operator $P_{\cY}$ in Appendix~\ref{prf:a9} is an identify operator $P_{\cY} = I$. By Lemma~\ref{lem:a8}, for any $\epsilon\in (0,1)$, there exists $N > 0$ such that $(\cG^*)^N(P_{\cY}) \sqsubseteq \epsilon P_{\cY}$, which is
    \[ (\cG^*)^N(I) \sqsubseteq \epsilon I\]
    Therefore, for any $n\in \mathbb{N}$, any $\rho \in \cD(\cH_P)$, we have
    \begin{align*}
        \tr((\cE\circ\cE_1)^n(\rho)) = \tr((\cG^*)^n(I)\rho) &\leq \tr(\epsilon^{\lfloor \frac{n}{N}\rfloor}I\rho) = \epsilon^{\lfloor \frac{n}{N}\rfloor}\tr(\rho)
    \end{align*}
\end{proof}

We then follow the previous proof of Lemma~\ref{lem:a10}.

\begin{proof}
    [Proof of Theorem~\ref{thm:bound}]
    With the Lemma~\ref{lem:a12} and a similar proof of Inequality~(\ref{eq:17}), we have that for any $\pi\in \Gamma_{\theta}^{(n)}$,
    \begin{align}\label{eq:19}
        & \abs{\tr\left(O^2\otimes O_c^2 \sum_{\eta\in A_{\pi}}\cE_{\eta}(\rho)\right)} \notag\\
        \leq{}& \sum_{j=1}^{k_{\pi}}\biggl(\frac{2}{\mu(j)} \Bigl( M^2\tr(\cE_{\pi}(\rho))  + M^2\tr\bigl(\cE_{\pi_{i_j}\pi_{i_j+1}\cdots\pi_n}(\sigma_{\pi_{i_j}})\bigr)\Bigr)\biggr) \notag\\
        ={}& \sum_{j=1}^{k_{\pi}} \biggl(\frac{2M^2}{\mu(j)}\tr(\cE_{\pi}(\rho)) + \frac{2M^2}{\mu(j)}\tr(\cE_{\pi_{i_j}\pi_{i_j+1}\cdots\pi_n}(\sigma_{\pi_{i_j}})) \biggr)
    \end{align}
    where $M$ is the largest eigenvalue of $\abs{O}$, ${k_{\pi}}$ is the $k$ in Lemma~\ref{lem:a12} and \[\sigma_{\pi_{i_j}} \equiv \tr_{\bar{q}_{i_j}}(\cE_{\pi_1\cdots\pi_{i_j-1}}(\rho))\otimes\sigma_{i_j}.\]

    Let $M_2 = LC(P(\bm{\theta}))$. For convenience, we assume $M_2\geq 1$ temporarily.
    According to the definition of $LC$ (in Definite~\ref{dfn:lc}), $P(\bm{\theta})$ contains $M_2$ subprograms of \textbf{while} statements:
    \[ P_{j}(\bm{\theta}) \equiv \qwhile M^{(j)}[\bar{q}_j] = 1 \qdo Q_{j}(\bm{\theta}) \qod, \enspace j = 1,\ldots, M_2.\] 
    With Lemma~\ref{lem:a9}, for any $\epsilon\in (0,1)$, there exists $N_{\epsilon} \geq 1$ satisfies the following formula that is similar to Formula~(\ref{eq:11}),
    \begin{equation}\label{eq:20}
        \begin{aligned}
            &\forall n \geq 0, \forall \rho \in \cH_{P(\bm{\theta})}, 1\leq j\leq M_2, \\ &\tr(\cE_0^{(j)}\circ(\sem{Q_j(\bm{\theta})}\circ\cE_1^{(j)})^n(\rho)) \leq \epsilon^{\lfloor\frac{n}{N_{\epsilon}}\rfloor}\tr(\rho).
        \end{aligned}
    \end{equation}

    Let $M_1 = RC_{\theta}(P(\bm{\theta})) \geq 1$. For any $n \geq 1$, we consider $\pi \in \Gamma_{\theta}^{((n+1)^{M_2}M_1-1)}\setminus \Gamma_{\theta}^{(n^{M_2}M_1-1)}$, the parameter symbol $\theta$ appears on $\pi$ at least $n^{M_2}M_1$ times, then $\pi$ has at least $n^{M_2}M_1 / M_1 = n^{M_2}$ locations of parameter symbol $\theta$ repeatedly appear, which is caused by \textbf{while} statements. Then $\pi$ has at least $n^{M_2}$ times running into into the loop bodies of above $P_j(\bm{\theta}), j= 1, \ldots, M_2$. 
    By the same discussion in the proof of Lemma~\ref{lem:a10}, there exists $1\leq j_0 \leq M_2$ such that $\pi$ continuously runs into the loop body of $P_{j_0}(\bm{\theta})$ with more than $n-1$ times.

    With the help of Formula~(\ref{eq:20}) and Lemma~\ref{lem:a11}, we can also obtain an inequality similar to Inequality~(\ref{eq:21}) in the same way of the proof of Lemma~\ref{lem:a10}
    (notice: we can get a tighter bound by limiting $0\leq m \leq l(k_j) $ for every $1\leq j\leq M_2$,
    where $k_j$ is the depth of subprogram $P(\bm{\theta})_j$ nested with other $M_2-1$ subprograms mentioned before
    and when $k \geq 2, l(k)= (n-1)^2n^{k-2}$, $l(1) = 1$)\footnote{In the proof of Lemma~\ref{lem:a10}, we define the set $A_{j}^{(n,m)}$ for $0\leq m \leq (n+1)^{M_2}M_1-1$.
    However, we can prove that for any $m > l(k_j)$ and any $\pi \in A_{j}^{n,m}$, there exists $j'$ such that $k_{j'} < k_j$ and $\pi \in A_{j'}^{n, m'}$ and $m' \leq l(k_{j'})$ by induction on $k_j$. Thus we only need consider those sets $A_{j}^{(n,m)}$ with $0\leq m \leq l(k_j)$.},
    \begin{equation}\label{eq:22}
        \begin{aligned}
            \sum_{\pi \in \Gamma_{\theta}^{((n+1)^{M_2} M_1-1)} \setminus \Gamma_{\theta}^{(n^{M_2} M_1-1)}}\tr(\cE_{\pi}(\rho))
            \leq{}& \sum_{j=1}^{M_2}(l(k_j)+1)\epsilon^{\lfloor \frac{n-1}{N_{\epsilon}}\rfloor} \tr(\rho) \\
            \leq{}& \sum_{j=1}^{M_2}(l(j)+1)\epsilon^{\lfloor \frac{n-1}{N_{\epsilon}}\rfloor} \tr(\rho) \\
            \leq{}& (M_2+ (n-1)(n^{M_2-1}-1)) \epsilon^{\lfloor \frac{n-1}{N_{\epsilon}}\rfloor} \tr(\rho) \\
            \leq{}& (M_2+ (n-1)(n^{M_2-1}-1)_+) \epsilon^{\lfloor \frac{n-1}{N_{\epsilon}}\rfloor} \tr(\rho)
        \end{aligned}
    \end{equation}
    where $(x)_+ = \max\lbrace 0,x\rbrace$. This inequality also holds for $M_2 = 0$, because if $M_2=0$, then 
    \begin{align*}
        \Gamma_{\theta}^{((n+1)^{M_2} M_1-1)} \setminus \Gamma_{\theta}^{(n^{M_2} M_1-1)} &= \Gamma_{\theta}^{((n+1)^{0} M_1-1)} \setminus \Gamma_{\theta}^{(n^{0} M_1-1)}
        = \Gamma_{\theta}^{(M_1-1)} \setminus \Gamma_{\theta}^{(M_1-1)}
    \end{align*}
    is an empty set and $P(\bm{\theta})$ doesn't contain \textbf{while} statement, then $\Pi_{\cS_{P(\bm{\theta})}} = \Gamma_{\theta}^{(M_1)}$.

    With Inequality~(\ref{eq:19}) and Inequality~(\ref{eq:22}), we have that for $n \geq 1$,
    \begin{align*}
        &\sum_{\pi \in \Gamma_{\bm{\theta}}^{((n+1)^{M_2}M_1-1)}} \abs{\tr\left( O_d^2\otimes O \sum_{\eta\in A_{\pi}}\cE_{\eta}(\rho)\right)} \\
        \leq{}& \sum_{\pi \in \Gamma_{\bm{\theta}}^{((n+1)^{M_2}M_1-1)}} \sum_{j=1}^{k_{\pi}} \frac{2M^2}{\mu(j)}\tr(\cE_{\pi}(\rho))
        + \sum_{\pi \in \Gamma_{\bm{\theta}}^{((n+1)^{M_2}M_1-1)}} \sum_{j=1}^{k_{\pi}} \frac{2M^2}{\mu(j)}\tr(\cE_{\pi_{i_j}\pi_{i_j+1}\cdots\pi_n}(\sigma_{\pi_{i_j}})) \\
        \equiv{}& \text{Term-}A + \text{Term-}B.
    \end{align*}
    For Term-$A$:
    \begin{align*}
        &\sum_{\pi \in \Gamma_{\bm{\theta}}^{((n+1)^{M_2}M_1-1)}} \sum_{j=1}^{k_{\pi}} \frac{2M^2}{\mu(j)}\tr(\cE_{\pi}(\rho)) \\
        \leq{}& \bigggl(\sum_{\pi \in \Gamma_{\bm{\theta}}^{(1^{M_2}M_1)}} \sum_{j=1}^{k_{\pi}} \frac{2M^2}{\mu(j)}\tr(\cE_{\pi}(\rho))\bigggr) + \bigggl(\sum_{k=1}^n\sum_{\pi \in \Gamma_{\bm{\theta}}^{((k+1)^{M_2}M_1-1)}\setminus \Gamma_{\bm{\theta}}^{((k)^{M_2}M_1-1)}}\sum_{j=1}^{k_{\pi}} \frac{2M^2}{\mu(j)}\tr(\cE_{\pi}(\rho))\bigggr) \\
        \leq{}& \bigggl(\sum_{j=1}^{M_1}\sum_{\pi \in \Gamma_{\bm{\theta}}^{(M_1)}}  \frac{2M^2}{\mu(j)}\tr(\cE_{\pi}(\rho))\bigggr) + \bigggl(\sum_{k=1}^n\sum_{j=1}^{(k+1)^{M_2}M_1-1}\sum_{\pi \in \Gamma_{\bm{\theta}}^{((k+1)^{M_2}M_1-1)}\setminus \Gamma_{\bm{\theta}}^{((k)^{M_2}M_1-1)}} \frac{2M^2}{\mu(j)}\tr(\cE_{\pi}(\rho))\bigggr) \\
        \leq{}& \bigggl(\sum_{j=1}^{M_1} \frac{2M^2}{\mu(j)}\tr(\rho)\bigggr) + \bigggl(\sum_{k=1}^n\sum_{j=1}^{(k+1)^{M_2}M_1-1} \frac{2M^2}{\mu(j)} (M_2+(k-1)(k^{M_2-1}-1)_+) \epsilon^{\lfloor \frac{k-1}{N_{\epsilon}}\rfloor} \tr(\rho)\bigggr) \tag{By Inequality~(\ref{eq:22})}\\
        \leq{} & 2M^2\biggl(S(M_1) + \sum_{k=1}^n \Bigl((M_2+(k-1)(k^{M_2-1}-1)_+) S\left((k+1)^{M_2}M_1-1\right)\epsilon^{\lfloor \frac{k-1}{N_{\epsilon}}\rfloor}\Bigr)\biggr) \tr(\rho),
    \end{align*}
    where $S(n) = \sum_{j=1}^n\frac{1}{\mu(j)}$.
    
    For Term-$B$:
    \begin{align*}
        &\sum_{\pi \in \Gamma_{\bm{\theta}}^{((n+1)^{M_2}M_1-1)}} \sum_{j=1}^{k_{\pi}} \frac{2M^2}{\mu(j)}\tr(\cE_{\pi_{i_j}\pi_{i_j+1}\cdots\pi_n}(\sigma_{\pi_{i_j}})) \\
        ={}& \bigggl(\sum_{\pi \in \Gamma_{\bm{\theta}}^{(M_1)}} \sum_{j=1}^{k_{\pi}} \frac{2M^2}{\mu(j)}\tr(\cE_{\pi_{i_j}\pi_{i_j+1}\cdots\pi_n}(\sigma_{\pi_{i_j}}))\bigggr) \\
        & + \bigggl(\sum_{k=1}^n\sum_{\pi \in \Gamma_{\bm{\theta}}^{((k+1)^{M_2}M_1-1)}\setminus\Gamma_{\bm{\theta}}^{((k)^{M_2}M_1-1)}} \sum_{j=1}^{k_{\pi}} \frac{2M^2}{\mu(j)}\tr(\cE_{\pi_{i_j}\pi_{i_j+1}\cdots\pi_n}(\sigma_{\pi_{i_j}}))\bigggr) \\
        \equiv{}& \text{Term-}C + \text{Term-}D
    \end{align*}

    For Term-$C$, we consider all the $\sigma_{\pi_{i_j}} = \tr_{\bar{q}_{i_j}}(\cE_{\pi_1\cdots\pi_{i_j-1}}(\rho))\otimes\sigma_{i_j}$ that have same $\cE_{\pi_1\cdots\pi_{i_j-1}}(\rho)$, let
    \[E_{j} = \left\lbrace \pi_1\cdots\pi_{i_j-1} : \exists \eta \text{ s.t. } \pi = \pi_1\cdots\pi_{i_j-1}\eta \in \Gamma_{\bm{\theta}}^{(M_1)}\right\rbrace\]
    for every $1\leq j \leq M_1$ and 
    \[ F_{\pi_1\cdots\pi_{i_j-1}} = \left\lbrace \pi \in \Gamma_{\bm{\theta}}^{(M_1)} : \exists \eta \text{ s.t. } \pi = \pi_1\cdots\pi_{i_j-1}\eta \right\rbrace\]
    then
    \begin{align*}
        &\sum_{\pi \in \Gamma_{\bm{\theta}}^{(M_1)}} \sum_{j=1}^{k_{\pi}} \frac{2M^2}{\mu(j)}\tr(\cE_{\pi_{i_j}\pi_{i_j+1}\cdots\pi_n}(\sigma_{\pi_{i_j}})) \\
        ={}& \sum_{j=1}^{M_1}\sum_{\pi_1\cdots\pi_{i_j-1}\in E_j} \sum_{\pi\in F_{\pi_1\cdots\pi_{i_j-1}}} \frac{2M^2}{\mu(j)} \tr(\cE_{\pi_{i_j}\pi_{i_j+1}\cdots\pi_n}(\sigma_{\pi_{i_j}})) \\
        \leq{}& \sum_{j=1}^{M_1}\sum_{\pi_1\cdots\pi_{i_j-1}\in E_j} \frac{2M^2}{\mu(j)}\tr(\sigma_{\pi_{i_j}}) \tag{apply Lemma~\ref{lem:a11} to $F_{\pi_1\cdots\pi_{i_j-1}}$} \\
        ={}& \sum_{j=1}^{M_1}\sum_{\pi_1\cdots\pi_{i_j-1}\in E_j} \frac{2M^2}{\mu(j)}\tr(\tr_{\bar{q}_{i_j}}(\cE_{\pi_1\cdots\pi_{i_j-1}}(\rho))\otimes\sigma_{i_j}) \\
        ={}& \sum_{j=1}^{M_1}\sum_{\pi_1\cdots\pi_{i_j-1}\in E_j} \frac{2M^2}{\mu(j)}\tr(\cE_{\pi_1\cdots\pi_{i_j-1}}(\rho)) \\
        \leq {}& \sum_{j=1}^{M_1} \frac{2M^2}{\mu(j)} \tr(\rho) \tag{apply Lemma~\ref{lem:a11} to $E_j$} \\
        ={}& 2M^2S(M_1)\tr(\rho)
    \end{align*}

    For Term-$D$, we have already known that any 
    \[\pi \in \Gamma_{\bm{\theta}}^{((k+1)^{M_2}M_1-1)}\setminus\Gamma_{\bm{\theta}}^{((k)^{M_2}M_1-1)}\]
    must continuously runs into a loop body $P_k(\bm{\theta}), 1\leq k\leq M_2$ at least $n$ times. For any $\pi$ above, $\cE_{\pi_{i_j}\pi_{i_j+1}\cdots\pi_n}(\sigma_{\pi_{i_j}})$ is 
    \[\cE_{\pi_{i_j}\pi_{i_j+1}\cdots\pi_n}(\tr_{\bar{q}_{i_j}}(\cE_{\pi_1\cdots\pi_{i_j-1}}(\rho))\otimes\sigma_{i_j})\]
    For any $1\leq j\leq ((k+1)^{M_2}M_1-1)$, the location $\pi_{i_j}$ will only split the continuously runs of a loop body at most once, thus we can get a scale $ \epsilon^{\frac{n-1}{N_{\epsilon}}-1}$ to replace $\epsilon^{\lfloor \frac{n-1}{N_{\epsilon}}\rfloor}$ in Inequality~(\ref{eq:22}) by Lemma~\ref{lem:a9} and Lemma~\ref{lem:a13}:
    \begin{equation}\label{eq:24}
        \begin{aligned}
            &\sum_{\pi \in \Gamma_{\theta}^{((k+1)^{M_2} M_1-1)} \setminus \Gamma_{\theta}^{(k^{M_2} M_1-1)}}\tr(\rho_{\pi, j}) \\
            \leq{}&  (M_2+(k-1)(k^{M_2-1}-1)_+) \epsilon^{ \frac{n-1}{N_{\epsilon}}-1} \tr(\rho),
        \end{aligned}
    \end{equation}
    where $\rho_{\pi, j} = \cE_{\pi_{i_j}\pi_{i_j+1}\cdots\pi_n}(\sigma_{\pi_{i_j}})$ if $\pi$ has $j$-th occurrence of $\theta$, or equal to $\cE_{\pi}$ if $\pi$ has no $j$-th occurrence of $\theta$.

    Thus
    \begin{align*}
        &\sum_{\pi \in \Gamma_{\bm{\theta}}^{((k+1)^{M_2}M_1-1)}\setminus\Gamma_{\bm{\theta}}^{((k)^{M_2}M_1-1)}} \sum_{j=1}^{k_{\pi}} \frac{2M^2}{\mu(j)}\tr(\cE_{\pi_{i_j}\pi_{i_j+1}\cdots\pi_n}(\sigma_{\pi_{i_j}})) \\
        =& \sum_{\pi \in \Gamma_{\bm{\theta}}^{((k+1)^{M_2}M_1-1)}\setminus\Gamma_{\bm{\theta}}^{((k)^{M_2}M_1-1)}} \sum_{j=1}^{k_{\pi}} \frac{2M^2}{\mu(j)}\tr(\rho_{\pi, j}) \\
        \leq{}& \sum_{\pi \in \Gamma_{\bm{\theta}}^{((k+1)^{M_2}M_1-1)}\setminus\Gamma_{\bm{\theta}}^{((k)^{M_2}M_1-1)}} \sum_{j=1}^{((k+1)^{M_2}M_1-1)} \frac{2M^2}{\mu(j)}\tr(\rho_{\pi,j}) \\
        ={}& \sum_{j=1}^{((k+1)^{M_2}M_1-1)}\sum_{\pi \in \Gamma_{\bm{\theta}}^{((k+1)^{M_2}M_1-1)}\setminus\Gamma_{\bm{\theta}}^{((k)^{M_2}M_1-1)}}  \frac{2M^2}{\mu(j)}\tr(\rho_{\pi,j}) \\
        \leq{}& \sum_{j=1}^{((k+1)^{M_2}M_1-1)} \frac{2M^2}{\mu(j)}(M_2+(k-1)(k^{M_2-1}-1)_+) \epsilon^{ \lfloor \frac{k-1}{N_{\epsilon}}\rfloor-1} \tr(\rho) \tag{by Inequality~(\ref{eq:24})}\\
        ={}& 2M^2(M_2+(k-1)(k^{M_2-1}-1)_+) S\left((k+1)^{M_2}M_1-1\right) \epsilon^{ \lfloor \frac{k-1}{N_{\epsilon}}\rfloor-1} \tr(\rho).
    \end{align*}
    Then, Term-$D$ can be bounded as follows:
    \begin{align*}
        &\bigggl(\sum_{k=1}^n\sum_{\pi \in \Gamma_{\bm{\theta}}^{((k+1)^{M_2}M_1-1)}\setminus\Gamma_{\bm{\theta}}^{((k)^{M_2}M_1-1)}} \sum_{j=1}^{k_{\pi}} \frac{2M^2}{\mu(j)}\tr(\cE_{\pi_{i_j}\pi_{i_j+1}\cdots\pi_n}(\sigma_{\pi_{i_j}}))\bigggr) \\
        \leq{}& 2M^2\sum_{k=1}^n  \Bigl((M_2+(k-1)(k^{M_2-1}-1)_+) S\left((k+1)^{M_2}M_1-1\right) \epsilon^{ \lfloor \frac{k-1}{N_{\epsilon}}\rfloor-1}\Bigr) \tr(\rho).
    \end{align*}
    Therefore, Term-$B$ is bounded as follows:
    \begin{align*}
        &\sum_{\pi \in \Gamma_{\bm{\theta}}^{((n+1)^{M_2}M_1-1)}} \sum_{j=1}^{k_{\pi}} \frac{2M^2}{\mu(j)}\tr(\cE_{\pi_{i_j}\pi_{i_j+1}\cdots\pi_n}(\sigma_{\pi_{i_j}})) \\
        \leq{} & 2M^2\biggl(S(M_1) + \sum_{k=1}^n \Bigl((M_2+(k-1)(k^{M_2-1}-1)_+) S\left((k+1)^{M_2}M_1-1\right)\epsilon^{\lfloor \frac{k-1}{N_{\epsilon}}\rfloor-1}\Bigr)\biggr) \tr(\rho).
    \end{align*}
    Finally, we have the following inequality:
    \begin{equation}\label{eq:23}
        \begin{aligned}
            & \sum_{\pi \in \Gamma_{\bm{\theta}}^{((n+1)^{M_2}M_1-1)}} \abs{\tr\left( O_d^2\otimes O \sum_{\eta\in A_{\pi}}\cE_{\eta}(\rho)\right)} \\
            \leq{}& M^2\bigggl(4S(M_1) + \sum_{k=1}^n\biggl((M_2+(k-1)(k^{M_2-1}-1)_+) \\
            & \qquad \cdot S\left((k+1)^{M_2}M_1-1\right)(2\epsilon^{\lfloor \frac{k-1}{N_{\epsilon}}\rfloor} + 2\epsilon^{\lfloor \frac{k-1}{N_{\epsilon}}\rfloor-1})\biggr)\bigggr)\tr(\rho)
        \end{aligned}
    \end{equation}
    that holds for any $n \geq 1$ and $M_1, M_2\geq 0$. 
    
    By the definition of $\Gamma_{\theta}^{(n)}$, we have
    \[\Pi_{\cS_{P(\bm{\theta})}} = \bigcup_{n=1}^{\infty}\Gamma_{\theta}^{(n)}.\]
    When $M_2\geq 1$, we also have
    \[\Pi_{\cS_{P(\bm{\theta})}} = \bigcup_{n=1}^{\infty}\Gamma_{\theta}^{(n^{M_2}M_1-1)}.\]
    
    With $\forall n \geq 1, \Gamma_{\theta}^{(n)} \subseteq \Gamma_{\theta}^{(n+1)}$ and Inequality~(\ref{eq:23}), we have
    \begin{align*}
        &\abs{\sum_{\pi\in\Pi_{\cS_{P(\bm{\theta})}}}\tr\left(O_d^2\otimes O^2 \sum_{\eta\in A_{\pi}}\cE_{\eta}(\rho)\right)} \\
        ={} & \abs{\sum_{\pi\in \bigcup_{n=1}^{\infty}\Gamma_{\theta}^{(n^{M_2}M_1)}}\tr\left(O_d^2\otimes O^2 \sum_{\eta\in A_{\pi}}\cE_{\eta}(\rho)\right)} \\
        \leq{} & \sum_{\pi\in \bigcup_{n=1}^{\infty}\Gamma_{\theta}^{(n^{M_2}M_1)}}\abs{\tr\left(O_d^2\otimes O^2 \sum_{\eta\in A_{\pi}}\cE_{\eta}(\rho)\right)} \\
        \leq{} & \lim_{n\to\infty} \sum_{\pi\in \Gamma_{\theta}^{(n^{M_2}M_1)}}\abs{\tr\left(O_d^2\otimes O^2 \sum_{\eta\in A_{\pi}}\cE_{\eta}(\rho)\right)} \\
        ={} & \lim_{n\to\infty} M^2\bigggl(4S(M_1) + \sum_{k=1}^n\biggl((M_2+(k-1)(k^{M_2-1}-1)_+) \\
        & \qquad \cdot S\left((k+1)^{M_2}M_1-1\right)(2\epsilon^{\lfloor \frac{k-1}{N_{\epsilon}}\rfloor} + 2\epsilon^{\lfloor \frac{k-1}{N_{\epsilon}}\rfloor-1})\biggr)\bigggr)\tr(\rho) \\
        ={}& M^2\bigggl(4S(M_1) + \sum_{k=1}^{\infty}\biggl((M_2+(k-1)(k^{M_2-1}-1)_+) \\
        & \qquad \cdot S\left((k+1)^{M_2}M_1-1\right)(2\epsilon^{\lfloor \frac{k-1}{N_{\epsilon}}\rfloor} + 2\epsilon^{\lfloor \frac{k-1}{N_{\epsilon}}\rfloor-1})\biggr)\bigggr)\tag{by $\tr(\rho) \leq 1$}
    \end{align*}
    When $M_2 = 0$, we already know $\Pi_{\cS_{P(\bm{\theta})}} = \Gamma_{\theta}^{(M_1)}$, then the above inequality also holds.
    
    An equation similar to Equation~(\ref{eq:18}) is
    \begin{align*}
        \langle O_d^2\otimes O^2\rangle &= \tr\left(O^2\otimes O_c^2 \sem{\qd{\alpha}{\theta}(P(\bm{\theta}))}(\rho)\right)
        = \sum_{\pi\in\Pi_{\cS_{P(\bm{\theta})}}}\tr\left(O_d^2\otimes O^2 \sum_{\eta\in A_{\pi}}\cE_{\eta}(\rho)\right).
    \end{align*}
    Thus, we finally obtain the following inequality
    \begin{align*}
        &\langle O_d^2\otimes O^2\rangle \\
        \leq{}& M^2\bigggl(4S(M_1) + \sum_{k=1}^{\infty}\biggl((M_2+(k-1)(k^{M_2-1}-1)_+) S\left((k+1)^{M_2}M_1-1\right)\left(2\epsilon^{\lfloor \frac{k-1}{N_{\epsilon}}\rfloor} + 2\epsilon^{\lfloor \frac{k-1}{N_{\epsilon}}\rfloor-1}\right)\biggr)\bigggr).
    \end{align*}
    
\end{proof}

\end{document}